\documentclass[english, thm-restate]{lipics-v2021}
%This is a template for producing LIPIcs articles. 
%See lipics-v2021-authors-guidelines.pdf for further information.
%for A4 paper format use option "a4paper", for US-letter use option "letterpaper"
%for british hyphenation rules use option "UKenglish", for american hyphenation rules use option "USenglish"
%for section-numbered lemmas etc., use "numberwithinsect"
%for enabling cleveref support, use "cleveref"
%for enabling autoref support, use "autoref"
%for anonymousing the authors (e.g. for double-blind review), add "anonymous"
%for enabling thm-restate support, use "thm-restate"
%for enabling a two-column layout for the author/affilation part (only applicable for > 6 authors), use "authorcolumns"
%for producing a PDF according the PDF/A standard, add "pdfa"

%\pdfoutput=1 %uncomment to ensure pdflatex processing (mandatatory e.g. to submit to arXiv)
\hideLIPIcs  %uncomment to remove references to LIPIcs series (logo, DOI, ...), e.g. when preparing a pre-final version to be uploaded to arXiv or another public repository

%\graphicspath{{./graphics/}}%helpful if your graphic files are in another directory

%\usepackage{thm-restate}
%\newcommand{\note}[1]{\textbf{[#1]}}

\newtheorem{rst-lemma}[theorem]{Lemma}

\newcommand*\ca{\includegraphics[height=3mm]{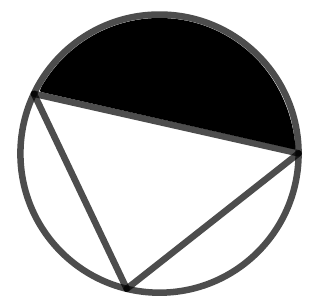}}
\newcommand{\R}{\mathbb{R}}

\newcommand{\Z}{\mathbb{Z}}

%for $k$ in titles
\newcommand{\kk}{\texorpdfstring{$k$ }{k}}

\newcommand{\mS}{\mathcal{S}}
\newcommand{\mR}{\mathcal{R}}
\newcommand{\RR}{\widetilde{\mathcal{R}}}

\DeclareMathOperator{\conv}{conv}
\DeclareMathOperator{\Err}{Err}

%\pdfsuppresswarningpagegroup=1

\bibliographystyle{plainurl}% the mandatory bibstyle

\title{Minimum-Error Triangulations for Sea Surface Reconstruction} %TODO Please add

%\titlerunning{Dummy short title} %TODO optional, please use if title is longer than one line

%\author{Joan R. Public\footnote{Optional footnote, e.g. to mark corresponding author}}{Department of Informatics, Dummy College, [optional: Address], Country}{joanrpublic@dummycollege.org}{[orcid]}{[funding]}

\author{Anna Arutyunova}
{Institute for Computer Science, University of Bonn, Germany}
{}{}{}

\author{Anne Driemel}
{Hausdorff Center for Mathematics, University of Bonn, Germany}
{}{}{}

\author{{Jan-Henrik} Haunert}
{Institute of Geodesy and Geoinformation, University of Bonn, Germany}
{}{}{}

\author{Herman Haverkort}
{Institute for Computer Science, University of Bonn, Germany}
{}{}{}

\author{J\"urgen Kusche}
{Institute of Geodesy and Geoinformation, University of Bonn, Germany}
{}{}{}

\author{Elmar Langetepe}
{Institute for Computer Science, University of Bonn, Germany}
{}{}{}

\author{Philip Mayer}
{Institute for Computer Science, University of Bonn, Germany}
{}{}{}

\author{Petra Mutzel}
{Institute for Computer Science, University of Bonn, Germany}
{}{}{}

\author{Heiko R\"oglin}
{Institute for Computer Science, University of Bonn, Germany}
{}{}{}

%TODO mandatory, please use full name; only 1 author per \author macro; first two parameters are mandatory, other parameters can be empty. Please provide at least the name of the affiliation and the country. The full address is optional. Use additional curly braces to indicate the correct name splitting when the last name consists of multiple name parts.

\authorrunning{A. Arutyunova et al.}%TODO mandatory. First: Use abbreviated first/middle names. Second (only in severe cases): Use first author plus 'et al.'
%\authorrunning{A. Arutyunova, A. Driemel, J. Haunert,H. Haverkort, J. Kusche, P. Mayer, P. Mutzel and H. R\"oglin}%TODO 

\Copyright{Anna Arutyunova, Anne Driemel, {Jan-Henrik} Haunert, Herman Haverkort, J\"urgen Kusche, Elmar Langetepe, Philip Mayer, Petra Mutzel and Heiko R\"oglin} %TODO mandatory, please use full first names. LIPIcs license is "CC-BY";  http://creativecommons.org/licenses/by/3.0/

\ccsdesc[100]{Theory of computation~Computational geometry} %TODO mandatory: Please choose ACM 2012 classifications from https://dl.acm.org/ccs/ccs_flat.cfm 

\keywords{Minimum-Error Triangulation, k-Order Delaunay Triangulations, Data Dependent Triangulations, Sea Surface  Reconstruction, Fixed-Edge Graph} %TODO mandatory; please add comma-separated list of keywords

\category{} %optional, e.g. invited paper

\relatedversion{} %optional, e.g. full version hosted on arXiv, HAL, or other respository/website
%\relatedversiondetails[linktext={opt. text shown instead of the URL}, cite=DBLP:books/mk/GrayR93]{Classification (e.g. Full Version, Extended Version, Previous Version}{URL to related version} %linktext and cite are optional

\supplement{The code and information about the data acquisition is available at: \url{https://github.com/PhilipMayer94/dynamic-programming-for-min-error-triangulations}.
}
%optional, e.g. related research data, source code, ... hosted on a repository like zenodo, figshare, GitHub, ...
%\supplementdetails[linktext={opt. text shown instead of the URL}, cite=DBLP:books/mk/GrayR93, subcategory={Description, Subcategory}, swhid={Software Heritage Identifier}]{General Classification (e.g. Software, Dataset, Model, ...)}{URL to related version} %linktext, cite, and subcategory are optional

%\funding{(Optional) general funding statement \dots}%optional, to capture a funding statement, which applies to all authors. Please enter author specific funding statements as fifth argument of the \author macro.

\acknowledgements{We thank the anonymous reviewers for their insightful comments and suggestions.}

\funding{This work is funded by the Deutsche Forschungsgemeinschaft (DFG, German Research Foundation) under Germany's Excellence Strategy -- EXC-2047/1 -- 390685813 and DFG grant RO 5439/1-1.}%optional

\nolinenumbers %uncomment to disable line numbering

%Editor-only macros:: begin (do not touch as author)%%%%%%%%%%%%%%%%%%%%%%%%%%%%%%%%%%
\EventEditors{Xavier Goaoc and Michael Kerber}
\EventNoEds{2}
\EventLongTitle{38th International Symposium on Computational Geometry (SoCG 2022)}
\EventShortTitle{SoCG 2022}
\EventAcronym{SoCG}
\EventYear{2022}
\EventDate{June 7--10, 2022}
\EventLocation{Berlin, Germany}
\EventLogo{}
\SeriesVolume{224}
\ArticleNo{XX}  % <-- This will be filled in by the typesetters
%%%%%%%%%%%%%%%%%%%%%%%%%%%%%%%%%%%%%%%%%%%%%%%%%%%%%%

\begin{document}

\maketitle

%TODO mandatory: add short abstract of the document
\begin{abstract}
We apply state-of-the-art computational geometry methods to the problem of reconstructing a time-varying sea surface from tide gauge records. 
Our work builds on a recent article by Nitzke et al.~(Computers \& Geosciences, 157:104920, 2021) who have suggested to learn a triangulation $D$ of a given set of tide gauge stations. The objective is to minimize the misfit of the piecewise linear surface induced by $D$ to a reference surface that has been acquired with satellite altimetry.
The authors restricted their search to k-order Delaunay ($k$-OD) triangulations and used an integer linear program in order to solve the resulting optimization problem. 

In geometric terms, the input to our problem consists of two sets of points in $\R^2$ with elevations: a set $\mathcal{S}$ that is to be triangulated, and a set $\mathcal{R}$ of reference points. Intuitively, we define the error of a triangulation as the average vertical distance of a point in $\mathcal{R}$ to the triangulated surface that is obtained by interpolating elevations of $\mathcal{S}$ linearly in each triangle. Our goal is to find the triangulation of $\mathcal{S}$ that has minimum error with respect to $\mathcal{R}$.

In our work, we prove that the minimum-error triangulation problem is NP-hard and cannot be approximated within any multiplicative factor in polynomial time unless $P=NP$. 
At the same time we show that the problem instances that occur in our application (considering sea level data from several hundreds of tide gauge stations worldwide) can be solved relatively fast using dynamic programming when restricted to $k$-OD triangulations for $k\le 7$.  
In particular, instances for which the number of connected components of the so-called $k$-OD fixed-edge graph is small can be solved within few seconds.
\end{abstract}

\section{Introduction}\label{sec:Introduction}
Reconstructing the sea level for the past is of paramount importance for understanding the influences of climate change.
Two types of observational data are often used for this task: (1) data from tide gauge stations, which are usually located at the sea shore, and (2) gridded altimeter data acquired from satellites.
The tide gauge data is available from the 18th century from stations that are sparsely distributed globally
(e.g., the RLR database given by the PSMSL contains 1\,548 stations).
The gridded altimeter data, which has been acquired since 1993, admits much more accurate reconstructions of the sea surface for the last 29 years.
We build on the work by Nitzke et al.~\cite{NitzkeNFKH2021}, who suggested an approach 
for combining these two types of data using integer linear programming techniques.
The approach is to learn a plausible triangulation of the tide gauge stations for an epoch $E$ for which the altimeter data is available, 
and then use that triangulation to reconstruct the sea surface in another epoch, where gauge data is available, but no altimeter data. 
Given the gauge and altimeter data for $E$, the task is to compute a minimum-error triangulation of the gauge stations, 
that is, a triangulation that minimizes the sum of squared differences between the reference (altimeter) data and the piecewise linear surface defined with the triangulation.

For piecewise linear surfaces, Delaunay triangulations are often chosen, since they have many desirable properties.
However, they are unique and so they do not have potential for optimization. 
On the other hand, computing a minimum-error triangulation among the set of all triangulations can lead to badly shaped triangles, 
which can cause large interpolation errors for epochs other than the training epoch.
Therefore, Nitzke et al.~\cite{NitzkeNFKH2021} suggested computing a triangulation of minimum error among all \emph{$k$-order} Delaunay ($k$-OD) triangulations~\cite{GudmundssonHK2002}. 
A $k$-OD triangulation consists of triangles with up to $k$ points inside each triangle's circumcircle ($k=0$ corresponds to Delaunay triangles). 
This creates room for optimization while ensuring (reasonably) well-shaped triangles.
Moreover, restricting the solution to the set of $k$-order Delaunay triangulations has computational advantages. 
\begin{figure}[t]
\begin{center}
	\includegraphics[height=40mm]{./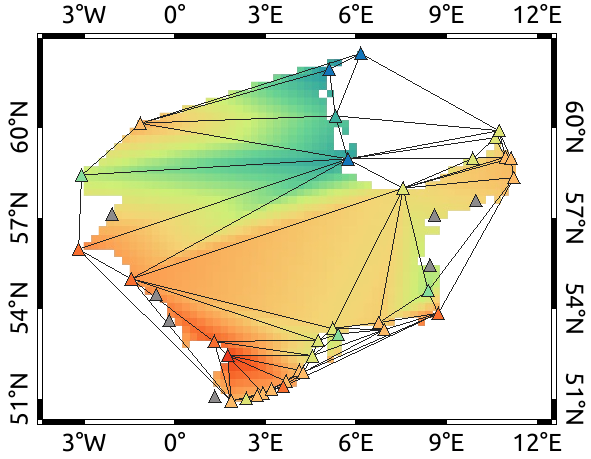} \hspace{10mm}
	\includegraphics[height=34mm]{./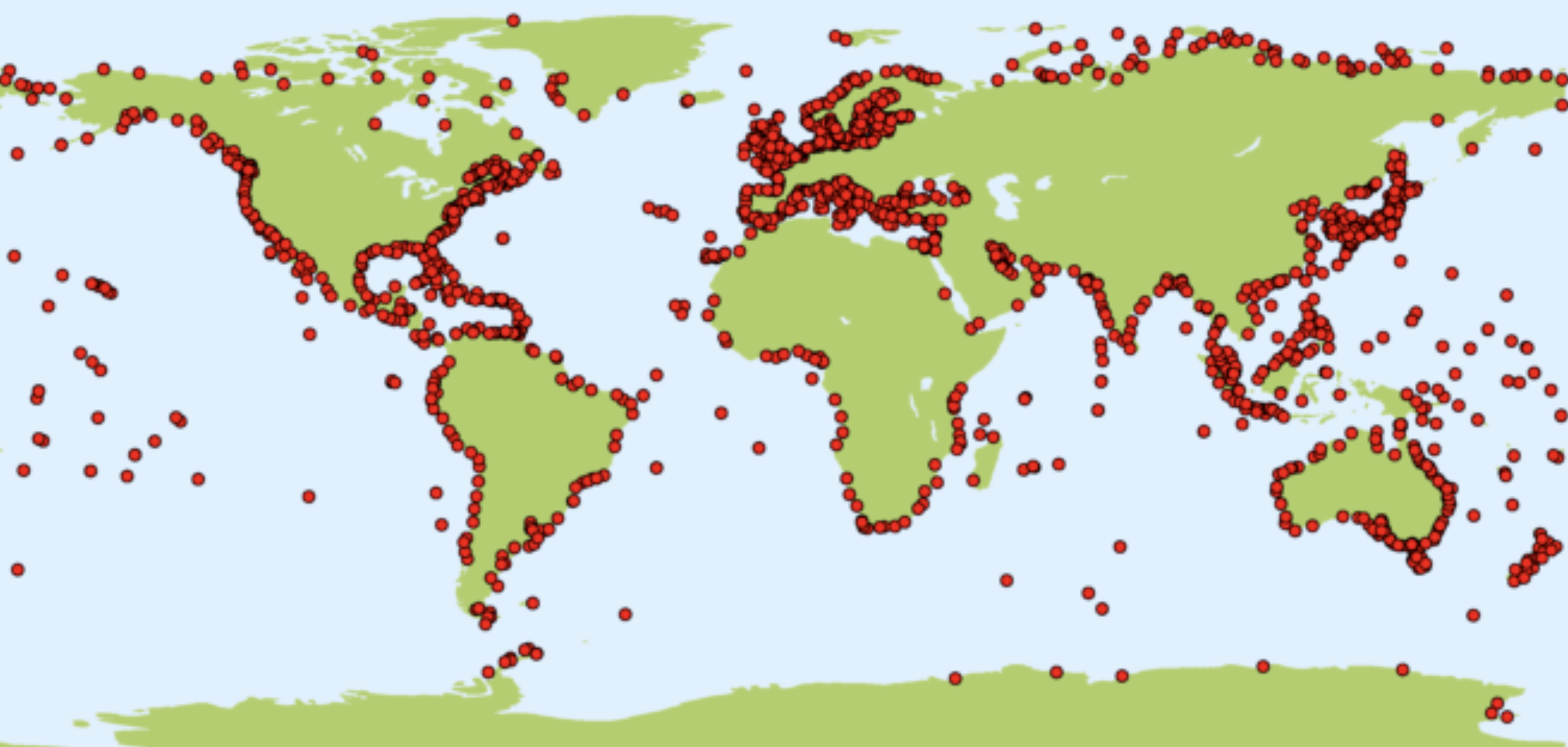} 
	\caption{Left: A minimum-error triangulation of the North Sea data (June 2010) with 34 tide gauge stations computed with the approach in~\cite{NitzkeNFKH2021}. Right: Locations of all 
	tide gauge stations in the PSMSL database (\url{www.psmsl.org/products/data_coverage}) }
		\label{fig:NorthseaILP}
	\end{center}
\end{figure}
Nitzke et al.~\cite{NitzkeNFKH2021} modeled their approach as an integer linear program (ILP) and evaluated it on the North Sea dataset with up to 40 stations and $k\le 3$, whose locations are projected on the plane; see Figure~\ref{fig:NorthseaILP}. 
The evaluation showed that the $k$-OD minimum-error triangulation is substantially more effective than the method based on the Delaunay triangulation suggested in~\cite{OlivieriS2016} for Sea Surface Anomaly reconstructions of up to 19 years back in time. 

The aim of our work is to speed up the above approach using computational geometry in order to apply it to areas of global extent (instances with up to 800 tide gauge stations). 

\subparagraph*{Our contribution:} 
\begin{itemize}
\item We first show that 
the minimum-error triangulation problem is NP-hard and that it is even NP-hard to approximate an optimal solution.
\item We discuss an alternative optimization approach to the ILP-based one by Nitzke et al.~\cite{NitzkeNFKH2021}. Our approach is based on the dynamic programming (DP) algorithm by Silveira and Van Krefeld~\cite{SilveiraK2009}. The runtime of the DP algorithm depends on the Delaunay order $k$; since we are only interested in small orders, we are able to calculate minimum-error order-$k$ Delaunay triangulations for the datasets given by the sea surface reconstruction problem. 
\item The algorithm's runtime depends on a subgraph of the Delaunay triangulation, which we call the order-$k$ fixed-edge graph. 
 It is known that for order $1$ the fixed-edge graph is connected \cite{GudmundssonHK2002}. We investigate the fixed-edge graph for orders $k=2,3$. We show that for $k=2$ no vertex can be isolated and give an example where the fixed-edge graph is not connected. For $k\geq 3$ we give an example where $\lfloor \frac{n}{6}\rfloor$ connected components are inside a face of the fixed-edge graph, which implies exponential runtime for the algorithm. This complements the observations by Silveira et al.\ given in \cite{SilveiraK2009}.
\item We perform experiments with different projections of the tide gauge dataset to analyze the structure of the fixed-edge graphs for a real-world dataset. Our experiments confirm the assumption by Silveira and Van Krefeld~\cite{SilveiraK2009} that the DP algorithm can be used to solve practical problems for medium-sized datasets, if the order is small ($k\leq 7$).
\item Lastly, we perform the reconstruction task that was given in \cite{NitzkeNFKH2021} for the global dataset. Our evaluation shows that on the used global dataset with up to 800 stations the quality improves with growing $k$, which contrasts with the findings in \cite{NitzkeNFKH2021} on the local North Sea dataset with about 40 stations, where $k=2$ consistently delivered the best reconstructions.
\end{itemize}

The paper is organized as follows. First, we outline the formal definitions of the triangulation problem in Section~\ref{sec:TheTriangulationProblem}. After that, we discuss related works in Section~\ref{sec:RelatedWorks}. In Section~\ref{sec:NPHardness} we present our NP-hardness proof for the minimum-error triangulation problem. Section~\ref{sec:HODOptimization} presents the DP algorithm by Silveira  et al.~\cite{SilveiraK2009} and discusses our findings regarding the fixed-edge graphs. In Section~\ref{sec:Experiments} we provide the application of the DP algorithm to the sea surface reconstruction problem. Finally, we give our conclusion in Section~\ref{sec:Conclusion}. 

\section{The triangulation problem}\label{sec:TheTriangulationProblem}
Let $\mathcal{S}\subset\mathbb{R}^2$ be a set of $n$ points and $f\colon\mathcal{S} \rightarrow \mathbb{R}$. We call $\mathcal{S}$ the set of triangulation points and $f(s)$ the measurement value of $s\in\mathcal{S}$. Additionally, we are given a set $\mathcal{R}\subset\text{conv}(\mathcal{S})$ of $m$ points and a function $h \colon\mathcal{R} \rightarrow \mathbb{R}$. We refer to $\mathcal{R}$ as the set of reference points and to $h(r)$ as the reference value of $r\in\mathcal{R}$.

A triangulation $D$ of $\mathcal{S}$ is given by a maximal set of non-crossing straight-line edges between points in $\mathcal{S}$. We can extend the function $f$ on the points in $\text{conv}(\mathcal{S})$ by linearly interpolating $f$ in every triangle. In this way we obtain a piece-wise linear function $s_D\colon \text{conv}(\mathcal S)\rightarrow\mathbb{R}$.\\
The \emph{minimum-error triangulation problem} asks for a triangulation $D$ of $\mathcal{S}$ that minimizes the squared error between the reference values and the interpolation, i.e.,
\begin{align*}
\text{Err}_D(\mathcal{R})=\sum_{r\in\mathcal{R}}(s_D(r)-h(r))^2.
\end{align*}
For the dynamic programming algorithm used in our approach and discussed in Section~\ref{sec:HODOptimization},  
we transform  the minimum-error triangulation problem to the \emph{minimum triangle-weighted triangulation problem}.
Let $\mathfrak{T}$ be the set of all $O(n^3)$  possible triangles that may be used in any triangulation of $\mathcal{S}$. Then we can assign the weight
\begin{align*}
w_T(\mathcal{R})=\sum_{r\in T}(s_T(r)-h(r))^2
\end{align*}
to every triangle $T\in\mathfrak{T}$, where $s_T$ is the linear interpolation given by the triangle $T$. If we assume that no reference point lies on any triangulation edge, we get
\begin{align*}
\text{Err}_D(\mathcal{R})=\sum_{r\in\mathcal{R}}(s_D(r)-h(r))^2=\sum_{T\in D}\sum_{r\in T}(s_T(r)-h(r))^2=\sum_{T\in D}w_T(\mathcal{R}).
\end{align*}
To get rid of the previous assumption we assign points that lie on an edge $\overline{uv}$ only to the triangles left of $\overrightarrow{uv}$. Points coinciding with triangulation points can be ignored.

Using these weights our cost function becomes a decomposable measure as discussed by Bern and Eppstein in \cite{BernE1995}. Broadly speaking, decomposable measures are all measures that easily allow computation using dynamic programming approaches for triangulations.

\section{Related works}\label{sec:RelatedWorks}

\subparagraph*{Sea level reconstruction:} Conventional methods for %dynamic 
sea level reconstruction use global base functions (empirical orthogonal base functions) which are \emph{learned} within the altimeter decades \cite{Church+2004}.
Olivieri and Spada suggested the first triangulation-based reconstruction approach \cite{OlivieriS2016}.
However, this approach does not use the altimeter data in any way and generates a Delaunay triangulation of the station data.  
Nevertheless, the resulting reconstruction of the sea surface was quite promising.
The approach suggested by Nitzke et al.~\cite{NitzkeNFKH2021} marries the conventional thinking and the triangulation method.
The authors proposed the use of data-dependent triangulations which were introduced in \cite{DynLR1990} by Dyn, Levin and Rippa. 
The particular focus of Nitzke et al.\ were the minimum-error triangulations. Since they also want to reconstruct the sea level in the pre-altimetry era, they formulate the reconstruction as a learning task and use higher-order Delaunay constraints, which were introduced in \cite{GudmundssonHK2002} by Gudmundsson, Hammar and van Kreveld, as regularizer.

\subparagraph*{Triangulating point sets:} Triangulating point sets in the plane is a fundamental task of computational geometry. It is of high relevance for data interpolation and surface modeling tasks, where for every data point a data value (or height) is given in addition to the point’s two coordinates. The Delaunay triangulation is most often applied as it optimizes several criteria and can be computed efficiently. In particular, it maximizes the minimum angle among all the angles of all the triangles. 
\emph{Data-dependent triangulations} have been defined in \cite{DynLR1990} as triangulations that are computed under consideration of the data values.  As optimization criteria the authors have considered (1) smoothness criteria, (2) criteria based on three-dimensional properties of the triangles, (3) variational criteria, and (4) the minimum-error criterion, which is optimized by the previously defined minimum-error triangulation.

There are many heuristics for computing data-dependent triangulations ~\cite{Alboul2000,Brown91, DynLR1990,Wang2001}, which are usually based on Lawson's edge flip algorithm \cite{Lawson1977}.
For small instances, the problem can be solved to optimality based on integer linear programming~\cite{NitzkeNFKH2021}.
There are multiple fixed-parameter-tractable algorithms using dynamic programming for the minimum-weight triangulation (MWT) problem \cite{Klincsek1980,ChengT1995,BorgeldBC2008,AnagnostouC1993,Gilbert1979} that can be adapted for decomposable measures~\cite{BernE1995}. Using problem specific structural properties the MWT problem has been solved for instances with up to 30 million points \cite{Haas2018,FeketeHL+2020}.

In \cite{DEKOK2007,RodriguezS2017} heuristics and higher-order Delaunay constraints were used for terrain approximation.
Using established techniques, exact polynomial-time algorithms can be obtained for restricted cases with higher-order Delaunay constraints~\cite{ GudmundssonHK2002,SilveiraK2009}. However, prior to our work, little was known about the  complexity of computing or approximating minimum-error triangulations in the general case. For related problems some hardness results exist \cite{Agarwal1994,MulzerR2008}.

\section{Minimum-error triangulation is NP-hard}\label{sec:NPHardness}
\begin{figure}[t]
	\centering
	\includegraphics[scale=0.5]{./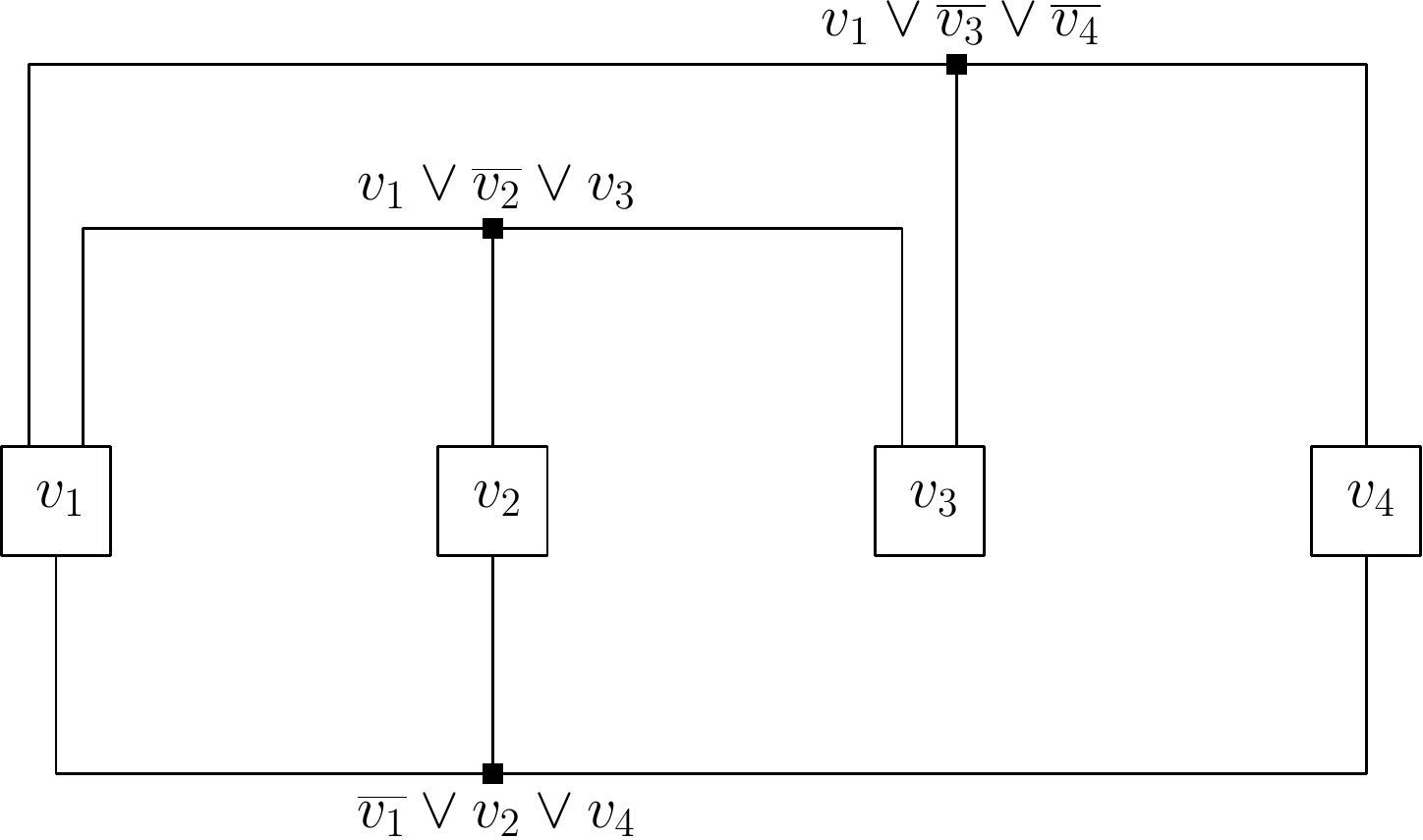}
	\caption{Embedding of the 3SAT formula $(\overline{v_1}\vee v_2\vee v_4)\wedge (v_1\vee\overline{v_2}\vee v_3)\wedge(v_1\vee\overline{v_3}\vee\overline{v_4}) $.}
	\label{fig:planar_3SAT}
\end{figure}
The \emph{zero-error triangulation problem} asks for a triangulation $D$ of $\mS$ with $s_D(r)=h(r)$ for all $r\in\mR$, or equivalently $\Err_D(\mR)=0$. We prove that this problem is NP-hard.
 \begin{restatable}{theorem}{mainTheorem}
	\label{thm}
	The zero-error triangulation problem is \textup{NP}-hard. Thus the minimum-error triangulation problem cannot be approximated within any multiplicative factor in polynomial time unless \textup{P=NP}.
\end{restatable}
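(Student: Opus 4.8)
The plan is to reduce from the NP-hard planar 3-SAT problem, using the rectilinear planar embedding of the variable--clause incidence graph (as in Figure~\ref{fig:planar_3SAT}) as a blueprint for placing gadgets in $\R^2$. Given a planar formula $\phi$, I would construct in polynomial time a set $\mS$ with measurement values $f$, together with a set $\mR\subset\conv(\mS)$ of reference points with values $h$, so that $\phi$ is satisfiable if and only if there is a triangulation $D$ of $\mS$ with $\Err_D(\mR)=0$. The whole drawing can be placed on a polynomially sized integer grid, so that all coordinates and values have polynomial bit length: a planar 3-SAT instance admits such a drawing with enough room to replace each variable, each wire segment, each corner, each ``split'', and each clause by a gadget of constant size.

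The guiding principle is that a zero-error triangulation is forced to reproduce a prescribed target surface wherever that surface is rigid, while retaining a binary choice exactly where the target surface is linear across a quadrilateral. I would choose $f$ and $h$ so that the desired piecewise-linear surface changes slope across certain designated segments; placing a few reference points on both sides of such a segment makes zero error impossible unless the segment appears as an edge of $D$. This pins down a rigid ``skeleton'' that every zero-error triangulation must contain and, in particular, walls each gadget off from its neighbours, so that in any zero-error triangulation no triangle can span two gadgets. A variable gadget is then a short chain of quadrilaterals over which the target surface is globally linear, so that each quadrilateral may be cut by either diagonal; an additional reference point between two consecutive quadrilaterals is designed to have zero error only when the two chosen diagonals are ``parallel'', which forces all quadrilaterals of the chain into one of exactly two global states -- the truth value of the variable. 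Wire gadgets carry this state along the edges of the embedding, corner gadgets let a wire turn, a split gadget duplicates the state so that a variable can reach every clause in which it occurs, and a negation gadget built into the wire entering a clause swaps the two states for a negated literal.

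The core of the reduction is the clause gadget, which must realise a logical OR purely through the \emph{existence} of a zero-error triangulation: three wires enter it, and the gadget should admit a zero-error completion precisely when at least one of the three wires is in its satisfying state. I would build a central polygon bounded by the three wire interfaces so that an incoming satisfying state opens up one admissible way to triangulate the polygon with zero error, while if all three incoming wires are in their unsatisfying state the forced breaklines leave a sub-polygon that cannot be triangulated without some triangle producing a nonzero contribution at a reference point. I expect the main obstacle to be precisely this clause gadget -- simultaneously getting the OR-semantics right, keeping it of constant size, and making it interface cleanly with the already-rigid wire gadgets -- together with the bookkeeping needed to show that every zero-error triangulation of $\mS$ decomposes into independent zero-error triangulations of the individual gadgets, which is what yields the exact correspondence between satisfying assignments of $\phi$ and zero-error triangulations of $\mS$.

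The inapproximability claim then follows immediately: if some polynomial-time algorithm approximated the minimum error within a multiplicative factor $\alpha$, then on an instance that admits a zero-error triangulation it would be forced to output a triangulation of error at most $\alpha\cdot 0 = 0$, thereby deciding the zero-error triangulation problem in polynomial time and implying $\mathrm{P}=\mathrm{NP}$.
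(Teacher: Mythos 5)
Your overall strategy coincides with the paper's: a polynomial reduction from planar 3SAT on a rectilinear embedding, with variable, wire, negation and clause gadgets whose binary triangulation choices encode truth values, plus the standard observation that any multiplicative approximation would decide the zero-error problem. However, as written the proposal has a genuine gap: it never supplies the mechanism that makes the gadgets actually work, namely a way to certify, for a given reference point, \emph{exactly which} triangles represent it with zero error and that \emph{no other} triangle in the whole point set can. Statements such as ``placing a few reference points on both sides of such a segment makes zero error impossible unless the segment appears as an edge of $D$'' and ``the reference point has zero error only when the two chosen diagonals are parallel'' are precisely the hard part: a zero-error triangulation is free to use unexpected triangles with far-away vertices, and ruling these out requires a structural tool. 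The paper obtains this tool by putting the measurement values on the paraboloid ($f(p)=p_1^2+p_2^2$) and defining each reference value via the plane through a lifted circle, so that zero-error representation is characterized by circle membership (Lemmas~\ref{lemma:circle_rep} and~\ref{lemma:lift_innout}); only then do finite case analyses over the integer grid (as in the proofs of Lemmas~\ref{lemma:bit} and~\ref{lemma:clause}) close off all cheating triangles, and even so the paper needs mandatory edges, later replaced by extra reference points, to isolate gadgets. You explicitly leave open the clause gadget's OR semantics and the decomposition argument that no triangle spans two gadgets, which together constitute the core of the proof rather than routine bookkeeping.

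There is also a concrete tension in your variable gadget as described: if the target surface is genuinely affine over the whole chain of quadrilaterals and the vertex values lie on that affine function, then \emph{every} triangulation of the chain interpolates all interior reference points exactly, so an additional reference point placed there cannot distinguish ``parallel'' from ``crossing'' diagonal patterns. To force consistency you need local non-linearity that is reproduced only by the two intended edge choices -- which is exactly what the paper's bit construction achieves by making the two admissible edges the chords of the coupled circle through the reference point. So the verdict is: right reduction skeleton and correct inapproximability argument, but the decisive technical idea (a paraboloid-type certification of zero-error triangles, or some substitute) and the clause construction are missing, and without them the claimed equivalence between satisfying assignments and zero-error triangulations is not established.
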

We prove this by a reduction from the planar 3SAT problem, which is NP-complete~\cite{Lichtenstein}. 
An instance of this problem can be embedded into the plane, where every clause is represented by a vertex and every variable by a box placed on the horizontal axis. A box is connected to a vertex via a rectilinear edge if the respective variable is contained in the clause.  
For an example, see Figure~\ref{fig:planar_3SAT}.
Such an embedding is also used, for example, in \cite{Knuth}. 

For every instance of the planar 3SAT problem we construct an instance for the zero-error triangulation problem by replacing the boxes, vertices and edges of its rectilinear embedding in the plane by a set of triangulation points and reference points. For this purpose we handle each component of the 3SAT embedding individually. We construct the \emph{variable gadgets} which replace the boxes, the \emph{wire gadgets}, which replace the rectilinear edges and finally the \emph{clause gadgets} and the \emph{negation gadgets}, where the first replace the vertices and the second can be attached to variable gadgets to handle negated variables in a clause. The combination of these gadgets then constitutes an instance to the zero-error triangulation problem. 

We ensure that there are two possible zero-error triangulations on the points belonging to a variable gadget and the attached negation gadgets and wire gadgets as follows.
Points from $\mS$ together with their measurement value can be seen as points in $\R^3$. We ensure that they lie on a paraboloid in $\R^3$ and exploit the properties of the paraboloid (its convexity and the correspondence of planes in $\R^3$ to circles in $\R^2$) to limit possible zero-error triangulations. Any such triangulation then corresponds to the assignment of value 0 (negative) or 1 (positive) to any variable. We claim that the instance can be triangulated with zero error if and only if the 3SAT instance is solvable. 

\begin{figure}[t]
    \hspace*{4cm}
	\includegraphics[scale=0.75, trim= 5cm 12cm 4cm 10cm, clip]{./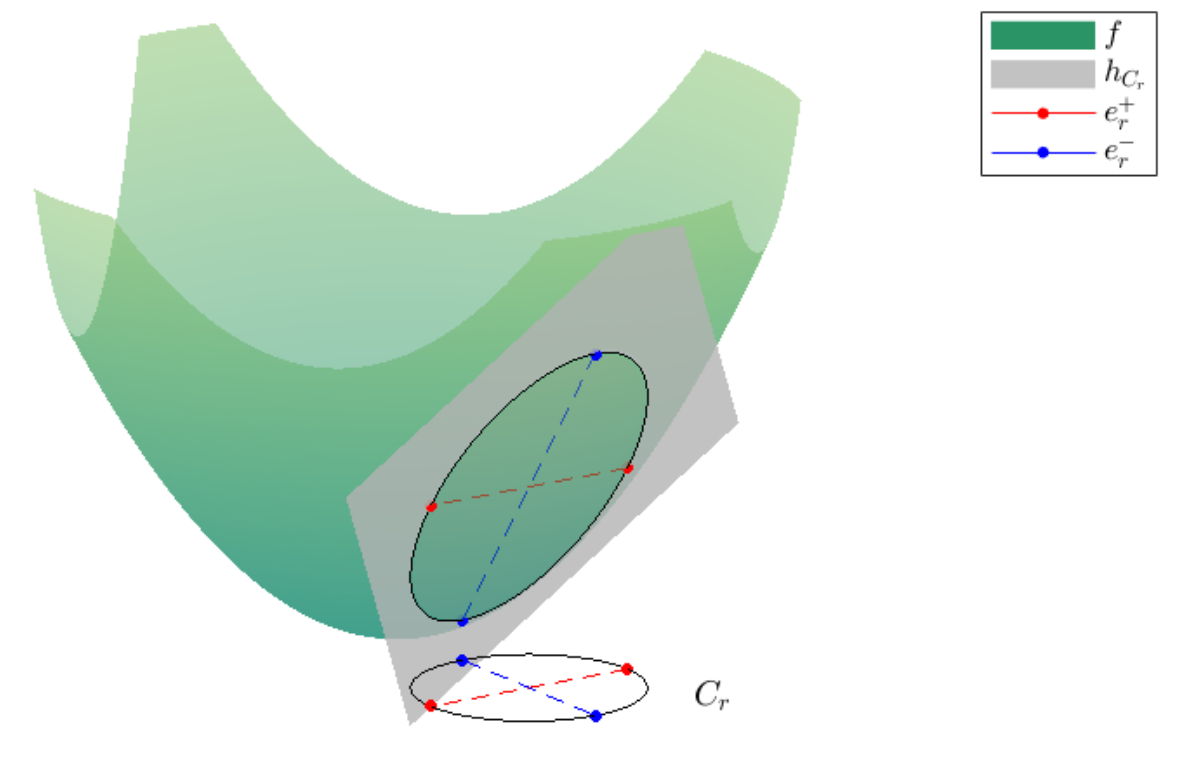}
	\caption{Example of a reference point $r$ with coupled circle $C_r$ and its positive/negative edges crossing at $r$. Lifting the red and blue points to $\R^3$, with their measurement values as third coordinate, we see that these points lie on both the paraboloid and the plane containing $(r,h_{C_r}(r))$.}
	\label{fig:paraboloid}
\end{figure}

\subsection{Notation and local properties}
Our triangulation instance consists of a set of triangulation points with integral coordinates $\mS\subset \Z^2$ and a set $\mR\subset\conv(\mS)$ of reference points. 
The measurement value of a point $p=(p_1,p_2)\in \mS$ is given by $f(p)=p_1^2+p_2^2$. In contrast, reference values are not determined by one single function. Instead we define a set of functions, one for every circle in $\R^2$, and choose for every reference point one of these functions which determines the reference value of this point. Concretely, let $C$ be a circle around a point $x=(x_1,x_2)$ with radius $\rho$. We denote with $I_C=\{y\in \R^2\mid \lVert x-y\rVert_2<\rho\}$ the \emph{interior} of $C$ and with $O_C=\R^2\backslash (C\cup I_C)$ the \emph{exterior} of $C$. Here $\lVert\cdot\rVert_2$ denotes the Euclidean norm. For a reference point $r=(r_1,r_2)\in \mR$ we define the function 
\[h_C(r)=2x_1r_1+2x_2r_2-x_1^2-x_2^2+\rho^2.\] 
The function graph of $f$ is the unit paraboloid $\{(p_1,p_2,p_1^2+p_2^2)\mid (p_1,p_2)\in \R^2\}$ and the 
function graph of $h_C$ is the plane containing the lifting of $C$ onto the paraboloid (Figure~\ref{fig:paraboloid}).

Every point $r\in \mR$ is then \emph{coupled} to a circle, which we denote by $C_r$. It will be defined during the construction of the gadgets and determines the reference value $h(r)=h_{C_r}(r)$.  
Let an edge $e=\overline{st}$ denote the convex hull of two points (its vertices) $s,t\in \R^2$. For each $r\in\mR$ we define a \emph{positive edge} $e_r^+$ and a \emph{negative edge} $e_r^-$ both having triangulation points lying on $C_r$ as endpoints and intersecting each other at $r$ (i.e.,  $e_r^+\cap e_r^-=\{r\}$). Figure~\ref{fig:paraboloid} shows the whole construction.
We say for a triangulation $D$ that the \emph{signal} at $r\in \mR$ is \emph{positive} if $D$ contains edge $e_r^+$ and \emph{negative} if it contains $e_r^-$, otherwise we call it \emph{ambiguous}. 
Similarly for every set $M\subset \mR$ we call $D$ \emph{positive} on $M$ if the signal at all $r\in M$ is positive and \emph{negative} on $M$ if the signal at all $r\in M$ is negative.  The error incurred by $D$ on $M$ is given by
\[\Err_D(M)=\sum_{r\in M}(s_D(r)-h(r))^2.\]

A triangle $T$ is the convex hull of three points $s,t,u\in \R^2$, which we call the vertices of $T$. We say that a triangle $T$ is in $D$ if all of its edges $\overline{st},\overline{tu},\overline{us}$ are in $D$ and $T$ does not contain further points from $\mS$, i.e., $T\cap \mS=\{s,t,u\}$.
We say that $r\in \mR$ is \emph{represented with zero error} by $T$ if $r\in T$ and the value at $r$ of the linear interpolation of $f$ on $T$ equals $h(r)$.
\begin{restatable}{rst-lemma}{lemmaCircleRep}  \label{lemma:circle_rep}
	Let $r$ be a point of $\mR$ and let $T\subset \R^2$ be a triangle with vertices $s,t,u$ and $r\in \conv(\{s,t,u\}\cap C_r)$. Then $r$ is represented with zero error by $T$.
\end{restatable}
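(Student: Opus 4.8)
The plan is to exploit the paraboloid/circle duality that is wired into the definitions of $f$ and $h_C$. The first step is a one-line algebraic identity: if a point $v=(v_1,v_2)$ lies on a circle $C$ with centre $x=(x_1,x_2)$ and radius $\rho$, then expanding $(v_1-x_1)^2+(v_2-x_2)^2=\rho^2$ yields $v_1^2+v_2^2 = 2x_1v_1+2x_2v_2-x_1^2-x_2^2+\rho^2$, that is, $f(v)=h_C(v)$. So $f$ and $h_C$ agree on every point of $C$; equivalently, the lifting of $C$ onto the unit paraboloid lies entirely in the graph of the affine map $h_C$.

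Now fix $r\in\mR$ and a triangle $T$ with vertices $s,t,u$ satisfying $r\in\conv(\{s,t,u\}\cap C_r)$, and set $V:=\{s,t,u\}\cap C_r$ (so $|V|\in\{1,2,3\}$, since $\conv(\emptyset)=\emptyset$). The interpolation $s_T$ is the unique affine function on $\R^2$ with $s_T(p)=f(p)$ for $p\in\{s,t,u\}$, and $h_{C_r}$ is likewise affine. Since $r\in\conv(V)$, write $r=\sum_{v\in V}\lambda_v v$ with $\lambda_v\ge 0$ and $\sum_{v\in V}\lambda_v=1$. Using first that $s_T$ is affine, then the identity of the first step (each $v\in V$ lies on $C_r$), and then that $h_{C_r}$ is affine, we obtain
\[
s_T(r)=\sum_{v\in V}\lambda_v\, s_T(v)=\sum_{v\in V}\lambda_v\, f(v)=\sum_{v\in V}\lambda_v\, h_{C_r}(v)=h_{C_r}\Big(\sum_{v\in V}\lambda_v v\Big)=h_{C_r}(r)=h(r).
\]
Since $\conv(V)\subseteq\conv(\{s,t,u\})=T$, we also have $r\in T$, so $r$ is represented with zero error by $T$.

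I expect no serious obstacle: the statement is essentially a repackaging of the fact that planes in $\R^3$ correspond to circles in $\R^2$. The only points deserving a word of care are that the argument must go through the vertices that actually lie on $C_r$ rather than all three vertices of $T$ (handled uniformly by working with $V$ and the convex combination above: if $|V|=3$ then $\conv(V)=T$; if $|V|=2$ then $r$ lies on a chord of $C_r$; and if $|V|=1$ then $r$ coincides with a triangulation point and the claim is immediate), and that one should recall $s_T$ is affine and therefore respects convex combinations. This lemma is the workhorse for the gadgets: since each positive edge $e_r^+$ and each negative edge $e_r^-$ has both endpoints on $C_r$ and passes through $r$, any triangle of a triangulation that uses $e_r^+$ or $e_r^-$ automatically represents $r$ with zero error.
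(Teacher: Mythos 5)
Your proof is correct and follows essentially the same route as the paper: the paper invokes its auxiliary lemma stating that the lift of $C_r$ onto the paraboloid lies in the graph of $h_{C_r}$ (i.e., $f=h_{C_r}$ on $C_r$, which you re-derive inline by expanding the circle equation) and then concludes via affineness of $h_{C_r}$ and the convex combination of the vertices in $\{s,t,u\}\cap C_r$, exactly as in your chain of equalities. Your write-up is merely more explicit about the convex-combination step and the inclusion $r\in T$, which the paper leaves implicit.
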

If the 3SAT instance is satisfiable, we argue that there is a triangulation containing one of $e_r^{\pm}$ for every reference point $r$. Lemma~\ref{lemma:circle_rep} states that such a triangulation has in fact zero error (see also Figure~\ref{fig:paraboloid}). To represent $r$ with zero error in any other way, we need at least one triangulation point inside and one outside $C_r$. This follows from the convexity of $f$.
\begin{restatable}{rst-lemma}{lemmaInnOut} 
	\label{lemma:lift_innout}
	Let $T\subset\R^2$ be a triangle with vertices $s,t,u$ representing $r\in \mR$ with zero error.  If $r\notin \conv(\{s,t,u\}\cap C_r)$, then $\{s,t,u\}$ has a non-empty intersection with $I_{C_r}$ and $O_{C_r}.$
\end{restatable}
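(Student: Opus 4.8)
The plan is to reduce the statement to a single algebraic identity plus a short barycentric argument. For a circle $C$ with center $x=(x_1,x_2)$ and radius $\rho$, and any $p=(p_1,p_2)\in\R^2$, expanding the definitions of $f$ and $h_C$ gives
\[
f(p)-h_C(p)=p_1^2+p_2^2-2x_1p_1-2x_2p_2+x_1^2+x_2^2-\rho^2=\lVert p-x\rVert_2^2-\rho^2 .
\]
Hence, writing $g(p):=f(p)-h_{C_r}(p)$ for the coupled circle $C=C_r$, we have $g(p)<0$ iff $p\in I_{C_r}$, $g(p)=0$ iff $p$ lies on $C_r$, and $g(p)>0$ iff $p\in O_{C_r}$. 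This is precisely the place where the convexity of $f$ enters (the paraboloid lies strictly below the plane of $h_{C_r}$ exactly inside $C_r$), as already noted informally before the statement and used in Lemma~\ref{lemma:circle_rep}.

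Next I would set up barycentric coordinates. Since $r$ is represented with zero error by $T$, in particular $r\in T$, so $r=\alpha s+\beta t+\gamma u$ for some $\alpha,\beta,\gamma\ge 0$ with $\alpha+\beta+\gamma=1$. Then $s_T(r)=\alpha f(s)+\beta f(t)+\gamma f(u)$, and since $h_{C_r}$ is affine, $h(r)=h_{C_r}(r)=\alpha h_{C_r}(s)+\beta h_{C_r}(t)+\gamma h_{C_r}(u)$. The zero-error condition $s_T(r)=h(r)$ therefore rewrites as
\[
\alpha\, g(s)+\beta\, g(t)+\gamma\, g(u)=0 .
\]

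The key step is then a sign argument by contradiction. Suppose the conclusion fails; the two cases ``$\{s,t,u\}\cap I_{C_r}=\emptyset$'' and ``$\{s,t,u\}\cap O_{C_r}=\emptyset$'' are symmetric (swap $g$ with $-g$), so consider the first one, i.e.\ $g(s),g(t),g(u)\ge 0$. Because the coefficients $\alpha,\beta,\gamma$ are nonnegative and the weighted sum above equals $0$, every summand must vanish, so each vertex carrying a positive barycentric coordinate satisfies $g=0$, i.e.\ lies on $C_r$. Collecting the vertices with positive coordinate into a set $V$, we obtain $V\subseteq\{s,t,u\}\cap C_r$, and $r$ is a convex combination of the points of $V$, so $r\in\conv(V)\subseteq\conv(\{s,t,u\}\cap C_r)$, contradicting the hypothesis $r\notin\conv(\{s,t,u\}\cap C_r)$. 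Hence $\{s,t,u\}$ meets both $I_{C_r}$ and $O_{C_r}$.

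I do not expect a genuine obstacle here: the argument is essentially the statement that a strictly convex function cannot be reproduced by linear interpolation at a point unless that point lies in the convex hull of the interpolation nodes that are pinned to the graph. The only points needing a little care are that the barycentric representation of $r$ is available because ``represented with zero error'' already includes $r\in T$, and that vertices which happen to lie on $C_r$ but receive barycentric weight $0$ are harmless, since they simply are not in $V$. Degenerate situations (such as $r$ coinciding with a vertex of $T$, or $s,t,u$ being collinear) are covered automatically, as the argument uses only convex combinations and the affineness of $h_{C_r}$.
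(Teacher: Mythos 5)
Your proposal is correct and follows essentially the same route as the paper: both expand $f-h_{C_r}$ via the identity $f(p)-h_{C_r}(p)=\lVert p-x\rVert_2^2-\rho^2$, write $r$ as a convex combination of $s,t,u$, use affineness of $h_{C_r}$ to get the vanishing weighted sum of the sign terms, and conclude from the signs that either both $I_{C_r}$ and $O_{C_r}$ are hit or all positive-weight vertices lie on $C_r$, contradicting $r\notin\conv(\{s,t,u\}\cap C_r)$. The only difference is presentational (you argue by contradiction with a symmetry remark, the paper does a direct case split), so nothing further is needed.
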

We guarantee during the construction that only few triangulation points lie in $I_{C_r}$ for each reference point $r$. With a concise case analysis we rule out that any of them can be used together with a point in $O_{C_r}$ to form a triangle that represents $r$ with zero error, which limits the choice to triangles containing one of $e_r^{\pm}$. This ensures that every zero-error triangulation yields a solution to the 3SAT instance. 

Our triangulation instance contains a set of \emph{mandatory edges} that we require to be part of any feasible triangulation of $\mS$. Mandatory edges are not part of the zero-error triangulation problem as defined in Section~\ref{sec:TheTriangulationProblem}, but they can be eliminated by an additional construction.

\subsection{The gadgets}
At the core of our reduction lies the design of the gadgets that constitute the triangulation instance. Before we dedicate ourselves to the more complicated gadgets we construct smaller elements called \emph{bits} and \emph{segments} which then are combined into the larger gadgets. 

A \emph{bit} at $r\in\Z^2$ occupies a small construction around the central point $r$, which is also the only reference point of this bit, and can be oriented either horizontally or vertically. We describe the horizontal bit. 
Point $r$ is coupled to a circle $C_r$ which is centered on $r$ and has radius $\sqrt 2$. The integer grid points on this circle, that is, the points $r+(\pm1,\pm1)$, are triangulation points. Moreover $r+(0,1)$ and $r+(0,-1)$ are triangulation points, whereas $r+(-2,0), r+(-1,0), r+(1,0)$ and $r+(2,0)$ are \emph{not}. Therefore, we call the latter points \emph{forbidden}. Furthermore we define the positive and negative edge as 
\[
e_r^+=\conv(r+(-1,-1),r+(1,1)), \quad e_r^-=\conv(r+(-1,1),r+(1,-1)).
\]	
As $r+(\pm1,\pm1)\in C_r$, any triangle containing either $e_r^+$ or $e_r^-$ represents $r$ with zero error by Lemma~\ref{lemma:circle_rep}. For the vertical bit we switch the definition of the positive and negative edge and rotate the whole construction by $\frac{\pi}{2}$. Figure~\ref{fig:bit} illustrates both constructions.
\begin{restatable}{rst-lemma}{lemmaBit}  \label{lemma:bit}
	Suppose the instance contains a bit at $r$. If $\mS\subset \Z^2$ and $\mS$ does not contain forbidden points of the bit, any triangulation $D$ of $\mS$ with $\Err_D(r)=0$ contains one of $e_r^\pm$. 
\end{restatable}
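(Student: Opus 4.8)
The plan is to prove that the triangle $T=\conv(\{s,t,u\})$ of $D$ to which $r$ belongs has $e_r^+$ or $e_r^-$ among its three edges; that edge is then an edge of $D$ and we are done. If $r\in\mS$ then $s_D(r)=f(r)=r_1^2+r_2^2$, while $h(r)=h_{C_r}(r)=r_1^2+r_2^2+\rho^2=r_1^2+r_2^2+2\ne f(r)$, so $\Err_D(r)>0$ and the statement is vacuous; hence assume $r\notin\mS$. Since $\Err_D(r)=0$, the triangle $T$ represents $r$ with zero error, so Lemma~\ref{lemma:lift_innout} yields two cases: either (a) $r\in\conv(\{s,t,u\}\cap C_r)$, or (b) $\{s,t,u\}$ meets both $I_{C_r}$ and $O_{C_r}$. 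I would settle (a) in one line and put all the work into (b).

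\emph{Case (a).} Since $C_r$ has radius $\sqrt2$ and is centred on the integer point $r$, the only integer points of $C_r$ — hence the only points of $\mS$ on $C_r$ — are the four corners $r+(\pm1,\pm1)$, which form a square with centre $r$. The centre of a square lies in the convex hull of a subset of its corners only when that subset contains two antipodal corners, and a pair of antipodal corners of $T$ spans an edge of $T$; by definition this edge is $e_r^+$ or $e_r^-$.

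\emph{Case (b).} The integer points of $I_{C_r}$ are $r$ (not in $\mS$), $r\pm(1,0)$ (forbidden), and $r\pm(0,1)$, so the vertex of $T$ inside $C_r$ is $r+(0,1)$ or $r+(0,-1)$; by the up--down symmetry of the bit I may assume $w:=r+(0,1)$ is a vertex. Let $z$ be a vertex of $T$ in $O_{C_r}$ and $v$ the third vertex. Because $r\pm(2,0)$ are forbidden and no integer point is at squared distance $3$ from $r$, every vertex of $T$ in $O_{C_r}$ has squared distance at least $4$ from $r$. The argument then uses three facts. First, reading the zero-error condition off the paraboloid: $g(p):=s_T(p)-h_{C_r}(p)$, with $s_T$ the affine interpolation of $f$ at the vertices of $T$, is an affine function of $p\in\R^2$ with $g(r)=0$ and, at every vertex $p$ of $T$, $g(p)=f(p)-h_{C_r}(p)=\lVert p-r\rVert_2^2-2$; thus $g(w)=-1$, $g=0$ at any vertex on $C_r$, and $g\ge2$ at any vertex in $O_{C_r}$. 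Second, $T$ is a triangle of $D$, so no point of $\mS$ — in particular $r+(0,-1)$ — lies in the relative interior of an edge of $T$ or in the interior of $T$. Third, $w$ is directly above $r$, so if $r$ is interior to $T$ then the downward vertical ray from $w$ through $r$ leaves $T$ through the edge $\overline{vz}$ at a point $q=r+(0,-c)$ with $c>0$, and $\overline{wq}\subseteq T$. I would first rule out $r$ lying on an edge of $T$: an edge incident to $w$ through $r$ is a vertical segment from $w$ to some point $r+(0,-c')$ which is either $r+(0,-1)\notin O_{C_r}$ (impossible for a vertex in $O_{C_r}$) or contains $r+(0,-1)\in\mS$ in its interior; and $r\in\overline{vz}$ makes $g(r)=\mu_vg(v)+\nu_zg(z)$ a sum of nonnegative terms, zero only if $v,z\in C_r$, in which case $\overline{vz}$ is a diagonal of the corner square and equals $e_r^{\pm}$. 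It remains to treat $r$ interior to $T$. If $v$ and $z$ both lie in $O_{C_r}$, substituting the barycentric coordinates $r=\lambda_ww+\mu_vv+\nu_zz$ into $g(r)=0$ gives $\lambda_w=\mu_vg(v)+\nu_zg(z)\ge2(\mu_v+\nu_z)=2(1-\lambda_w)$, so $\lambda_w\ge\tfrac23$ and $c=\lambda_w/(1-\lambda_w)\ge2$; then $r+(0,-1)\in\overline{wq}\subseteq T$ is a point of $\mS$ that is not a vertex of $T$, contradicting the second fact. If $v\in C_r$ (one of the four corners), then $g(v)=0$ and $g(r)=0$ together with the barycentric identity reduce to an integer equation of the shape $\lVert z-r\rVert_2^2-\bigl\lvert(z-r)_1-(z-r)_2\bigr\rvert=2$, which has only finitely many solutions; a direct check shows each is a forbidden point, a point of $C_r$, or produces a triangle that does not contain $r$ or that contains $r+(0,-1)\in\mS$ on an edge or in its interior. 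Finally, if $v=r+(0,-1)$ — the only remaining way for a second vertex to lie in $I_{C_r}$ — then $g\equiv-1$ on the vertical line through $w$ and $v$, which contains $r$, contradicting $g(r)=0$. Hence (b) cannot occur, so (a) holds and $D$ contains $e_r^+$ or $e_r^-$.

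The step I expect to be the obstacle is the subcase bookkeeping in (b), specifically excluding the ``near-miss'' configurations: $\conv(r+(0,1),r+(-1,-1),r+(2,0))$ does represent $r$ with zero error and is ruled out only because $r+(2,0)$ is forbidden, and $\conv(r+(0,1),r+(-1,-1),r+(1,-2))$ is ruled out only because $r+(0,-1)\in\mS$ falls inside it. These examples are precisely why the bit declares all four points $r\pm(1,0),r\pm(2,0)$ forbidden, and a clean proof must make that dependence transparent.
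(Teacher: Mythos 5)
Your proof is correct, and it ends up excluding exactly the same critical lattice configurations as the paper, but it gets there by a genuinely different route. The paper's proof lifts the triangle onto the paraboloid: after fixing the inside vertex at $r+(0,1)$, the zero-error condition forces the remaining vertices onto a one-parameter family of circles $C_T$ (center $(a,-1/2)$, squared radius $a^2+9/4$), and the argument enumerates integer candidates by the sign of their first coordinate; the decisive exclusions are precisely your two ``near misses'': $r+(2,0)$ is forbidden, and the triangles through $r+(1,-2)$ or $r+(2,-1)$ trap $r+(0,-1)\in\mS$. You instead stay in the plane with the affine defect $g=s_T-h_{C_r}$ and split on where the other two vertices lie relative to $C_r$: the bound $g\ge 2$ on integer points of $O_{C_r}$ (squared distance $3$ being impossible in $\Z^2$) yields the uniform barycentric estimate $\lambda_w\ge 2/3$, which traps $r+(0,-1)$ inside $T$ whenever both remaining vertices are outside --- a case the paper handles only implicitly through its circle parametrization --- while the mixed case ($v$ on $C_r$) reduces to a short integer-circle equation whose finitely many solutions you check, the same finite check as the paper's Case~2. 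Your decomposition buys a clean separation between the generic outside--outside case (no enumeration needed) and the genuinely delicate on-circle case, and it makes explicit where each forbidden point is used; the paper's lifting buys one unified enumeration governed by the single parameter $a$. One presentational wrinkle, not a gap: when ruling out a vertical edge of $T$ through $r$ ending at $r+(0,-1)$, your parenthetical ``impossible for a vertex in $O_{C_r}$'' is not by itself a reason (that endpoint would simply be the third vertex $v$ rather than $z$); the configuration is in fact disposed of by your final case $v=r+(0,-1)$, where $g\equiv-1$ on the vertical line through $w$ and $v$, so you should state that deferral explicitly rather than suggest the subcase is vacuous.
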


\begin{figure}
	\centering
	\includegraphics[scale=0.5]{./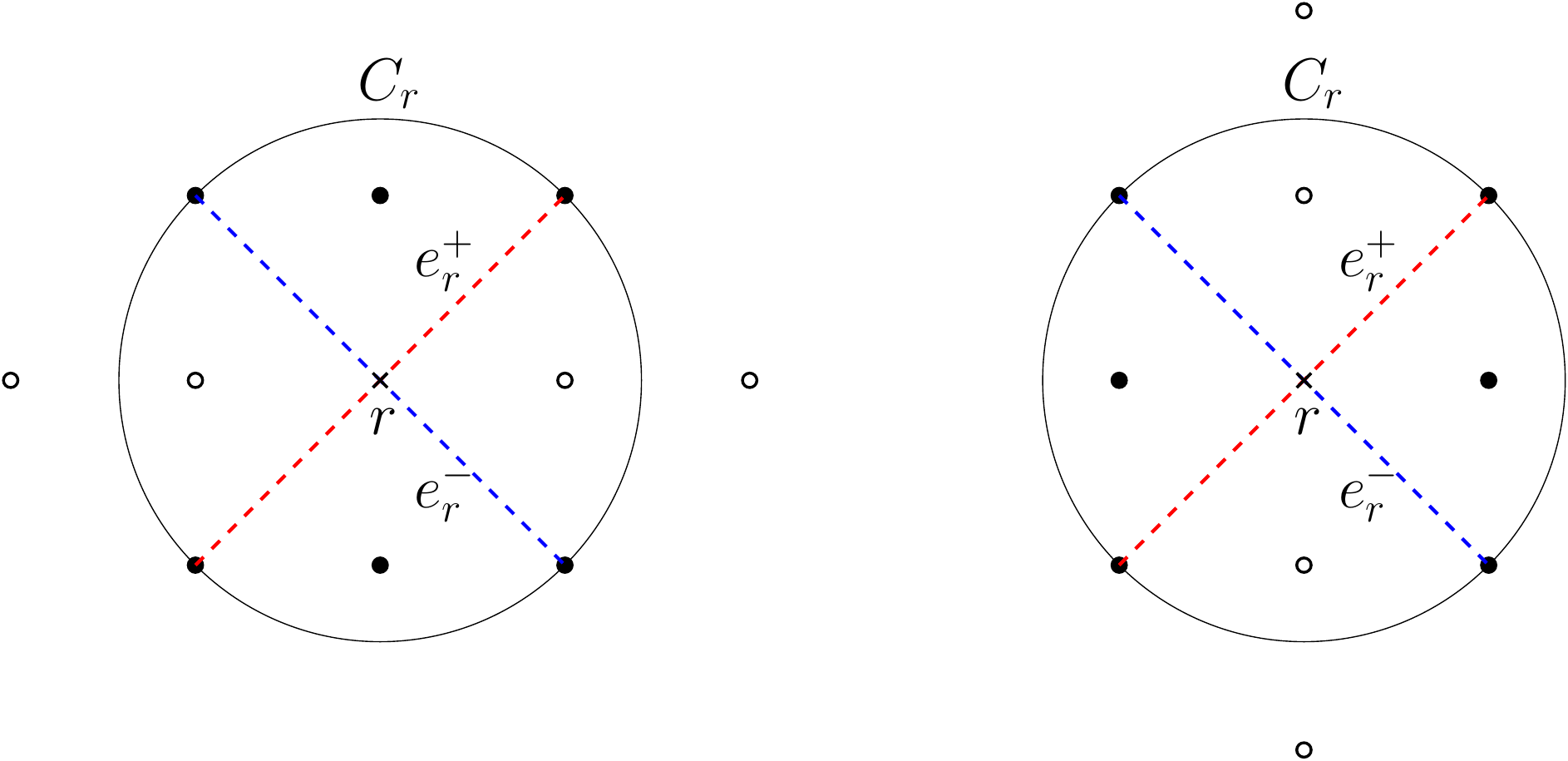}
	\caption{The (horizontal/vertical) bit at $r$ with the positive edge in red and the negative edge in blue. The black points are triangulation points and the white  points are forbidden.}
	\label{fig:bit}
\end{figure} 
The next larger components are the \emph{wire segment} and the \emph{multiplier segment}, which we build from bits. They can be combined at specified reference points, which we call \emph{anchor points}. These points are always reference points of bits.

A \emph{wire segment} connects two points $x,y\in \Z^2$ lying on the same horizontal or vertical line. We place a horizontal or vertical bit on $x,y$ and all integral points lying between these on the line connecting $x$ and $y$. The anchor points of this segment are $x,y$. 

A \emph{multiplier segment} at a point $x\in \Z^2$ consist of two horizontal bits at $x\pm(2,0)$ and two vertical bits at $x\pm(0,2)$. These four points are simultaneously anchor points. Furthermore we add four inner reference points $x\pm(0,1),x\pm(1,0)$ whose coupled circle is of radius $\sqrt{5}$ and centered around $x$. So the circle contains the points $x+(\pm2,\pm1),x+(\pm1,\pm2)$. Figure~\ref{fig:wire_mult_seg} shows the wire segment and the multiplier segment including mandatory edges and the positive/negative edges of the inner reference points.   
\begin{figure}
	\centering
	\includegraphics[scale=0.7]{./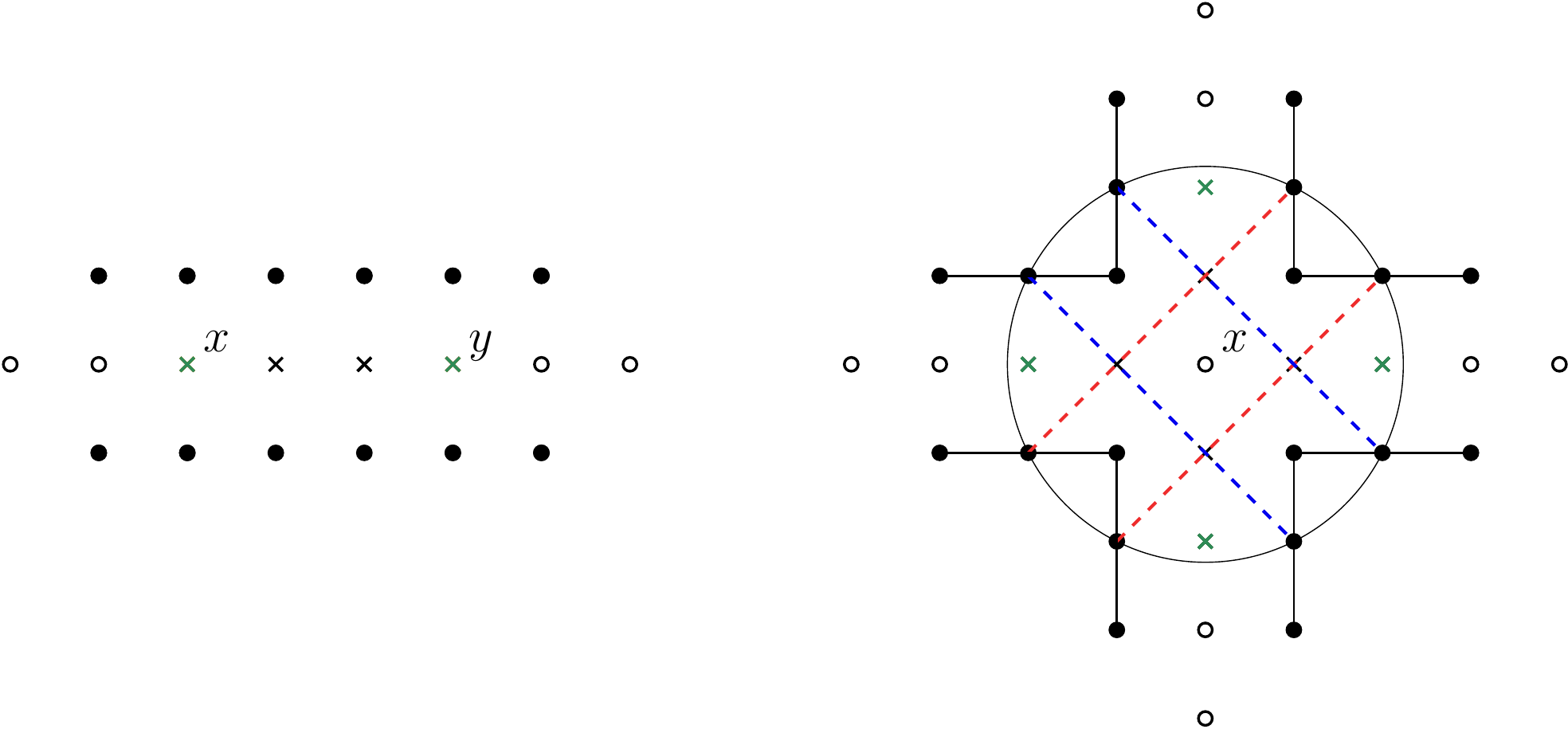}
	\caption{Example of a horizontal wire segment on the left and a multiplier segment with mandatory edges on the right. The red or blue edges indicate the positive or negative edges of the crossing points, respectively. All white points and all reference points are forbidden. The green points are anchor points.}
	\label{fig:wire_mult_seg}
\end{figure}
To obtain the larger variable gadget and wire gadget we combine wire segments with multiplier segments. Two segments can be combined if they share a common anchor point. By the combination of two segments we mean the union of their reference points and triangulation points. A point is forbidden in the combination if it is forbidden in at least one of the segments. Thus it is not allowed to combine two segments if a triangulation point of one is forbidden in the other. The set of anchor points of the combination is defined as the symmetric difference of anchor point sets of both segments. This way we can combine arbitrarily many segments.

Remember that the \emph{wire gadget} replaces the rectilinear edges of the 3SAT embedding, so it has to connect two points $x=(x_1,x_2),y=(y_1,y_2)\in \Z^2$. It consists of a multiplier segment placed on either $(x_1,y_2)$ or $(y_1,x_2)$ to form a corner, which is connected on two of its anchor points via two wire segments to both $x$ and $y$.
A \emph{variable gadget} at $v\in \Z^2$ consists of $\ell$ multiplier segments at sufficiently large distance $\alpha\in \Z$, which we do not specify further. Here $\ell$ denotes the number of clauses. Concretely, we place a multiplier segment on each of the points $v+(k\alpha,0)$ with $0\leq k\leq \ell-1$ and connect them via horizontal wire segments at their anchor points. The multiplier segments ensure that the gadget can later be connected at its anchor points to multiple clause gadgets. We observe that the described combinations of segments for both gadgets are allowed and that they have the following crucial property.
\begin{restatable}{rst-lemma}{lemmaWireVar}
	\label{lemma:wire}
	Suppose the instance contains a wire/variable gadget and let $\RR$ be the reference points of this gadget. If $\mS\subset \Z^2$ and $\mS$ does not contain forbidden points of the gadget, any triangulation $D$ of $\mS$ with $\Err_D(\RR)=0$ is either positive or negative on $\RR$. 
\end{restatable}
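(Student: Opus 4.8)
The plan is to track the positive/negative signal bit by bit through the whole gadget. By construction a wire gadget or a variable gadget is a union of wire segments and multiplier segments glued along shared anchor points, and every anchor point is the reference point of some bit. Since $\mS$ contains no forbidden point of the gadget, Lemma~\ref{lemma:bit} applies to every bit of the gadget: $D$ contains one of $e_r^{\pm}$ for each bit reference point $r$, so each bit carries a well-defined positive or negative signal. It therefore suffices to prove three things: (i) along a single wire segment all bit signals coincide; (ii) within a single multiplier segment all four bits carry the same signal, and the four inner reference points carry that same signal as well; and (iii) two segments sharing an anchor point agree at that anchor, which is immediate since the anchor is a single shared bit. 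Because the ``segment graph'' of the gadget (segments as nodes, shared anchors as edges) is connected, (i)--(iii) force one common signal on all of $\RR$; the labelling conventions for horizontal versus vertical bits are set up precisely so that this common value is globally consistent, which gives the claim.

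For (i), consider a horizontal wire segment at height $h_0$: the bit centers are $R_i=(i,h_0)$, the only triangulation points are the two rails $P^{\pm}_i=(i,h_0\pm1)$, and the middle row is forbidden. Using the mandatory rail edges, the strip between the rails becomes a chain of unit-width quadrilaterals, each triangulated by one of its two diagonals. If $D$ contains $e_{R_i}^{+}=\overline{P^{-}_{i-1}\,P^{+}_{i+1}}$, then inside the column over $[i,i+1]$ this edge enters at $(i,h_0)$ and ends at the corner $P^{+}_{i+1}$; the only point of $\mS$ that can serve as the third vertex of the $D$-triangle on its lower side there is $P^{-}_i$ (any point farther right would put $P^{-}_i$ on that triangle), so $D$ contains the ``positive'' diagonal $\overline{P^{-}_i\,P^{+}_{i+1}}$ of that column and hence not the opposite diagonal $\overline{P^{+}_i\,P^{-}_{i+1}}$. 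But if $D$ also contained $e_{R_{i+1}}^{-}=\overline{P^{+}_i\,P^{-}_{i+2}}$, the mirror-image argument would force exactly that opposite diagonal -- a contradiction. So $D$ contains $e_{R_{i+1}}^{+}$ by Lemma~\ref{lemma:bit}, and the positive signal propagates along the whole segment; the negative case is symmetric.

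For (ii), recall that the four inner reference points of a multiplier segment at $x$ are all coupled to one circle $C$ of radius $\sqrt5$ about $x$. One checks that the only triangulation points strictly inside $C$ are the four corners $x+(\pm1,\pm1)$, the only triangulation points on $C$ are $x+(\pm2,\pm1),x+(\pm1,\pm2)$, and that there are exactly four segments joining two points of $C$ through an inner reference point without meeting another point of $\mS$: a parallel pair $a,b$ whose union covers all four inner reference points, and a parallel pair $c,d$ whose union also covers all four, with $a$ crossing $c$ and $d$ at inner reference points and likewise $b$. By Lemma~\ref{lemma:circle_rep} these are exactly the positive and negative edges of the inner reference points (see Figure~\ref{fig:wire_mult_seg}), arranged so that $\{a,b\}\subseteq D$ means all four inner signals are positive and $\{c,d\}\subseteq D$ means all four are negative. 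The key step is that in a zero-error triangulation one of these two alternatives must hold: for each inner reference point $r$, Lemma~\ref{lemma:lift_innout} says that a triangle representing $r$ with zero error either realizes $r$ in the sense of Lemma~\ref{lemma:circle_rep} -- forcing the relevant candidate edge into $D$ -- or else has a vertex strictly inside $C$ (hence a corner $x+(\pm1,\pm1)$) together with a vertex strictly outside $C$. Once $\{a,b\}\subseteq D$ (say), the edge $a$ uses a point of each of two adjacent bits and crosses the negative edge of each of them, so those bits are positive; rotating the argument by multiples of $\pi/2$ (and using the swapped convention on vertical bits) makes all four bits positive, matching the four inner reference points. The case $\{c,d\}\subseteq D$ is symmetric.

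\textbf{Main obstacle.}
The crux is the finite case analysis inside the multiplier segment that eliminates the ``inside--outside'' triangles for an inner reference point -- one vertex strictly inside the radius-$\sqrt5$ circle and one strictly outside. This is where the mandatory edges, the condition $T\cap\mS=\{s,t,u\}$ (which rules out the larger triangles, as they would swallow a corner or a bit point), and a direct check that the surviving triangles do not lift $r$ onto the plane through their vertices all have to be combined. It is the same flavour as the analysis behind Lemma~\ref{lemma:bit}, but with more configurations to dispatch; by comparison the rigidity of the wire strip in (i) and the gluing across anchors in (iii) are routine.
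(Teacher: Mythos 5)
Your overall architecture matches the paper's: force a signal at every bit via Lemma~\ref{lemma:bit}, prove consistency within each wire segment and each multiplier segment (this is the paper's Lemma~\ref{lemma:line_mult_seg}), and then glue segments along shared anchor bits. For the wire segment your argument reaches the right conclusion, though it is more roundabout than necessary and leans on ``mandatory rail edges'' that the construction does not clearly provide: the paper simply observes that for two bits at distance one, $e_r^{+}$ crosses $e_q^{-}$ (and $e_r^{-}$ crosses $e_q^{+}$), so adjacent signals cannot disagree in any triangulation; no analysis of the triangles incident to the diagonal is needed.

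The genuine gap is in the multiplier segment, and you have in effect flagged it yourself under ``Main obstacle'' without closing it. Your step (ii) only works once you know that every inner reference point $r$ is represented by a triangle containing one of its two chord edges; Lemma~\ref{lemma:lift_innout} leaves open the alternative of a triangle with one vertex among the corners $x+(\pm1,\pm1)$ and one vertex strictly outside the radius-$\sqrt{5}$ circle, and ruling this out is precisely where the paper's proof does its real work. Two things are required there and are absent from your proposal: first, an isolation argument showing that the mandatory edges of the multiplier segment together with the already-forced bit edges $e_r^{\pm}$ of the four surrounding bits separate the inner reference points from the rest of the instance, so that the outside vertex cannot be an arbitrary far-away point of $\mS$; second, the finite case analysis itself, which the paper performs after normalizing $r=(-1,0)$ via Lemma~\ref{lemma:trans}, taking the inside vertex to be $(-1,1)$ or $(1,1)$, enumerating the admissible outside vertices (using that candidate triangles may not contain other triangulation points or cross mandatory edges), and checking numerically in each surviving configuration that the interpolated value at $r$ differs from $h_{C_r}(r)=5$. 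Without these two steps the claim that ``$\{a,b\}\subseteq D$ or $\{c,d\}\subseteq D$'' is unsupported, and with it the consistency of the inner points with the four bits; so as written the proposal does not prove the lemma, it reduces it to the statement whose proof is the actual content of Lemma~\ref{lemma:line_mult_seg}.
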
 
Now we define the \emph{clause gadget} at a point $c\in \Z^2$, which combines three signals. To this end we add a reference point $r_c=c+(0,11)$. Instead of a positive/negative edge it comes with three triangles $T_1,T_2,T_3$ whose vertices lie on $C_{r_c}$, each triangulating $r_c$ with zero error. The clause gadget can be connected to other gadgets at three anchor points $a_1,a_2,a_3$.
With an additional construction we block the triangle $T_i$ if the signal at $a_i$ is positive for $i=1,2$ and $T_3$ if the signal at $a_3$ is negative. 
\begin{figure}
	\centering
	\includegraphics[scale=0.6]{./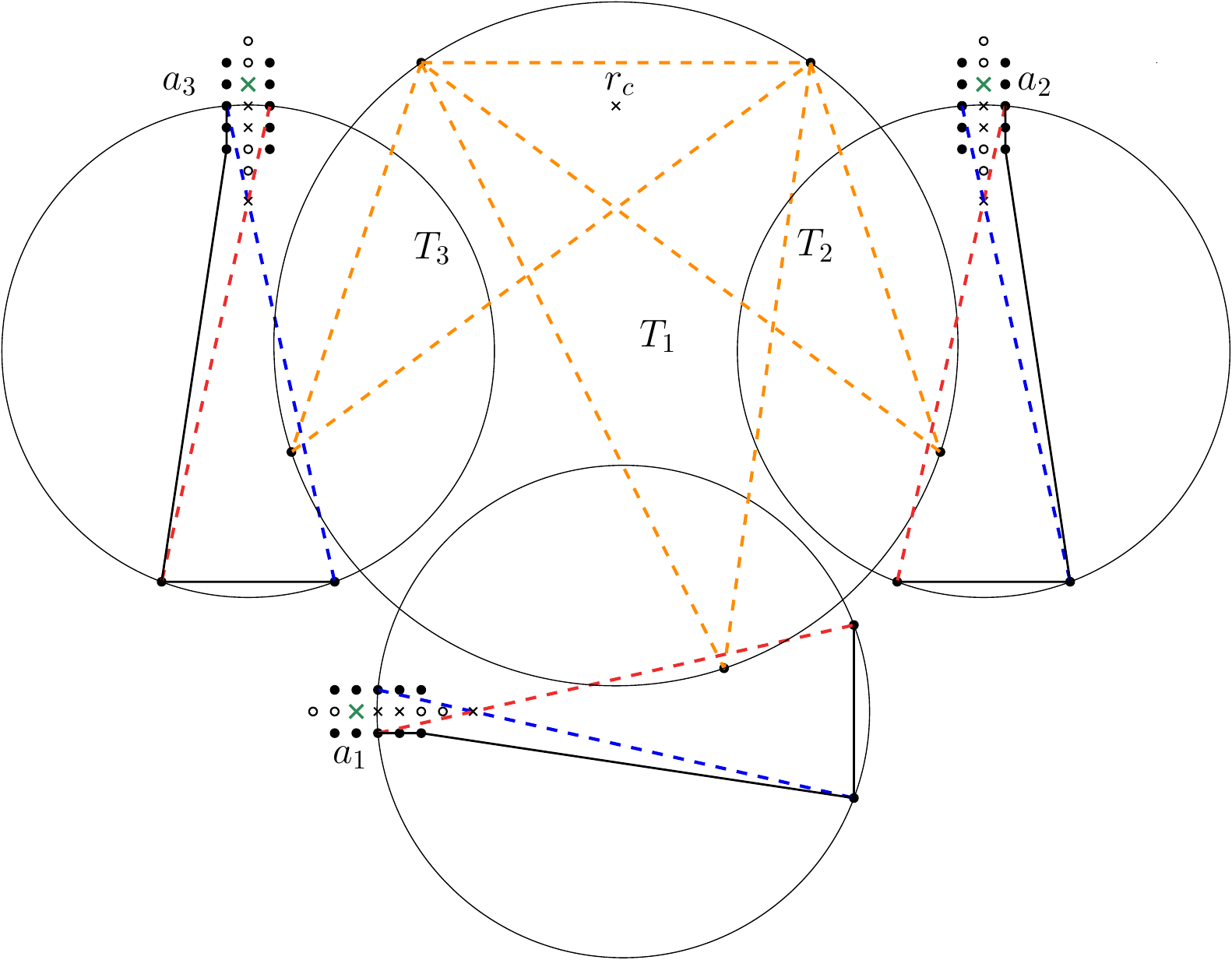}
	\caption{The clause gadget, where the red/blue edges indicate the positive/negative edges of the crossing points. The triangles $T_1,T_2,T_3$ are orange and the anchor points $a_1,a_2,a_3$ green.}
	\label{fig:klausel}
\end{figure}
For the construction we refer to Appendix~\ref{ap:np} and to Figure~\ref{fig:klausel}.
\begin{restatable}{rst-lemma}{lemmaKlausel}
	\label{lemma:clause}
	Suppose the instance contains a clause gadget and let $\RR$ be its reference points. If $\mS\subset \Z^2$ and $\mS$ does not contain forbidden points of the gadget, any triangulation $D$ of $\mS$ with $\Err_D(\RR)=0$ must be negative on one of the anchor points $a_1,a_2$ or positive on $a_3$.
\end{restatable}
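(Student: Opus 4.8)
The plan is to argue by contradiction. Suppose $D$ is a triangulation of $\mS$ with $\Err_D(\RR)=0$ that is positive on both $a_1$ and $a_2$ and negative on $a_3$; I will derive a contradiction from the requirement that the extra reference point $r_c=c+(0,11)$ be represented with zero error.

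Since $r_c\in\RR$ and $\Err_D(\RR)=0$, there is a triangle $T\in D$ with $r_c\in T$ that represents $r_c$ with zero error; write $s,t,u$ for its vertices. The first step is to show $T\in\{T_1,T_2,T_3\}$. If $r_c\in\conv(\{s,t,u\}\cap C_{r_c})$, then by construction the only triangulation points lying on $C_{r_c}$ in the relevant region are the vertices of $T_1,T_2,T_3$, so together with the non-crossing requirement $T$ must coincide with one of the $T_i$. If instead $r_c\notin\conv(\{s,t,u\}\cap C_{r_c})$, then Lemma~\ref{lemma:lift_innout} forces $\{s,t,u\}$ to meet both $I_{C_{r_c}}$ and $O_{C_{r_c}}$. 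Here I would use that the construction places only a small, explicitly listed set of triangulation points in $I_{C_{r_c}}$, and run a case analysis (of the same flavour as the one that will be needed for bits and multiplier segments): for each candidate interior point paired with an exterior point, the convexity of $f$ means the plane through the three lifted vertices lies strictly above or strictly below the plane of $h_{C_{r_c}}$ at $r_c$, so the zero-error condition fails. Hence this case cannot occur and $T\in\{T_1,T_2,T_3\}$.

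Next I invoke the blocking sub-construction described in Appendix~\ref{ap:np} (see Figure~\ref{fig:klausel}): if the signal at $a_i$ is positive then an edge forced into $D$ by that signal crosses an edge of $T_i$, so $T_i\notin D$, for $i=1,2$; and if the signal at $a_3$ is negative then likewise $T_3\notin D$. I also note that the anchor points $a_1,a_2,a_3$ are reference points of bits belonging to the gadget, so $\Err_D(a_i)=0$, and Lemma~\ref{lemma:bit} guarantees the signal at each $a_i$ is non-ambiguous, i.e.\ it is either positive or negative; thus ``not positive'' means ``negative''. Combining this with our assumption that $D$ is positive on $a_1$, positive on $a_2$ and negative on $a_3$, all three of $T_1,T_2,T_3$ are excluded from $D$, so no triangle of $D$ represents $r_c$ with zero error, contradicting $\Err_D(\RR)=0$. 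Therefore $D$ must be negative on $a_1$, negative on $a_2$, or positive on $a_3$.

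The step I expect to be the main obstacle is the geometric case analysis in the second paragraph: verifying that $r_c$ genuinely admits no zero-error representing triangle other than $T_1,T_2,T_3$. This needs a careful enumeration of the triangulation points the construction places in or on $C_{r_c}$ near $r_c$ and a check of each resulting candidate triangle against the zero-error condition via Lemmas~\ref{lemma:circle_rep} and~\ref{lemma:lift_innout}; once this is done, the conclusion follows purely from the deferred blocking construction and Lemma~\ref{lemma:bit}.
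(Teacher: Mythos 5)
There is a genuine gap, and it sits exactly where you off-load the work: the ``blocking sub-construction''. You assert that a positive signal at $a_i$ forces an edge into $D$ that crosses an edge of $T_i$. That is not how the gadget works: the bits near $a_1$ live around $a_1=(-12,-17)$ and their positive edges stay well below $T_1$ (whose lowest vertex is $(5,-15)$), so no signal edge at $a_i$ ever touches $T_i$ directly. The actual blocking is indirect and runs through the auxiliary reference point $r_i\in\RR$ (e.g.\ $r_1=a_1+(\tfrac{27}{5},0)$): the triangle $T_i$ crosses the long positive edge $e_{r_i}^+$, while the positive bit edge at $a_i+(2,0)$ crosses $e_{r_i}^-$. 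Hence if both $T_i\in D$ and the signal at $a_i$ is positive, neither $e_{r_i}^+$ nor $e_{r_i}^-$ can bound the triangle representing $r_i$, and one must then prove that \emph{no other} triangle represents $r_i$ with zero error: by Lemma~\ref{lemma:lift_innout} its vertex set would need points in $I_{C_{r_i}}$ and $O_{C_{r_i}}$, it must contain one of $a_i+(l,-1)$, $l=1,2,3$, and a case enumeration shows that every choice of an outside point makes the hull contain another triangulation point or cross a mandatory edge, $T_i$, or the positive bit edge. This yields $\Err_D(r_i)>0$, contradicting $\Err_D(\RR)=0$. That enumeration is the bulk of the paper's proof of this lemma; it cannot be cited as a ``deferred construction'', because proving that the construction achieves the blocking \emph{is} the content of the lemma -- the main text only announces it.

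By contrast, the step you flag as the main obstacle is nearly free: by the gadget's forbidden-point rule, every point of $C_{r_c}\cup I_{C_{r_c}}$ other than the five declared triangulation points is forbidden, so $I_{C_{r_c}}\cap\mS=\emptyset$ and Lemma~\ref{lemma:lift_innout} immediately kills the case $r_c\notin\conv(\{s,t,u\}\cap C_{r_c})$; the on-circle case leaves exactly $T_1,T_2,T_3$ since $r_c=(0,11)$ forces both points $(\pm 9,13)$ plus one of the three lower circle points. So your overall skeleton (one of $T_1,T_2,T_3$ must be in $D$; positive signals at $a_1,a_2$ and a negative signal at $a_3$ would exclude all three; Lemma~\ref{lemma:bit} makes the signals non-ambiguous) matches the paper, but the exclusion step needs the $r_i$-based argument sketched above rather than a direct crossing claim.
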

The last gadget, the \emph{negation gadget}, is discussed in Appendix~\ref{ap:np}. It is constructed out of wires, multipliers and simplified clause gadgets. Finally, we replace the mandatory edges by an additional construction and argue that all gadgets keep their crucial properties. Using them we construct the zero-error triangulation instance and prove Theorem~\ref{thm} in Appendix~\ref{ap:np}.

\section{Higher-order Delaunay optimization}\label{sec:HODOptimization}
In the previous section we established that finding a minimum-error triangulation is NP-hard. Moreover, the experiments in \cite{NitzkeNFKH2021} by Nitzke et al.\ suggest, that general minimum-error triangulations do not yield the most promising reconstructions of the sea surface. In their paper they used higher-order Delaunay (HOD) triangulations which allow a trade-off between a well shaped triangulation and a good approximation of the training dataset.

In this section we summarize the algorithm given by Silveira et al.\ in \cite{SilveiraK2009}. Additionally, we extend upon their work by investigating the fixed-edge graphs in more detail.

\medskip We only consider point sets $\mathcal{S}$ in general position, i.e.,  no four points lie on a circle and we denote the circle defined by three vertices $u,v,w\in\mathcal{S}$ by $C(u,v,w)$. A triangle $T_{u,v,w}$ is called an \emph{order-$k$ Delaunay ($k$-OD) triangle}, if $C(u,v,w)$ contains at most $k$ points from $\mathcal{S}$ in the interior. A triangulation is called \emph{$k$-OD triangulation}, if all of its triangles have order $k$ and an edge is called \emph{useful $k$-OD edge}, if some $k$-OD triangulation of $\mathcal{S}$ uses it; see Figure \ref{fig:OrderKTriangulation}.

\begin{figure}[!tb]
\begin{minipage}[t]{.58\textwidth}
\centering
\includegraphics[width=.9\textwidth]{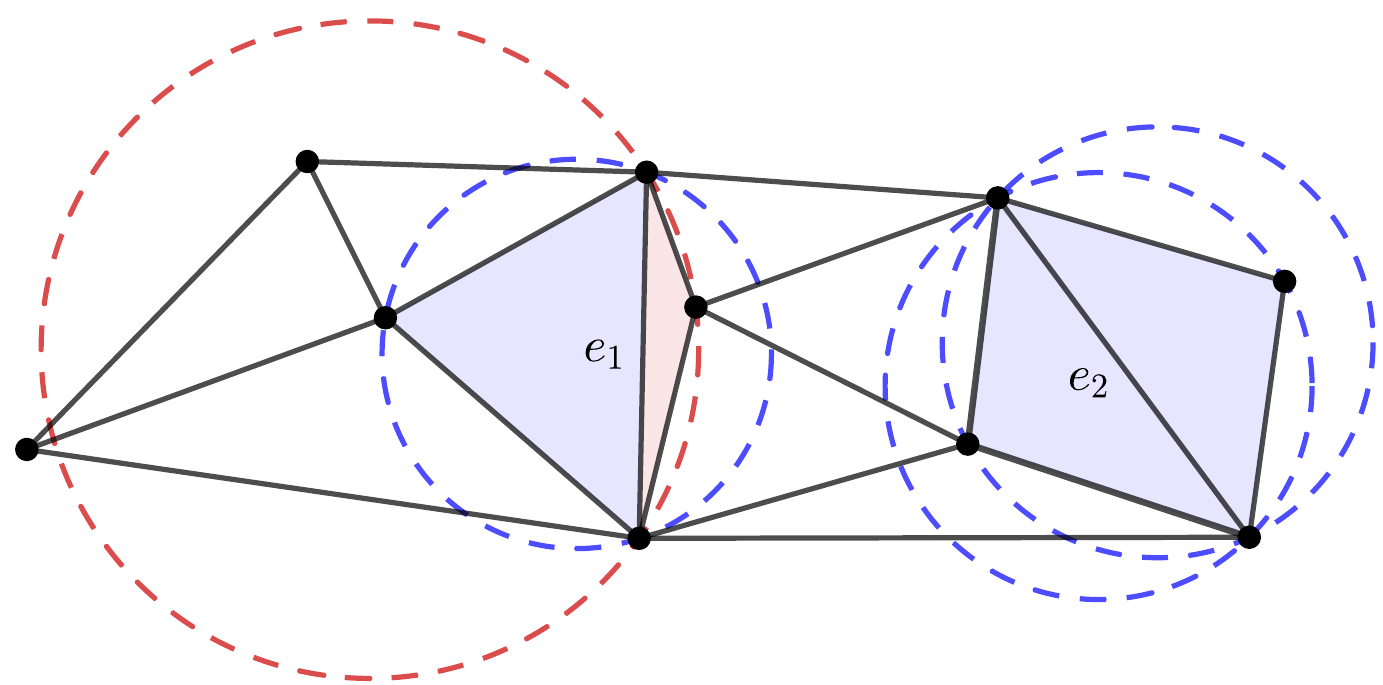}
\caption{A $2$-OD triangulation; in blue the $1$-OD and in red the $2$-OD triangles; $e_1$ is a useful  $2$-OD edge and $e_2$ is a useful $1$-OD edge}\label{fig:OrderKTriangulation}
\end{minipage}
\hfill
\begin{minipage}[t]{.4\textwidth}
\centering
\includegraphics[width=0.7\textwidth]{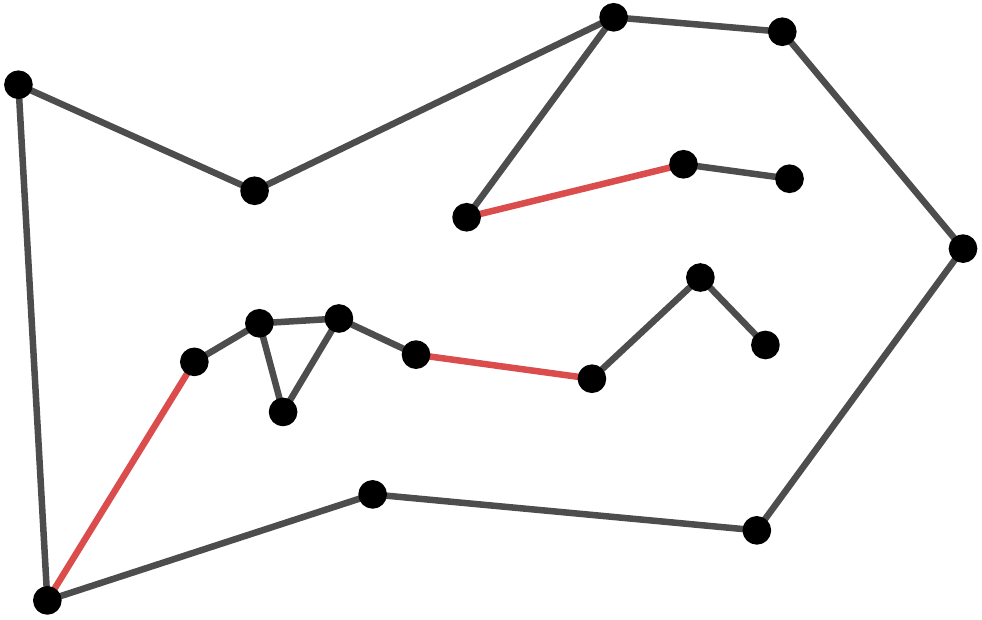}
\captionsetup{textformat=simple}
\caption{In black a (degenerate) polygon with connected components; in red one set $H$ of connections}\label{fig:PolygonWithConnectedComponents}
\end{minipage}
\end{figure}

The minimum-error measure $\text{Err}_D(\mathcal{R})$ can be optimized using dynamic programming, since it is decomposable after pre-processing the triangle weights; see \cite{BernE1995} for a formal definition. 
The well known DP algorithm that was independently proposed by Klincsek in \cite{Klincsek1980} and Gilbert in \cite{Gilbert1979} can be used to optimize polygon triangulations for decomposable measures in $O(n^3)$ time. 
In \cite{SilveiraK2009} the runtime of the DP algorithm is improved to $O(nk^2)$, if the algorithm only considers pre-processed $k$-OD edges and triangles instead of all possible ones.

Furthermore, Silveira et al.\ \cite{SilveiraK2009}  extend the algorithm to the class of polygons $P$ containing $h$ connected components $C_1,\ldots, C_h$; see Figure \ref{fig:PolygonWithConnectedComponents}. The algorithm performs an exhaustive search on a collection $\mathcal{H}$ of sets of edges $H$, such that the planar graph $\bigcup_i C_i\cup P\cup H$ is connected for each $H\in\mathcal{H}$ and at least one $H$ is used in the optimal triangulation. One of the main results in \cite{SilveiraK2009} is the existence of such a collection with size $O(k)^h$.

\begin{theorem}[from \cite{SilveiraK2009}]\label{theorem:CCAlgorithm}
An optimal $k$-OD triangulation with respect to $\text{Err}_D(\mathcal{R})$ of a (degenerate) polygon with $n$ boundary vertices and $h\geq 1$ components inside can be computed in $O(kn\log n)+O(k)^{h+2}n$ expected time.
\end{theorem}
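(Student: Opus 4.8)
The plan is to reduce the problem to a bounded number of ordinary (degenerate) polygon triangulation problems and to solve each of them with the Klincsek--Gilbert dynamic program \cite{Klincsek1980,Gilbert1979} restricted to order-$k$ Delaunay edges and triangles, as in \cite{SilveiraK2009}. The two ingredients to combine are: (i) the improved DP, which for a single (degenerate) polygon and a decomposable measure runs in $O(nk^2)$ time once the $O(nk)$ useful $k$-OD edges and the $O(nk^2)$ order-$k$ triangles with their weights $w_T(\mathcal{R})$ have been computed, since then $\text{Err}_D(\mathcal{R})=\sum_{T\in D}w_T(\mathcal{R})$ is decomposable \cite{BernE1995}; and (ii) the collection $\mathcal{H}$ of edge sets, of size $O(k)^h$, such that $\bigcup_i C_i\cup P\cup H$ is connected for every $H\in\mathcal{H}$ and at least one $H\in\mathcal{H}$ occurs in the optimal $k$-OD triangulation. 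The preprocessing for (i) -- computing the useful $k$-OD edges, the order-$k$ triangles spanned by them, and their weights -- takes $O(kn\log n)$ expected time \cite{GudmundssonHK2002,SilveiraK2009}, which is the first term of the claimed bound.

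For a fixed $H\in\mathcal{H}$ I would cut the region bounded by $P$ with the holes $C_i$ removed open along the edges of $H$ (which we may take to be a forest linking all $C_i$ to $P$), duplicating each vertex as often as it is visited by the resulting boundary walk. This produces a single degenerate simple polygon $P_H$ with $O(n)$ boundary vertices whose triangulations are in one-to-one correspondence with the triangulations $D$ of the original instance that contain $P$, all $C_i$, and all edges of $H$. Since the order of a triangle is a property of its circumcircle relative to $\mathcal{S}$ and is unaffected by cutting, this correspondence preserves the $k$-OD property; hence running the improved DP on $P_H$ -- restricted to the precomputed $k$-OD edges and triangles, which is sound because every triangle of any $k$-OD triangulation is one of the precomputed order-$k$ triangles -- returns, in $O(nk^2)$ time, the minimum of $\text{Err}_D(\mathcal{R})$ over all $k$-OD triangulations $D$ that use $P\cup\bigcup_i C_i\cup H$. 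I would iterate this over all $H\in\mathcal{H}$ and output the smallest value found.

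Correctness follows from the defining property of $\mathcal{H}$. For every $H\in\mathcal{H}$ the value computed is at least $\text{Err}_{D^{\star}}(\mathcal{R})$, where $D^{\star}$ is a globally optimal $k$-OD triangulation, because it is a minimum over a subfamily of all $k$-OD triangulations; and for the $H$ that $D^{\star}$ uses, the value equals $\text{Err}_{D^{\star}}(\mathcal{R})$, since $D^{\star}$ itself lies in the corresponding subfamily and consists only of order-$k$ triangles. Hence the overall minimum equals $\text{Err}_{D^{\star}}(\mathcal{R})$. For the running time, preprocessing costs $O(kn\log n)$ expected time, and we perform $|\mathcal{H}|=O(k)^h$ runs of the $O(nk^2)$ DP (enumerating $\mathcal{H}$ and building each $P_H$ being absorbed into this), giving $O(kn\log n)+O(k)^{h}\cdot O(nk^2)=O(kn\log n)+O(k)^{h+2}n$ expected time.

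The step I expect to be the genuine obstacle -- and which I would cite from \cite{SilveiraK2009} rather than reprove -- is the existence of a connector collection $\mathcal{H}$ of size only $O(k)^h$: the content there is that in any $k$-OD triangulation each hole can be ``attached'' to the rest in only $O(k)$ essentially different ways, so the choices multiply to $O(k)^h$. Everything else in my own argument is careful bookkeeping rather than deep: formalizing the cut-open polygon $P_H$ and its duplicated vertices, checking that the Klincsek--Gilbert recurrence stays valid on a degenerate polygon (it does, since it only refers to the cyclic sequence of boundary edges), respecting the tie-breaking convention for reference points lying on edges so that $w_T(\mathcal{R})$ is well defined and still yields a decomposable measure, and verifying that restricting the DP to $k$-OD edges and triangles loses no feasible triangulation.
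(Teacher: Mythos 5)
Your proposal is correct and follows essentially the same route as the paper, which does not prove this theorem itself but imports it from \cite{SilveiraK2009} with exactly the algorithmic outline you reconstruct: $O(kn\log n)$ expected-time preprocessing of useful $k$-OD edges and triangles, exhaustive search over the connector collection $\mathcal{H}$ of size $O(k)^h$, and one $O(nk^2)$ run of the Klincsek--Gilbert DP on the cut-open degenerate polygon per $H$. Citing the existence of $\mathcal{H}$ with $|\mathcal{H}|=O(k)^h$ from \cite{SilveiraK2009} is exactly what the paper does as well, so nothing essential is missing relative to the paper's treatment.
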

We can apply this algorithm to point sets by finding subgraphs $F$ of the optimal triangulation \cite{ChengT1995,SilveiraK2009} and applying the DP algorithm to the faces of $F$. 

\subsection{The order-\kk fixed-edge graph}\label{sec:FixedEdgeGraph}
\begin{figure}[!tb]
  \centering
    \includegraphics[width=0.9\textwidth]{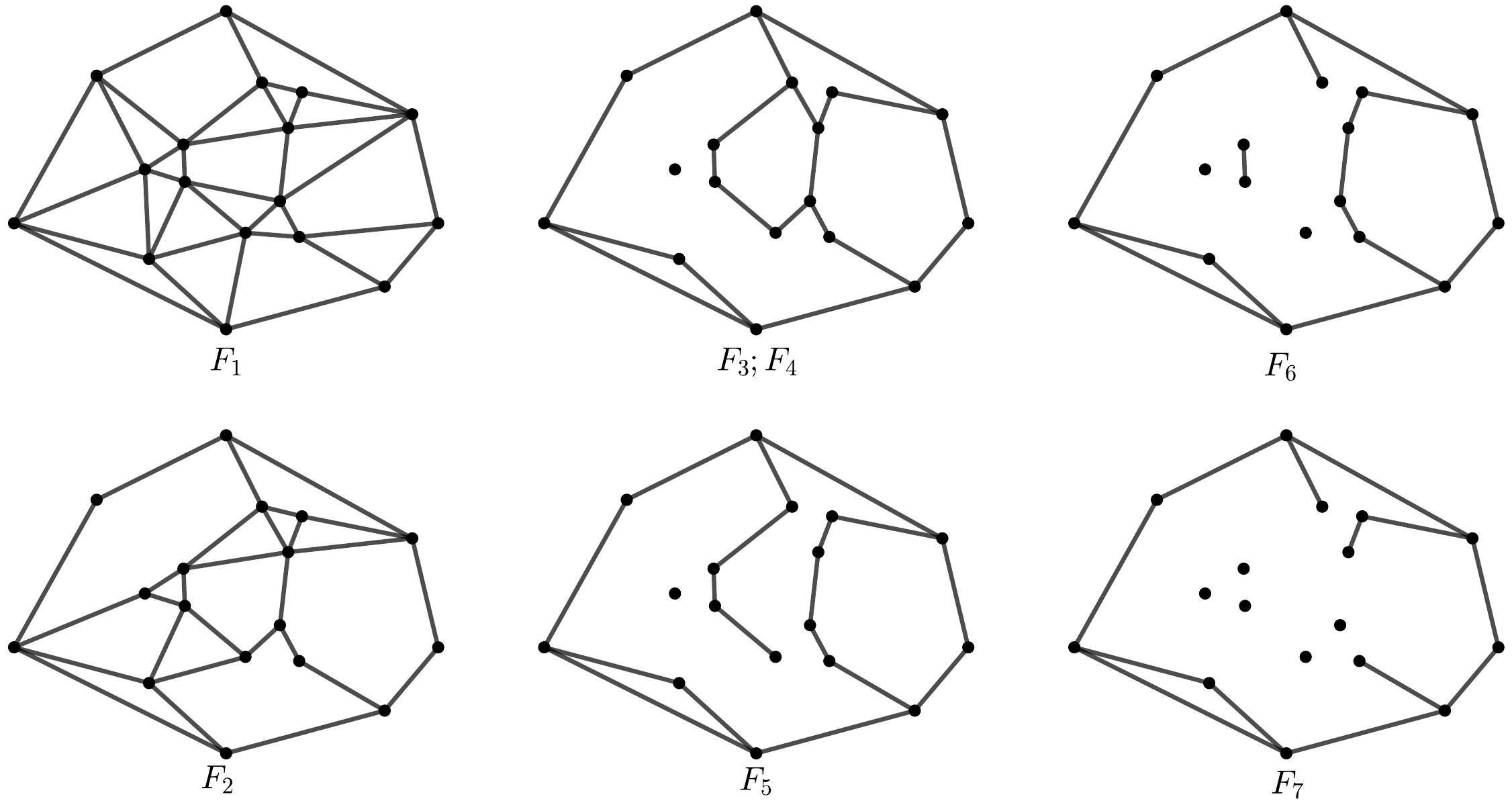}
    \caption{A sequence of fixed-edge graphs $F_1,\ldots,F_7$ for an example point set}\label{fig:fixedEdges}
\end{figure}
A subgraph that is naturally given by HOD constraints is the fixed-edge graph which was first discussed in \cite{SilveiraK2009}. The \emph{order-$k$ Delaunay ($k$-OD) fixed-edge graph} $F_k$ of a pointset $\mathcal{S}$ is given by all useful $k$-OD edges that are not intersected by any other useful $k$-OD edge.

\begin{observation}\label{fixedEdgeGraphProperties}
Let $\mathcal{S}$ be a set of $n$ points. Let $DT$ denote the Delaunay triangulation. We have $DT=F_0\supset F_1 \supset F_2\supset ... \supset F_{m}=...=F_{n}\supset \text{conv}(\mathcal{S})$ for some $m\leq n$.
\end{observation}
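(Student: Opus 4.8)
The plan is to establish the three ingredients of the chain separately: (i) the equality $DT = F_0$, (ii) the nested inclusions $F_k \supseteq F_{k+1}$, and (iii) the stabilisation $F_m = \dots = F_n$ together with the fact that every fixed-edge graph contains the convex hull. First I would observe that an edge $e$ is a useful $0$-OD edge exactly when it appears in the (unique, since $\mathcal S$ is in general position) Delaunay triangulation; since the Delaunay triangulation is a triangulation, none of its edges is crossed by another Delaunay edge, hence no $0$-OD edge crosses $e$, and therefore $F_0$ equals the full edge set of $DT$. This also immediately gives $\mathrm{conv}(\mathcal S)\subseteq F_0$, and I would defer the convex-hull containment for general $k$ to step (iii).

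For the nesting $F_{k}\supseteq F_{k+1}$, the key point is that the defining condition of $F_k$ is \emph{not} simply ``$e$ is a useful $k$-OD edge'': an edge can be a useful $k$-OD edge yet be crossed by another useful $k$-OD edge, in which case it is excluded from $F_k$. So I would argue as follows. If $e\in F_{k+1}$, then $e$ lies in some $(k{+}1)$-OD triangulation. I would show that $e$ in fact lies in a $k$-OD triangulation: the crucial sub-claim is that an edge of $F_{k+1}$, being uncrossed by every useful $(k{+}1)$-OD edge, is never ``locked out'' and can be completed to a $k$-OD triangulation — here I would invoke the structural results of Gudmundsson, Hammar and van Kreveld~\cite{GudmundssonHK2002} (and their use in~\cite{SilveiraK2009}) that a useful $k$-OD edge that is not crossed by any other useful $k$-OD edge is precisely a ``fixed'' edge, and that the set of fixed edges at a higher order is a subset of those at a lower order because the pool of ``useful'' edges shrinks and any crossing at level $k$ is also a crossing at level $k{+}1$. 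Concretely: a useful $(k{+}1)$-OD edge need not be a useful $k$-OD edge, but a useful $k$-OD edge \emph{is} a useful $(k{+}1)$-OD edge (any $k$-OD triangulation is a $(k{+}1)$-OD triangulation); so if $e\in F_{k+1}$ were crossed by some useful $k$-OD edge $e'$, then $e'$ would also be a useful $(k{+}1)$-OD edge crossing $e$, contradicting $e\in F_{k+1}$. It remains to check $e$ is a useful $k$-OD edge at all, which is where the completion argument of~\cite{GudmundssonHK2002,SilveiraK2009} enters.

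For stabilisation, I would note that for $k\ge n-3$ every triangle of every triangulation of $\mathcal S$ has order at most $k$ (its circumcircle can contain at most $n-3$ other points), so every triangulation is a $k$-OD triangulation, every edge of the complete graph on $\mathcal S$ that participates in \emph{some} triangulation is a useful $k$-OD edge, and hence $F_k$ collapses to those edges crossed by no other such edge — namely exactly the convex-hull edges (interior edges are always crossable once all triangulations are allowed). This pins down $F_n = \mathrm{conv}(\mathcal S)$, and monotonicity from step (ii) forces the chain to be eventually constant at some index $m\le n$; taking the union over $k$ shows $\mathrm{conv}(\mathcal S)\subseteq F_k$ for all $k$ as well. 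The main obstacle is the middle step: cleanly justifying that an edge uncrossed by useful $k$-OD edges can be extended to a $k$-OD \emph{triangulation} without having to re-prove the extension lemma of~\cite{GudmundssonHK2002}; I would state it as a cited fact and keep the crossing-monotonicity argument ($k$-OD useful $\Rightarrow$ $(k{+}1)$-OD useful, and a crossing at level $k$ persists at level $k{+}1$) as the self-contained core.
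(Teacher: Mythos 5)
The paper states this as an unproved observation, so the only question is whether your plan is sound. Step (i) and the crossing-monotonicity half of step (ii) are correct, but the part you defer to an ``extension lemma'' of \cite{GudmundssonHK2002} does not need any such lemma (and no statement of that form appears there): if $e\in F_{k+1}$, then $e$ is crossed by no useful $(k{+}1)$-OD edge, in particular by no Delaunay edge, since every Delaunay edge is useful at every order. Because the Delaunay triangulation is a maximal set of pairwise non-crossing edges on $\mS$, an edge between two points of $\mS$ that properly crosses no Delaunay edge must itself be a Delaunay edge. Hence every edge of $F_{k+1}$ lies in $DT$ and is therefore a useful $k$-OD edge for every $k$; this closes your gap in one line and in fact shows $F_k\subseteq DT$ for all $k$ directly. (Minor wording issue: ``the pool of useful edges shrinks'' is backwards --- the set of useful edges grows with $k$ --- though the concrete implication you state afterwards, that $k$-OD useful implies $(k{+}1)$-OD useful, is the right one.)

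The genuine error is in step (iii): the claims ``interior edges are always crossable once all triangulations are allowed'' and $F_n=\conv(\mS)$ are false. Take $u=(0,0)$, $v=(1,0)$, $w=(-5,1)$, $x=(-5,-1)$. Then $\overline{uv}$ is an interior Delaunay edge (flanked by the triangles $uvw$ and $uvx$), but the only pair of points separated by the line through $u$ and $v$ is $\{w,x\}$, and $\overline{wx}$ meets that line at $(-5,0)$, far outside the segment $\overline{uv}$; so no segment between points of $\mS$ properly crosses $\overline{uv}$, hence $\overline{uv}\in F_k$ for every $k$ although it is not a hull edge. Fortunately the Observation only asserts $\conv(\mS)\subseteq F_n$, and that part is easy and salvages your plan: hull edges belong to every triangulation (hence are useful at every order) and cannot be properly crossed by any segment between points of $\mS$, so they lie in every $F_k$. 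Combined with your correct remark that every triangle has order at most $n-3$, so the useful-edge set and therefore $F_k$ is the same for all $k\ge n-3$, monotonicity gives stabilization at some $m\le n$. Dropping the equality claim, the argument goes through.
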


In Figure \ref{fig:fixedEdges} a sequence of fixed-edge graphs is illustrated. $F_k$ decomposes the pointset into degenerate polygons $P_1,\ldots,P_m$ that may contain some connected components. An example is given in Figure \ref{fig:decomposition}.
\begin{figure}[!b]
\begin{subfigure}[t]{0.38\textwidth}
\centering
\includegraphics[width =.9\textwidth ]{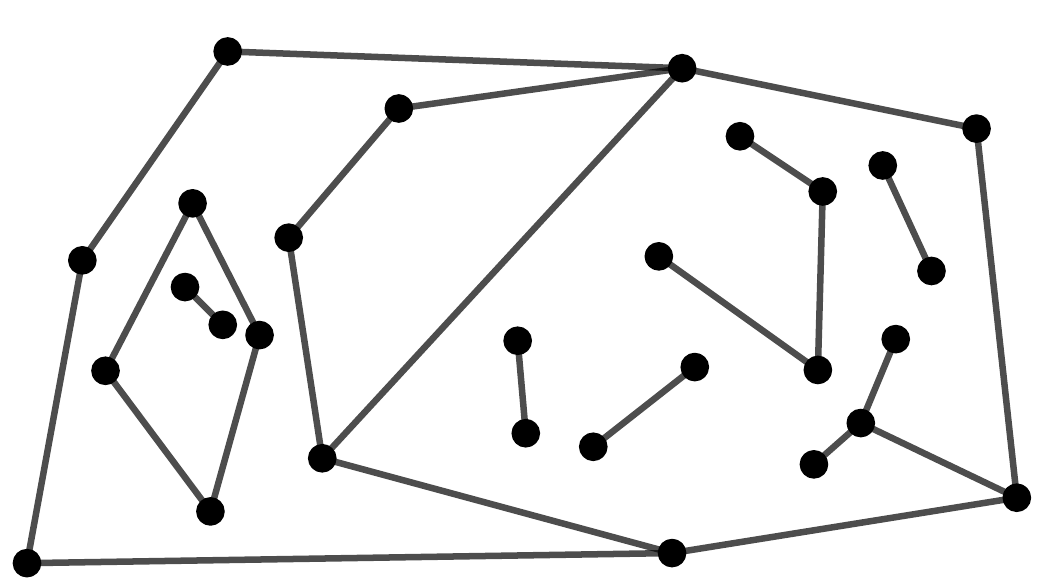}
\end{subfigure}\hfill
\begin{subfigure}[t]{0.38\textwidth}
\centering
\includegraphics[width =.9\textwidth ]{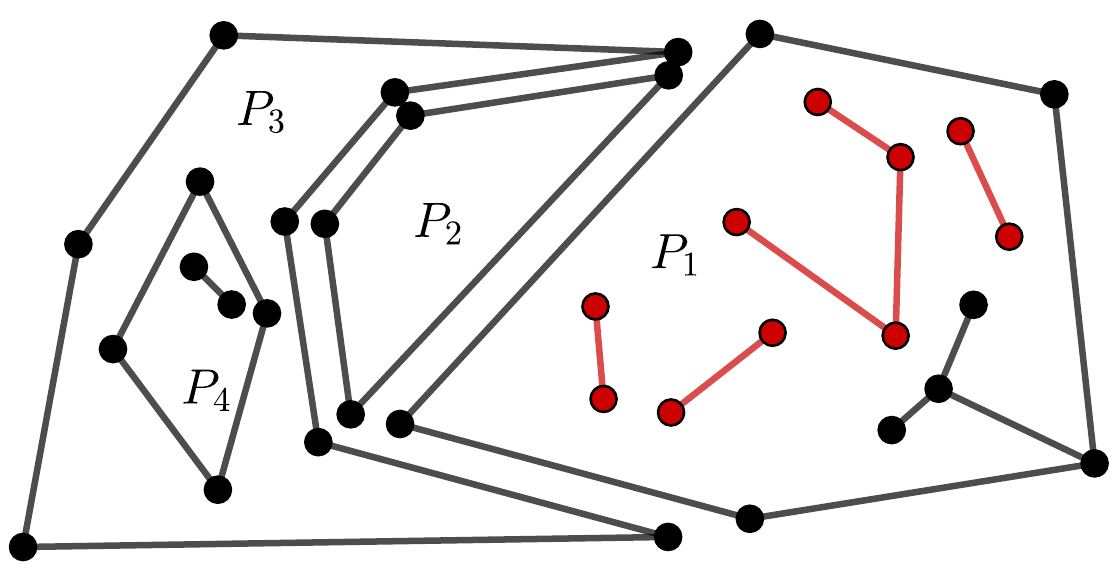}
\end{subfigure}
\caption{The decomposition of a fixed-edge graph into polygons. We have $c_1 = 4$, $c_2 = 0$, $c_3=1$ and $c_4=1$ for the number of components in each polygon. Thus, we have $c_\text{max}=4$. Note that the component inside $P_4$ is not counted towards $c_3$, but to $c_4$.}\label{fig:decomposition}
\end{figure} We can compute optimal solutions $D_i$ for all $P_i$ with the DP algorithm. Since $\text{Err}_D(\mathcal{R})$ is decomposable, the optimal triangulation of $\mathcal{S}$ is given by $\bigcup_i D_i$. Therefore, the runtime of the algorithm is dominated by the polygon with the maximum number of connected components $c_\text{max}$. The application of Theorem \ref{theorem:CCAlgorithm} results in:

\begin{corollary}\label{theorem:DPpointsets}
An optimal $k$-OD triangulation of a point set $\mathcal{S}$ with respect to $\text{Err}_D(\mathcal{R})$ can be computed in $O(kn\log n)+O(k)^{c_\text{max}+2}n$ expected time.
\end{corollary}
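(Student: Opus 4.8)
The plan is to instantiate the subgraph $F$ of the optimal triangulation mentioned above by the order-$k$ fixed-edge graph $F_k$: compute $F_k$, let $P_1,\dots,P_m$ be the faces into which it subdivides $\conv(\mathcal{S})$ (each a degenerate polygon possibly containing connected components, as in Figure~\ref{fig:decomposition}), optimize each $P_i$ separately with the algorithm of Theorem~\ref{theorem:CCAlgorithm}, output the union of the partial solutions, and bound the total time by summing the per-face bounds.

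The correctness of this decomposition rests on the claim that \emph{every} $k$-OD triangulation $D$ of $\mathcal{S}$ contains all edges of $F_k$. Indeed, every edge of $D$ is a useful $k$-OD edge, and by definition of $F_k$ no useful $k$-OD edge crosses an edge $e\in F_k$; since the endpoints of $e$ lie in $\mathcal{S}$ and $e$ crosses no edge of $D$, the segment $e$ cannot lie in the interior of a triangle of $D$, so $e$ must itself be an edge of $D$. Hence $D$ refines $F_k$, and writing $\mathcal{R}_i$ for the reference points lying in the face $P_i$ (ties on a fixed edge broken by the ``left-of'' convention from Section~\ref{sec:TheTriangulationProblem}), the restriction $D_i:=D\cap P_i$ is a triangulation of $P_i$ together with its internal components, each of whose triangles is a $k$-OD triangle of $\mathcal{S}$ contained in $P_i$; by decomposability of the measure, $\text{Err}_D(\mathcal{R})=\sum_{i=1}^m \text{Err}_{D_i}(\mathcal{R}_i)$. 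Conversely, gluing, for each $i$, any triangulation of $P_i$ and its components built solely from $k$-OD triangles of $\mathcal{S}$ inside $P_i$ yields a $k$-OD triangulation of $\mathcal{S}$. Therefore an optimal $k$-OD triangulation of $\mathcal{S}$ is obtained by computing, with the algorithm of Theorem~\ref{theorem:CCAlgorithm}, an optimal $k$-OD triangulation of each $P_i$ (with respect to $\mathcal{R}_i$) and taking the union.

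For the running time, one first computes the Delaunay triangulation, all useful $k$-OD edges (hence $F_k$ and the faces $P_1,\dots,P_m$), and the preprocessed triangle weights; by~\cite{GudmundssonHK2002,SilveiraK2009} and the weight preprocessing used for decomposable measures~\cite{BernE1995} this fits in $O(kn\log n)$ time. Let $n_i$ be the number of boundary vertices of $P_i$ and $c_i\le c_\text{max}$ its number of internal components; since $F_k$ is a planar graph on $O(n)$ vertices and each edge bounds at most two faces, $\sum_i n_i=O(n)$, and likewise the total number of points (boundary and internal) over all $P_i$ is $O(n)$. By Theorem~\ref{theorem:CCAlgorithm}, optimizing $P_i$ takes $O(kn_i\log n_i)+O(k)^{c_i+2}n_i$ expected time; summing over $i$ gives $\sum_i O(kn_i\log n_i)=O(kn\log n)$ and $\sum_i O(k)^{c_i+2}n_i\le O(k)^{c_\text{max}+2}\sum_i n_i=O(k)^{c_\text{max}+2}n$. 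Adding the preprocessing term yields the claimed $O(kn\log n)+O(k)^{c_\text{max}+2}n$ expected-time bound.

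The main obstacle is the decomposition argument in the middle paragraph: one has to argue carefully that slicing along $F_k$ neither discards any $k$-OD triangulation of $\mathcal{S}$ nor admits spurious ones --- i.e., that the $k$-OD triangulations of $\mathcal{S}$ are precisely the face-wise unions of triangulations built from $\mathcal{S}$'s $k$-OD triangles --- and, in particular, that reference points lying on fixed edges are assigned to faces consistently so that $\text{Err}_D(\mathcal{R})$ genuinely decomposes as $\sum_i \text{Err}_{D_i}(\mathcal{R}_i)$. Once that is in place, the running-time accounting is a routine summation over the faces.
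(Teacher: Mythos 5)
Your proposal is correct and follows essentially the same route as the paper: the paper likewise obtains the corollary by noting that every $k$-OD triangulation must contain all edges of $F_k$, decomposing $\conv(\mathcal{S})$ into the degenerate polygons $P_1,\ldots,P_m$ induced by $F_k$, applying Theorem~\ref{theorem:CCAlgorithm} to each face, and using decomposability of $\mathrm{Err}_D(\mathcal{R})$ to combine the optimal sub-triangulations, with the runtime dominated by the face with $c_\text{max}$ components. Your write-up merely makes explicit the correctness and summation details that the paper leaves implicit.
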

Next, we give some theoretical results with respect to the structure of $F_2$ and $F_3$.

Let $v\in \mathcal{S}$ be a triangulation point. We call the graph $N$ given by all edges of its incident Delaunay triangles its \emph{Delaunay neighbourhood}, all of its incident edges in $N$ its \emph{connecting edges} and all other edges of $N$ its \emph{boundary edges}. A useful $2$-OD edge that intersects a connecting edge is called \emph{separation edge}; see Figure \ref{fig:cycle}.
 \begin{restatable}{theorem}{fixedTheorem}
	\label{theorem:fixedEdges}
	Let $\mathcal{S}$ be a set of points. Then every vertex in $F_2$ is adjacent to at least one other vertex of $\mathcal{S}$.
\end{restatable}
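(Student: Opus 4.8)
The plan is to prove that some connecting edge of $v$ is crossed by no useful $2$-OD edge, hence lies in $F_2$. First dispose of the easy case: if $v$ is on the convex hull, it is incident to a hull edge, which is a Delaunay edge that no edge crosses, so it lies in $F_2$; thus assume $v$ is interior, with Delaunay neighbours $u_1,\dots,u_d$ ($d\ge 3$) in cyclic order and connecting edges $c_i=\overline{vu_i}$. Suppose for contradiction that every $c_i$ is crossed by a useful $2$-OD edge; by definition each crossing edge is a separation edge, so fix separation edges $s_i=\overline{a_ib_i}$ crossing $c_i$. The engine is a \emph{crossing lemma}: if an edge $\overline{ab}$ crosses a Delaunay edge $\overline{pq}$, then every circle through $a$ and $b$ contains $p$ or $q$ in its closed disk. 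I would prove it via the empty circumcircle $\Gamma$ of a Delaunay triangle on $\overline{pq}$: since $\Gamma$ is empty and $a,b\notin\{p,q\}$, both $a,b$ lie strictly outside $\Gamma$, so $\overline{ab}$ enters and leaves the disk of $\Gamma$; as $\overline{ab}$ also crosses the chord $\overline{pq}$, its two intersection points with $\Gamma$ lie on the two different arcs that $p,q$ cut off and strictly between $a$ and $b$, hence strictly inside any circle $C$ through $a,b$. If $p$ and $q$ were both outside $C$, the membership ``inside $C$'' would alternate at least four times around $\Gamma$, forcing four intersection points between the circles $\Gamma$ and $C$ — impossible.

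Applying the lemma to each pair $(s_i,c_i)$: every order-$2$ witness circle $C_i$ of $s_i$ (at most two points of $\mathcal S$ strictly inside) contains $v$ or $u_i$ in its closed disk, and a routine general-position argument upgrades this to ``strictly inside'', so $C_i$ has at most one free slot. Iterating over all connecting edges that $s_i$ crosses then forces $s_i$ to cross only a short, angularly contiguous block of them: if it crossed three consecutive edges $c_\ell,c_{\ell+1},c_{\ell+2}$, then either $C_i$ would contain the three points $u_\ell,u_{\ell+1},u_{\ell+2}$ — impossible — or $v$ is strictly inside $C_i$, which pins down $C_i$ (it must separate $v$ from the cut-off $u$'s) and pushes all but one of those $u$'s outside $C_i$. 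Together with the fact that $s_i$ runs through the Delaunay neighbourhood of $v$ near $v$, this confines the endpoints of $s_i$ and the free slot of $C_i$ to vertices close to $v$ — Delaunay neighbours of $v$, or vertices one boundary edge away — which is exactly what the connecting/boundary-edge dichotomy is designed to control (cf.\ Figure~\ref{fig:cycle}).

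The finish is a cyclic counting argument. Since $v$ is interior and all of its connecting edges are crossed, the separation edges form a closed barrier around $v$: the cell of the arrangement of separation edges containing $v$ is bounded by a cyclic sequence $s^{(1)},\dots,s^{(m)}$ of separation edges, consecutive ones crossing, and this barrier separates $v$ from every $u_i$. Walking this cycle and combining, for neighbouring separation edges, the constraints above — each witness circle catches $v$ or a $u_j$ bordering the block it crosses, and neighbouring separation edges share endpoints and overlapping blocks — I expect to force three points of $\mathcal S$ strictly inside a single order-$2$ witness circle, contradicting its order; hence some $c_i$ is uncrossed, so $c_i\in F_2$ and $v$ is not isolated. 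The crossing lemma is the clean part; the main obstacle I anticipate is this last step — turning ``$v$ is surrounded by separation edges'' into ``three points inside an order-$2$ circle'', in particular handling separation edges whose endpoints are not Delaunay neighbours of $v$, which is precisely what the apparatus of Delaunay neighbourhood, connecting edges, boundary edges and separation edges was set up to handle.
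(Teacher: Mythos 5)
Your setup is sound and in fact parallels the paper's: the crossing lemma you state is a correct variant of Observation~\ref{obs:delaunayEdge}/Corollary~\ref{cor:delaunayEdge} (modulo two small repairs: if $a$ or $b$ happens to be the apex of the Delaunay triangle defining $\Gamma$ it lies \emph{on} $\Gamma$ rather than strictly outside, and the ``upgrade to strictly inside'' fails when $u_i$ is itself the defining point of the circle through the endpoints of $s_i$), your observation that a useful $2$-OD edge can cross only a short contiguous block of connecting edges corresponds to the paper's Claims~\ref{c1} and~\ref{c2}, and your closed barrier around an interior $v$ corresponds to the cycle of Claim~\ref{c3}. Up to that point you have essentially reconstructed the first half of the paper's argument.

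The genuine gap is exactly the step you defer: turning ``$v$ is surrounded by a cycle of mutually crossing separation edges'' into a contradiction. This does \emph{not} follow from a counting argument along the cycle. The local constraints you have collected only force, for each separation edge, the point $v$ into one defining circle and one endpoint of each crossing neighbour into the appropriate defining circle --- two points per circle --- and a barrier of as few as three pairwise-crossing separation edges around $v$ is consistent with all of these constraints, so no circle is automatically forced to contain three points. This is where the paper's heavy lifting happens: consecutive crossing pairs are classified into two types via Observation~\ref{obs:intersectingEdges}, and Claims~\ref{c4} and~\ref{c5} carry out an extensive case analysis over these patterns. Crucially, in the all-Type-2 configuration of three edges (Case~2.c in the proof of Claim~\ref{c5}) the contradiction is \emph{not} of the form ``three points of $\mathcal{S}$ inside one order-$2$ circle''; instead one shows, using the Intersecting Chord Theorem (Lemma~\ref{lemma:chord}), that the third separation edge cannot be placed so as to cross the required segments at all, i.e.\ the configuration is geometrically impossible. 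Moreover, the paper's cycle argument leans on minimality of the covering set of separation edges to discard redundant edges (used in Claims~\ref{c3} and~\ref{c4}); your arrangement-cell barrier yields consecutive crossings but not this non-redundancy. As it stands, your proposal is a plausible plan whose decisive step --- the counterpart of Claims~\ref{c4} and~\ref{c5} --- is missing, and the expectation that it reduces to placing three points in a single witness circle is not borne out by the actual case analysis.
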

%\begin{theorem}\label{theorem:fixedEdges}
	%Let $\mathcal{S}$ be a set of points. Then every vertex in $F_2$ is adjacent to at least one other vertex of %$\mathcal{S}$ 		(complete proof can be found in Appendix \ref{sec:missingproofs}).
%\end{theorem}
\begin{figure}[!tb]
\centering
\includegraphics[width=0.27\textwidth]{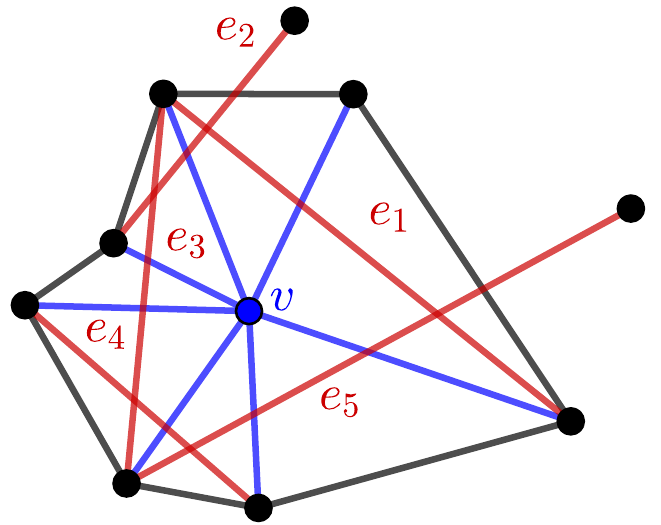}
\caption{The Delaunay Neighbourhood of a point $v$ and a cycle of separation edges given in red}\label{fig:cycle}
\end{figure}

\begin{proof}
(Sketch; complete proof can be found in Appendix \ref{sec:missingproofs}) It is sufficient to prove that for every vertex $v\in\mathcal{S}$ at least one connecting edge cannot be intersected by a separation edge. For the sake of contradiction we assume that there exists a set $E$ of separation edges such that every connecting edge is intersected by at least one $e\in E$.

In a first step we can prove that at least one endpoint of any $e\in E$ must be part of the Delaunay neighbourhood of $v$. Additionally, we can show that no boundary edge $\overline{uw}$ can be intersected by a separation edge for $\overline{vu}$ and a separation edge for $\overline{vw}$. These observations imply that we can order the edges in $E$, such that for all $i$ the separation edge $e_i$ intersects $e_{i-1}$ and $e_{i+1}$, i.e., the separation edges form a cycle as depicted in Figure \ref{fig:cycle}.

Next, we show that every pair of consecutive separation edges $(\overline{u_iv_i},\overline{u_{i+1}v_{i+1}})$ must satisfy a special property, i.e., it must hold that $u_{i+1}\in C(u_i,v_i,v_{i+1})$ and $v_{i+1}\in C(u_i,v_i,u_{i+1})$. Finally, we show that this is not possible which leads to a contradiction.
\end{proof}
It is well known \cite{SilveiraK2009, GudmundssonHK2002} that $F_1$ is connected ($c_\text{max}=0$). Silveira et al.\ stated in \cite{SilveiraK2009} that for $k>1$ the value $c_\text{max}$ can be larger than $0$. But their experiments do not yield any example for which $F_2$ is not connected. We complement the discussion by such an example. Additionally, we show for all $k\geq 3$ there are examples with $c_\text{max}\in\Omega(n)$.
\begin{observation}\label{obs:worstcaseFixededges}
\begin{itemize}
\item There exist point sets with $c_\text{max}>0$ for $F_2$; see Figure \ref{fig:2ODNotConnected}.
\item For every $n$ and $k\geq 3$ there are point sets of size $n$ with $c_\text{max}=\lfloor \frac{n}{6}\rfloor$ for $F_k$; see Figure~\ref{fig:Examplek3}.
\end{itemize}
\end{observation}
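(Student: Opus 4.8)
Both items are proven by exhibiting explicit point sets -- those drawn in Figures~\ref{fig:2ODNotConnected} and~\ref{fig:Examplek3} -- and analysing their fixed-edge graphs. For the bookkeeping I would use the characterisation of useful $k$-OD edges due to Gudmundsson, Hammar and van Kreveld~\cite{GudmundssonHK2002}: an edge $\overline{uv}$ with $u,v\in\mathcal{S}$ is a useful $k$-OD edge if and only if some circle through $u$ and $v$ contains at most $k$ points of $\mathcal{S}$ in its interior; then $F_k$ is obtained from the set of useful $k$-OD edges by deleting every edge that some other useful $k$-OD edge crosses. Since $F_k$ always contains $\conv(\mathcal{S})$, the connected component of $F_k$ containing the convex hull is well defined, and every other component is a ``floating'' component sitting inside a single bounded face; hence $c_\text{max}>0$ is equivalent to $F_k$ being disconnected. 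It therefore suffices to produce (i) a point set whose order-$2$ fixed-edge graph is disconnected and (ii) for each $n$ and each $k\ge 3$ a point set of size $n$ whose order-$k$ fixed-edge graph has $\lfloor n/6\rfloor$ floating components inside one face.

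\textbf{Item (i).}
For the point set of Figure~\ref{fig:2ODNotConnected} this is a finite verification. I would compute the Delaunay triangulation $DT=F_0$ and then, using the circle criterion, decide for every relevant edge -- the edges of $DT$ together with the few chords crossing $DT$ -- whether it is a useful $2$-OD edge; from that list $F_2$ is read off by discarding the crossed edges, and it remains to check that the resulting plane graph has a bounded face that strictly contains at least one point of $\mathcal{S}$. By Theorem~\ref{theorem:fixedEdges} such a trapped component cannot be a single vertex, so the example must (and, as I would verify, does) trap a subgraph on at least two vertices. As only finitely many points are involved and all circle inclusions entering the argument are strict, the example can be perturbed into general position without affecting them, so it is valid in the setting of Section~\ref{sec:HODOptimization}.

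\textbf{Item (ii).}
The plan is to design a constant-size \emph{gadget} on $6$ points which, once embedded into a suitable outer frame, becomes a floating component of $F_k$ for every $k\ge 3$, and then to place $\lfloor n/6\rfloor$ disjoint copies of it inside one face. As frame I would take a large convex polygon on the remaining points, so that its edges lie in $\conv(\mathcal{S})\subseteq F_k$ and bound a single bounded face. For an individual gadget I would select $6$ points so that (a) they carry a connected set of internal edges that are useful $k$-OD edges crossed by no other useful $k$-OD edge -- so the gadget on its own is one connected component of $F_k$ -- and (b) no edge joining a gadget vertex to the frame or to another copy lies in $F_k$. I would establish (b) in two regimes: an edge reaching to the frame, or ``far'' into another copy, is shown to be \emph{not useful}, by arranging the copies so that every circle through the two endpoints of such an edge is forced to contain more than $k$ points of $\mathcal{S}$ (points of intervening copies); an edge to a copy too close for that argument is instead shown to be \emph{crossed} by another useful $k$-OD edge, for which I would position such pairs of copies so that their mutual edges come in crossing pairs of useful edges (e.g.\ by giving consecutive copies opposite orientation). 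With (a) and (b) in hand, $F_k$ consists exactly of the frame together with the $\lfloor n/6\rfloor$ internal components, all of which lie in the single bounded face, so $c_\text{max}=\lfloor n/6\rfloor$; a final infinitesimal perturbation restores general position, and Corollary~\ref{theorem:DPpointsets} then yields the $O(k)^{\lfloor n/6\rfloor}$ lower bound on the running time mentioned in the introduction.

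\textbf{Main obstacle.}
The delicate part is establishing (b) for two copies that sit close together: one has to fix the $6$ points of the gadget and the relative placement of consecutive copies precisely enough that simultaneously (1) every candidate fixed edge between the two copies is genuinely crossed by another \emph{useful} $k$-OD edge, (2) no internal edge of a gadget loses its fixed status, and (3) no two copies lie so close that an outgoing edge slips a small almost-empty circle through and becomes useful and uncrossed after all. This is a finite but intricate case analysis over the $O(1)$ pairs of points in a neighbourhood of two copies, and it is where the number $6$ appears -- it seems to be the smallest cluster size that at once forces a robust internal component and supplies enough shielding vertices to kill every outgoing edge for all $k\ge 3$.
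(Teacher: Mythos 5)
Your overall strategy -- exhibit explicit point sets and verify their fixed-edge graphs -- is exactly what the paper does (the observation is justified only by the configurations in Figures~\ref{fig:2ODNotConnected} and~\ref{fig:Examplek3}), and your reduction of $c_\text{max}>0$ to disconnectedness of $F_k$ via $\conv(\mathcal{S})\subseteq F_k$, as well as the use of Theorem~\ref{theorem:fixedEdges} to exclude isolated vertices, are fine. However, there is a concrete error in your bookkeeping tool. You state that $\overline{uv}$ is a \emph{useful} $k$-OD edge if and only if \emph{some} circle through $u$ and $v$ contains at most $k$ points of $\mathcal{S}$ in its interior. That is the criterion for $\overline{uv}$ being a $k$-OD edge, not a useful one; usefulness is strictly stronger. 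The correct characterization (Lemma~\ref{lemma:useful}, due to Gudmundsson, Hammar and van Kreveld) is that the \emph{left and right defining circles} of $\overline{uv}$ each contain at most $k$ points, i.e.\ the edge can be completed on both sides to $k$-OD triangles. Since $F_k$ is defined through useful edges, this error propagates into both items. It is harmless where you argue non-usefulness by forcing \emph{every} circle through the endpoints to contain more than $k$ points (that is a stronger statement than needed), but it is unsound in the step where you kill a candidate edge between nearby gadget copies by exhibiting a ``useful'' crossing edge: an edge admitting one almost-empty witness circle need not be useful, and if your crossing witness is not in fact useful, the candidate edge may survive into $F_k$ and connect two copies, destroying the count $c_\text{max}=\lfloor n/6\rfloor$.

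Beyond that, for item (ii) your text is a plan rather than a proof: the gadget coordinates, the frame, and the spacing between copies are never specified, and what you yourself label the ``main obstacle'' -- simultaneously ensuring that inter-copy edges are crossed by genuinely useful edges, that no internal edge loses its fixed status, and that no outgoing edge sneaks through an almost-empty circle -- is precisely the content of the claim. The paper settles this by exhibiting the concrete configuration of Figure~\ref{fig:Examplek3} (clusters placed inside one face so that each forms its own floating component of $F_3$, hence of every $F_k$ with $k\ge 3$ up to monotonicity of the construction); without pinning down an explicit configuration and carrying out the finite check with the correct usefulness criterion, your argument does not yet establish either bullet.
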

\begin{figure}[!b]
\begin{minipage}[t]{.49\textwidth}
\centering
\includegraphics[width=0.5\textwidth]{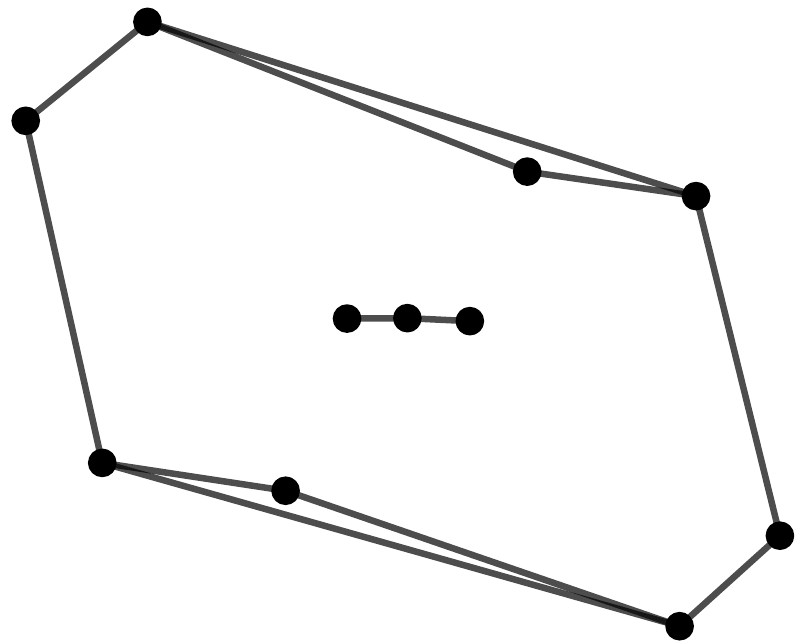}
\caption{An example with disconnected $F_2$}\label{fig:2ODNotConnected}
\end{minipage}
\hfill
\begin{minipage}[t]{.49\textwidth}
\centering
\includegraphics[width=1\textwidth]{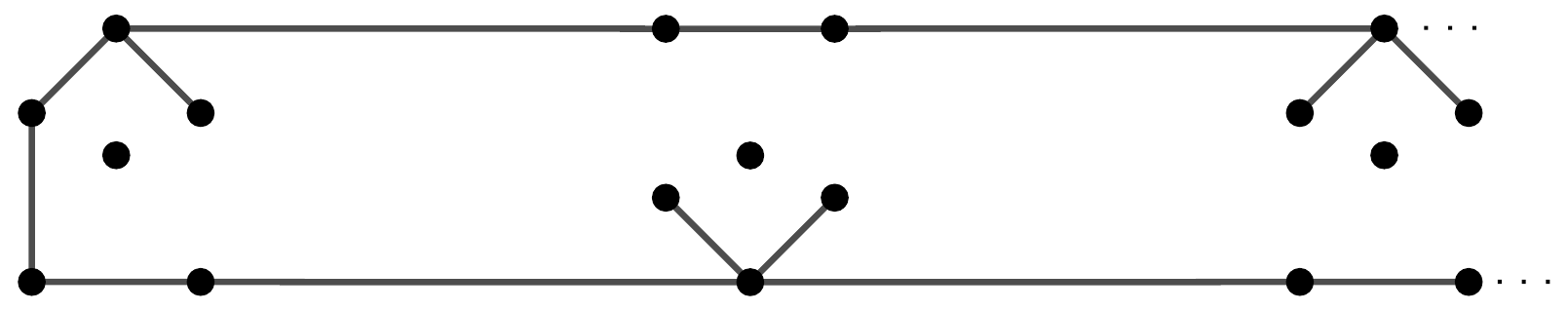}
\captionsetup{textformat=simple}
\caption{An example with $c_\text{max}=\frac{n}{6}$ for $F_3$}\label{fig:Examplek3}
\end{minipage}
\end{figure}
\subparagraph*{Open question:} 
Is there a constant $d$, such that $F_2$ has $c_\text{max}\leq d$ for every point set?
\subparagraph*{Practical implications:} Our results are interesting from a theoretical point of view, but the experiments in \cite{SilveiraK2009} with random point sets by Silveira et al.\ and also our own experiments (Appendix \ref{sec:RandomExperiments}) indicate that for practical datasets $c_\text{max}$ is small for $k\leq 7$. Next, we confirm this assumption for the tide gauge dataset which is used for the sea surface reconstruction.

\section{Experiments}\label{sec:Experiments}
We start this section by discussing the datasets. Next, we discuss the fixed-edge graphs of the tide gauge dataset. Afterwards, we provide the reconstruction process and our experimental setup. Finally, we present our results regarding the runtime and quality.

\subsection{The datasets}
The triangulation points for the minimum-error triangulation problem are given by the monthly tide-gauge time series from the Permanent Service for Mean Sea Level (PSMSL)
\cite{PSMSL2021}, which is further discussed in \cite{Holgate+2012}. 
We use the revised local reference (RLR) datasets. Furthermore, we remove some stations which do not have any values in our time-frame. This results in a dataset with 1502 stations, but not all of them record monthly. Thus, we only use between 513 and 804 different stations at once for a reconstruction.

As reference data $\mathcal{R}$ we use the satellite altimeter datasets provided by the ESA Sea Level Climate Change Initiative (SLCCI), which are given in \cite{ESA2021SSAData} and are further discussed in \cite{Cazenave+2015}. They are given as monthly gridded sea level anomalies with a spatial resolution of 0.25 degrees and are available for the timespan January 1993 to December 2015.

We assume that both datasets are given in radial coordinates. Since we focus on planar triangulations, we need to use a global map projection. We chose the Lambert azimuthal projection (LAP) which  unfolds the sphere onto the plane starting at an anchor point $(\lambda_0,\phi_0)$. For our experiments the LAP has one advantage: The projection results in significantly different distributions of the stations for sufficiently different anchor points $(\lambda_0,\phi_0)$. This allows us to perform the fixed-edge graph experiments for a wide variety of point distributions.

It is important to note that the experiments in this paper focus on the runtime of the DP algorithm for a real world application. Thus, we only de-mean the tide gauge data as discussed in \cite{NitzkeNFKH2021} and do not apply any additional corrections. %PM: d

\subsection{The fixed-edge graphs of the tide gauge set}
For our experiments with respect to the fixed-edge graphs we use the complete RLR dataset, i.e., all $1502$ stations.
We use the LAP with anchors $(\lambda_0,\phi_0)$ on an uniform $2$-D $20\times 20$ grid to generate 400 distributions of the dataset. In Table \ref{tab:5realWorldFixed} the experiments are summarized. The values $\text{avg}_{c_\text{max}}$ are given by the average value of $c_\text{max}$ over all samples. Additionally, we have \textit{min} and \textit{max} that depict the minimal and maximal value of $c_\text{max}$ for all samples.
\begin{table}[!b]\begin{center}
\caption{The average of $c_\text{max}$ and the min/max value of $c_\text{max}$ for the projections of the RLR data}\label{tab:5realWorldFixed}
\centering
\begin{tabular}{| c | c |  c | c | c | c | c | c | c | c | c |}
\hline
$ $  & $k=1$  & $k=2$ & $k=3$ & $k=4$ & $k=5$ & $k=6$ & $k=7$ & $k=8$ & $k=9$\\
\hline
$\text{avg}_{c_\text{max}}$ & 0.00 & 0.00 & 0.45 & 1.20  & 2.05 & 3.68 & 7.11 & 15.88 &  33.16 \\
min/max & 0/0 &  0/0 & 0/2 & 0/3 & 1/5 & 2/12 & 3/18 & 6/38 & 11/82\\
\hline
\end{tabular}
\end{center}\end{table} 
The results roughly coincide with the experiments performed on random point sets by Silveira et al.\ in \cite{SilveiraK2009} and our own preliminary experiments. The experiments suggest, that we can expect the DP algorithm to compute optimal solutions for $k\leq 7$ in reasonable time.
Since Nitzke et al.\ suggest very small $k$ for the reconstruction in \cite{NitzkeNFKH2021}, these experiments are promising.

\subsection{Sea surface reconstruction}
The reconstruction process can be summarized as follows: We learn a minimum-error triangulation $D$ in some epoch $i$ and then use it to reconstruct the sea surface at some other point in time $j$, by using the triangulation $D$ with the height values of epoch $j$.
Since not all tide gauge stations provide data for every epoch $i$, we need to consider the set $G^{ij}$ which is given by all stations that have reasonable values for epoch $i$ as well as for $j$. We denote the optimal triangulation using $G^{ij}$ and the reference points $A_i$ by $D^{ij}_M$. For comparison we use the Delaunay triangulation $D^{ij}_D$ of the set $G^{ij}$ which has already been successfully used for the sea surface reconstruction task in \cite{OlivieriS2016}.
If we have altimeter data available for epoch $j$, we can evaluate the quality of our approximation. 
Overall the reconstruction for epoch $j$ using $i$ and order $k$ can be performed as follows:

\begin{enumerate}
\item Compute the set $G^{ij}$ and the $k$-OD triangles $\mathfrak{T}^{ij}$ as described in \cite{SilveiraK2009}.
\item Compute the weights $w_T(A_i)$ of all $T\in \mathfrak{T}^{ij}$ with respect to $A_i$ as discussed in Section \ref{sec:TheTriangulationProblem}.
\item Compute the optimal $k$-OD triangulation $D^{ij}_M$ with the DP algorithm given in Section \ref{sec:HODOptimization} and also compute the Delaunay triangulation $D^{ij}_D$.
\item Evaluate the quality of the triangulations with respect to $A_j$.
\end{enumerate}

\noindent For the evaluation we compute the \emph{empirical variance} of a triangulation
\begin{align*}
\sigma_{ij}^2({D})=\frac{1}{n-1}\sum_{T\in{D}}\sum_{a\in A_j, a\in T } (s_T(a)-h_j(a))^2,
\end{align*}
where $n$ is the number of altimeter points in conv${(D)}$. Note that this is exactly the average minimum error. Additionally, we define the \emph{variance reduction} of a reconstruction by 
\begin{align*}
\Delta \sigma_{ij}^2=\sigma_{ij}^2(D^{ij}_M)-\sigma_{ij}^2(D^{ij}_D).
\end{align*}
%We can use the variance reduction as our evaluation metric for a single reconstruction. 
Next, we can group together reconstructions for epochs $i,j$ and $i',j'$ where $|i-j|=|i'-j'|$. This allows us to define the \emph{average variance reduction} of a temporal difference $\Delta d$ by
\begin{align*}
q(\Delta d)= \frac{1}{|\mathcal{D}(\Delta d)|} \sum _{(i,j)\in\mathcal{D}(\Delta d)}\Delta\sigma^2_{ij}.
\end{align*}
The set $\mathcal{D}(\Delta d)$ is given by all tuples $(i,j)$ with $|i-j|=\Delta d $. Using the temporal difference, we can
investigate how far back in time our optimized triangulation outperforms the Delaunay triangulation (DT). 
Nitzke et al.\cite{NitzkeNFKH2021} noticed that $q$ has a seasonal behaviour, i.e., $q$ has local maxima every 12 month. Thus, we only use datasets with $j =i\pm 12l$ with $l\in\mathbb{N}$ for the reconstruction.
A more in depth discussion of the evaluation methods can be found in \cite{NitzkeNFKH2021}.

\subparagraph*{Reconstruction quality:}
\begin{figure}[!tb]
  \centering
    \includegraphics[width=0.9\textwidth]{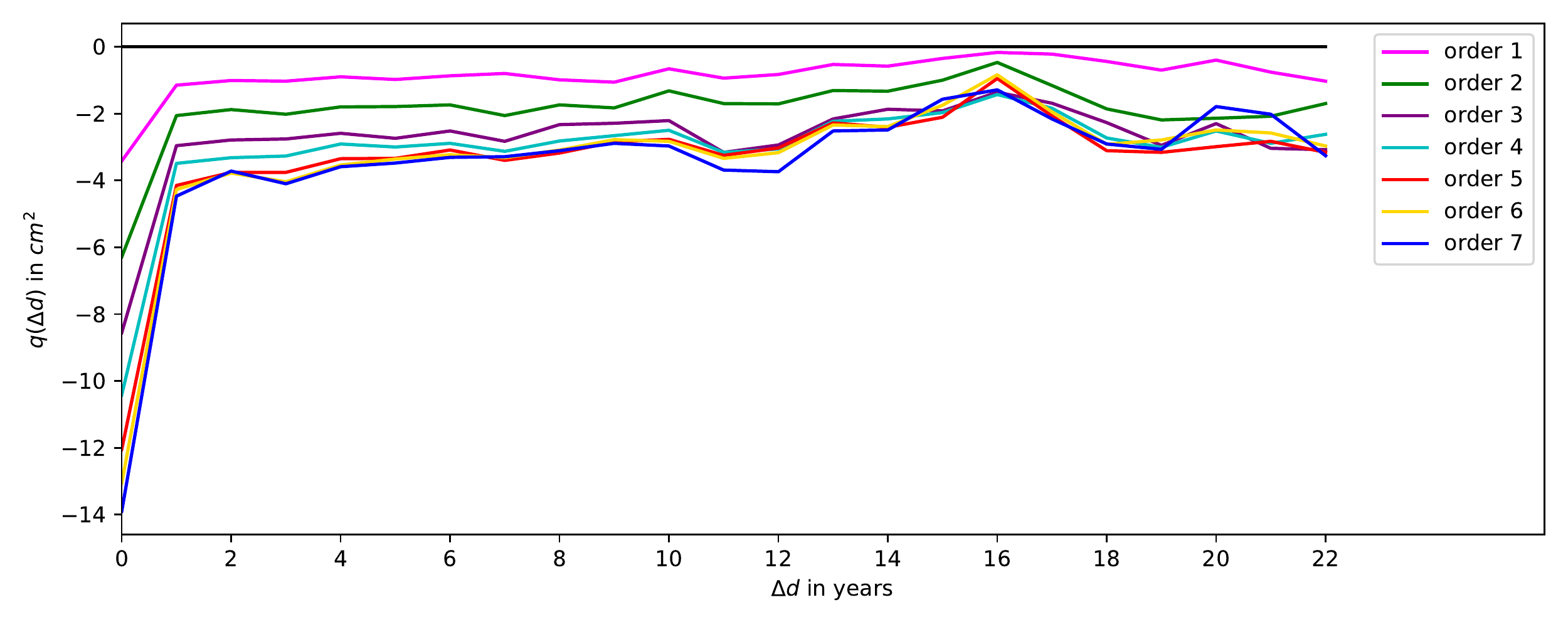}
    \caption{Averaged $q(\Delta d)$ of our approach w.r.t.\ the epoch difference $\Delta d$ for different order $k$}\label{fig:quality_plot}
\end{figure}
For all of the experiments we choose an LAP anchored in the Atlantic Ocean, namely $(-40, 16)$. We compute all possible reconstructions for epochs $i$ and $j$ with $i \geq j$ for the orders $k \leq 7$, i.e., we use every epoch $i$ for training and validate the learned triangulation on all possible epochs $j$ with $j =i-12l$. Next, we group them with respect to $\Delta d$.  In Figure \ref{fig:quality_plot} the $q(\Delta d)$ values are depicted. Recall that our approach performs better than the DT, if $q(\Delta d) < 0$. It should be mentioned, however, that for $\Delta d\geq 18$  the quality of the experiments deteriorates, since only few samples span this epoch difference.

Note that the variance reductions for $\Delta d = 0$ are far better than for larger $\Delta d$, since the reconstruction epoch is the same as the training epoch. The variance reductions for order~$1$ and order $2$ are smoother, but also worse than the ones for higher orders.
For $\Delta d > 10$ the variance reductions for the orders $3$--$6$ are very similar and even order $7$ is comparable. The aforementioned orders also share local extrema at $\Delta d = 10, 11, 18, 20$. For order $7$ the extrema become more pronounced which leads to better minima but also to worse maxima. 
Note that calculating the empirical variances $\sigma_{ij}^2(D^{ij}_D)$ for all epochs yields values between $80\text{cm}^2$ and $120\text{cm}^2$. Hence, for example, an absolute variance reduction of $2\text{cm}^2$ roughly coincides with a relative variance reduction of $2\%$.

The overall variance reduction gets better for higher orders. This is contrary to the results by Nitzke et al.~\cite{NitzkeNFKH2021}, who suggested $k=1,2$ for the reconstruction.
This difference may have geometric reasons, i.e., the points in the North Sea dataset used in \cite{NitzkeNFKH2021} more or less trace a polygon without inner points and our global datasets have a more arbitrary distribution. 
Moreover, the LAP distorts distances as well as angles which may also contribute to the different results for the local and global datasets.

\subparagraph*{Runtime:}
\begin{figure}[!tb]
\begin{minipage}[t]{.48\textwidth}
\centering
\includegraphics[width=1\textwidth]{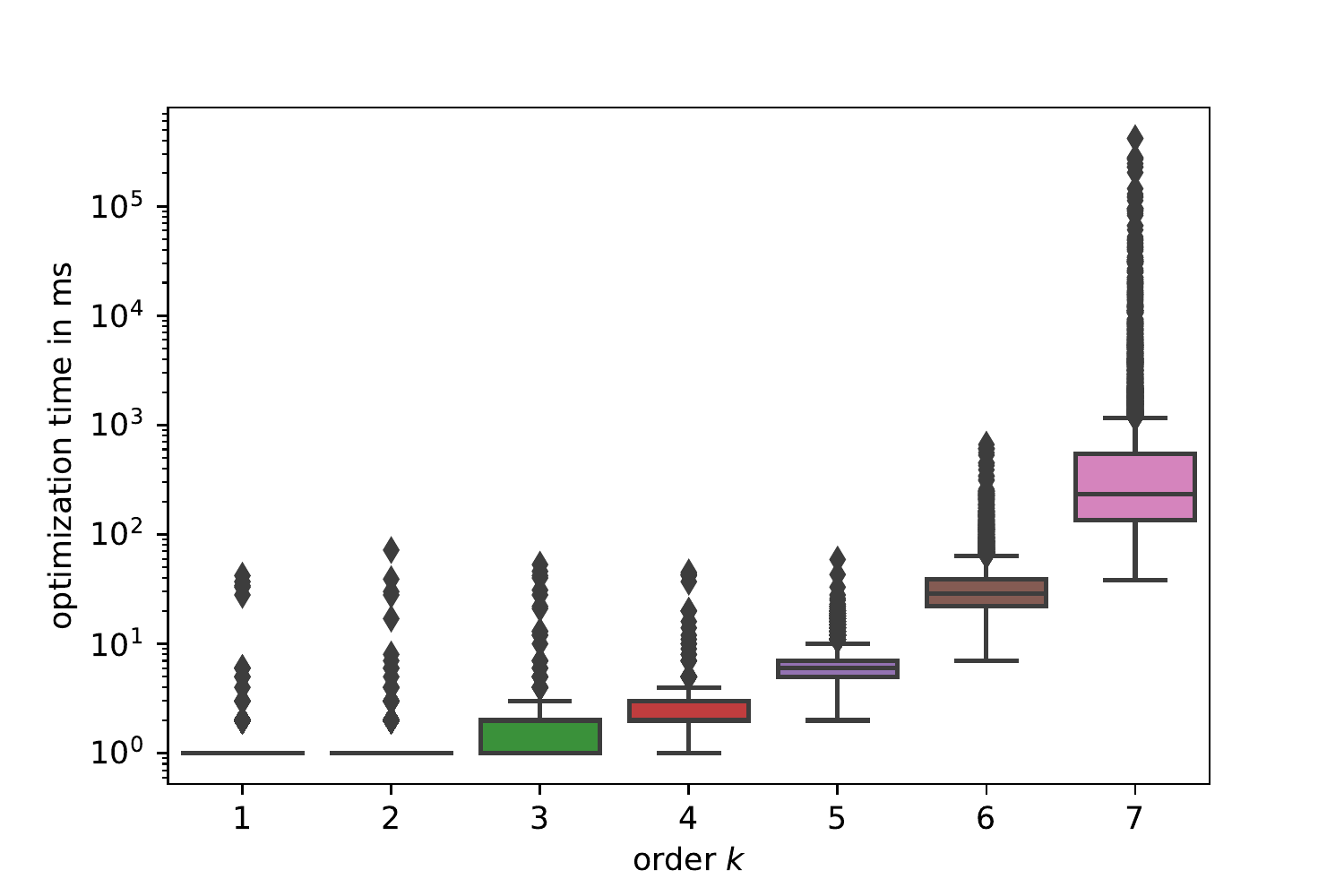}
\caption{Optimization time depending on the order}\label{fig:order_plot}
\end{minipage}
\hfill
\begin{minipage}[t]{.48\textwidth}
\centering
\includegraphics[width=1\textwidth]{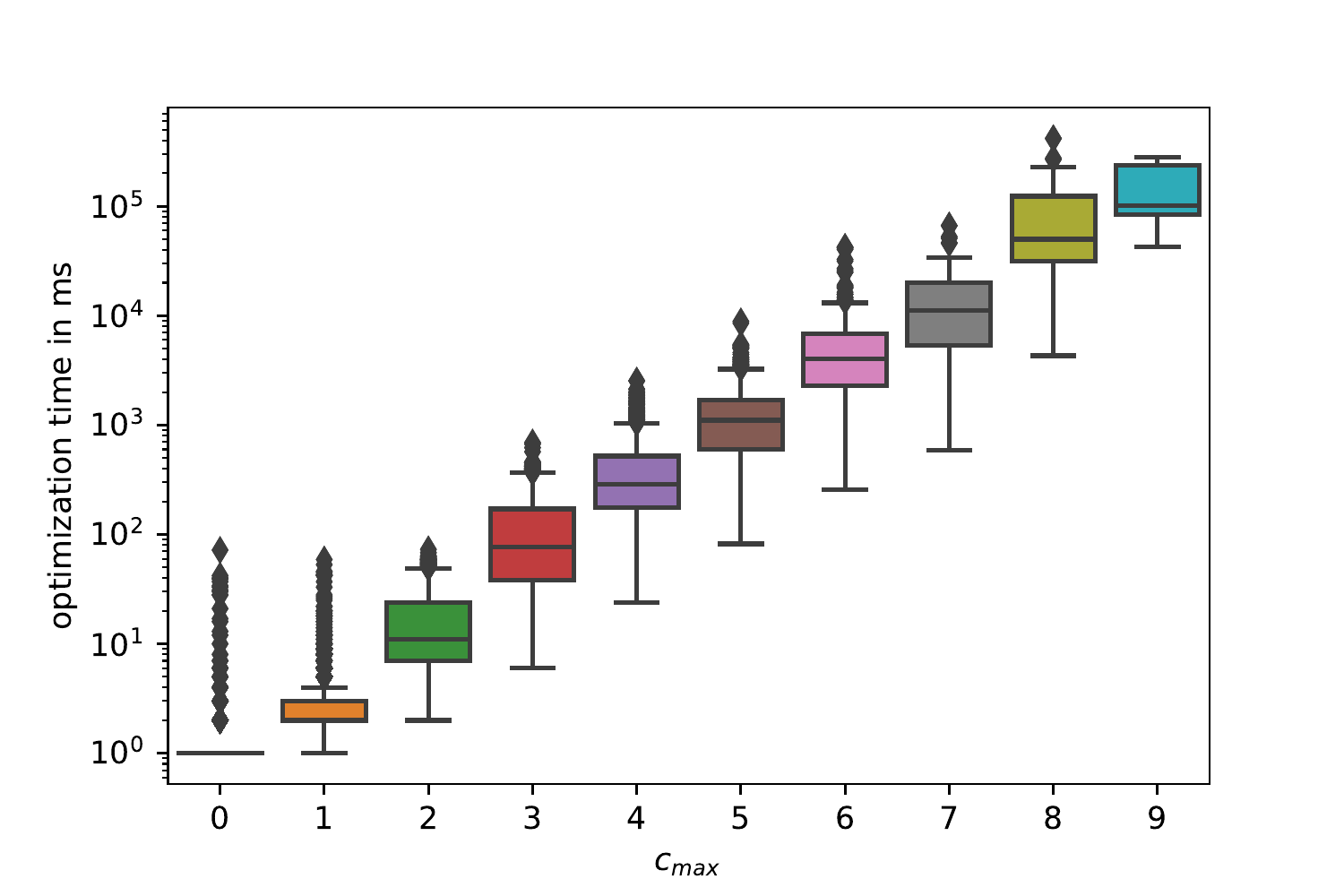}
\captionsetup{textformat=simple}
\caption{Optimization time depending on $c_\text{max}$}\label{fig:component_plot}
\end{minipage}
\end{figure}
For the experiments we used a machine with an \textit{AMD Ryzen 5 3600 6-Core Processor} clocked at \textit{4.4 GHz} and \textit{16 GB RAM}. We did not implement the geometric pre-processing as discussed in \cite{GudmundssonHK2002}. Our pre-processing has roughly cubic runtime ($3$--$4$ seconds per reconstruction). 
For larger orders $k$ we expect the optimization to dominate the runtime. 

The optimization time with respect to the order is given in Figure \ref{fig:order_plot}. Note that the optimization time for $k\leq 5$ is at most $30$ms. 
For $k=6$ the average runtime is still low with roughly $50$ms.  
For $k=7$ most datasets can be optimized in a few seconds, but some need around $20$ minutes for the optimization and five datasets reach a cut-off time of one hour.

The box-plot in Figure \ref{fig:component_plot} depicts the runtime with respect to the number of connected components $c_{\text{max}}$. The logarithmic scaling nicely illustrates the exponential increase.
If we also consider the distribution of $c_\text{max}$ for the different datasets and orders, we can easily connect the two box-plots. 
For $k\leq 4$ all of the datasets have $c_{\text{max}}\le 2$. 
Thus, the maximal runtime for orders $k\leq 4$ matches the worst runtime for $c_{max}\le 2$. 
For orders $k=5,6,7$ the $c_\text{max}$ distributions are illustrated in Figure \ref{fig:CCHistogram}. Note that for $k=5$ and $k=6$ most datasets still have $c_\text{max}\leq 2$  which results in the very low average runtime. For $k=7$ the distribution starts to shift towards higher $c_\text{max}$ which results in the higher average runtime.

\medskip

\noindent In summary, our experiments show that for our datasets we can compute $k$-OD min-error triangulations for $k \leq 6$ and also for $k = 7$ except for a few samples in reasonable time. 

\begin{figure}[tb]
\begin{minipage}[t]{.3\textwidth}
\centering
\includegraphics[width=1\textwidth]{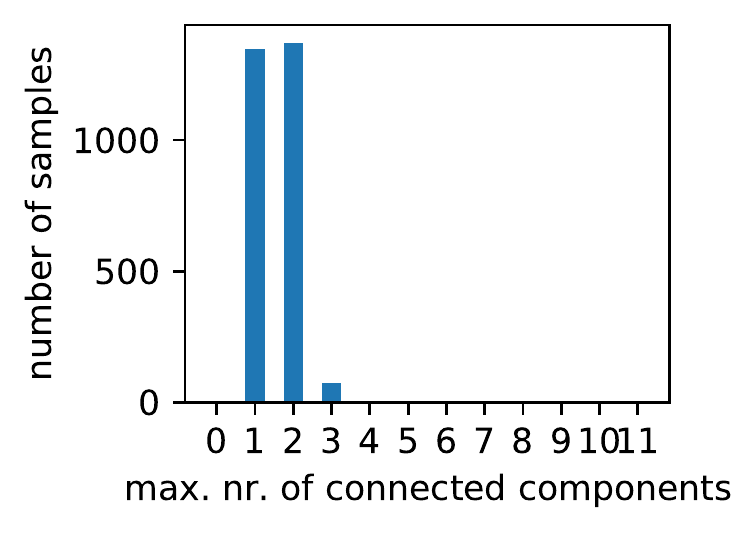}
\end{minipage}
\hfill
\begin{minipage}[t]{.3\textwidth}
\centering
\includegraphics[width=1\textwidth]{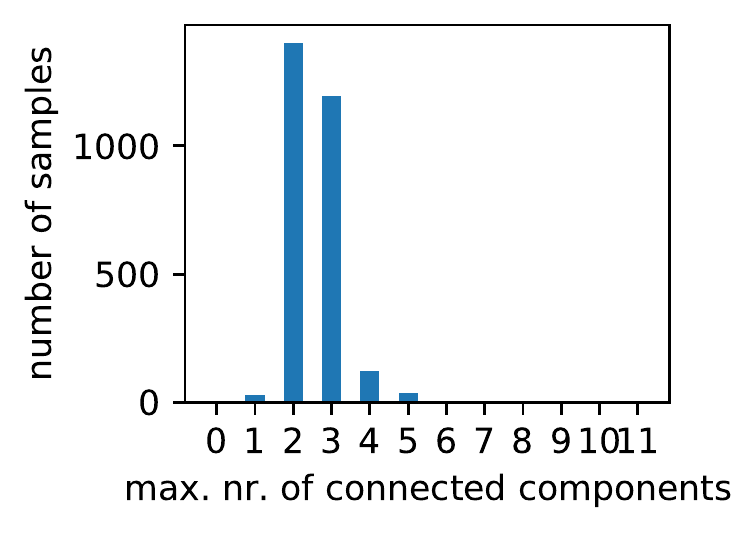}
\end{minipage}
\hfill
\begin{minipage}[t]{.3\textwidth}
\centering
\includegraphics[width=1\textwidth]{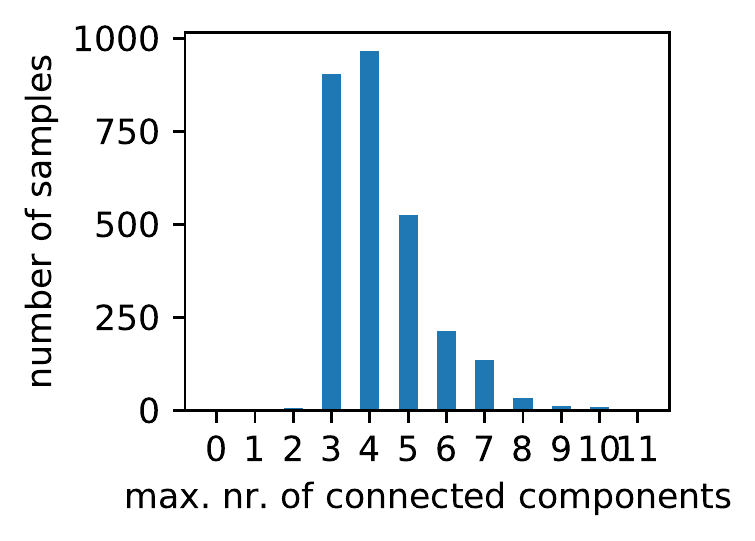}
\end{minipage}
\caption{The $c_\text{max}$ distribution of the reconstruction datasets for orders $k=5,6,7$}\label{fig:CCHistogram}
\end{figure}

\section{Conclusion}\label{sec:Conclusion}
We prove that it is NP-hard to approximate an optimal solution to the minimum-error triangulation problem. Our results also imply the inapproximability of the following generalization: minimizing the distance between $s_D$ and $h$ on $\mR$ for any metric on $\R^m$, especially the $L_p$-metric $\big(\sum_{r\in\mR}|s_D(r)-h(r)|^p\big)^{1/p}$ for $p\in[1,\infty)$ and the $L_\infty$-metric $\max_{r\in\mR}|s_D(r)-h(r)|$.
Additionally, we apply the dynamic programming algorithm by Silveira et al.~\cite{SilveiraK2009} to minimum-error triangulations and extend their experiments, regarding the fixed edges to a real world dataset. We further investigate the fixed-edge graphs for order $k=2$ and give a worst-case example for $k=3$. 
Finally, we perform the dynamic sea surface reconstruction similar to Nitzke et al.\ in \cite{NitzkeNFKH2021} for significantly larger datasets using a new algorithmic approach. 

A future line of research is the extension of the dynamic programming algorithm to datasets on the sphere, i.e., spherical triangulations. 
This would allow a more realistic reconstruction of the global dynamic sea surface. 
A combination with ILP techniques will be a further step~\cite{FeketeHL+2020}.
It would also be interesting to include multiple datasets for the learning of the reconstruction triangulation.
We believe that our work will open the door for the application of optimal triangulation approaches to the problem of multi-decadal global sea level reconstructions from tide gauge data.
In addition, with the growing amount of satellite and in-situ ocean sensors (buoys, Argo floats, ...) we see potential for a more widespread application of triangulation methods in generating gridded ocean data products.

%%
%% Bibliography
%%

%% Please use bibtex, 
\bibliography{Minimum_Error_Triangulations_for_Sea_Surface_Reconstruction}

\begin{thebibliography}{10}

\bibitem{Cazenave+2015}
M.~Ablain, A.~Cazenave, G.~Larnicol, M.~Balmaseda, P.~Cipollini, Y.~Faug\`ere,
  M.~J. Fernandes, O.~Henry, J.~A. Johannessen, P.~Knudsen, O.~Andersen,
  J.~Legeais, B.~Meyssignac, N.~Picot, M.~Roca, S.~Rudenko, M.~G.
  Scharffenberg, D.~Stammer, G.~Timms, and J.~Benveniste.
\newblock Improved sea level record over the satellite altimetry era
  (1993–2010) from the {C}limate {C}hange {I}nitiative project.
\newblock {\em Ocean Science}, 11(1):67--82, 2015.
\newblock URL: \url{https://os.copernicus.org/articles/11/67/2015/}, \href
  {https://doi.org/10.5194/os-11-67-2015} {\path{doi:10.5194/os-11-67-2015}}.

\bibitem{Agarwal1994}
Pankaj~K. Agarwal and Subhash Suri.
\newblock Surface approximation and geometric partitions.
\newblock In {\em Proceedings of the Fifth Annual ACM-SIAM Symposium on
  Discrete Algorithms}, SODA '94, page 24–33, USA, 1994. Society for
  Industrial and Applied Mathematics.

\bibitem{Alboul2000}
Lyuba Alboul, Gertjan Kloosterman, Cornelis Traas, and Ruud {van Damme}.
\newblock Best data-dependent triangulations.
\newblock {\em Journal of Computational and Applied Mathematics}, 119(1):1--12,
  2000.
\newblock \href {https://doi.org/10.1016/S0377-0427(00)00368-X}
  {\path{doi:10.1016/S0377-0427(00)00368-X}}.

\bibitem{AnagnostouC1993}
Efthymios Anagnostou and Derek Corneil.
\newblock Polynomial-time instances of the minimum weight triangulation
  problem.
\newblock {\em Computational Geometry}, 3(5):247 -- 259, 1993.
\newblock URL:
  \url{http://www.sciencedirect.com/science/article/pii/092577219390016Y},
  \href {https://doi.org/https://doi.org/10.1016/0925-7721(93)90016-Y}
  {\path{doi:https://doi.org/10.1016/0925-7721(93)90016-Y}}.

\bibitem{BernE1995}
Marshall Bern and David Eppstein.
\newblock Mesh generation and optimal triangulation.
\newblock In {\em Computing in Euclidean Geometry}, 1992.
\newblock \href {https://doi.org/10.1142/9789814355858_0002}
  {\path{doi:10.1142/9789814355858_0002}}.

\bibitem{BorgeldBC2008}
Magdalene Borgelt, Christian Borgelt, and Christos Levcopoulos.
\newblock Fixed parameter algorithms for the minimum weight triangulation
  problem.
\newblock {\em Int. J. Comput. Geometry Appl.}, 18:185--220, 06 2008.
\newblock \href {https://doi.org/10.1142/S0218195908002581}
  {\path{doi:10.1142/S0218195908002581}}.

\bibitem{Brown91}
Jeffrey~L. Brown.
\newblock Vertex based data dependent triangulations.
\newblock {\em Computer Aided Geometric Design}, 8(3):239--251, 1991.
\newblock \href {https://doi.org/10.1016/0167-8396(91)90008-Y}
  {\path{doi:10.1016/0167-8396(91)90008-Y}}.

\bibitem{ChengT1995}
Siu-Wing Cheng, Mordecai~J. Golin, and Jeffrey Tsang.
\newblock Expected case analysis of \{221\}-skeletons with applications to the
  construction of minimum-weight triangulations.
\newblock Master's thesis, Hong Kong University of Science and Technology,
  1995.

\bibitem{Church+2004}
John~A. {Church}, Neil~J. {White}, Richard {Coleman}, Kurt {Lambeck}, and
  Jerry~X. {Mitrovica}.
\newblock {Estimates of the Regional Distribution of Sea Level Rise over the
  1950--2000 Period.}
\newblock {\em Journal of Climate}, 17(13):2609--2625, July 2004.
\newblock \href
  {https://doi.org/10.1175/1520-0442(2004)017\textless{}2609:EOTRDO\textgreater{}2.0.CO;2}
  {\path{doi:10.1175/1520-0442(2004)017\textless{}2609:EOTRDO\textgreater{}2.0.CO;2}}.

\bibitem{DEKOK2007}
Thierry {de Kok}, Marc {van Kreveld}, and Maarten Löffler.
\newblock Generating realistic terrains with higher-order {D}elaunay
  triangulations.
\newblock {\em Computational Geometry}, 36(1):52--65, 2007.
\newblock Special Issue on the 21st European Workshop on Computational
  Geometry.
\newblock URL:
  \url{https://www.sciencedirect.com/science/article/pii/S0925772106000484},
  \href {https://doi.org/https://doi.org/10.1016/j.comgeo.2005.09.005}
  {\path{doi:https://doi.org/10.1016/j.comgeo.2005.09.005}}.

\bibitem{DynLR1990}
Nira Dyn, David Levin, and Samuel Rippa.
\newblock {Data Dependent Triangulations for Piecewise Linear Interpolation}.
\newblock {\em IMA Journal of Numerical Analysis}, 10(1):137--154, 01 1990.
\newblock \href
  {http://arxiv.org/abs/https://academic.oup.com/imajna/article-pdf/10/1/137/1956877/10-1-137.pdf}
  {\path{arXiv:https://academic.oup.com/imajna/article-pdf/10/1/137/1956877/10-1-137.pdf}},
  \href {https://doi.org/10.1093/imanum/10.1.137}
  {\path{doi:10.1093/imanum/10.1.137}}.

\bibitem{ESA2021SSAData}
ESA.
\newblock Sea level {CCI} {ECV} {dataset: Time} series of gridded sea level
  anomalies(sla), 2021.
\newblock URL:
  https://catalogue.ceda.ac.uk/uuid/142052b9dc754f6da47a631e35ec4609. European
  Space Agency(ESA).
\newblock \href
  {https://doi.org/10.5270/esa-sea_level_cci-msla-1993_2015-v_2.0-201612}
  {\path{doi:10.5270/esa-sea_level_cci-msla-1993_2015-v_2.0-201612}}.

\bibitem{FeketeHL+2020}
S.P. Fekete, A.~Haas, Y.~Lieder, E.~Niehs, M.~Perk, V.~Sack, and C.~Scheffer.
\newblock On hard instances of the minimum-weight triangulation problem.
\newblock In {\em 36th European Workshop on Computational Geometry (EuroCG
  2020)}, 3 2020.

\bibitem{Gilbert1979}
P.~Gilbert.
\newblock New results on planar triangulations.
\newblock Master's thesis, University of Illinois, Coordinated Science Lab,
  Urbana, IL, USA, 1979.

\bibitem{Glaister2007}
Paul Glaister.
\newblock Intersecting chords theorem: 30 years on.
\newblock {\em Mathematics in School}, 36(1):22--22, 2007.
\newblock URL: \url{http://www.jstor.org/stable/30215983}.

\bibitem{GudmundssonHK2002}
Joachim Gudmundsson, Mikael Hammar, and Marc van Kreveld.
\newblock Higher order {D}elaunay triangulations.
\newblock {\em Computational Geometry}, 23(1):85 -- 98, 2002.
\newblock URL:
  \url{http://www.sciencedirect.com/science/article/pii/S092577210100027X},
  \href {https://doi.org/https://doi.org/10.1016/S0925-7721(01)00027-X}
  {\path{doi:https://doi.org/10.1016/S0925-7721(01)00027-X}}.

\bibitem{Haas2018}
Andreas Haas.
\newblock Solving large-scale minimum-weight triangulation instances to
  provable optimality.
\newblock In Bettina Speckmann and Csaba~D. T{\'{o}}th, editors, {\em 34th
  International Symposium on Computational Geometry, SoCG 2018}, volume~99 of
  {\em LIPIcs}, pages 44:1--44:14. Schloss Dagstuhl - Leibniz-Zentrum f{\"{u}}r
  Informatik, 2018.
\newblock \href {https://doi.org/10.4230/LIPIcs.SoCG.2018.44}
  {\path{doi:10.4230/LIPIcs.SoCG.2018.44}}.

\bibitem{Holgate+2012}
Simon~J. Holgate, Andrew Matthews, Philip~L. Woodworth, Lesley~J. Rickards,
  Mark~E. Tamisiea, Elizabeth Bradshaw, Peter~R. Foden, Kathleen~M. Gordon,
  Svetlana Jevrejeva, and Jeff Pugh.
\newblock {New Data Systems and Products at the Permanent Service for Mean Sea
  Level}.
\newblock {\em Journal of Coastal Research}, 29(3):493--504, 12 2012.
\newblock \href
  {http://arxiv.org/abs/https://meridian.allenpress.com/jcr/article-pdf/29/3/493/1641824/jcoastres-d-12-00175\_1.pdf}
  {\path{arXiv:https://meridian.allenpress.com/jcr/article-pdf/29/3/493/1641824/jcoastres-d-12-00175\_1.pdf}},
  \href {https://doi.org/10.2112/JCOASTRES-D-12-00175.1}
  {\path{doi:10.2112/JCOASTRES-D-12-00175.1}}.

\bibitem{Klincsek1980}
G.T. Klincsek.
\newblock Minimal triangulations of polygonal domains.
\newblock In Peter~L. Hammer, editor, {\em Combinatorics 79}, volume~9 of {\em
  Annals of Discrete Mathematics}, pages 121--123. Elsevier, 1980.
\newblock URL:
  \url{https://www.sciencedirect.com/science/article/pii/S016750600870044X},
  \href {https://doi.org/https://doi.org/10.1016/S0167-5060(08)70044-X}
  {\path{doi:https://doi.org/10.1016/S0167-5060(08)70044-X}}.

\bibitem{Knuth}
Donald~E. {Knuth} and Arvind {Raghunathan}.
\newblock The problem of compatible representatives.
\newblock {\em SIAM Journal on Discrete Mathematics}, 5(3):422--427, 1992.

\bibitem{Lawson1977}
Charles~L. Lawson.
\newblock Software for {C}$^1$ surface interpolation.
\newblock In John~R. Rice, editor, {\em Mathematical Software}, pages 161--194.
  Academic Press, 1977.
\newblock \href {https://doi.org/10.1016/B978-0-12-587260-7.50011-X}
  {\path{doi:10.1016/B978-0-12-587260-7.50011-X}}.

\bibitem{Lichtenstein}
David {Lichtenstein}.
\newblock Planar formulae and their uses.
\newblock {\em SIAM Journal on Computing}, 11:329--343, 1982.

\bibitem{MulzerR2008}
Wolfgang Mulzer and Günter Rote.
\newblock Minimum-weight triangulation is {NP}-hard.
\newblock {\em Journal of the ACM}, 55(2):1–29, May 2008.
\newblock URL: \url{http://dx.doi.org/10.1145/1346330.1346336}, \href
  {https://doi.org/10.1145/1346330.1346336}
  {\path{doi:10.1145/1346330.1346336}}.

\bibitem{NitzkeNFKH2021}
Alina Nitzke, Benjamin Niedermann, Luciana Fenoglio-Marc, Jürgen Kusche, and
  Jan-Henrik Haunert.
\newblock Reconstructing the dynamic sea surface from tide gauge records using
  optimal data-dependent triangulations.
\newblock {\em Computers \& Geosciences}, 157:104920, 2021.
\newblock \href {https://doi.org/https://doi.org/10.1016/j.cageo.2021.104920}
  {\path{doi:https://doi.org/10.1016/j.cageo.2021.104920}}.

\bibitem{OlivieriS2016}
Marco Olivieri and Giorgio Spada.
\newblock Spatial sea-level reconstruction in the {B}altic {S}ea and in the
  {P}acific {O}cean from tide gauges observations.
\newblock {\em Annals of Geophysics}, 59(3), 2016.
\newblock URL:
  \url{https://www.annalsofgeophysics.eu/index.php/annals/article/view/6966},
  \href {https://doi.org/10.4401/ag-6966} {\path{doi:10.4401/ag-6966}}.

\bibitem{PSMSL2021}
Permanent service for mean sea level ({PSMSL}), 2021.
\newblock Retrieved 19 Apr 2021 from
  \url{http://www.psmsl.org/data/obtaining/}.
\newblock \href
  {https://doi.org/10.5270/esa-sea_level_cci-msla-1993_2015-v_2.0-201612}
  {\path{doi:10.5270/esa-sea_level_cci-msla-1993_2015-v_2.0-201612}}.

\bibitem{RodriguezS2017}
Natalia Rodríguez and Rodrigo~I. Silveira.
\newblock Implementing data-dependent triangulations with higher order
  {D}elaunay triangulations.
\newblock {\em ISPRS International Journal of Geo-Information}, 6(12), 2017.
\newblock URL: \url{https://www.mdpi.com/2220-9964/6/12/390}, \href
  {https://doi.org/10.3390/ijgi6120390} {\path{doi:10.3390/ijgi6120390}}.

\bibitem{SilveiraK2009}
Rodrigo~I. Silveira and Marc {van Kreveld}.
\newblock Optimal higher order {D}elaunay triangulations of polygons.
\newblock {\em Computational Geometry}, 42(8):803 -- 813, 2009.
\newblock Special Issue on the 23rd European Workshop on Computational
  Geometry.
\newblock URL:
  \url{http://www.sciencedirect.com/science/article/pii/S0925772109000224},
  \href {https://doi.org/https://doi.org/10.1016/j.comgeo.2008.02.006}
  {\path{doi:https://doi.org/10.1016/j.comgeo.2008.02.006}}.

\bibitem{Wang2001}
Kai Wang, Chor-Pang Lo, George~A. Brook, and Hamid~R. Arabnia.
\newblock Comparison of existing triangulation methods for regularly and
  irregularly spaced height fields.
\newblock {\em International Journal of Geographical Information Science},
  15(8):743--762, 2001.
\newblock \href {https://doi.org/10.1080/13658810110074492}
  {\path{doi:10.1080/13658810110074492}}.

\end{thebibliography}

\appendix
\section{Missing constructions and proofs of Section \ref{sec:NPHardness}}
\label{ap:np}
\subsection{The paraboloid}
\label{sec:paraboloid}
The graph of the function $f$ is the paraboloid \[\Gamma_f=\{(p_1,p_2,p_1^2+p_2^2)\mid (p_1,p_2)\in\R^2\}.\] Remember that a circle $C$ with radius $\rho$ around $x=(x_1,x_2)\in \R^2$ defines the function $h_C(r)=2x_1r_1+2x_2r_2-x_1^2-x_2^2+\rho^2$ for $r=(r_1,r_2)\in \R^2$. Let \[\Gamma_C=\{(r_1,r_2,h_C(r))\mid r=(r_1,r_2)\in \R^2\}\] denote the graph of $h_C$.

\begin{lemma} \label{lemma:lift_dist}
	For $y\in \R^2$ we have $f(y)-h_C(y)=\lVert y-x\rVert_2^2-\rho^2$.
\end{lemma}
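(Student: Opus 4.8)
The plan is to prove this by a direct substitution of the definitions followed by completing the square; no geometric reasoning is needed. First I would write $y = (y_1, y_2)$ and $x = (x_1, x_2)$, and recall that $f(y) = y_1^2 + y_2^2$ by the definition of the measurement value, while $h_C(y) = 2x_1 y_1 + 2x_2 y_2 - x_1^2 - x_2^2 + \rho^2$ by the definition of $h_C$ fixed above. Subtracting the second from the first, the term $\rho^2$ survives with a sign flip and the remaining terms regroup coordinatewise:
\[
f(y) - h_C(y) = (y_1^2 - 2x_1 y_1 + x_1^2) + (y_2^2 - 2x_2 y_2 + x_2^2) - \rho^2 = (y_1 - x_1)^2 + (y_2 - x_2)^2 - \rho^2,
\]
and the right-hand side is by definition $\lVert y - x\rVert_2^2 - \rho^2$, which finishes the proof.

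There is essentially no obstacle: the only "step" is recognizing that the affine part of $h_C$ was chosen precisely so that $f - h_C$ completes to a perfect square centered at $x$. It may be worth remarking that this identity is the algebraic content of the standard lifting picture — a plane in $\R^3$ meets the unit paraboloid $\Gamma_f$ in the lift of a circle, and the vertical gap $f(y) - h_C(y)$ between paraboloid and plane measures how far $y$ is from lying on $C$. In particular $f(y) - h_C(y)$ is negative for $y \in I_C$, zero for $y \in C$, and positive for $y \in O_C$, which is the sign behavior that later lets us control which triangulation points can combine to represent a reference point with zero error.
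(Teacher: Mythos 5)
Your proof is correct and follows exactly the paper's argument: substitute the definitions of $f$ and $h_C$, group terms coordinatewise, and complete the square to obtain $(y_1-x_1)^2+(y_2-x_2)^2-\rho^2$. The closing remark on the sign behavior for $I_C$, $C$, and $O_C$ is accurate and matches how the paper later uses this lemma.
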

\begin{proof}
	We have 
	\begin{align*}
	f(y)-h_C(y)	&=y_1^2+y_2^2-2x_1y_1-2x_2y_2+x_1^2+x_2^2-\rho^2\\
	&=(y_1-x_1)^2+(y_2-x_2)^2-\rho^2\\
	&=\lVert y-x\rVert_2^2-\rho^2.\qedhere
	\end{align*}
\end{proof}

\begin{lemma} \label{lemma:lift} 
	We have $\Gamma_C\cap\Gamma_f=\{(y_1,y_2,y_1^2+y_2^2)\mid (y_1,y_2)\in C\}$.
\end{lemma}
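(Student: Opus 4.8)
\textbf{Proof plan for Lemma~\ref{lemma:lift}.} The plan is to reduce the claim immediately to Lemma~\ref{lemma:lift_dist}. A point $(y_1,y_2,z)\in\R^3$ lies in $\Gamma_f$ exactly when $z=f(y)=y_1^2+y_2^2$, and it lies in $\Gamma_C$ exactly when $z=h_C(y)$, where $y=(y_1,y_2)$. Hence $(y_1,y_2,z)\in\Gamma_C\cap\Gamma_f$ if and only if $z=f(y)=h_C(y)$; in particular the third coordinate is forced to be $z=y_1^2+y_2^2$, so it only remains to characterize which $y\in\R^2$ satisfy $f(y)=h_C(y)$.

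For that I would invoke Lemma~\ref{lemma:lift_dist}, which gives $f(y)-h_C(y)=\lVert y-x\rVert_2^2-\rho^2$ (with $x$ the center and $\rho$ the radius of $C$). Thus $f(y)=h_C(y)$ holds precisely when $\lVert y-x\rVert_2=\rho$, i.e.\ precisely when $y\in C$. Combining the two observations yields the set equality $\Gamma_C\cap\Gamma_f=\{(y_1,y_2,y_1^2+y_2^2)\mid(y_1,y_2)\in C\}$, proving both inclusions at once.

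There is essentially no obstacle here: the statement is a one-line corollary of Lemma~\ref{lemma:lift_dist} together with the defining equations of the two graphs, and the only thing to be careful about is bookkeeping the third coordinate (it is determined by $y$ on either graph, so no extra freedom is introduced). The geometric content — that the plane $\Gamma_C$ meets the paraboloid $\Gamma_f$ exactly in the lifting of the circle $C$ — is exactly what is needed later to control which triangles represent a reference point with zero error.
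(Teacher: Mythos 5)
Your proof is correct and follows essentially the same argument as the paper: the paper expands the circle and paraboloid equations directly as a chain of set equalities, which is exactly the algebra of Lemma~\ref{lemma:lift_dist} that you invoke instead of redoing inline. Your handling of the third coordinate is the same reduction, so the two proofs coincide in substance.
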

\begin{proof}
	We have 
	\begin{align*}
	\{(y_1,y_2,&y_1^2+y_2^2)\mid (y_1,y_2)\in C\} = \{(y_1,y_2,y_1^2+y_2^2)\mid (y_1-x_1)^2+(y_2-x_2)^2=\rho^2\}\\
	&=\{(y_1,y_2,y_3)\mid y_3-2x_1y_1-2x_2y_2+x_1^2+x_2^2-\rho^2=0, y_3=y_1^2+y_2^2\}\\
	&=\Gamma_C\cap \Gamma_f.\qedhere
	\end{align*}
\end{proof}

We are now able to prove Lemmas~\ref{lemma:circle_rep} and \ref{lemma:lift_innout}. 
\lemmaCircleRep*
\begin{proof}
	By Lemma~\ref{lemma:lift} we know that  $f(v)=h_{C_r}(v)$ for all $v\in \{s,t,u\}\cap C_r$. As $h_{C_r}$ is affine this means that $r$ is represented with zero error by $T$.
\end{proof}

\lemmaInnOut*
\begin{proof}
	We pick a convex combination $\lambda s+\mu t+\gamma u=r.$ As $T$ represents $r$ with zero error, we have 
	\[h_{C_r}(r)=\lambda f(s)+\mu f(t)+\gamma f(u).\]
	Since $h_{C_r}$ is affine also 
	\[h_{C_r}(r)=h_{C_r}(\lambda s+\mu t+\gamma u)=\lambda h_{C_r}(s)+\mu h_{C_r}(t)+\gamma h_{C_r}(u).\]  
	Combining the two equations we obtain 
	\[0	=\lambda (f(s)-h_{C_r}(s))+\mu (f(t)-h_{C_r}(t))+\gamma (f(u)-h_{C_r}(u)).\]
	Observe that $f(p)-h_{C_r}(p)<0$ for $p\in  I_{C_r}$ and $f(p)-h_{C_r}(p)>0$ for $p\in  O_{C_r}$ by Lemma~\ref{lemma:lift_dist}. We distinguish two cases: If points in $\{s,t,u\}\cap I_{C_r}$ and $\{s,t,u\}\cap  O_{C_r}$ appear with factor zero in the above equation, then $r\in \conv(\{s,t,u\}\cap C_r)$ contradicting our assumption. Otherwise the sets $\{s,t,u\}\cap I_{C_r}$ and $\{s,t,u\}\cap O_{C_r}$ must be non-empty.
\end{proof}	

We need some additional statement about the behavior of the error under orthogonal transformations and translations of the triangle and the reference point it is representing. 

\begin{lemma}
	\label{lemma:trans}
	Let $T\subset\R^2$ be a triangle representing $r\in \mR$ with error $\epsilon$. Applying a fixed orthogonal transformation or a translation on $T, r$ and $C_r$ preserves the error. In particular $r$ is still represented with error $\epsilon$ by $T$ after such operations.
\end{lemma}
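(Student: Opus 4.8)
The plan is to rewrite the error at a reference point purely in terms of Euclidean distances, which are preserved by orthogonal transformations and translations. Let $\phi$ denote the isometry in question (an orthogonal map or a translation), and write $T'=\phi(T)$ (with vertices $\phi(s),\phi(t),\phi(u)$), $r'=\phi(r)$, and $C'=\phi(C_r)$. Since $\phi$ preserves distances, $C'$ is again a circle, with center $\phi(x)$ (where $x$ is the center of $C_r$) and the same radius $\rho$; this is exactly why the coupled circle has to be transformed along with $T$ and $r$.

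First I would observe that $s_T$ and $h_{C_r}$ are both affine functions on $\R^2$, hence so is their difference, and — just as in the proof of Lemma~\ref{lemma:lift_innout} — pick a convex combination $r=\lambda s+\mu t+\gamma u$. Using that $s_T$ agrees with $f$ at the vertices $s,t,u$ and that $s_T-h_{C_r}$ is affine, one gets
\[
s_T(r)-h_{C_r}(r)=\lambda\bigl(f(s)-h_{C_r}(s)\bigr)+\mu\bigl(f(t)-h_{C_r}(t)\bigr)+\gamma\bigl(f(u)-h_{C_r}(u)\bigr),
\]
and Lemma~\ref{lemma:lift_dist} rewrites each term $f(v)-h_{C_r}(v)$ as $\lVert v-x\rVert_2^2-\rho^2$. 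Thus $s_T(r)-h(r)$ is the weighted sum, with weights the barycentric coordinates of $r$, of the quantities $\lVert v-x\rVert_2^2-\rho^2$ over the three vertices $v$. Running the identical computation for $T',r',C'$ and using that $\phi$ is affine with $\lambda+\mu+\gamma=1$ (so the barycentric coordinates of $r'$ with respect to $\phi(s),\phi(t),\phi(u)$ remain $\lambda,\mu,\gamma$), that the radius of $C'$ is still $\rho$, and that $\lVert\phi(v)-\phi(x)\rVert_2=\lVert v-x\rVert_2$, I conclude $s_{T'}(r')-h_{C'}(r')=s_T(r)-h(r)$. Squaring gives equality of the errors; and since $\phi$ preserves containment, if $T$ represents $r$ (i.e.\ $r\in T$) then $T'$ represents $r'$, with the same error $\epsilon$.

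The step to be careful about — and, I expect, the only conceptual point — is that $f$ itself is \emph{not} invariant under the full isometry group (only under rotations about the origin), so one cannot simply argue $s_{T'}=f\circ\phi^{-1}$ and pull everything back. The argument has to pass through the distance formula of Lemma~\ref{lemma:lift_dist}: it is precisely the combination $f-h_{C_r}$ that is isometry-invariant, and that combination is all that enters the error. Everything else (affinity of $s_T-h_{C_r}$, invariance of barycentric coordinates under affine maps, invariance of the radius) is routine.
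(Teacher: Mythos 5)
Your proposal is correct and follows essentially the same route as the paper's proof: pick barycentric coordinates of $r$ in $T$, express the error as the weighted sum of $f(v)-h_{C_r}(v)$ over the vertices, and invoke Lemma~\ref{lemma:lift_dist} to see that each term depends only on the radius of $C_r$ and the distance from its center to the vertex, both of which are preserved by the isometry. Your additional remarks (invariance of barycentric coordinates under affine maps, preservation of $r\in T$, and the observation that it is $f-h_{C_r}$ rather than $f$ that is isometry-invariant) simply make explicit what the paper leaves implicit.
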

\begin{proof}
	Let $s,t,u$ be the vertices of $T$. We pick a convex combination $\lambda s+\mu t+\gamma u=r$ and obtain 
	\begin{align*}
	\epsilon&=|\lambda f(s)+\mu f(t)+\gamma f(u)-h_{C_r}(r)|\\
	&=|\lambda (f(s)-h_{C_r}(s))+\mu (f(t)-h_{C_r}(t))+\gamma (f(u)-h_{C_r}(u))|.
	\end{align*}
	By Lemma~\ref{lemma:lift_dist} the last part depends on the radius of $C_r$ and the distance between its center and $s,t,u$. These values do not change after an orthogonal transformation or translation of $T,r$ and $C_r$. 
\end{proof}

\subsection{Analysis of gadgets}
\label{sec:gadgets}
We prove the property of a bit having an either positive or negative signal under certain conditions. We will show that the bit is well-behaved in the sense that any zero-error triangulation must have a triangulation edge that carries either a positive or a negative signal. We show this by analyzing all possible triangles formed by points on the integer grid, which represent $r$ with zero error.

\lemmaBit*
\begin{proof}
	Clearly, $D$ can contain at most one of $e_r^+$ and $e_r^-$. What needs to be proven, is that $D$ cannot do without, that is, we cannot have a triangle $T$ that represents $r$ with zero error and contains neither $e_r^-$ nor $e_r^+$ as one of its edges. 
	We first observe that we can assume $r=(0,0)$ as translation of $T,r$ and $C_r$ by $-r$ does not change the error by Lemma~\ref{lemma:trans}.
	
	Let $v(T)=\{s,t,u\}\subset \mS$ denote the vertices of $T$. Clearly, if $r\in \conv(v(T)\cap C_r)$, then $e_r^+$ or $e_r^-$ would be an edge of $T$. So, henceforth, we consider the case that $r\notin \conv(v(T)\cap C_r)$. Lemma~\ref{lemma:lift_innout} now tells us that $v(T)\cap I_{C_r}\neq\emptyset$. Since the construction and the error are invariant under rotation (except the labeling of $e_r^+$ and $e_r^-$, which may be switched) by Lemma~\ref{lemma:trans}, we may assume, without loss of generality, that this is the point $t=(0,1)$. 
	
	We lift our construction into the $3$-dimensional space. That is for a point $p=(p_1,p_2)\in \R^2$ we denote with $p'=(p_1,p_2,f(p))$ its lift on the paraboloid $\Gamma_f$ and analogously we define the lift of a set $M\subset \R^2$ as $M'=\{p'\mid p\in M\}$. 
	Let $E$ denote the plane that contains $v(T)'$. As $(0,1)\in v(T)$ we know that $(0,1,1)\in v(T)'\subset E$. Furthermore $(0,0,h_{C_0}(0))=(0,0,2)\in E$ as $T$ represents $(0,0)$ with zero error. Thus a point $(x_1,x_2,x_3)$ on $E$ must satisfy $2a x_1 - x_2 - x_3 + 2 = 0$, for some fixed $a$.
	
	We have $v(T)'\subset E\cap \Gamma_f$ and thus the remaining points of $v(T)$ must lie on the circle described by $x_1^2 + x_2^2 = 2 a x_1 - x_2 + 2$. That is, the circle $C_T$ with center $(a, -1/2)$ and squared radius $a^2 + 9/4$.
	Since the construction and the error are invariant under reflection (except the labeling of $e_r^\pm$), we may now assume, without loss of generality, that $a$ is non-negative. As $T'$ must include $r'$, $T$ must include at least one point $u=(u_1,u_2)$, different from $t$, such that $u_1 \leq 0$. We will now investigate all possible locations of $u$. Remember that $\mS\subset \Z^2$ so $u$ must have integral coordinates.
	
	\textbf{Case 1:} Suppose that $u_1 \leq -2$. The first coordinate of any point of $C_T$ is at least $a - \sqrt{a^2 + 9/4}$. For $a \geq 0$, this expression grows with $a$, starting from $-3/2$ for $a = 0$. Thus such $u$ cannot exist.
	
	\textbf{Case 2:} Suppose that $u_1 = -1$.  Then the circle equation reads $1 + u_2^2 = -2a - u_2 + 2$, so $(u_2 + 1/2)^2 = 5/4 - 2a$. For $a \geq 0$, this implies $|u_2 + 1/2| < \sqrt{5/4}$, and therefore the only candidate for $u$ is $(-1,-1)$ (as $(-1,0)$ is a forbidden point), with $a = 1/2$. Note that $u$ lies on $C_r$. Now the third point of $v(T)$ must lie to the right of vertical axis, on $C_T$, that is, on the circle with center $(1/2,-1/2)$ and radius $\frac12 \sqrt{10}$. Here the only candidates with integer coordinates are $(1,1)$ (but then $e_r^+$ would be an edge of $T$), $(2,0)$ (which is forbidden and thus not in $\mS$), $(2,-1)$ (which is invalid because $T$ would then contain a fourth triangulation point $(0,-1)$) and $(1,-2)$ (which is invalid for the same reason). Therefore we cannot have $u_1 = -1$.
	
	\textbf{Case 3:} Finally, suppose that $u_1 = 0$. Now we must have $u = (0,-1)$, since $u_2 > 0$ would imply that $T$ does not contain $r$, whereas $u_2 < -1$ would imply that $T$ contains $(0,-1)$ as a fourth triangulation point. But if $t' = (0,1,1)$ and $u' = (0,-1,1)$ are both vertices of $T'$, then $r' = (0,0,2)\notin T'$ and thus we obtain a non-zero error at $r$. Therefore we cannot have $u_1 = 0$.
	
	It follows that every triangle $T$ that represents $r$ with zero error contains either $e_r^+$ or $e_r^-$.
\end{proof}
We prove a similar property on the wire segment and multiplier segment.
\begin{lemma} \label{lemma:line_mult_seg}
	Suppose the instance contains a wire/multiplier segment and let $\RR$ be the reference points of this segment. If $\mS\subset \Z^2$ and $\mS$ does not contain forbidden points of the segment, any triangulation $D$ of $\mS$ with $\Err_D(\RR)=0$ is either positive or negative on $\RR$. 
\end{lemma}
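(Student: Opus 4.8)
The plan is to reduce the statement to the bit lemma (Lemma~\ref{lemma:bit}) together with a handful of edge-crossing observations. Recall that the reference points $\RR$ of a wire segment are exactly the centres of its bits, while those of a multiplier segment are the four bit centres together with the four inner reference points $x\pm(1,0)$ and $x\pm(0,1)$. Since $\mS\subset\Z^2$ and, by the rules for combining segments, $\mS$ contains no forbidden point of any bit of the segment, Lemma~\ref{lemma:bit} already tells us that a zero-error triangulation $D$ contains exactly one of $e_r^{\pm}$ for every bit centre $r\in\RR$; the signal at $r$ is thus positive or negative and never ambiguous. For a \emph{wire segment} the claim follows immediately: for two consecutive horizontal bits at $r$ and $r+(1,0)$ the edges $e_r^{+}=\conv(r+(-1,-1),r+(1,1))$ and $e_{r+(1,0)}^{-}=\conv(r+(0,1),r+(2,-1))$ cross at the interior point $r+(1/2,1/2)$, and symmetrically $e_r^{-}$ crosses $e_{r+(1,0)}^{+}$, so $D$ cannot realize opposite signals at two consecutive bits; by transitivity along the line of bits, $D$ is positive on $\RR$ or negative on $\RR$ (for vertical segments the argument is the mirror image, and it does not even need the mandatory edges).

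For a \emph{multiplier segment} two further steps are needed. The first is to show that each inner reference point $r$ also carries a definite signal. Fix $r=x+(1,0)$ and, by Lemma~\ref{lemma:trans}, translate by $-x$ so that $x=(0,0)$; then $C_r$ is the circle of radius $\sqrt5$ about the origin. Suppose some triangle $T$ with vertices $s,t,u\in\mS$ represents $r$ with zero error but contains neither $e_r^{+}$ nor $e_r^{-}$. Then $r\notin\conv(\{s,t,u\}\cap C_r)$, so by Lemma~\ref{lemma:lift_innout} $T$ has a vertex in $I_{C_r}$ and a vertex in $O_{C_r}$. The only non-forbidden integer points of $I_{C_r}$ are $x+(\pm1,\pm1)$, so the vertex inside $C_r$ is one of these four; and, exactly as in Cases~1--3 of the proof of Lemma~\ref{lemma:bit}, lifting $T$ to the paraboloid constrains the circumcircle of $T$, and a short case check over these candidates rules out every such $T$ (the remaining required vertices are forbidden, lie outside $\conv(\mS)$, would force a fourth point of $\mS$ into $T$, or would leave $r$ outside $T$). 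Hence $D$ contains one of $e_r^{\pm}$, so each of the eight reference points of the multiplier has a definite signal.

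The second step is to chain these eight signals together, using the crossings and shared edges indicated in Figure~\ref{fig:wire_mult_seg}: the positive edge of each bit of the multiplier crosses the negative edge of the inner reference point adjacent to it (and vice versa), and the inner reference points are pairwise linked in the same way (two neighbouring inner reference points even share one of their edges). Thus the relation ``carries the same signal'' connects all eight reference points, and the few residual configurations that these crossings leave open are excluded by the mandatory edges of the segment. Therefore $D$ is positive on all of $\RR$ or negative on all of $\RR$, as claimed.

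I expect the decisive obstacle to be the case analysis of the first multiplier step: the radius-$\sqrt5$ circle contains more integer points than the radius-$\sqrt2$ circle of a bit, so more candidate circumcircles and more ``fourth vertex'' positions have to be excluded, and one has to invoke exactly the forbidden set that the multiplier segment declares --- which already contains $x$, the four bit centres and the four inner reference points, and is what makes $x+(\pm1,\pm1)$ the only interior candidates. Making the crossing-and-shared-edge bookkeeping of the second multiplier step fully rigorous, and checking that together with the mandatory edges it leaves no disconnected ``island'' among the eight signals, is the secondary difficulty; the wire-segment case, by contrast, is essentially immediate from Lemma~\ref{lemma:bit}.
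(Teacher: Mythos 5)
Your overall strategy matches the paper's: invoke Lemma~\ref{lemma:bit} at every bit, settle the wire segment by the pairwise crossing of $e_r^{\pm}$ at consecutive bits, and handle the inner reference points of the multiplier via Lemma~\ref{lemma:lift_innout} plus a case analysis, finally chaining the signals by crossings. The wire-segment part is fine and is exactly the paper's argument.

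The genuine gap is in the multiplier case: you never isolate the inner reference points from the rest of the instance before starting the case check. The paper first observes that $D$ must contain the mandatory edges of the segment and, by Lemma~\ref{lemma:bit}, one of $e_q^{\pm}$ at each of the four bit centres $x\pm(2,0),x\pm(0,2)$; for any of the sixteen resulting edge sets $F$, every triangle containing an inner reference point and a vertex outside the segment crosses an edge of $F$, so the vertices of the triangle $T$ of $D$ representing an inner point are confined to triangulation points of the multiplier segment itself. Only after this step is the enumeration finite and local, and even then the paper's enumeration of candidates for the vertex in $O_{C_r}$ explicitly discards choices for which $\conv(r,t,u)$ would intersect a mandatory edge, and finishes with direct zero-error computations for concrete triangles such as $\{(-1,1),(-2,1),(1,-3)\}$ and $\{(-1,1),(-3,-1),(1,1)\}$. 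Your exclusion mechanisms (forbidden points, a fourth point of $\mS$ in $T$, $r\notin T$, or vertices outside $\conv(\mS)$) do not include non-crossing with the mandatory edges or with the forced bit edges, so without the isolation step the vertex in $O_{C_r}$ could a priori be a triangulation point of an adjacent wire gadget attached at an anchor, and the claimed ``short case check over these candidates'' is not justified; also, ``exactly as in Cases 1--3 of Lemma~\ref{lemma:bit}'' overstates the transfer, since here $C_r$ has radius $\sqrt5$ and is centred at $x$ rather than at $r$, which changes the lifted-plane computation and the candidate sets. Your identification of $x+(\pm1,\pm1)$ as the only non-forbidden interior lattice points is correct, and the final chaining step is consistent with the paper, but the isolation argument is the missing idea that makes the middle step work.
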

\begin{proof}
	The wire segment connecting the points  $(x_1,x_2),(y_1,y_2)\in \Z^2$ is completely built from bits. By Lemma~\ref{lemma:bit} such a bit must have an either positive or negative signal at its reference point. It is left to show that the signal is either positive or negative on the complete segment. Suppose this is not the case and that $x_1=y_1$ (the other case follows analogously). Then there must be two reference points $r,q\in\RR$  with $r=q+(0,1)$ and the signal at $r$ being different from the signal at $q$. This is not possible as $e_r^+$ and $e_q^-$ intersect each other and so do $e_r^-$ and $e_q^+$.
	
	The situation is more sophisticated when considering a multiplier segment at $x\in\Z^2$. By Lemma~\ref{lemma:trans} we can assume that $x=(0,0)$. We use Lemma~\ref{lemma:bit} to see that the signal on the reference points of bits must be either positive or negative. Thus $D$ must contain one of the edges $e_r^{\pm}$ for every reference point $r\in \{\pm(0,2),\pm(2,0)\}$. Let $F$ be any set of edges that consists of the mandatory edges of the multiplier segment and at least one of the edges $e_r^{\pm}$ for each $r\in \{\pm(0,2),\pm(2,0)\}$. These edges isolate the inner reference points from the remaining instance as every triangle that contains one of the inner points and contains a point outside of the segment must intersect at least one of the edges of $F$, regardless of which of the sixteen possibilities for $F$ is chosen.
	
	Let $T$ be a triangle in $D$ representing an inner point $r$ with zero error. The multiplier segment is invariant under rotation (except the labeling of positive and negative). Furthermore rotation does not change the error at $r$ by Lemma~\ref{lemma:trans}. Thus we can fix $r$ to be $(-1,0)$. 
	
	We claim that $T$ contains one of $e_r^{\pm}$ as an edge. We already observed that the vertices $v(T)$ of $T$ consist of triangulation points from the multiplier segment at $(0,0)$. 
	If $r\in \conv(v(T)\cap C_r)$ we see that one of $e_r^\pm$ is an edge of $T$. If this is not the case we apply Lemma \ref{lemma:lift_innout} to see that $v(T)\cap  I_{C_r}\neq\emptyset\neq v(T)\cap  O_{C_r}$. We enumerate all possibilities for such $T$.
	
	\textbf{Case 1:} Assume that $t=(-1,1)\in v(T)$. Then $(-1,-1)\notin v(T)$, as $h_{C_r}(r)=5\neq 2  =\frac{1}{2}(f((-1,1))+f((-1,-1)))$. Figure \ref{fig:mult_seg_pf2} shows all possibilities to choose the second point $u$ of $v(T)$ such that $\conv(r,t,u)\backslash \{r,t,u\}$ does not contain triangulation points or intersect mandatory edges. Among these points there are four points from $ O_{C_r}$. As we know that $v(T)$ must contain at least one of these, we consider all the cases where we choose one of them as the second point $u$. Figure \ref{fig:mult_seg_pf2} shows all possibilities choosing the last point of $v(T)$ depending on the choice of $u$. In two cases it is not possible to build $v(T)$ with $r\in T$. In the remaining two cases there is exactly one possibility to build $v(T)$ with $r\in T$. In one case $v(T)=\{(-1,1),(-2,1),(1,-3)\}$ and $r=\frac{1}{4}(-1,1)+\frac{1}{2}(-2,1)+\frac{1}{4}(1,-3)$ but \[h_{C_r}(r)=5\neq \frac{11}{2}=\frac{2}{4}+\frac{5}{2}+\frac{10}{4}=\frac{1}{4}f((-1,1))+\frac{1}{2}f((-2,1))+\frac{1}{4}f((1,-3)).\]
	In the other case $v(T)=\{(-1,1),(-3,-1),(1,1)\}$ and $r=\frac{1}{2}((-3,-1)+(1,1))$ but 
	\[h_{C_r}(r)=5\neq 6 =\frac{1}{2}(f((-3,-1))+f((1,1))).\] 
	Thus in both cases we get a contradiction to $T$ representing $r$ with zero error.
	
	\textbf{Case 2:} For $t=(1,1)\in v(T)$ we do the same and obtain two possibilities to choose a point from $O_{C_r}$. Both are depicted in Figure \ref{fig:mult_seg_pf2}. In  one case it is not possible to build $v(T)$ with $r\in T$. In the other case we calculate as in case 1 that $T$ does not represent $r$ with zero error.
	
	The remaining cases $t=(\pm1,-1)$ can be shown analogously. The computations do not change as $f$ is invariant under reflection. We conclude that $D$ contains one of $e_r^\pm$ for all $r\in\RR$ and must be either positive or negative on the whole gadget. 
\end{proof}

\begin{figure}
	\centering
	\includegraphics[scale=0.65, angle=90]{./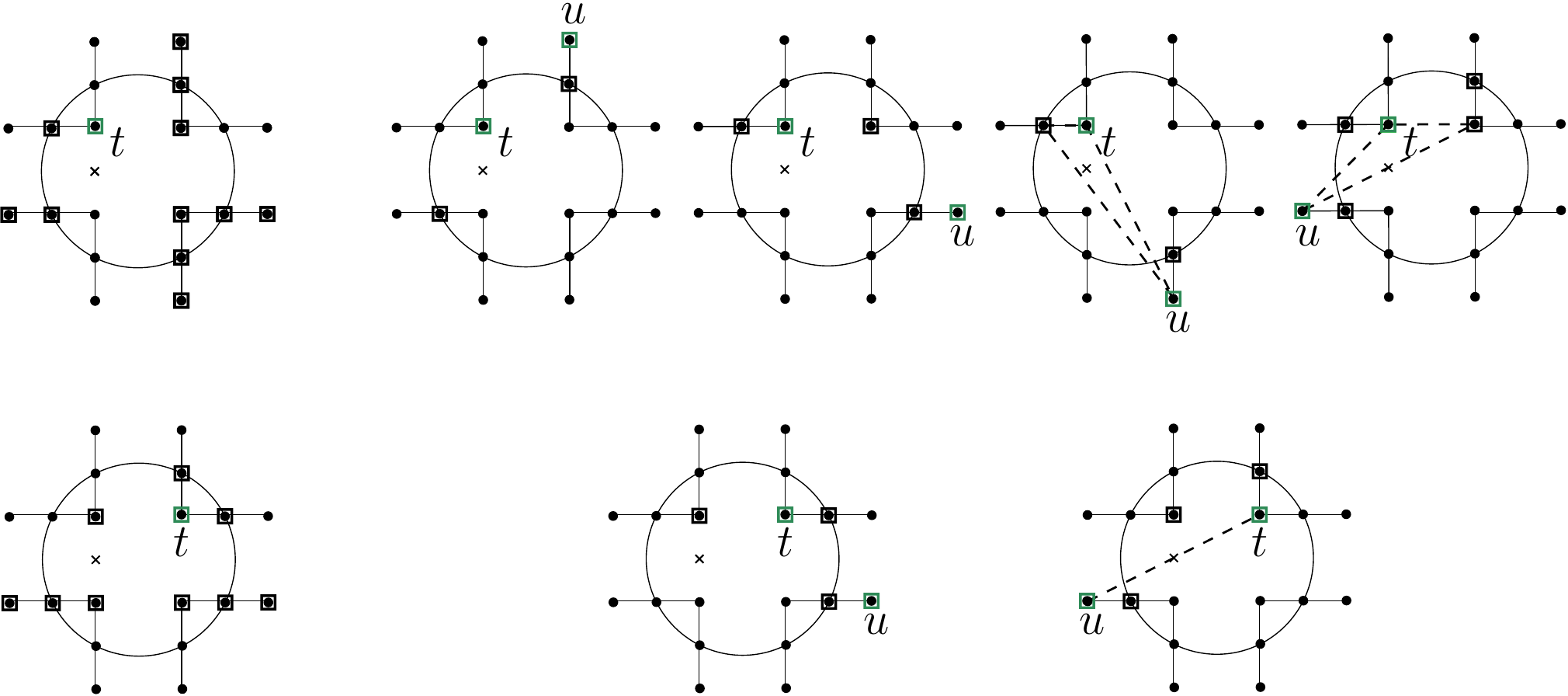}
	\caption{Possibilities to build $v(T)$ starting with $t=(\pm1,1)$. The points outlined in green are currently assumed to be in $v(T)$. All not outlined points cannot be in $v(T)$.}
	\label{fig:mult_seg_pf2}
\end{figure}

\lemmaWireVar*
\begin{proof}
	The signal at a segment that is part of the gadget must be either positive or negative by Lemma~\ref{lemma:line_mult_seg}. If it is connected to another segment at one of its anchor points, this anchor point determines the signal at both segments, which must equal the signal at the anchor point. Proceeding like this we see that the signal must be either positive or negative on the whole gadget. 
\end{proof}
Having the variable gadget and wire gadget in place we need two more constructions, namely the clause gadget and the negation gadget. Both are very similar to each other.

We explain how to build the clause gadget representing a clause of the form $\overline{v_1}\vee\overline{v_2}\vee v_3$ at a point $c\in \Z^2$. For simplicity we assume that $c=(0,0)$.
We declare the points from $\{(5,-15),(\pm15,-5), (\pm9,13)\}$ as triangulation points. Notice  that they all lie on one circle $C_r$ centered at $(0,0)$ with radius $\sqrt{250}$. We declare $r=(0,11)$ as reference point with coupled circle $C_r$. 
This reference point is special as it does not come with a positive and negative edge, instead we observe that it is represented with zero error by the following three triangles
\begin{align*}
T_1&=\conv((5,-15), (\pm9,13))\\
T_2&=\conv((15,-5), (\pm9,13))\\
T_3&=\conv((-15,-5),(\pm9,13)).
\end{align*}
This is true by Lemma \ref{lemma:circle_rep} and $r\in T_i$ for $i=1,2,3$.
A clause combines three values and so does the clause gadget. Every triangle $T_i$ belongs to a reference point $r_i$. A triangle $T$ is blocked by an edge $e$ if both cannot be part of the same triangulation. This is the case if $e$ is not an edge of $T$ and $e\cap T\neq \emptyset$. The triangle $T_i$ is blocked by the positive edge of $r_i$ for $i=1,2$ and by the negative edge of $r_i$ for $i=3$.

Let $a_1=(-12,-17)$ and define $r_1$ to be the intersection of the two edges $e_{r_1}^+=\conv(a_1+(1,-1),a_1+(23,4))$ and $e_{r_1}^-=\conv(a_1+(1,1),a_1+(23,-4))$. Thus we have $r_1=a_1+(\frac{27}{5},0)$.
We declare the vertices of $e_{r_1}^+$ and $e_{r_1}^-$ as triangulation points. Observe that they lie on a common circle $C_{r_1}$, which is the circle coupled to $r_1$. Furthermore we add three horizontal bits, one at each of the points $a_1+(l,0)$ for $l=0,1,2$ and declare $a_1$ as anchor point.

A similar construction is done at the anchor point $a_2=(17,12)$ by reflecting the above construction in the line with slope -1 through $(0,0)$, and at the anchor point $a_3=(-17,12)$ by rotating the construction at $a_1$ clockwise by $\frac{\pi}{2}$. In the construction at $a_3$, we swap the definitions of $e_{r_3}^+$ and $e_{r_3}^-$, so that $e_{r_i}^+$ has positive slope and $e_{r_i}^-$ has negative slope for all $i$.

Let $\widetilde{\mR}$ be the reference points and $\widetilde{\mS}$ be the triangulation points of the clause gadget. A point is forbidden if it is in
\[\bigcup_{r\in\widetilde{\mR}}(C_r\cup I_{C_r})\backslash \widetilde{\mS}\]
or it is already forbidden in one of its bits. Moreover we need some mandatory edges to isolate the clause gadget. They are depicted in Figure \ref{fig:klausel}, where we can find the whole construction. 

\lemmaKlausel*
\begin{proof}
	The signal at a reference point of a bit must be either positive or negative by Lemma~\ref{lemma:bit}. This also includes the anchor points $a_1,a_2,a_3$.
	
	Suppose that $T_1$ is in $D$ and the signal at $a_1$ is positive. We show that the error at $r_1$ is positive contradicting the assumption that $D$ is a zero-error triangulation.
	
	Let $T$ be the triangle in $D$ representing $r_1$ and let $v(T)$ denote its vertices. As $T_1$ belongs to $D$ and the signal at $a_1+(2,0)$ is positive, we know that $e_{r_1}^\pm$ cannot be edges of $T$. Thus by Lemma~\ref{lemma:lift_innout} we know that $v(T)$ has a non-empty intersection with $I_{C_{r_1}}$ and $O_{C_{r_1}}$. Furthermore $v(T)$ must contain a point below the line that supports $e_{r_1}^-$, so $v(T)$ must contain at least one of the points $a_1+(l,-1)$ for $l=1,2,3$. Let $t$ be this point. Choosing the second point $u$ in $v(T)$ from $O_{C_{r_1}}$ already yields a contradiction, because, for any choice of $u$ from $O_{C_{r_1}}$, the hull $\conv(t,u,r_1)$
	contains another triangulation point or intersects a mandatory edge, the triangle $T_1$,
	or the positive edge of $a_1 + (2,0)$.
	
	Analogously one can prove that $T_2$ and $e_{a_2}^+$ or $T_3$ and $e_{a_3}^-$ cannot be simultaneously in $D$. Figure \ref{fig:klausel_pf} illustrates how to exclude all three combinations.
	Since $r$ is triangulated with zero error by $D$ one of the triangles $T_1,T_2,T_3$ must be in $D$. Thus the signal at one of $a_1,a_2$ must be negative or the signal at $a_3$ must be positive.
\end{proof} 
\begin{figure}
	\centering
	\includegraphics[scale=0.7]{./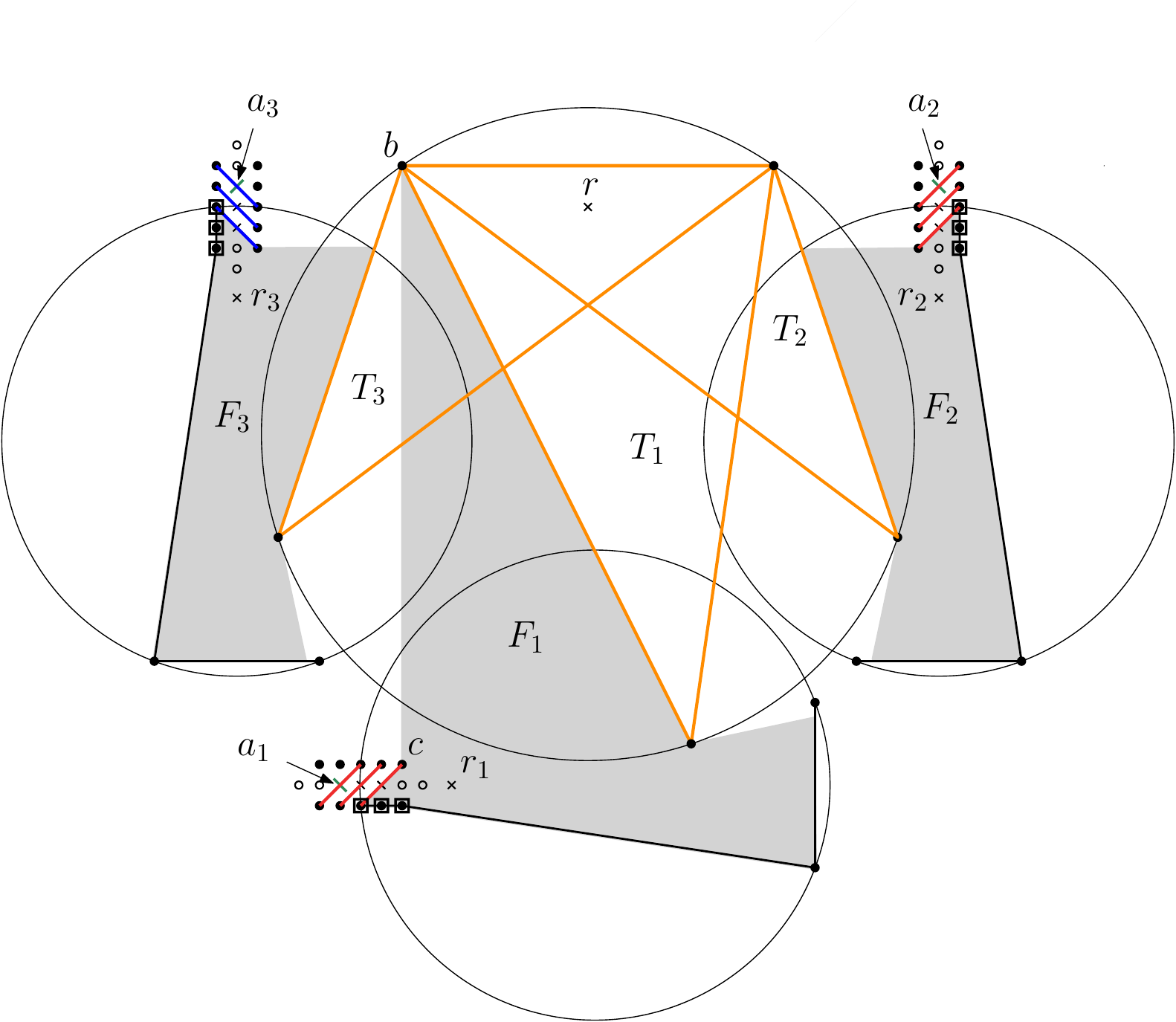}
	\caption{Here we see all three cases together. For $i\in\{1,2,3\}$, if we choose the first point $t$ in $v(T)$ to be one of the three marked points near $a_i$, the remaining points of $v(T)$ must come from $F_i$, otherwise $T$ would intersect other triangulation points or edges. However, $F_i$ does not contain any points outside $C_{r_i}$ (note that $b = (-9,13)$ lies just outside $F_1$, as $\conv(t,b,r_1)$ would include $c = (-9,-16)$ if $t$ is any of the marked points near $a_1$.}
	\label{fig:klausel_pf}
\end{figure} 
\begin{figure}
	\centering
	\includegraphics[scale=0.8]{./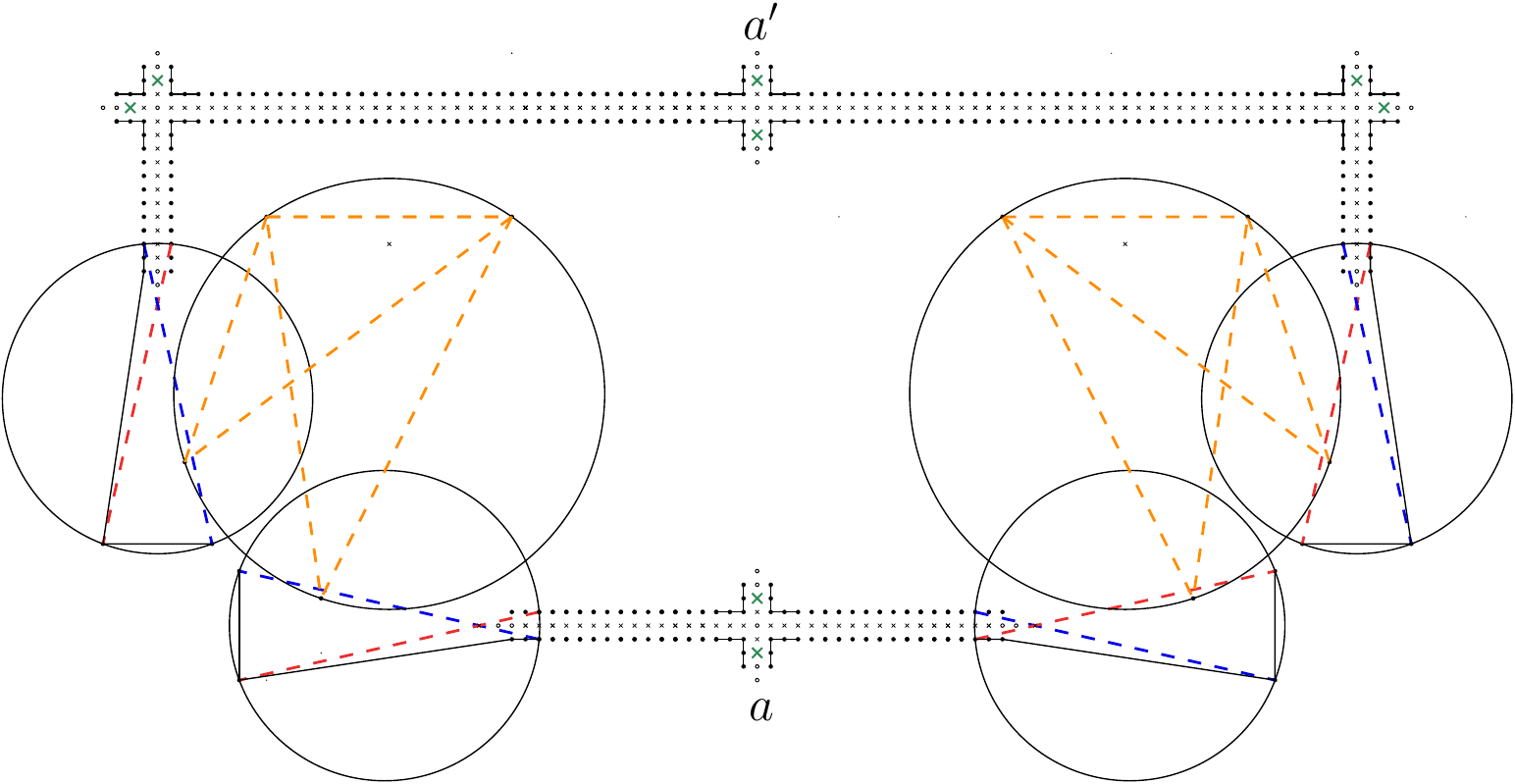}
	\caption{The negation gadget.
		If the signal at anchor point $a$ is negative it is negated in the left segment. If the signal at $a$ is positive it is negated in the right segment. Since the top wire carries a consistent signal, negation is ensured at $a'$.}
	\label{fig:negation}
\end{figure}
The last part of our construction is the \emph{negation gadget}. The core components of the negation gadget are the positive and negative negation segments. For the positive negation segment at a point $c$ we follow the construction of the clause gadget, except that we omit the triangle $T_3$ and the construction at $a_3$. The negative negation segment at $c$ is then a reflection of the positive negation segment at the vertical line through $c$, with switched definitions of positive and negative edges. For the negation gadget at a point $x\in \Z^2$ we place a multiplier segment at $x$, a positive negation gadget at $x+(27,17)$ and a negative negation gadget at $x+(-27,17)$. The anchor point $x+(2,0)$ of the multiplier segment is then connected via a wire gadget to the lower anchor point $x+(15,0)$ of the positive negation segment. The anchor point $x-(2,0)$ of the multiplier segment is connected via a wire gadget to the lower anchor point $x-(15,0)$ of the negative negation segment. Furthermore we place a multiplier segment at $x+(0,38)$ and connect the anchor point $x+(2,38)$ of this multiplier segment via a wire gadget to the upper anchor point $x+(44,29)$ of the positive negation segment. Finally we connect the anchor point $x+(-2,38)$ of this multiplier segment via a wire gadget to the upper anchor point $x+(-44,29)$ of the negative negation segment.
Figure \ref{fig:negation} visualizes the construction. 

We analyze the signal at the anchor points $a=x-(0,2)$ and $a'=x+(0,40)$:
\begin{lemma}
	\label{lemma:neg}    
	Suppose the instance contains a negation gadget at $x\in \Z^2$ and let $\RR$ be the reference points of this gadget. Let $\mS\subset \Z^2$ and assume $\mS$ does not contain forbidden points of the gadget. Any triangulation $D$ of $\mS$ with $\Err_D(\RR)=0$ is positive at $a$ iff it is negative at $a'$.   
\end{lemma}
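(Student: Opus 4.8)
The plan is to treat the negation gadget as a signal-propagation network: I would push the signal at $a$ through the constituent segments and wire gadgets using the consistency lemmas already proved, so that the claim collapses to a one-line Boolean observation. Throughout, fix a triangulation $D$ of $\mS$ with $\Err_D(\RR)=0$.

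First I would set up notation and do the propagation. Observe that $a=x-(0,2)$ is the reference point of a vertical bit of the multiplier segment at $x$, that $a'=x+(0,40)$ is the reference point of a bit of the multiplier segment at $x+(0,38)$, and that the lower anchors $x+(15,0)$, $x-(15,0)$ and the upper anchors $x+(44,29)$, $x-(44,29)$ of the positive and negative negation segments are reference points of bits as well. Since $\mS$ contains no forbidden point of the whole gadget, it contains no forbidden point of any constituent segment or wire gadget (a point is forbidden in a combination as soon as it is forbidden in one part), so Lemmas~\ref{lemma:bit}, \ref{lemma:line_mult_seg} and \ref{lemma:wire} apply to every piece. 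By Lemma~\ref{lemma:line_mult_seg} the multiplier segment at $x$ is uniformly signed; call this signal $\sigma\in\{+,-\}$, which is also the signal of $D$ at $a$ and at the anchors $x\pm(2,0)$. By Lemma~\ref{lemma:wire} each of the two wire gadgets leaving $x\pm(2,0)$ carries a single signal, so they transport $\sigma$ to the lower anchors $x+(15,0)$ and $x-(15,0)$ of the two negation segments. Symmetrically, writing $\tau\in\{+,-\}$ for the uniform signal of the multiplier segment at $x+(0,38)$ (hence the signal at $a'$), the wire gadgets attached to its anchors $x\pm(2,38)$ force the upper anchors $x+(44,29)$ and $x-(44,29)$ of the two negation segments both to carry $\tau$.

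The core of the argument is a ``half-negation'' property of each negation segment. The positive negation segment is the clause gadget of Lemma~\ref{lemma:clause} with the triangle $T_3$ and the sub-construction at $a_3$ deleted; rerunning that proof --- its distinguished reference point must now be represented with zero error by $T_1$ or $T_2$, and (by the argument of Lemma~\ref{lemma:clause}) $T_i\in D$ is incompatible with a positive signal at the $i$-th (lower, resp.\ upper) anchor --- gives that $D$ cannot make both anchors of the positive negation segment positive, i.e.\ $D$ cannot have $\sigma=\tau=+$. The negative negation segment is obtained from the positive one by a reflection in a vertical line followed by a swap of the positive/negative edge labels. A reflection is orthogonal, so it preserves zero-errorness by Lemma~\ref{lemma:trans}; a vertical reflection fixes the lower/upper designation of the two anchors; and the label swap turns ``anchor carries a positive signal'' into ``anchor carries a negative signal''. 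Hence the previous property becomes: $D$ cannot make both anchors of the negative negation segment negative, i.e.\ $D$ cannot have $\sigma=\tau=-$.

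Combining the two impossibilities and using $\sigma,\tau\in\{+,-\}$ forces $\sigma\ne\tau$; that is, $D$ is positive at $a$ if and only if it is negative at $a'$, which is the claim. The step I expect to be the most delicate is the half-negation property of the \emph{negative} segment: one must check carefully that the vertical reflection, once combined with the positive/negative label swap, exactly exchanges the two candidate edges at every bit (so that ``positive anchor'' becomes ``negative anchor'') while leaving the upper/lower roles of the anchors --- and hence the attachment points prescribed in the construction --- unchanged. Everything else is a direct chaining of the already-established consistency lemmas.
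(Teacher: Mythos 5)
Your proposal is correct and follows essentially the same route as the paper: propagate the signal from $a$ and $a'$ through the multiplier and wire segments (Lemmas~\ref{lemma:bit}, \ref{lemma:line_mult_seg}, \ref{lemma:wire}) to the four anchors of the two negation segments, rerun the clause-gadget argument of Lemma~\ref{lemma:clause} on each truncated segment to rule out ``both anchors positive'' for the positive segment and (by the reflection/label swap, which the paper handles implicitly via ``analogously'' and you justify via Lemma~\ref{lemma:trans}) ``both anchors negative'' for the negative segment, and conclude the signals at $a$ and $a'$ must differ. Your stated constraint for the positive segment is the correct one and matches the logic the paper actually uses in the second half of its proof, so no gap.
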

\begin{proof}
	We consider the positive negation segment at point $x+(27,17)$, which equals the clause gadget without the construction at $a_3$ and $T_3$. We borrow the notation from the clause gadget. As in the proof of Lemma~\ref{lemma:clause} one can show that neither $T_1$ and $e_{r_1}^+$ nor $T_2$ and $e_{r_2}^+$ can simultaneously be in $D$. As one of $T_1,T_2$ is in $D$ this means that at least one of $a_1=x+(15,0),a_2=x+(44,29)$ has a positive signal.  
	Analogously at most one of the signals at the anchor points $a_1'=x-(15,0),a_2'=x+(-44,29)$ of the negative negation segment at $x+(-27,17)$ is negative.
	
	Suppose that the signal at $a$ is positive. Then by Lemma~\ref{lemma:wire} the signal at $a_1$ must also be positive. By the observation above the signal at $a_2$ must then be negative and so must be the signal at $a'$ by Lemma~\ref{lemma:wire}. If the signal at $a$ is negative the signal must be positive at $a_2'$ and $a'$ following the same arguments.    
\end{proof}

\subsection{Replacing mandatory edges}
\label{sec:edges}
Before we dedicate ourselves to the proof of Theorem \ref{thm}, it is left to drop the restriction that mandatory edges must be in any feasible triangulation, as this does not match the original definition of the zero-error triangulation problem. We slightly modify the previously constructed gadgets as follows. Let $e=\overline{st}$ be a mandatory edge in a gadget. We remove $e$ from the set of mandatory edges and instead add the reference point $r_e=\frac{1}{2}(s+t)$ to the gadget.

It is left to define the circle $C_{r_e}$ coupled to $r_e$. Notice that we would like to enforce the edge $e$ to be in \emph{every} zero-error triangulation of the gadget. Suppose that
\begin{enumerate}
	\item \label{man_edges_1} $\{s,t\}\subset C_{r_e}$ and
	\item \label{man_edges_2} $C_{r_e}\cup I_{C_{r_e}}$ does not contain further triangulation points.
\end{enumerate} 
Then any triangle with vertices $s,t$ represents $r_e$ with zero error by Lemma~\ref{lemma:circle_rep} and any triangulation which does not contain $e$ has positive error at $r_e$ by Lemma~\ref{lemma:lift_innout}. 

In most cases it is sufficient to define $C_{r_e}$ as the circle centered at $r_e$ with radius $\frac{\lVert s-t \rVert_2}{2}$. However this definition does not work for the three long edges of the clause gadget and the four long edges of the negation gadget, as for such an edge $e$ the set $I_{C_{r_e}}$ would contain triangulation points of the gadget. In this case let $Q_1$ and $Q_2$ be the two squares which contain $e$ as one of their edges. One of these squares contains triangulation points of the gadget other than $s,t$, while the other one does not. Let $Q_1$ be the square which does not contain any triangulation points other than $s,t$. We define $C_{r_e}$ as the circumcircle of $Q_1$. Then \ref{man_edges_1}. and \ref{man_edges_2}. are both satisfied for $C_{r_e}$. Finally we extend the set of forbidden points by $(C_{r_e}\cup I_{C_{r_e}})\backslash \{s,t\}$.
The following corollary is an immediate consequence of Lemma~\ref{lemma:circle_rep} and Lemma~\ref{lemma:lift_innout}. 
\begin{corollary}
	\label{corr:edges}
	Suppose the instance contains a gadget with reference points $\widetilde{\mR}$, including the reference points which replace the mandatory edges. If $\mS\subset \Z^2$ and $\mS$ does not contain forbidden points of the gadget, any triangulation $D$ of $\mS$ with $\Err_D(\widetilde{\mR})=0$ contains all mandatory edges of this gadget. 
\end{corollary}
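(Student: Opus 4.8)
The plan is to handle one mandatory edge at a time: I would show that the reference point $r_e=\frac{1}{2}(s+t)$ introduced to replace a mandatory edge $e=\overline{st}$ can be represented with zero error only by triangles that contain $e$ as an edge, and then apply this to every mandatory edge of the gadget. Fix such an edge $e$ and let $C_{r_e}$ be its coupled circle. Since $\Err_D(\widetilde{\mR})=0$ and $r_e\in\widetilde{\mR}$, there is a triangle $T\in D$ with $r_e\in T$ that represents $r_e$ with zero error; let $v(T)$ be its set of vertices.

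First I would unpack the bookkeeping of forbidden points. By construction the gadget declares every point of $(C_{r_e}\cup I_{C_{r_e}})\setminus\{s,t\}$ forbidden, and the circle $C_{r_e}$ was chosen exactly so that $\{s,t\}\subset C_{r_e}$ and no triangulation point of the gadget other than $s,t$ lies on or inside it — this is the purpose of taking the diameter circle in the generic case and the circumcircle of the empty square $Q_1$ for the long edges of the clause and negation gadgets. Since $\mS$ contains no forbidden point of the gadget, it follows that $\mS\cap(C_{r_e}\cup I_{C_{r_e}})=\{s,t\}$; in particular $v(T)\cap I_{C_{r_e}}=\emptyset$ and $v(T)\cap C_{r_e}\subseteq\{s,t\}$.

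Next I would invoke Lemma~\ref{lemma:lift_innout} contrapositively. If we had $r_e\notin\conv(v(T)\cap C_{r_e})$, then, because $T$ represents $r_e$ with zero error, the lemma would produce a vertex of $T$ lying in $I_{C_{r_e}}$, contradicting the previous paragraph. Hence $r_e\in\conv(v(T)\cap C_{r_e})\subseteq\conv(\{s,t\})=e$. Because $r_e$ is the midpoint of the non-degenerate segment $e$, it is distinct from $s$ and from $t$, so it can lie in the convex hull of a subset of $\{s,t\}$ only if that subset is $\{s,t\}$ itself; therefore both $s$ and $t$ are vertices of $T$, $e$ is an edge of $T$, and so $e\in D$. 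Applying this to every mandatory edge of the gadget gives the corollary. (Lemma~\ref{lemma:circle_rep} enters only for the complementary fact needed when the corollary is used later, namely that the replacement is harmless: once $e\in D$, the triangle on $e$ that contains $r_e$ represents it with zero error, so $r_e$ adds nothing to $\Err_D$.)

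I do not expect a genuine obstacle here; the statement is essentially a repackaging of Lemma~\ref{lemma:lift_innout} together with the definition of the forbidden set, which is why the paper calls it an immediate consequence. The only point that really needs verification — and it is already settled by the construction preceding the corollary — is that for the long mandatory edges the circumcircle of $Q_1$ indeed satisfies conditions \ref{man_edges_1} and \ref{man_edges_2}, and that adding $(C_{r_e}\cup I_{C_{r_e}})\setminus\{s,t\}$ to the forbidden set never forbids a point that some other part of the gadget relies on as a triangulation point; both reduce to finite, coordinate-level checks.
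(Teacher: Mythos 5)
Your argument is correct and is exactly the route the paper intends: the paper declares the corollary an immediate consequence of Lemma~\ref{lemma:circle_rep} and Lemma~\ref{lemma:lift_innout}, relying on the preceding construction's conditions that $\{s,t\}\subset C_{r_e}$ and that no other triangulation point lies in $C_{r_e}\cup I_{C_{r_e}}$, which is precisely what you spell out by ruling out the inner/outer case of Lemma~\ref{lemma:lift_innout} and concluding $r_e\in\conv(\{s,t\})$ forces both $s$ and $t$ to be vertices of the representing triangle. Your remark that Lemma~\ref{lemma:circle_rep} is only needed for the converse direction (that keeping $e$ makes the replacement harmless) also matches the paper's use of it.
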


\subsection{The reduction}
\label{sec:theorem}
\begin{figure}
	\centering
	\includegraphics[scale=0.45]{./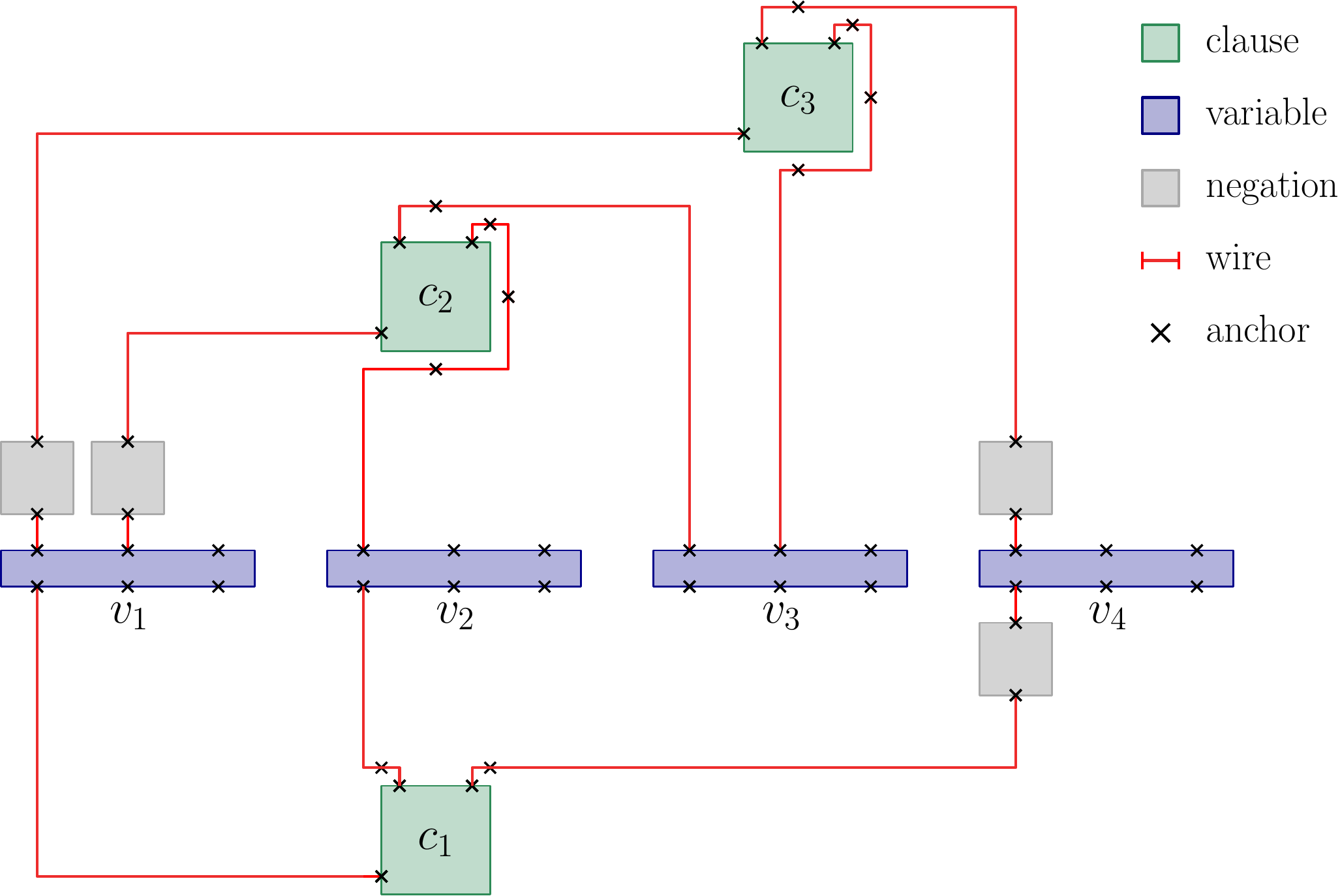}
	\caption{The triangulation instance corresponding to the 3SAT instance with clauses $c_1=\overline{v_1}\vee v_2\vee v_4, c_2=v_1\vee\overline{v_2}\vee v_3$ and  $c_3=v_1\vee\overline{v_3}\vee\overline{v_4}$. The anchor points at which we connect two gadgets are depicted as crosses. Notice that some of the anchor points at the variable gadgets may be left unused. }
	\label{fig:construction}
\end{figure}

Given an instance $\mathcal I$ of the planar 3SAT problem, with $V$ the set of variables and $K$ the set of clauses, we first explain how to construct the corresponding instance $\mathcal I_{\textup{err}}$ of the zero-error triangulation problem. Let $k=|K|+|V|$. We fix an integral rectilinear embedding of the 3SAT instance on the plane and scale it by a factor $\gamma\in O(k)$. Notice that the scaled embedding is still rectilinear. Let $G(v)$ denote the center of the box belonging to a variable $v\in V$ and $G(c)$ the vertex belonging to a clause $c\in K$ of the scaled embedding. Recall that $G(v)$ lies on the horizontal axis for all $v\in V$.

The zero-error triangulation instance is constructed as follows: We place a variable gadget at $G(v)$ for all $v\in V$ and a clause gadget at $G(c)$ for all $c\in K$.
For a clause $c\in K$ containing the variables $v_1,v_2,v_3$ we do the following: 
Notice that $G(v_1),G(v_2),G(v_3)$ lie on the horizontal axis and we assume that they appear on the axis from left to right in this order. 

If $G(c)$ lies above the horizontal axis, we connect the anchor point $a_i$ to an anchor of the variable gadget at $G(v_i)$ for all $i\in\{1,2,3\}$. If $G(c)$ lies below the horizontal axis we connect $a_1$ to  an anchor of the variable gadget at $G(v_1)$, $a_2$ to  an anchor of the variable gadget at $G(v_3)$ and $a_3$ to  an anchor of the variable gadget at $G(v_2)$. 
This is done by wire gadgets in such a way that the wire gadgets do not overlap each other (this is possible, because the embedding is planar and rectilinear).
However, if a variable that appears negated in the clause is connected to the $a_3$ anchor of the clause gadget, or if a variable that appears non-negated in a clause is connected to the $a_1$ or $a_2$ anchor of the clause gadget, then we do not connect the clause gadget directly to the variable gadget, but we insert a negation gadget: we use wire gadgets to connect the anchor of the clause gadget to the $a'$ anchor of the negation gadget, and a wire gadget to connect the $a$ anchor of the negation gadget to an anchor of the variable gadget (if $G(c)$ lies below the horizontal axis we first rotate the negation gadget by $\pi$). If we choose the distance $\alpha$ between multiplier segments in a variable gadget to be $\geq 200$ this construction can be done without the negation gadgets overlapping each other. Figure~\ref{fig:construction} shows the structure of the zero-error triangulation instance corresponding to our initial example.

Let $\mS$ be the set of triangulation points and $\mR$ the set of reference points of $\mathcal I_{\textup{err}}$. Notice that $\mS$ is contained in $\Z^2$ by construction. Furthermore we want to establish the property that $\mS$ does not contain any forbidden points. This is already true for each of the discussed gadgets. Remember that we scaled the rectilinear embedding of $\mathcal I$ by a factor $\gamma\in O(k)$ (the factor comes from the width of the variable gadget, which is in $O(k)$). If we pick $\gamma$ sufficiently large (e.g., $\gamma=1000k$) the gadgets do not overlap (excluding the overlap that occurs when two gadgets are connected at anchor points, which is explicitly allowed). Thus the instance does not contain forbidden triangulation points. 

We are now able to prove the hardness of the zero-error triangulation problem.
\mainTheorem*
\begin{proof}
	Let $\mathcal I$ be an instance of the planar 3SAT problem and let $\mathcal I_{\textup{err}}$ denote the corresponding instance of the zero-error triangulation problem. 
	
	Suppose that there exists an assignment of the variables under which the planar 3SAT formula is satisfied. For every reference point which replaces a mandatory edge $\overline{st}$ we add $\overline{st}$ to the triangulation $D$. By Corollary~\ref{corr:edges} the error of $D$ at such reference points is zero. For the other reference points we fix an assignment  under which the 3SAT formula is satisfied and define the triangulation $D$ of $\mS$ on the variable gadget at $G(v)$ to be positive if the value of $v\in V$ is one and negative if the value of $v$ is zero. Observe that by Lemma~\ref{lemma:circle_rep} a negation/wire gadget can be triangulated with zero error with a fixed signal at one of its anchor points. There exists a zero-error triangulation of the negation/wire gadget having the negated/same signal on the remaining anchor points. We extend $D$ on the negation/wire gadgets following the above observation. Now consider a clause gadget at $G(c)$ for some $c\in K$ and its three anchor points $a_1,a_2,a_3$ whose signals in $D$ are already determined by the wire gadgets connected to them. As clause $c$ is satisfied under the assignment, one of $a_1,a_2$ has a negative signal or $a_3$ has a positive signal. Thus at least one of the triangles triangulating $r=G(c)+(0,11)$ with zero error can be added to $D$. 
	%Finally we extend $D$ by adding all mandatory edges and obtain a zero-error triangulation of the instance.  
	
	For the other direction suppose there is a triangulation $D$ of $\mS$ with zero error. First observe that the mandatory edges must belong to $D$ by Corollary~\ref{corr:edges}.
	For $v\in V$ the triangulation must be either positive or negative on the variable gadget at $G(v)$ by Lemma~\ref{lemma:wire}. We assign to $v$ the value $1$ if $D$ is positive on the variable gadget at $G(v)$ and $0$ if it is negative. On all wire gadgets directly connected to a variable gadget, the triangulation must be either positive or negative by Lemma~\ref{lemma:wire}. If the triangulation is positive on a variable gadget, then it must be negative on all wire gadgets connected to it through a negation gadget and vice versa by Lemma~\ref{lemma:neg}. Lemma~\ref{lemma:clause} then guarantees that the 3SAT formula is satisfied under this assignment.
	
	It is left to show that the reduction works in polynomial time. The planar 3SAT formula can be embedded in polynomial time on an integral grid of size $O(k)\times O(k)$ \cite{Lichtenstein}. Scaling the embedding by $\gamma\in O(k)$ and constructing the set of triangulation points $\mS$ and the set of reference point $\mR$ can be done in polynomial time. The same holds for the computation of $f(p_1,p_2)=p_1^2+p_2^2$, as all triangulation points are integral. For the reference values we consider a reference point $r=(r_1,r_2)\in\mR$ and it coupled circle $C_r$ centered at a point $x=(x_1,x_2)$ with radius $\rho$. Recall that 
	\[h(r)=h_{C_r}(r)=2x_1r_1+2x_2r_2-x_1^2-x_2^2+\rho^2.\]
	We distinguish two cases.
	If $r$ was added to the instance to replace a mandatory edge $e=\overline{st}$ and $C_{r}$ is centered at $r$, we get that $h(r)=r_1^2+r_2^2+\frac{\lVert s-t\rVert_2^2}{4}$. Thus $h(r)$ can be computed in polynomial time.  
	Otherwise we know that $C_{r}$ contains at least three integral points $x,y,z$. It is a known fact that we can compute the squared radius and the center of such a circle in time polynomial in $x,y,z$. Hence we can compute $h(r)$ in polynomial time.   
	Thus the zero-error triangulation instance can be constructed in polynomial time from the planar 3SAT instance. 
	
	Every polynomial-time approximation algorithm to the minimum-error triangulation problem yields a polynomial-time algorithm to the zero-error triangulation problem. As the zero-error triangulation problem is NP-hard such a polynomial-time approximation algorithm does not exist unless P=NP.  
\end{proof}
\section{Fixed-edge graph experiments on random data}\label{sec:RandomExperiments}
\begin{figure}
\begin{subfigure}[t]{0.3\textwidth}
\centering
\fbox{\includegraphics[width =.9\textwidth ]{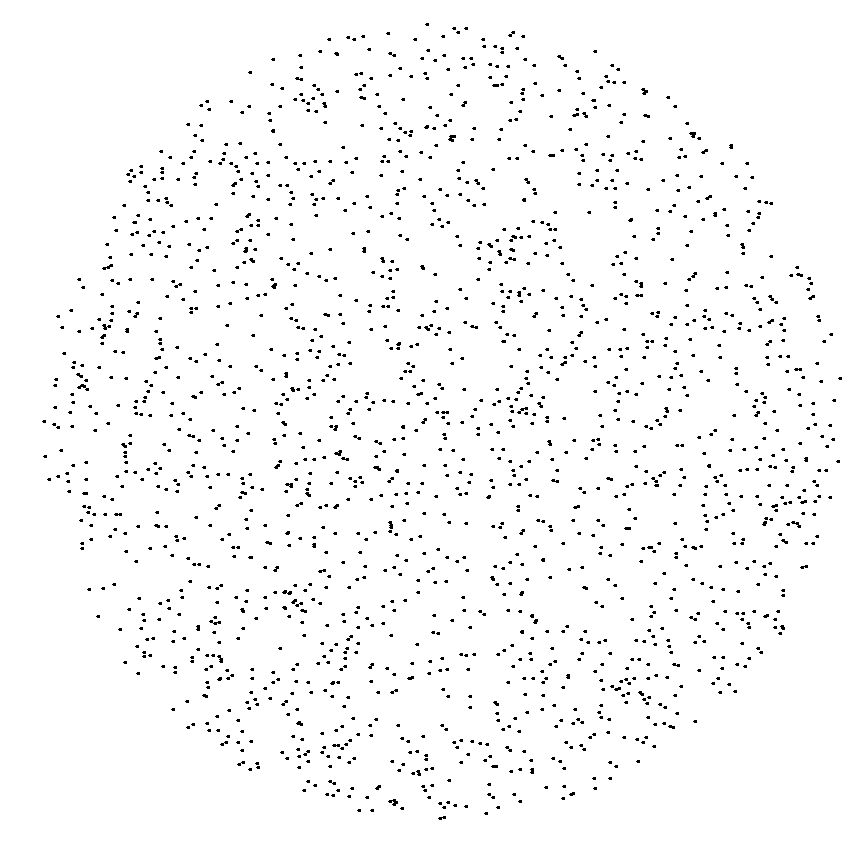}}
\caption{A Type 1 sample}
\end{subfigure}\hfill
\begin{subfigure}[t]{0.3\textwidth}
\centering
\fbox{\includegraphics[width =.9\textwidth ]{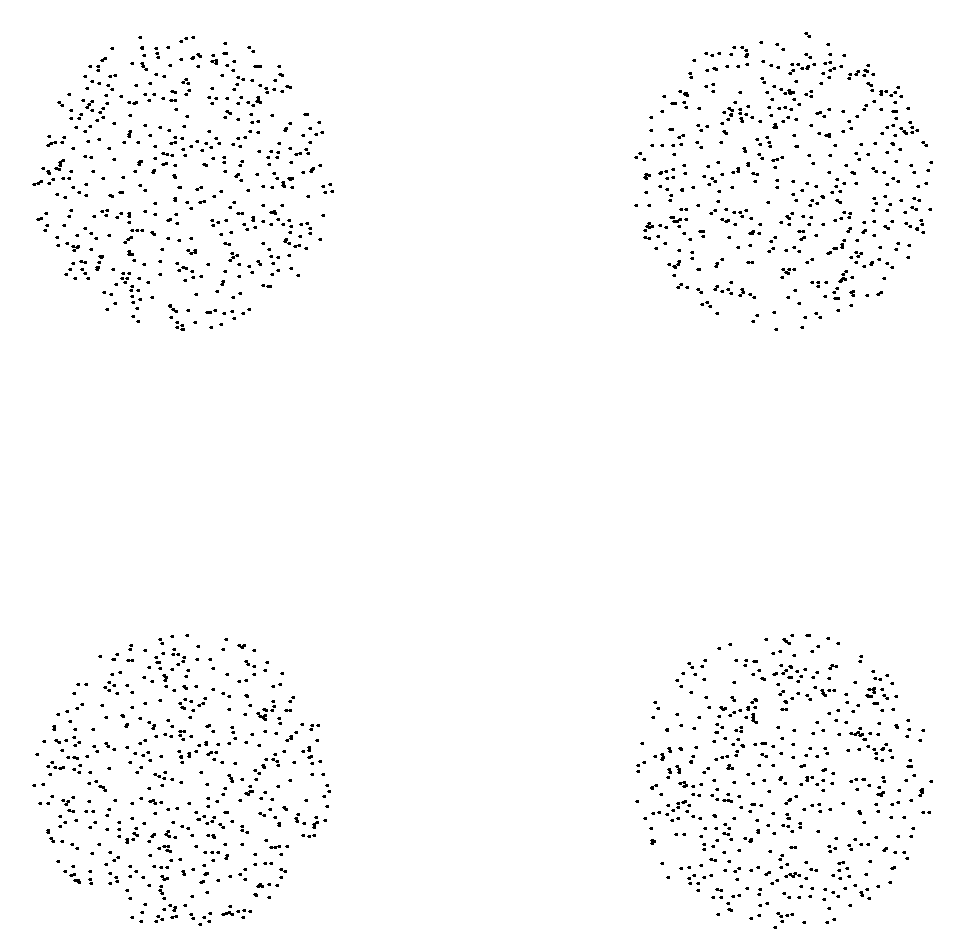}}
\caption{A Type 2 sample}
\end{subfigure}\hfill
\begin{subfigure}[t]{0.3\textwidth}
\centering
\fbox{\includegraphics[width =.878\textwidth ]{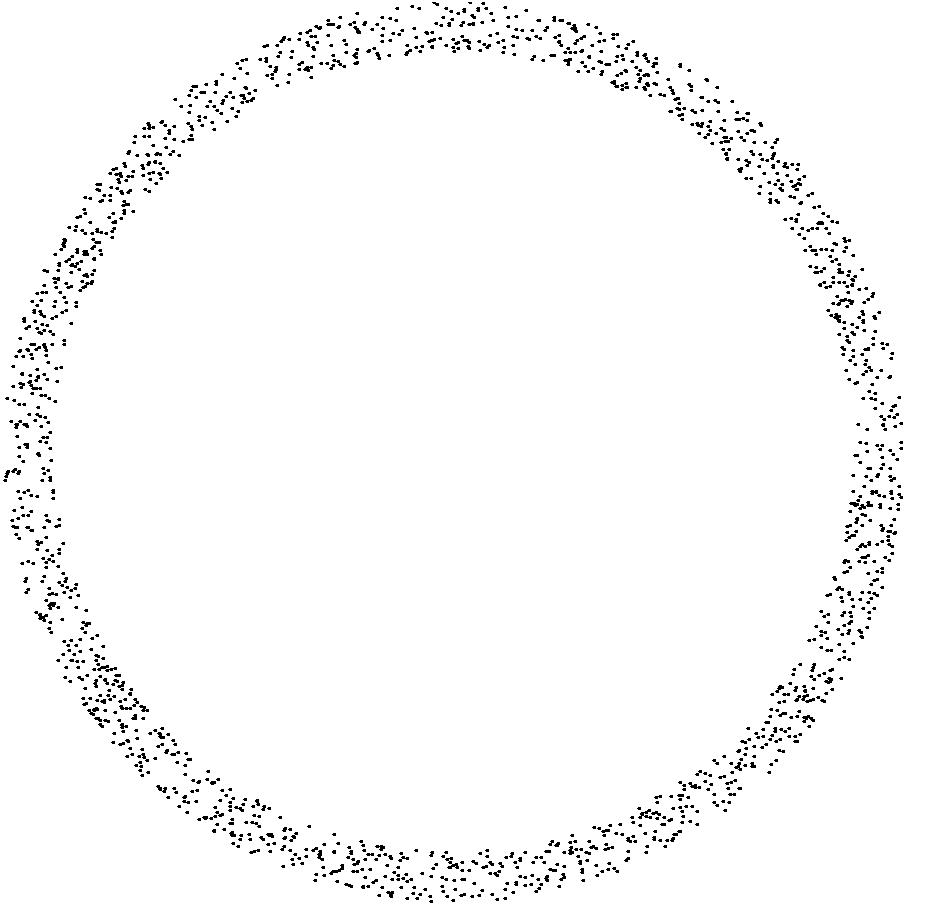}}
\caption{A Type 3 sample}
\end{subfigure}
\caption{The different sample types used in the experiments}\label{fig:4randomSets}
\end{figure}
In this section we perform some experiments with respect to the number of connected components contained inside the sub-polygons of the fixed-edge graphs. Note that our results for the uniform case match the experiments performed by Silveira et al. in \cite{SilveiraK2009}.

We try to investigate if the structure of the data has a big impact on the number $h$ of connected components, that are contained inside the sub-polygons  given by the fixed-edge graph for different Delaunay orders. For this task we look at three different types of randomly generated datasets:

\begin{description}
	\item[Type 1:] A Type 1 random dataset is given by points that are uniformly distributed in a circle of a given radius (Fig. \ref{fig:4randomSets} (a)).
	\item[Type 2:] A Type 2 random dataset is given by four Type 1 datasets such that the centers of the different circles are positioned on a square with side-length $2d$, where $d$ is the diameter of the circles (Fig. \ref{fig:4randomSets} (b)).
	\item[Type 3:] A Type 3 random dataset is given by a circular band, i.e., the points are uniformly distributed inside a circle with radius $r_1$, but a point is only accepted if its distance to the center of the circle is larger then $r_2$ (Fig. \ref{fig:4randomSets} (c)).
\end{description}

\noindent Type 1 random datasets have already been discussed in \cite{SilveiraK2009} and they are mainly used for comparison. We use Type 2 datasets to investigate, if clusters of points, that have a large distance between each other, form independent connected components and we use Type 3 datasets to investigate, if data that traces a geometric structure, e.g., a polygon or in our case a circle behaves differently with respect to the number of connected components. The tide gauge data can be seen as a combination of Type 2 and Type 3, since the tide gauge stations trace the coastlines and the different coastlines are far apart. Our random datasets are of course an extreme simplification.

In Tables \ref{tab:4r1}-\ref{tab:4r3} the results of our experiments are summarized. We calculated the average number of connected components inside sub-polygons and $c_\text{max}$ for a fixed sample. The depicted values are the averaged values over 200 samples for different numbers of data points $n$ and different Delaunay orders $k$.

For $k\leq 4$ it is unlikely that the fixed-edge graph of a dataset has a sub-polygon that contains a connected component.  Note that for $k=4$ none of the generated fixed-edge graphs contained a polygon with more than two connected components. Hence, our algorithm should be efficient for $k\leq 4$. Even for $k\leq 6$ the average maximum number of connected components is still small and the overall maximum number of connected components (not the average) was $10$, which is also promising with regards to the runtime. For $k=7$ the average maximal number of connected components is between $5$ and $7$. This suggests that we still have a lot of random instances we can solve optimally with reasonable runtime, but the worst fixed-edge graph for $k=7$ in our experiments had more then $20$ connected components inside a single sub-polygon. Hence, there can also be instances we are not able to solve efficiently for $k=7$. For $k\geq 8$ even the average maximal number of connected components is already bigger than 15. Thus, we cannot expect to triangulate pointsets optimally in reasonable time for $k\geq 8$.

We can compare the tables of the different Types. For $k\leq 6$ the values are all quite similar. This is reasonable, since our datasets are all locally uniformly distributed and Silveira et al.\ already mentioned in  \cite{SilveiraK2009} that the polygons for $k\leq 6$ most of the time only cover a small local area.

Some of the datasets of Type 2 samples still have polygons that are concentrated in individual disks for $k\geq 8$. This may be the reason for the slightly lower $c_\text{max}$ for Type 2 than Type 1 data for $k\geq 8$.

For Type 3 datasets even for  $k\geq 8$ the inner circle is an individual polygon most of the time. Thus, the maximal polygon must be inside the circular band and cannot cover the complete dataset. This may explain the significant lower $c_\text{max}$ for $k\geq 8$.\\

Overall no matter the generating process the $c_\text{max}$ values are promising for $k\leq 7$. For $k\geq 8$ the $c_\text{max}$ values are to large to be of practical use for all types of samples.
\begin{center}
\begin{table}[!ht]
\centering
\caption{Average/maximum number of connected components inside a sub-polygon for Type 1 random data averaged over 200 samples}\label{tab:4r1}
\begin{tabular}{ | c || c | c | c | c |}
\hline 
k   & $n=500$ & $n=1000$ & $n=1500$ & $n=20000$\\
\hline
3& 0.00	/	0.01	&	0.00	/	0.01	&	0.00	/	0.02	&	0.00	/	0.04	\\
4&0.00	/	0.33	&	0.00	/	0.46	&	0.00	/	0.64	&	0.00	/	0.68	\\
5&0.01	/	0.95	&	0.00	/	1.15	&	0.00	/	1.24	&	0.00	/	1.37	\\
6&0.05	/	1.94	&	0.04	/	2.29	&	0.04	/	2.39	&	0.04	/	2.48	\\
7&0.19	/	4.58	&	0.17	/	5.16	&	0.16	/	5.87	&	0.16	/	6.36	\\
8&0.58	/	13.67	&	0.54	/	17.72	&	0.51	/	20.88	&	0.51	/	23.94	\\
9&1.42	/	34.43	&	1.41	/	60.40	&	1.45	/	83.87	&	1.46	/	110.94	\\
10&2.78	/	59.87	&	3.09	/	116.78	&	3.12	/	167.14	&	3.32	/	227.93	\\
\hline
\end{tabular}
\end{table}\end{center}
\begin{center}
\begin{table}[!ht]
\centering
\caption{Average/maximum number of connected components inside a sub-polygon for Type 2 random data averaged over 200 samples}\label{tab:4r2}
\begin{tabular}{ | c || c | c | c | c |}
\hline 
k   & $n=500$ & $n=1000$ & $n=1500$ & $n=20000$\\
\hline
3&0.00	/	0.02	&	0.00	/	0.02	&	0.00	/	0.01	&	0.00	/	0.05	\\
4&0.00	/	0.50	&	0.00	/	0.64	&	0.00	/	0.68	&	0.00	/	0.78	\\
5&0.01	/	1.20	&	0.01	/	1.24	&	0.01	/	1.33	&	0.01	/	1.42	\\
6&0.08	/	2.33	&	0.06	/	2.58	&	0.05	/	2.68	&	0.05	/	2.96	\\
7&0.30	/	5.98	&	0.23	/	6.09	&	0.21	/	6.74	&	0.20	/	7.02	\\
8&0.85	/	13.94	&	0.71	/	16.95	&	0.67	/	20.49	&	0.63	/	23.20	\\
9&2.06	/	29.21	&	1.85	/	45.63	&	1.80	/	61.56	&	1.74	/	74.62	\\
10&4.03	/	54.13	&	3.93	/	92.88	&	3.87	/	126.68	&	3.97	/	171.79	\\
\hline
\end{tabular}

\end{table}\end{center}
\begin{center}\begin{table}[!ht]
\centering
\caption{Average/maximum number of connected components inside a sub-polygon for Type 3 random data averaged over 200 samples}\label{tab:4r3}
\begin{tabular}{ | c || c | c | c | c |}
\hline 
k   & $n=500$ & $n=1000$ & $n=1500$ & $n=20000$\\
\hline
3&0.00	/	0.04	&	0.00	/	0.04	&	0.00	/	0.04	&	0.00	/	0.07	\\
4&0.00	/	0.75	&	0.00	/	0.68	&	0.00	/	0.81	&	0.00	/	0.86	\\
5&0.03	/	1.86	&	0.01	/	1.50	&	0.01	/	1.55	&	0.01	/	1.59	\\
6&0.13	/	4.19	&	0.07	/	3.06	&	0.06	/	3.17	&	0.05	/	3.14	\\
7&0.35	/	10.29	&	0.21	/	7.13	&	0.18	/	7.19	&	0.18	/	7.40	\\
8&0.72	/	23.51	&	0.48	/	16.36	&	0.45	/	15.95	&	0.44	/	17.50	\\
9&1.19	/	50.55	&	0.89	/	34.57	&	0.86	/	37.10	&	0.89	/	42.17	\\
10&1.71	/	87.76	&	1.38	/	74.23	&	1.43	/	81.65	&	1.52	/	97.90	\\

\hline
\end{tabular}
\end{table}\end{center}
\newpage
\section{Missing proofs of Section \ref{sec:HODOptimization}}\label{sec:missingproofs}
In this section we provide the proof of Theorem \ref{theorem:fixedEdges}. Let $\mathcal{S}$ be a set of points in general position, i.e., no four points lie on a circle. We denote the circumcircle of three points $u,v,w\in \mathcal{S}$ by $C(u,v,w)$. In this section we say that two edges intersect only if they properly intersect, i.e, the intersection point is not an endpoint of an edge.

\medskip

Let $\overline{uv}$ be an edge. If not stated otherwise, we assume that the edge $\overline{uv}$ is oriented from $u$ to $v$, i.e., it corresponds to the oriented edge $\overrightarrow{uv}$.
We can find a point $s_l\in \mathcal{S}$ that is left of $\overline{uv}$ such that the circle $C(u,v,s_l)$ does not contain any other point from $\mathcal{S}$ that is left of $\overline{uv}$. We call $s_l$ the \emph{left defining point}, the circle $C(u,v,s_l)$ the \emph{left defining circle} and the empty triangle $T_{uvs_l}$ the \emph{left defining triangle} of $\overline{uv}$. In the same way we can find the \emph{right defining point} $s_r$ and the \emph{right defining circle} as well as the  \emph{right defining triangle}.

For a circle given by an edge $\overline{uv}$ and an additional point $x$ we define $\ca^{uv}_x$ to be $C(u,v,x)\cap H$, where $H$ is the half-plane defined by $\overline{uv}$ that does not contain $x$. Thus, the region $\ca^{uv}_x$ is the part of the circle $C(u,v,x)$ that is opposite of  $x$ with respect to $\overline{uv}$. See Figure~\ref{fig:usefulEdge} for an illustration of the defining circles and $\ca^{uv}_{s_l}$.

Note that for the left(right) defining circle all of their contained points are in $\ca^{uv}_{s_l}$($\ca^{uv}_{s_r}$). Additionally, $\ca^{uv}_{s_l}$($\ca^{uv}_{s_r}$) contains all of  $\ca^{uv}_{x}$ for every point $x\in\mathcal{S}$ that is left(right) of $\overline{uv}$.

\begin{figure}[!tb]
\begin{minipage}[t]{.37\textwidth}
\centering
\includegraphics[width=.75\textwidth]{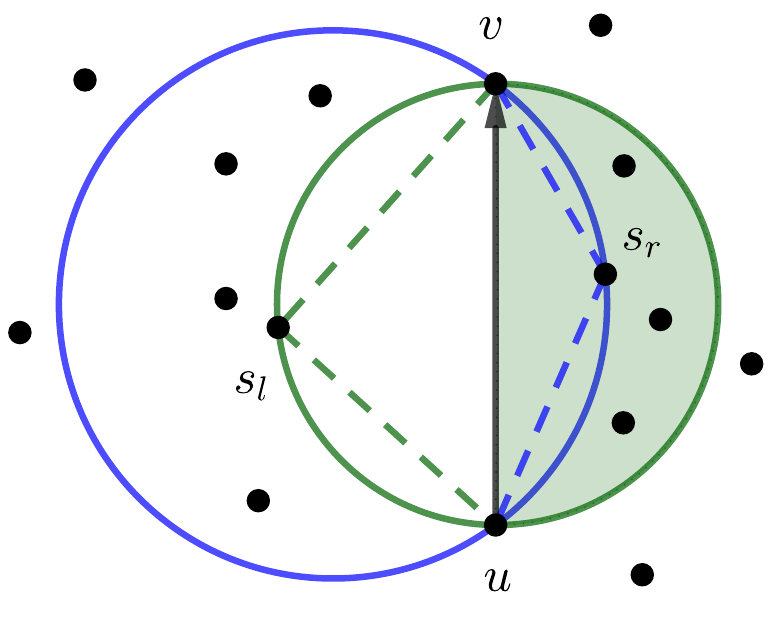}
\caption{The left defining circle of $\overline{uv}$ is given in green and the right defining circle is given in blue; the region  $\ca^{uv}_{s_l}$ is shaded in green}\label{fig:usefulEdge}
\end{minipage}
\hfill
\begin{minipage}[t]{.3\textwidth}
\centering
\includegraphics[width=0.75\textwidth]{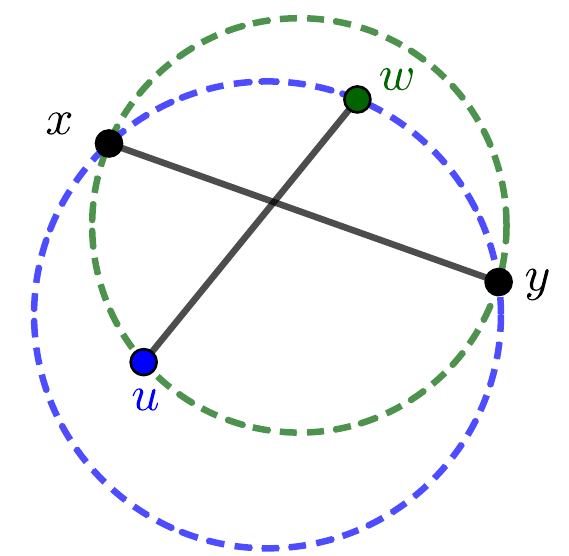}
\captionsetup{textformat=simple}
\caption{The circle $C(x,y,u)$ contains $w$ and $C(x,y,w)$ contains $u$; $(\overline{xy},\overline{uw})$ is a {Type-1} pair and $(\overline{uw},\overline{xy})$ is a Type-2 pair}\label{fig:intersectingEdges}
\end{minipage}
\hfill
\begin{minipage}[t]{.22\textwidth}
\centering
\includegraphics[width=0.9\textwidth]{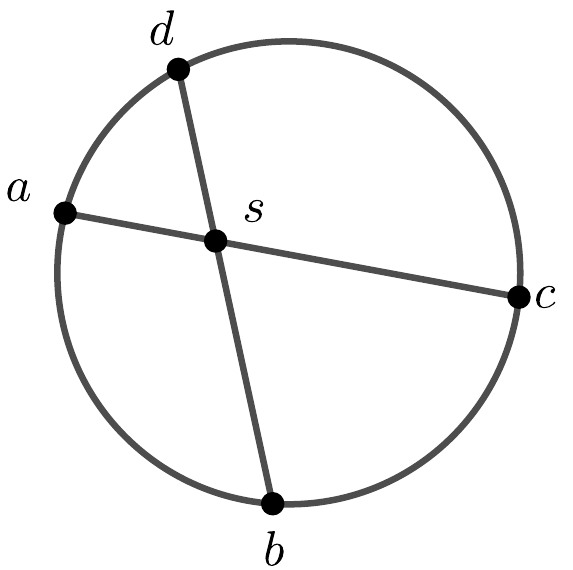}
\captionsetup{textformat=simple}
\caption{Two intersecting chords of a circle}\label{fig:chordtheorem}
\end{minipage}
\end{figure}
\begin{lemma}[from \cite{GudmundssonHK2002}]\label{lemma:useful}
Let $\overline{uv}$ be an edge. The edge $\overline{uv}$ is useful with order $k$ if and only if the  left and right defining triangles are $k$-OD triangles, i.e., the left and right defining circles each contain at most $k$ points.
\end{lemma}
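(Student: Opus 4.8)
The plan is to prove the two implications separately; in both cases the lever is the interplay between a triangle's circumcircle and the caps $\ca^{uv}_x$, using that, among the points on its side of the line $\overline{uv}$, the defining point $s_l$ (resp.\ $s_r$) is the one from which $\overline{uv}$ is seen under the widest angle, equivalently the one whose cap $\ca^{uv}_{s_l}$ (resp.\ $\ca^{uv}_{s_r}$) is largest. Write $\Pi$ for the open half-plane bounded by the line $\overline{uv}$ that contains $s_r$, so $s_l$ lies on the other side.

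\textbf{Sufficiency.} Assume both defining circles contain at most $k$ points of $\mathcal{S}$. Split $\conv(\mathcal{S})$ along $\overline{uv}$ into a left and a right part, and triangulate each part by the \emph{Delaunay-apex recursion} started from the boundary edge $\overline{uv}$: to a current frontier edge $\overline{ab}$ attach the triangle $T_{abx}$ whose apex $x$ is on the side being triangulated and makes $C(a,b,x)$ empty of all points of $\mathcal{S}$ on that side. Against $\overline{uv}$ this places the left defining triangle $T_{uvs_l}$ and the right defining triangle $T_{uvs_r}$, both $k$-OD by hypothesis. For a deeper triangle $T$ of the left part, $C(T)$ is empty of points of $\mathcal{S}$ strictly left of $\overline{uv}$, so it is enough to prove $C(T)\cap\Pi\subseteq C(u,v,s_l)\cap\Pi$. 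We obtain this by walking outward from $T_{uvs_l}$ in the dual tree of the left part: each step crosses a locally Delaunay edge $e$ (which lies in the closed left half-plane, hence does not meet $\Pi$) into a triangle farther from $\overline{uv}$, and local Delaunayness forces the circumcircle to shrink on the side of $e$ toward $\overline{uv}$, and so inside $\Pi$; induction along the path finishes the inclusion. Doing the same on the right and gluing the two triangulations along $\overline{uv}$ produces a $k$-OD triangulation using $\overline{uv}$.

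\textbf{Necessity.} Let $\mathcal{T}$ be a $k$-OD triangulation using $\overline{uv}$, and let $z$ be the apex of $\overline{uv}$ in $\mathcal{T}$ on the $s_r$-side, so $|C(u,v,z)\cap\mathcal{S}|\le k$. Because $s_l$ is the left defining point, every point of $\mathcal{S}$ in the interior of $C(u,v,s_l)$ lies in the cap $\ca^{uv}_{s_l}\subseteq\Pi$. If $z\notin C(u,v,s_l)$, then $z\in\Pi$ lies outside $\ca^{uv}_{s_l}$, hence sees $\overline{uv}$ under a strictly smaller angle than the bounding arc of $\ca^{uv}_{s_l}$ does, so $\ca^{uv}_{s_l}\subseteq C(u,v,z)$ and $|C(u,v,s_l)\cap\mathcal{S}|\le|C(u,v,z)\cap\mathcal{S}|\le k$. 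If instead $z\in C(u,v,s_l)$, the triangle realizing the bound is hidden deeper in $\mathcal{T}$: one walks through $\mathcal{T}$ — crossing edges incident to $z$, then to the next apex, and so on, always in the direction that still misses part of $\ca^{uv}_{s_l}$ — and argues that the walk reaches a triangle $T^{\ast}\in\mathcal{T}$ with $\ca^{uv}_{s_l}\subseteq C(T^{\ast})$, whence $|C(u,v,s_l)\cap\mathcal{S}|\le k$ again. The mirror-image argument, with the apex on the $s_l$-side and the cap $\ca^{uv}_{s_r}$, bounds $|C(u,v,s_r)\cap\mathcal{S}|$.

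\textbf{Main obstacle.} Both directions come down to one geometric fact: in a triangulation that is locally Delaunay away from $\overline{uv}$, pushing a triangle deeper into one side of $\overline{uv}$ can only shrink its circumcircle on the other side, so $C(u,v,s_l)$ is the circumcircle that reaches farthest into $\Pi$. Turning this into a proof — ensuring the Delaunay-apex recursion terminates with a legitimate triangulation, showing the walk to the hidden bottleneck is well defined, finite, and ends on a triangle whose circumcircle swallows $\ca^{uv}_{s_l}$, and, in each step, controlling which portions of $\Pi$ fall on which side of the edge being crossed — is where the work is concentrated. Finally, the degenerate configurations still allowed by the ``no four concyclic points'' assumption (several points collinear with $\overline{uv}$, or $\overline{uv}$ lying on the convex hull) require separate but routine handling.
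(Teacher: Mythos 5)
You should first be aware that the paper does not prove this lemma at all: it is imported verbatim from Gudmundsson, Hammar and van Kreveld \cite{GudmundssonHK2002}, so your argument has to stand on its own, and as written it does not. The sufficiency direction is where it genuinely breaks. ``Splitting $\conv(\mathcal{S})$ along $\overline{uv}$'' is only meaningful when $u$ and $v$ both lie on the convex hull; for a general edge you must mean the line through $u$ and $v$, and then the union of the two one-sided Delaunay triangulations need be neither a triangulation of $\mathcal{S}$ nor order~$k$. Concretely, take $u=(-1,0)$, $v=(1,0)$, $a=(0,-3)$, $b=(3,-4)$ below the line and $p=(0,1)$, $q=(2,0.1)$ above it. Both defining circles of $\overline{uv}$ are empty, so $\overline{uv}$ is useful of order $0$ (it is even Delaunay), yet your construction outputs the triangle $vab$, whose circumcircle (center $(2,-2)$, squared radius $5$) contains $q$, and the four triangles it produces leave the part of $\conv(\mathcal{S})$ near $(2,-\tfrac{1}{2})$ uncovered. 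The flaw sits in your induction step: the inclusion $C(T)\cap\Pi\subseteq C(u,v,s_l)\cap\Pi$ fails as soon as the crossed edge $e$, although contained in the closed left half-plane, has a supporting line that enters $\Pi$; the circle-nesting you invoke only controls the part of $\Pi$ on the $T_0$-side of that line (here $e=\overline{va}$, with $q$ on the far side). The proof in \cite{GudmundssonHK2002} avoids this by modifying the Delaunay triangulation only \emph{locally}: insert $\overline{uv}$, delete exactly the Delaunay edges it crosses, and retriangulate the two resulting polygons, so that the order of the new triangles is controlled via the two defining circles while the order-$0$ Delaunay triangles survive everywhere else.

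The necessity direction also has its core missing. Your case $z\notin C(u,v,s_l)$ is fine, but the case $z\in C(u,v,s_l)$ is the entire difficulty and cannot be settled by an unspecified walk; note also that switching to the apex $w$ on the $s_l$-side does not help, because for any left point $w\ne s_l$ the part of $C(u,v,w)$ in $\Pi$ is \emph{contained in} (not containing) the cap of $C(u,v,s_l)$. Indeed, with $u=(-1,0)$, $v=(1,0)$, $s_l=(0,-1)$, $w=(5,-5)$ and three points near $(0,\tfrac{1}{2})$, both triangles incident to $\overline{uv}$ in a triangulation can have order $1$ while the left defining circle has order $3$; the large order only resurfaces in a triangle away from $\overline{uv}$ (here $u\,s_l\,w$). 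So the assertion that your walk is well defined, terminates, and ends at a triangle $T^{\ast}$ of $\mathcal{T}$ whose circumcircle contains the whole cap is exactly the nontrivial content of this direction, and it is asserted rather than proved. Both implications therefore need to be reworked, most naturally along the lines of the original argument in \cite{GudmundssonHK2002}.
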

We start by presenting the main properties of defining circles of intersecting edges that are needed for the proof. The following observation relates the defining circles of an edge $\overline{uv}$ to the endpoints of Delaunay edges that intersect $\overline{uv}$.
\begin{observation}[from \cite{GudmundssonHK2002}]\label{obs:delaunayEdge}
Let $\overline{uv}$ be an edge. The left(right) defining circle of $\overline{uv}$ contains all points that are right(left) of $\overline{uv}$ which are incident to Delaunay edges that intersect $\overline{uv}$.
\end{observation}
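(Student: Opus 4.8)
The plan is to reduce the Observation to a single four-point \emph{flip} inequality and then invoke the monotonicity of defining circles that was already recorded (the region $\ca^{uv}_{s_l}$ contains $\ca^{uv}_{x}$ for every $x\in\mathcal S$ that is left of $\overline{uv}$). By symmetry it suffices to prove the ``left'' half: applying it to the reversed edge $\overrightarrow{vu}$ yields the ``right'' half, since the left defining circle of $\overrightarrow{uv}$ is the right defining circle of $\overrightarrow{vu}$. So I fix a point $x$ that is right of $\overline{uv}$ and incident to a Delaunay edge $\overline{xy}$ that properly crosses $\overline{uv}$, and first record the elementary consequences of the crossing: $x$ and $y$ lie on opposite sides of the line through $u,v$, so $y$ is left of $\overline{uv}$; and $u,v$ lie on opposite sides of the line through $x,y$. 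Hence $u,v,x,y$ are four distinct points in convex position, and $\overline{uv}$ and $\overline{xy}$ are exactly the two diagonals of the quadrilateral $\conv(\{u,v,x,y\})$.

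The crux is to show that $x$ lies strictly inside $C(u,v,y)$. Since $\overline{xy}$ is a Delaunay edge of $\mathcal S$, there is a circle through $x$ and $y$ whose open disk contains no point of $\mathcal S$; in particular that circle has both $u$ and $v$ outside it. For four points in convex position, the existence of a circle through the two endpoints of one diagonal that excludes the two endpoints of the other diagonal is equivalent to that first diagonal being the one used by the Delaunay triangulation of the four points, which in turn is equivalent to the circumcircle of any three of the four points containing the fourth precisely when the fourth is an endpoint of the \emph{other} diagonal. Applied here, this gives $x\in C(u,v,y)$, and strictly so since no four points of $\mathcal S$ are cocircular. I expect this flip equivalence to be the only real obstacle; I would prove it either by parametrising the pencil of circles through $x$ and $y$ (the subfamilies containing $u$ and containing $v$ are the two complementary open rays of the parameter line, with thresholds $C(x,y,u)$ and $C(x,y,v)$), or, matching the paraboloid machinery already used in the paper, by lifting $u,v,x,y$ to the unit paraboloid in $\R^3$ and reading the statement off from which pair of opposite edges of the resulting tetrahedron lies on the lower hull.

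It then remains to transfer ``$x$ inside $C(u,v,y)$'' to ``$x$ inside the left defining circle''. Since $x$ is right of $\overline{uv}$ and lies inside $C(u,v,y)$, it lies in the region $\ca^{uv}_{y}$, i.e.\ in the part of the disk of $C(u,v,y)$ on the side of $\overline{uv}$ opposite to $y$. As $y\in\mathcal S$ is left of $\overline{uv}$, the already-noted monotonicity gives $\ca^{uv}_{y}\subseteq\ca^{uv}_{s_l}$, hence $x\in\ca^{uv}_{s_l}$, so $x$ lies inside the left defining circle $C(u,v,s_l)$ (strictly, again by general position). All of this last part is routine bookkeeping about which side of a line a point is on; essentially all the content sits in establishing $x\in C(u,v,y)$, i.e.\ in the four-point flip argument.
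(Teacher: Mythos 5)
The paper never proves this statement---it is imported as an observation from \cite{GudmundssonHK2002}---so there is no in-paper argument to compare against; judged on its own, your blind proof is correct and self-contained. The reduction to the ``left'' half by reversing the orientation of $\overline{uv}$ is legitimate, a proper crossing of the two segments does place $u,v,x,y$ in convex position with $\overline{uv}$ and $\overline{xy}$ as the diagonals, and both main ingredients hold: (i) any empty circle witnessing that $\overline{xy}$ is Delaunay excludes $u$ and $v$, and for four points in convex position this forces $x$ to lie strictly inside $C(u,v,y)$ (strictly, by the no-four-cocircular assumption); and (ii) the cap-nesting fact already recorded in the appendix---the part of the disk of $C(u,v,y)$ on the side of $\overline{uv}$ opposite to $y$ is contained in the corresponding part of the left defining circle $C(u,v,s_l)$---transfers this to the defining circle, again strictly by general position. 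One sentence does need repair: your formulation of the four-point criterion, ``the circumcircle of any three of the four points contains the fourth precisely when the fourth is an endpoint of the \emph{other} diagonal,'' is inverted. With $\overline{xy}$ the diagonal admitting the empty circle, the circumcircle of three of the points contains the fourth exactly when the three chosen points include both $u$ and $v$, i.e.\ when the omitted point is an endpoint of $\overline{xy}$; the circles $C(x,y,u)$ and $C(x,y,v)$ exclude $v$ and $u$, respectively. Your application, $x\in C(u,v,y)$, matches the correct version of the criterion, and either of the two proofs you sketch for it (the pencil of circles through $x$ and $y$, or lifting to the paraboloid and reading off which diagonal of the lifted tetrahedron lies on the lower hull) gives exactly that, so the slip is cosmetic rather than structural---but it should be fixed before the argument is written out in full.
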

The following corollary summarizes the implications of the observation with respect to the number of points inside defining circles that are relevant for the proof.
\begin{corollary}\label{cor:delaunayEdge}
Let $\overline{uv}$ be an edge.
\begin{enumerate}
\item If $\overline{uv}$ intersects one Delaunay edge, both defining circles contain at least one point, i.e., each circle contains one of the endpoints of the Delaunay edge.
\item If $\overline{uv}$ intersects two Delaunay edges, at least one defining circle contains at least two points.
\item If $\overline{uv}$ intersects three Delaunay edges that all share one endpoint  left(right) of $\overline{uv}$, then the left(right) defining circle contains three points.
\end{enumerate}
\end{corollary}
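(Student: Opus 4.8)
The plan is to derive all three items directly from Observation~\ref{obs:delaunayEdge}, after isolating one elementary fact: if a Delaunay edge properly intersects $\overline{uv}$, then exactly one of its endpoints lies strictly left of $\overline{uv}$ and exactly one strictly right, and---because the crossing point lies in the relative interior of both segments---neither endpoint equals $u$ or $v$. Combined with Observation~\ref{obs:delaunayEdge}, this gives the working statement: the left defining circle of $\overline{uv}$ contains every right-endpoint, and the right defining circle contains every left-endpoint, of a Delaunay edge crossing $\overline{uv}$. By general position (no four cocircular points) each such point lies strictly inside the respective circle, so it genuinely counts towards that circle's point total---the quantity that Lemma~\ref{lemma:useful} ties to the order of $\overline{uv}$.

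For item~1, write the crossing Delaunay edge as $\overline{ab}$ with $a$ left and $b$ right of $\overline{uv}$; then $b$ lies in the left defining circle and $a$ in the right one, and $a\neq b$, so each defining circle contains at least one point. For item~2, let $\overline{a_1b_1},\overline{a_2b_2}$ be the two crossing Delaunay edges with $a_i$ left and $b_i$ right. The left defining circle contains $\{b_1,b_2\}$ and the right defining circle contains $\{a_1,a_2\}$. If $b_1\neq b_2$ the left defining circle already has two points; otherwise $b_1=b_2$ forces $a_1\neq a_2$ (else the two edges coincide), and the right defining circle has two points. For item~3, suppose the three crossing Delaunay edges share a common endpoint; by symmetry assume it is a point $a$ strictly left of $\overline{uv}$, so the edges are $\overline{ab_1},\overline{ab_2},\overline{ab_3}$ with the $b_i$ strictly right of $\overline{uv}$; pairwise distinctness of the three edges makes $b_1,b_2,b_3$ pairwise distinct, and all three lie in the left defining circle, which therefore contains three points.

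I do not expect a real obstacle here: the corollary is a short bookkeeping exercise built on Observation~\ref{obs:delaunayEdge}. The only steps needing a little care are the case distinction in item~2 (ruling out that both crossing edges have simultaneously the same right-endpoint and the same left-endpoint) and, throughout, checking that the points being counted are genuinely distinct from one another and from $u$ and $v$, so that the stated cardinalities are valid; general position then guarantees they lie strictly inside the defining circles rather than on them.
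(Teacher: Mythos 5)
Your proof is correct and follows exactly the route the paper intends: the paper states Corollary~\ref{cor:delaunayEdge} as an immediate consequence of Observation~\ref{obs:delaunayEdge} without further argument, and your write-up simply supplies the straightforward counting details (distinctness of the endpoints and the case split in item~2) that make that implication explicit.
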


Next, we can generalize the observation to arbitrary intersecting edges.
\begin{observation}\label{obs:intersectingEdges}
Let $\overline{uw}$ and $\overline{xy}$ be two edges that intersect. Then either $y\in C(u,w,x)$ and $x\in C(u,w,y)$ or $u\in C(y,x,w)$ and $w\in C(y,x,u)$; see Figure~\ref{fig:intersectingEdges}.
\end{observation}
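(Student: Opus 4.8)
The plan is to reduce the statement to the intersecting chords theorem (Figure~\ref{fig:chordtheorem}), applied at the point $p$ where the two segments $\overline{uw}$ and $\overline{xy}$ cross. Since the intersection is proper, $p$ lies strictly between $u$ and $w$ and strictly between $x$ and $y$; in particular the four points are distinct and form a convex quadrilateral with $\overline{uw}$ and $\overline{xy}$ as its two diagonals, so the observation is really the familiar in-circle dichotomy for the two triangulations of such a quadrilateral. I will make this quantitative using the power of the point $p$.

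The key step is the following reformulation of ``$y$ lies in $C(u,w,x)$''. Because $u$ and $w$ lie on $C(u,w,x)$ and $p$ is strictly between them, $p$ lies strictly inside the disk bounded by $C(u,w,x)$. The line through $x$ and $y$ passes through this interior point $p$, hence meets $C(u,w,x)$ in $x$ and in exactly one further point $x''$, with $p$ strictly between $x$ and $x''$. As $y$ lies on the ray from $p$ directed away from $x$, the point $y$ is inside $C(u,w,x)$ if and only if $\lvert py\rvert < \lvert px''\rvert$. By the intersecting chords theorem applied to the chords $\overline{uw}$ and $\overline{x x''}$ of $C(u,w,x)$ we have $\lvert px\rvert\cdot\lvert px''\rvert = \lvert pu\rvert\cdot\lvert pw\rvert$, so the condition becomes $\lvert px\rvert\cdot\lvert py\rvert < \lvert pu\rvert\cdot\lvert pw\rvert$. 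Running the identical computation for the other three circles — each time using that $p$ is interior because it lies strictly between two of the three points on that circle, and checking that the second intersection point sits on the far side of $p$ from the tested vertex — yields: $x\in C(u,w,y)$ if and only if $\lvert px\rvert\cdot\lvert py\rvert < \lvert pu\rvert\cdot\lvert pw\rvert$; and $u\in C(y,x,w)$ if and only if $w\in C(y,x,u)$ if and only if $\lvert pu\rvert\cdot\lvert pw\rvert < \lvert px\rvert\cdot\lvert py\rvert$.

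Combining these equivalences, the first alternative of the observation holds exactly when $\lvert px\rvert\cdot\lvert py\rvert < \lvert pu\rvert\cdot\lvert pw\rvert$, and the second exactly when the reverse strict inequality holds. Since $\mathcal{S}$ is in general position the two products cannot be equal — equality would force $y$ onto $C(u,w,x)$, i.e.\ four cocircular points — so precisely one of the two alternatives occurs. I expect the only delicate part to be the bookkeeping in the four applications of the chord theorem (interiority of $p$ and the side on which the second intersection point falls), but all of these follow immediately from $p$ being strictly interior to each of the two crossing segments, so I do not anticipate a genuine obstacle.
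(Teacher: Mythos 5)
Your argument is correct: every step (interiority of the crossing point $p$ in each of the four disks, the location of the second intersection of the transversal line, and the four applications of the chord relation) checks out, and the general-position assumption of Section~\ref{sec:HODOptimization} indeed rules out the equality case, so exactly one of the two alternatives holds. Note, however, that the paper does not actually prove Observation~\ref{obs:intersectingEdges}; it is stated without proof as a standard fact (it is the classical in-circle dichotomy for the two diagonals of the convex quadrilateral $u,x,w,y$, which one usually sees either via the inscribed-angle argument or via lifting the four points to the paraboloid, the latter being exactly the machinery the paper uses in Appendix~\ref{ap:np}). So your route is not so much a deviation as a self-contained justification, and it is a natural one in context: it reduces everything to the power of the point $p$, i.e.\ to the Intersecting Chord Theorem that the paper already records as Lemma~\ref{lemma:chord} and uses later in the proof of Claim~\ref{c5}. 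Your version even yields slightly more than the statement asks for, namely the quantitative criterion that the first alternative holds iff $\lvert px\rvert\cdot\lvert py\rvert<\lvert pu\rvert\cdot\lvert pw\rvert$ and the second iff the reverse inequality holds, which makes the mutual exclusivity transparent; a lifting-based proof would instead read the dichotomy off the lower facet of the tetrahedron spanned by the four lifted points and would generalize more readily to higher dimensions, but for this planar observation the two are equally good.
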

This can again be reformulated in the context of defining circles.
\begin{corollary}\label{cor:intersectingEdges}
Let $\overline{uw}$ and $\overline{xy}$ be two edges that intersect. Either $y$ is in the left(right) defining circle of $\overline{uw}$  and $x$ is in the right(left) defining circle of $\overline{uw}$ or $u$ is in the left(right) defining circle of $\overline{xy}$ and $w$ is in the right(left) defining circle of $\overline{xy}$.
\end{corollary}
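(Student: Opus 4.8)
The plan is to read the corollary off Observation~\ref{obs:intersectingEdges} using the nesting property of circular caps recorded just before Lemma~\ref{lemma:useful}: for any point $p\in\mathcal S$ lying left (resp.\ right) of an edge $\overline{uv}$, one has $\ca^{uv}_p\subseteq\ca^{uv}_{s_l}$ (resp.\ $\ca^{uv}_p\subseteq\ca^{uv}_{s_r}$), and hence $\ca^{uv}_p$ is contained in the left (resp.\ right) defining circle of $\overline{uv}$.

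First I would record the sidedness facts that come for free from $\overline{uw}$ and $\overline{xy}$ crossing properly: none of $x,y$ coincides with $u$ or $w$, the points $x$ and $y$ lie strictly on opposite sides of the line through $u$ and $w$, and symmetrically $u$ and $w$ lie on opposite sides of the line through $x$ and $y$; moreover, since $\mathcal S$ is in general position, the four points $u,w,x,y$ are not concyclic, so every containment below is strict and the circumcircles involved are genuine. Then I would apply Observation~\ref{obs:intersectingEdges}, which splits into two cases.

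In the first case, $y\in C(u,w,x)$ and $x\in C(u,w,y)$. Say $x$ lies left of $\overline{uw}$ and $y$ lies right of it (the opposite labelling just interchanges the words ``left'' and ``right'' throughout, which is exactly the parenthetical alternative in the statement). Since $y$ lies in the disk $C(u,w,x)$ and on the side of $\overline{uw}$ opposite $x$, it lies in the cap $\ca^{uw}_x$; by the nesting property $\ca^{uw}_x\subseteq\ca^{uw}_{s_l}\subseteq C(u,w,s_l)$, so $y$ lies in the left defining circle of $\overline{uw}$. Symmetrically, $x$ lies in $\ca^{uw}_y\subseteq\ca^{uw}_{s_r}\subseteq C(u,w,s_r)$, i.e.\ in the right defining circle of $\overline{uw}$. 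The second case of the observation, $u\in C(y,x,w)$ and $w\in C(y,x,u)$, is the same argument verbatim with the roles of the edges $\overline{uw}$ and $\overline{xy}$ interchanged, and yields that $u$ and $w$ lie in the two defining circles of $\overline{xy}$.

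I expect no real obstacle here beyond carefully tracking which endpoint lies on which side of each edge; all of the geometric substance is already packaged in Observation~\ref{obs:intersectingEdges} and in the cap-nesting property, so the proof reduces to this two-way case split.
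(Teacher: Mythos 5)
Your proof is correct and is essentially the same as the paper's: the paper also applies Observation~\ref{obs:intersectingEdges} and then uses the cap-nesting property $\ca^{uw}_{p}\subseteq\ca^{uw}_{s_l}$ (resp.\ $\ca^{uw}_{s_r}$) for $p$ left (resp.\ right) of $\overline{uw}$ to place the endpoints of the crossing edge in the two defining circles, handling one case WLOG where you spell out both.
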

\begin{proof}
Without loss of generality we assume that $x$ is right of $\overline{uw}$ and $x\in C(u,w,y)$ and $y$ is left of $\overline{uw}$ and $y\in C(u,w,x)$. Let $s_l$ be the left defining point. We know that $\ca^{uw}_{s_l}$ contains $\ca^{uw}_{y}$ which implies that $x$ is in  $C(u,w,s_l)$. The same argument also holds for the right defining point $s_r$ and $y$.
\end{proof}
The last lemma we need is a statement that relates the lengths of intersecting chords of a circle.
\begin{lemma}[Intersecting Chord Theorem \cite{Glaister2007}]\label{lemma:chord}
Let $\overline{ac}$ and $\overline{bd}$ be two chords of a circle and let $s$ be the intersection point of the chords; see Figure \ref{fig:chordtheorem}. Then the following equality for the lengths of the chord segments holds:
\begin{align*}
|\overline{as}|\cdot |\overline{sc}|=|\overline{bs}|\cdot |\overline{sd}|.
\end{align*}
\end{lemma}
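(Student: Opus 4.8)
This is a classical fact, so the plan is simply to record the cleanest self-contained argument: derive the length identity from a single similarity of triangles, using the inscribed angle theorem together with the equality of vertical angles at $s$. I would first fix the cyclic order of the four endpoints: reading around the circle they appear in the order $a,b,c,d$, which is exactly the configuration in which the chords $\overline{ac}$ and $\overline{bd}$ cross, and in which their crossing point $s$ lies strictly in the interior of the circle. Since $\overline{ac}$ and $\overline{bd}$ are genuine chords and $s$ is not an endpoint, there are no degenerate cases to treat separately. The two triangles I would compare are $\triangle asb$ and $\triangle dsc$.

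Next I would establish two angle equalities. At the crossing point, $\angle asb$ and $\angle dsc$ are vertical angles and hence equal. For the second pair, since $s$ lies on the segment $\overline{ac}$ we have $\angle bas = \angle bac$, and since $s$ lies on $\overline{bd}$ we have $\angle cds = \angle cdb$; but $\angle bac$ and $\angle cdb$ are inscribed angles subtending the same arc $\widehat{bc}$ (in the order $a,b,c,d$ this is the arc containing neither $a$ nor $d$), so they are equal. Thus $\triangle asb$ and $\triangle dsc$ agree in two angles and are therefore similar, with vertex correspondence $a\leftrightarrow d$, $s\leftrightarrow s$, $b\leftrightarrow c$. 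Comparing corresponding sides gives $|\overline{as}|/|\overline{ds}| = |\overline{bs}|/|\overline{cs}|$, and cross-multiplying yields $|\overline{as}|\cdot|\overline{sc}| = |\overline{bs}|\cdot|\overline{sd}|$, which is the asserted identity (see Figure~\ref{fig:chordtheorem}). An equally short alternative would be to pass through the \emph{power of the point} $s$, but the similar-triangles route needs nothing beyond the inscribed angle theorem.

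I do not expect a genuine obstacle. The only thing that needs care is the bookkeeping: choosing the vertex correspondence so that the equal inscribed angles sit opposite corresponding sides, and checking that the two inscribed angles really subtend the same arc for the chosen labeling rather than two complementary arcs. Once the cyclic order $a,b,c,d$ is fixed this is immediate, so the whole proof is a few lines.
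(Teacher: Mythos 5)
Your argument is correct: with the cyclic order $a,b,c,d$ fixed, the vertical angles at $s$ together with the equal inscribed angles over the arc $\widehat{bc}$ give the similarity $\triangle asb \sim \triangle dsc$ with the correspondence $a\leftrightarrow d$, $s\leftrightarrow s$, $b\leftrightarrow c$, and cross-multiplying the side ratio yields exactly $|\overline{as}|\cdot|\overline{sc}|=|\overline{bs}|\cdot|\overline{sd}|$. The paper does not prove this lemma at all but simply cites it as a classical result~\cite{Glaister2007}, and what you give is the standard similar-triangles (equivalently, power-of-a-point) proof, so there is nothing to compare beyond noting that your write-up supplies the omitted argument.
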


\noindent We now have established the main tools for the proof.

\fixedTheorem*
%\begin{theorem}
%Let $\mathcal{S}$ be a set of points. Then every vertex is an endpoint of at least one edge of the fixed $2$-%OD edge graph.
%\end{theorem}

Before we start the proof, we state what being isolated for a vertex $v$ means in the context of fixed edge graphs. For a vertex $v$ we define its \emph{Delaunay neighbourhood} $N$ to be the union of all Delaunay triangles that have $v$ as one of their vertices. The Delaunay edges in $N$ that have $v$ as an endpoint are called \emph{connecting edges} and the Delaunay edges in $N$ that do not have $v$ as an endpoint are called \emph{boundary edges}. 
\begin{figure}[!tb]
\centering
\includegraphics[width=.33\textwidth]{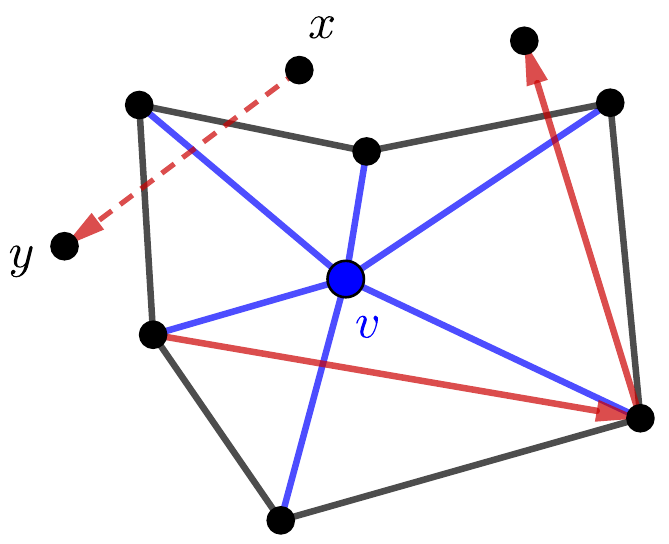}
\caption{The Delaunay neighbourhood of $v$ with connecting edges in blue and boundary edges in black; in red some (oriented) separation edges. Note that $\overline{xy}$ intersects a connecting edge, but is not a (useful) separation edge.}\label{fig:delNeighbourhood}
\end{figure}
A useful $2$-OD edge $s$ that intersects a connecting edge $e$ is called a \emph{separation edge}. In Figure~\ref{fig:delNeighbourhood} the Delaunay neighbourhood of a vertex $v$ and some separation edges are illustrated. If it is important which connecting edge is intersected, we call $s$ a separation edge \emph{for} the connecting edge $e$. Since a separation edge intersects a Delaunay edge that has $v$ as an endpoint, we know that one of its defining circles must contain $v$ by Observation \ref{obs:delaunayEdge}. We orient separation edges counter-clockwise with respect to $v$, i.e., the right defining circle is the circle that contains $v$.

Using these definitions a vertex $v$ is isolated if and only if all of its connecting edges are intersected by separation edges.

\begin{proof}
It is sufficient to show that for any fixed vertex $v$ there does not exist a set of separation edges such that every connecting edges of $v$ is intersected by an edge of the set. In particular there cannot be a \underline{minimal} set $E$ of separation edge such that every connecting edges of $v$ is intersected by an edge $e\in E$ (We say that a set $E$ is minimal, if there does not exist an edge $\hat{e}\in E$, such that every connecting edge is intersected by an edge $e\in E\setminus \{ \hat{e} \}$). 

We first investigate the position of the endpoints of separation edges with respect to the boundary edges of the Delaunay neighbourhood $N$.

\begin{restatable}{claim}{claimeins}
	\label{c1}
	Let $\overline{xy}$ be an edge that intersects a connecting edge. If neither $x$ nor $y$ is a boundary vertex of $N$, then $\overline{xy}$ cannot be a separation edge.
\end{restatable}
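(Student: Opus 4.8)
The plan is to show that $\overline{xy}$ is not a useful $2$-OD edge; since $\overline{xy}$ does intersect a connecting edge by assumption, this already rules out that it is a separation edge. By Lemma~\ref{lemma:useful} it therefore suffices to exhibit one defining circle of $\overline{xy}$ that contains at least three points of $\mathcal{S}$. Two preliminary observations set the stage. First, since every Delaunay triangle is empty, the only points of $\mathcal{S}$ in the closed region $N$ are $v$ and the boundary vertices, and an edge of the triangulation contains no third point of $\mathcal{S}$; hence the hypothesis that neither $x$ nor $y$ is a boundary vertex forces $x,y\notin N$. Second, $\overline{xy}$ is a chord of $\conv(\mathcal{S})$ and so does not properly cross any convex-hull edge; in particular every connecting edge that $\overline{xy}$ crosses is incident to two Delaunay triangles (it is not a hull edge).

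Next I would orient $\overline{xy}$ so that $v$ lies to its right and isolate the maximal sub-segment $\sigma\subseteq\overline{xy}\cap N$ whose interior meets a connecting edge (this exists because the crossing point with the given connecting edge lies in $N^\circ$). Both endpoints of $\sigma$ lie on $\partial N$, and since $\overline{xy}$ avoids hull edges, $\sigma$ must enter $N$ through a boundary edge $\overline{r_1p_0}$, leave through a boundary edge $\overline{r_\ell p_1}$, and in between cross exactly the connecting edges $\overline{vr_1},\dots,\overline{vr_\ell}$ going to consecutive neighbours $r_1,\dots,r_\ell$ of $v$, for some $\ell\ge1$. Reading off on which side of $\overline{xy}$ each vertex lies (in the entry triangle $\triangle vr_1p_0$ the two edges met by $\sigma$ share $r_1$, so $r_1$ is separated from $\{v,p_0\}$; likewise at the exit triangle), one gets that $r_1,\dots,r_\ell$ lie left of $\overline{xy}$ while $v,p_0,p_1$ lie right of it. Applying Observation~\ref{obs:delaunayEdge} to the crossed Delaunay edges $\overline{r_1p_0},\overline{vr_1},\dots,\overline{vr_\ell},\overline{r_\ell p_1}$, the left defining circle of $\overline{xy}$ contains $\{v,p_0,p_1\}$ and the right defining circle contains $\{r_1,\dots,r_\ell\}$. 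If $\ell\ge3$, or if $p_0\ne p_1$ (which is automatic when $\ell=1$: otherwise the two triangles incident to $\overline{vr_1}$ would coincide, forcing $\overline{vr_1}$ to be a hull edge), then one of these two sets consists of three distinct vertices and we are done.

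The only case left is $\ell=2$ with $p_0=p_1$, which forces $v$ to have exactly three neighbours $p_0,r_1,r_2$ and $N$ to be the single triangle $\triangle p_0r_1r_2$. Here I would walk one Delaunay triangle past the boundary edge through which $\overline{xy}$ enters $N$: let $z\in\mathcal{S}$ be its apex. Depending on which of the two remaining edges of that triangle $\overline{xy}$ crosses, $z$ is added either to the left defining circle (yielding $\{v,p_0,z\}$) or to the right defining circle (yielding $\{r_1,r_2,z\}$), and $z$ is distinct from the other two since $z\notin N$. This argument fails only if $x$ (respectively $y$) already coincides with that apex, which completely pins down the configuration; one then argues directly, using that the circumcircles of the adjacent Delaunay triangles $\triangle xr_1p_0$ and $\triangle yr_2p_0$ are empty. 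Via Observation~\ref{obs:intersectingEdges} applied to $\overline{xy}$ against $\overline{r_1p_0}$ and $\overline{r_2p_0}$ this places $r_1$ and $r_2$ in the circle $C(x,y,p_0)$, and then convexity of that circumdisk (its boundary meets the triangle $\triangle p_0r_1r_2$ only in $p_0$) also places $v$ inside it, so $C(x,y,p_0)$ contains three points of $\mathcal{S}$; transferring this to an actual defining circle of $\overline{xy}$ by a pivoting argument or by the Intersecting Chord Theorem (Lemma~\ref{lemma:chord}) finishes the case.

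The routine parts are the side-of-line bookkeeping in the second paragraph and the "walk one triangle further" step; the main obstacle I expect is precisely this last, fully degenerate configuration ($v$ of degree three, both outer triangle apexes being $x$ and $y$), where the crossed-edge count alone only produces two points in each defining circle and one genuinely needs the emptiness of the neighbouring Delaunay circumcircles to produce the third point.
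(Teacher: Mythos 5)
Your overall strategy is the same as the paper's: use Observation~\ref{obs:delaunayEdge} and Lemma~\ref{lemma:useful} to count, on each side of $\overline{xy}$, the endpoints of the Delaunay edges it crosses, and conclude that some defining circle contains three points. In fact your bookkeeping is more careful than the paper's own proof, which simply asserts that $\overline{xy}$ must cross at least three Delaunay edges with three distinct endpoints on one side; you correctly isolate the one configuration where this is not automatic, namely $v$ of Delaunay degree three with neighbours $p_0,r_1,r_2$, the segment crossing $\overline{p_0r_1},\overline{vr_1},\overline{vr_2},\overline{p_0r_2}$, and $x,y$ equal to the apexes of the two Delaunay triangles across $\overline{p_0r_1}$ and $\overline{p_0r_2}$. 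Up to that point (including the ``walk one triangle further'' step when $x$ or $y$ is not the apex) your argument is sound.

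Your treatment of that last case, however, has a genuine gap, and it cannot be repaired as written. The facts you derive are correct ($r_1,r_2\in C(x,y,p_0)$ by Observation~\ref{obs:intersectingEdges} and emptiness of the outer circumcircles, and $v\in C(x,y,p_0)$ by convexity), but these three points lie on \emph{both} sides of $\overline{xy}$ ($r_1,r_2$ on one side, $v$ on the other), so no pivoting can carry all of them into a defining circle: by definition the defining circle on a given side contains no further point of $\mathcal S$ on that side, hence it can inherit at most the two points from the opposite side ($r_1,r_2$, respectively $v,p_0$). Worse, the configuration is realizable with $\overline{xy}$ a useful $2$-OD edge: take $p_0=(0,-1)$, $r_1=(-1,1)$, $r_2=(1,1)$, $v=(0,0.5)$, $x=(-3,0.8)$, $y=(3,0.8)$ (perturb $y$ slightly to restore general position). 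The five triangles $vp_0r_1$, $vr_1r_2$, $vr_2p_0$, $xp_0r_1$, $yp_0r_2$ triangulate the convex hull and all have empty circumcircles, so they form the Delaunay triangulation and $x,y$ are not neighbours of $v$; the defining circles of $\overline{xy}$ are $C(x,y,v)$, which contains only $r_1,r_2$, and $C(x,y,r_1)$ (or $C(x,y,r_2)$), which contains only $v,p_0$, so $\overline{xy}$ is a useful $2$-OD edge by Lemma~\ref{lemma:useful}, yet it properly crosses the connecting edges $\overline{vr_1}$ and $\overline{vr_2}$. So in this degenerate situation the statement you are proving fails as an unconditional claim about a single edge (note that the paper's one-line proof silently skips exactly this configuration as well); any correct treatment has to exploit the ambient hypothesis of the theorem's proof, namely a set of separation edges covering \emph{all} connecting edges of $v$ --- in the example above the third connecting edge $\overline{vp_0}$ is crossed by no useful $2$-OD edge at all --- rather than argue about $\overline{xy}$ in isolation.
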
\begin{restatable}{claim}{claimzwei}
	\label{c2}
	Let $\overline{uw}$ be a boundary edge. Let $\overline{xy}$ and $\overline{sr}$ be two edges that intersect $\overline{uw}$. If $\overline{xy}$ intersects the connecting edge $\overline{uv}$ and $\overline{sr}$ intersects the connecting edge $\overline{wv}$ (or vice versa), then one of the edges cannot be a separation edge. 
\end{restatable}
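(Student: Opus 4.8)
The plan is to suppose, for contradiction, that both $\overline{xy}$ and $\overline{sr}$ are separation edges and to exhibit a defining circle of one of them whose interior contains at least three points of $\mathcal{S}$; by Lemma~\ref{lemma:useful} this contradicts usefulness with order $2$, hence contradicts being a separation edge, so one of the two edges is not a separation edge after all. Up to swapping the two edges we may assume that $\overline{xy}$ crosses the connecting edge $\overline{uv}$ and $\overline{sr}$ crosses the connecting edge $\overline{vw}$. Since $\overline{uw}$ is a boundary edge of the Delaunay neighbourhood $N$ of $v$, the points $u,v,w$ span a Delaunay triangle whose two connecting edges are $\overline{uv}$ and $\overline{vw}$. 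The edge $\overline{xy}$ meets both $\overline{uv}$ and $\overline{uw}$, so inside that triangle it cuts off the corner at $u$ and separates $u$ from $\{v,w\}$; symmetrically $\overline{sr}$ cuts off the corner at $w$ and separates $w$ from $\{u,v\}$. Following the orientation convention for separation edges (each crosses a Delaunay edge incident to $v$, so by Observation~\ref{obs:delaunayEdge} one defining circle of each contains $v$), I orient $\overline{xy}$ and $\overline{sr}$ so that $v$ lies in their right defining circles.

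Next I would read off the contents of the four defining circles in play. By Observation~\ref{obs:delaunayEdge}, the endpoints of the Delaunay edges crossing $\overline{xy}$ that lie left of $\overline{xy}$ belong to its right defining circle, and those lying right of $\overline{xy}$ belong to its left defining circle. Since $\overline{xy}$ separates $u$ from $\{v,w\}$, this gives $u$ in the left defining circle of $\overline{xy}$ and $v,w$ in its right defining circle $C_1$; as $\overline{xy}$ is useful with order $2$, the interior of $C_1$ contains no point of $\mathcal{S}$ besides $v$ and $w$. Symmetrically the left defining circle of $\overline{sr}$ contains $w$ and its right defining circle $C_2$ has exactly $u$ and $v$ in its interior. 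I also record that none of $x,y,s,r$ coincides with any of $u,v,w$ (otherwise one of the asserted proper crossings would be at a shared endpoint), and I will use the nesting property of defining circles: if a point $z$ lies inside the circle through an edge $e$ and a point $x$ that is on the opposite side of $e$ from $z$, then $z$ lies inside the defining circle of $e$ on $x$'s side.

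Now I distinguish whether the segments $\overline{xy}$ and $\overline{sr}$ intersect. If they do, I apply Observation~\ref{obs:intersectingEdges} to this pair. In its first alternative, $s$ and $r$ lie on opposite sides of $\overline{xy}$ and one of them, say the one on $v$'s side of $\overline{xy}$, lies inside the circle through $\overline{xy}$ and the other; by the nesting property this point lies inside $C_1$, and since it is neither $v$ nor $w$ the interior of $C_1$ would then contain a third point of $\mathcal{S}$, contradicting order $2$. In the second alternative, one of $x,y$ is forced in the same way into the interior of $C_2$, whose only interior $\mathcal{S}$-points are $u$ and $v$, again a third point. So whenever $\overline{xy}$ and $\overline{sr}$ intersect, at least one of them is not useful with order $2$.

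It remains to handle the case where $\overline{xy}$ and $\overline{sr}$ are disjoint. First one checks that their crossing points $p := \overline{xy}\cap\overline{uw}$ and $q := \overline{sr}\cap\overline{uw}$ appear on $\overline{uw}$ in the order $u,p,q,w$: were $q$ between $u$ and $p$, then inside triangle $uvw$ the sub-segment of $\overline{xy}$ would separate $u$ from the edge $\overline{vw}$ while the sub-segment of $\overline{sr}$ would join $\overline{vw}$ to the point $q$ on the $u$-side of that separation, forcing the two edges to meet. Convexity of disks then gives that $q$ is interior to $C_1$ (which already contains $p$ and $w$) and $p$ is interior to $C_2$ (which already contains $u$ and $q$), while $u$ lies outside $C_1$ and $w$ outside $C_2$; so the chord cut from $C_1$ by the line through $\overline{uw}$, and the one cut from $C_2$, are in a tightly constrained position relative to $\overline{xy}$ and $\overline{sr}$. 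The plan is then to apply the Intersecting Chord Theorem (Lemma~\ref{lemma:chord}) inside $C_1$ and inside $C_2$ and to combine the resulting length equalities with the positions of $u,v,w$ to force a further point of $\mathcal{S}$ inside $C_1$ or $C_2$, once more contradicting order $2$. Carrying out this last, metric part of the argument --- pinning down which point is forced and separately disposing of the degenerate sub-case where the lines through $\overline{xy}$ and $\overline{sr}$ are parallel --- is the step I expect to be the main obstacle; the setup deductions and the intersecting case are comparatively routine given Observations~\ref{obs:delaunayEdge} and~\ref{obs:intersectingEdges}. Putting the two cases together, at least one of $\overline{xy}$ and $\overline{sr}$ is not a separation edge, which is the claim.
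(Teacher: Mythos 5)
Your setup is sound and your treatment of the case where $\overline{xy}$ and $\overline{sr}$ properly cross each other is correct: Corollary~\ref{cor:intersectingEdges} puts one of $s,r$ (resp.\ one of $x,y$) into the defining circle that already holds $\{v,w\}$ (resp.\ $\{u,v\}$), giving three points and contradicting usefulness via Lemma~\ref{lemma:useful}. This matches the way the paper finishes its own ``the two edges cross'' subcase. However, the heart of the claim is the situation where the two edges do \emph{not} cross, and there your argument is only a plan, which you yourself flag as the main obstacle. The facts you do establish there (the order $u,p,q,w$ on $\overline{uw}$, $q$ interior to $C_1$, $p$ interior to $C_2$) are fine, but they do not by themselves produce a third point of $\mathcal{S}$ in either defining circle: $x,y,s,r$ can lie far outside both circles, and the Intersecting Chord Theorem gives only length relations among chords, not a new triangulation point inside $C_1$ or $C_2$. (In the paper that lemma is used only much later, in Claim~5, for a quite different configuration.)

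The missing ingredient is the Delaunay triangle on the far side of the boundary edge $\overline{uw}$. Since both edges cross $\overline{uw}$, they both enter that triangle, say $T_{uwz}$ with apex $z$ opposite $v$. Each edge then either has $z$ as an endpoint or crosses $\overline{zu}$ or $\overline{zw}$, and a short case analysis (the paper's Case~1/Case~2, taking WLOG $z\in\{x,s\}$ or arguing with $T_{uwz}$ directly) shows that either one edge crosses three Delaunay edges sharing an endpoint ($\overline{vw},\overline{uw},\overline{zw}$ all at $w$, say), so by Observation~\ref{obs:delaunayEdge} one defining circle contains $v,u,z$, or the two edges are forced to cross each other, reducing to the case you did handle. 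Without bringing in this far-side triangle, the non-crossing case does not close. Separately, your dichotomy ``they intersect'' versus ``they are disjoint'' omits the possibility that $\overline{xy}$ and $\overline{sr}$ share an endpoint beyond $\overline{uw}$ (not a proper intersection under the paper's convention); the paper needs a dedicated circle-growing argument there to show $C(x,y,r)$ contains $u$ or $w$, and then that a defining circle of one edge contains $v$, $u$ (or $w$), and $y$. So as it stands there is a genuine gap in the main case, and the shared-endpoint subcase is unaddressed.
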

Since all edges in $E$ are separation edges, they cannot be positioned as discussed in Claim~\ref{c1} and Claim~\ref{c2}. The following claim shows that in this case $E$ must have a special structure. 
\begin{restatable}{claim}{claimdrei}
	\label{c3}
	The edges $e\in E$ form a cycle, i.e., there exists a cyclic ordering $(e_1,...,e_m)$ with $e_1=e_{m}$, such that for all $i$ the edge $e_i$ (properly) intersects $e_{i-1}$ and $e_{i+1}$. Such a cycle is illustrated in Figure~\ref{fig:cycleSeparation} (c).
\end{restatable}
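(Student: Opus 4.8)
The plan is to read off the cyclic structure of $E$ from the connecting edges it is forced to cover. Fix the vertex $v$ and list its connecting edges $\overline{vp_1},\dots,\overline{vp_d}$ in counter-clockwise order, so $p_1,\dots,p_d$ are the boundary vertices of the Delaunay neighbourhood $N$ in counter-clockwise order; for $e\in E$ write $A_e\subseteq\{1,\dots,d\}$ for the set of indices $i$ with $e$ properly intersecting $\overline{vp_i}$. Since $E$ covers every connecting edge, $\bigcup_{e\in E}A_e=\{1,\dots,d\}$, and since $E$ is minimal, every $e\in E$ has a \emph{private} connecting edge, lying in $A_e$ but in no other $A_{e'}$. \textbf{Step~1: each $A_e$ is a cyclic interval.} The directions from $v$ to the points of the straight segment $e$ form an angular interval, and a straight segment cannot cross $\overline{vp_i}$, skip $\overline{vp_{i+1}}$ and cross $\overline{vp_{i+2}}$ without leaving and re-entering $N$ through the Delaunay edges $\overline{p_ip_{i+1}}$ and $\overline{p_{i+1}p_{i+2}}$; together with the two connecting edges this is four crossed Delaunay edges, which by Observation~\ref{obs:delaunayEdge} and Corollary~\ref{cor:delaunayEdge} forces three points into one defining circle of $e$ and contradicts that $e$ is a useful $2$-OD edge. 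Hence $A_e$ is a single cyclic interval $[a_e,b_e]$, and by Claim~\ref{c1} at least one of its ends is pinned by a spoke adjacent to the boundary edge incident to a boundary-vertex endpoint of $e$.

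\textbf{Step~2: the arcs form a cyclic chain.} The intervals $\{A_e\}_{e\in E}$ cover the cyclic ground set and, by the private-edge property, none is contained in another; a minimal family of circular arcs with these two properties can be cyclically ordered $e_1,\dots,e_m$ (with $e_{m+1}:=e_1$) so that $A_{e_j}\cap A_{e_{j+1}}\neq\emptyset$ for all $j$ — walking counter-clockwise out of $e_j$'s private edge, coverage can continue past the end of $A_{e_j}$ only because $A_{e_{j+1}}$ starts before $A_{e_j}$ ends. Orient every separation edge counter-clockwise around $v$ (so that $v$ lies in its right defining circle, as in the definition of separation edge); then each $e_j$ meets the spokes of $A_{e_j}$ in order of increasing index.

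\textbf{Step~3 (the heart): overlapping arcs imply crossing edges.} This implication is not automatic, since two segments can meet a common spoke without meeting each other, so it is here that Claims~\ref{c1} and~\ref{c2}, the orientation convention, and the $2$-OD constraint must be used. Let $\overline{vp}$ be the clockwise-most spoke in $A_{e_j}\cap A_{e_{j+1}}$ and let $q_j,q_{j+1}$ be the points where $e_j,e_{j+1}$ meet it. The edge $e_j$ also crosses the spoke immediately clockwise of $\overline{vp}$, while $e_{j+1}$ does not; hence inside the Delaunay triangle bounded by those two spokes, $e_j$ forms a chord separating the corner $v$ from the opposite boundary edge, and $e_{j+1}$ — which crosses $\overline{vp}$ at $q_{j+1}$ but then leaves this triangle without crossing the clockwise spoke — has a point on the far side of that chord. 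If $q_{j+1}$ lies on the $v$-side of the chord, i.e.\ closer to $v$ than $q_j$, then $e_{j+1}$ must cross $e_j$. The complementary case $|\overline{vq_{j+1}}|>|\overline{vq_j}|$ is propagated counter-clockwise into the next shared triangle and treated the same way; to rule it out completely I expect to invoke the Intersecting Chord Theorem (Lemma~\ref{lemma:chord}) together with the near-empty defining circles of the $2$-OD edges $e_j,e_{j+1}$, excluding the configuration in which the two chords remain on the same side through every shared triangle. This yields that $e_i$ intersects $e_{i-1}$ and $e_{i+1}$ for every $i$, which is the claim.

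The main obstacle is Step~3: lifting the combinatorial overlap of coverage arcs to an honest crossing of segments requires a careful case distinction on the positions of $q_j,q_{j+1}$ and on where the edges' boundary-vertex endpoints fall, leaning on Claims~\ref{c1} and~\ref{c2}, on the counter-clockwise orientation, and on the $2$-OD constraint via the Intersecting Chord Theorem. Step~1's contiguity argument and the degenerate small cases $m\le 2$ are routine but must be handled explicitly.
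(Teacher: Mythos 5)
Your plan stands or falls with Step~1, and Step~1 does not hold up. The four Delaunay edges crossed in the ``skip'' configuration are $\overline{vp_i}$, $\overline{p_ip_{i+1}}$, $\overline{p_{i+1}p_{i+2}}$ and $\overline{vp_{i+2}}$; their endpoints split across the segment as $\{v,p_{i+1}\}$ on one side and $\{p_i,p_{i+2}\}$ on the other, so Observation~\ref{obs:delaunayEdge} only forces \emph{two} points into each defining circle, not three into one. Corollary~\ref{cor:delaunayEdge}(3) does not apply either, since no three of these crossed edges share an endpoint on one side. Worse, the contiguity claim itself appears to be false: one can place $p_{i+1}$ close to $v$ and take a useful $2$-OD edge with endpoints at the boundary vertices $p_{i-1}$ and $p_{i+3}$ that passes on the far side of $p_{i+1}$; it then crosses the connecting edges to $p_i$ and $p_{i+2}$ but not the one to $p_{i+1}$, while each of its defining circles contains exactly the two points $\{v,p_{i+1}\}$ resp.\ $\{p_i,p_{i+2}\}$, so it is a legitimate separation edge whose set $A_e$ is not a cyclic interval. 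Step~2 has an independent problem even if the arcs were intervals: in a minimal cover of the cyclic ground set, consecutive arcs may be adjacent without sharing a spoke ($A_{e_j}=\{1,2,3\}$, $A_{e_{j+1}}=\{4,5\}$), so you cannot in general pick a common spoke $\overline{vp}$ for Step~3; and Step~3 itself, which you acknowledge is the heart, is left as an expectation rather than an argument.

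For comparison, the paper avoids coverage arcs entirely. It walks counter-clockwise around the boundary of $N$: having chosen $e_1\in E$ crossing $\overline{u_1v}$, it lets $u_2$ be the next boundary vertex whose connecting edge is not yet intersected and picks $e_2\in E$ crossing $\overline{u_2v}$. Then either $e_1$ has $u_2$ as an endpoint, in which case any $e_2$ not crossing $e_1$ would cover everything $e_1$ covers plus $\overline{u_2v}$, contradicting minimality of $E$; or $e_1$ crosses the boundary edge $\overline{u_{pre}u_2}$, in which case Claims~\ref{c1} and~\ref{c2} forbid $e_2$ from crossing that boundary edge as well, forcing the relevant endpoint of $e_2$ back to $u_{pre}$ or earlier and hence a crossing with $e_1$. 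This is exactly the mechanism that handles the ``adjacent but non-overlapping'' situation your Step~3 cannot reach, and it needs neither contiguity of coverage nor the Intersecting Chord Theorem (which the paper only uses later, in Claim~\ref{c5}). To salvage your approach you would have to replace Step~1 by a statement that is actually true for separation edges and rebuild Steps~2--3 around the redundancy and shared-boundary-edge arguments; as written, the proposal has a genuine gap.
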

Note that for every separation edge $e$ the Delaunay neighbourhood $N$ has at least two connecting edges that can not be intersected by $e$, i.e., the next connecting edge in counter-clockwise and clockwise order that has not been intersected by $e$ (they can not be identical, because of Claim \ref{c1}). Claim \ref{c1} and \ref{c2} also imply that both of these connecting edges must be intersected by different separation edges. Thus, $E$ contains at least three separation edges.

From now on we assume that $E$ is a cyclic set, i.e., we have an ordering such that $e_i$ intersects $e_{i-1}$ and $e_{i+1}$ for all $i$. We call a pair $(\overline{uw},\overline{u'w'})$ of separation edges a \emph{Type-1 pair}, if Observation \ref{obs:intersectingEdges} holds for $\overline{uw}$, i.e., $w'\in C(u,w,u')$ and $u'\in C(u,w,w')$. Otherwise we call it a \emph{Type-2 pair}; see Figure \ref{fig:intersectingEdges}. Note that, if $(\overline{uw},\overline{u'w'})$ is a Type-1 pair, then $(\overline{u'w'},\overline{uw})$ is a Type-2 pair.

It remains to show that a (minimal) cyclic set $E$ of separation edges cannot exist, i.e. at least one edge $e\in E$ cannot be useful. For this we investigate the usefulness of the edges in $E$, if we have specific types of edge pairs.

\begin{restatable}{claim}{claimvier}
	\label{c4}
	Let $e_i,e_{i+1},e_{i+2}\in E$ be consecutive edges, such that $(e_{i},e_{i+1})$ is a Type-2 pair and $(e_{i+1},e_{i+2})$ is a Type-1 pair. Then not all edges in $E$ can be useful.
\end{restatable}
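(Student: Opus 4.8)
The plan is to show that, under these hypotheses, the middle edge $e_{i+1}$ cannot be a useful $2$-OD edge, which already contradicts $e_{i+1}\in E$. Write $e_{i+1}=\overline{xy}$ and recall that by the orientation convention the right defining circle $R$ of $e_{i+1}$ is the one that contains $v$; in particular $v$ lies strictly on one side of the line through $x,y$ (the same side that $R$ can contain points of $\mathcal{S}$ on), and by Lemma~\ref{lemma:useful} it suffices to exhibit three distinct points of $\mathcal{S}$ in the interior of $R$. Since $e_i$ and $e_{i+2}$ cross $e_{i+1}$ properly, each has exactly one endpoint on $v$'s side of $\overline{xy}$; call these $p$ (on $e_i$) and $f$ (on $e_{i+2}$). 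The candidate three points will be $v$, $p$ and $f$.

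Next I would read off membership in $R$ from the two type hypotheses via Corollary~\ref{cor:intersectingEdges}. Since $(e_i,e_{i+1})$ is a Type-2 pair, the corollary applied with $e_{i+1}$ in the role of the second edge places the two endpoints of $e_i$ in the two defining circles of $e_{i+1}$, one in each; the endpoint lying on $v$'s side, namely $p$, must lie in the defining circle of $e_{i+1}$ that holds points on that side, which is exactly $R$ (the other endpoint of $e_i$ goes into the left defining circle of $e_{i+1}$, which can only hold points on the far side). Likewise, since $(e_{i+1},e_{i+2})$ is a Type-1 pair, the corollary places the two endpoints of $e_{i+2}$ in the two defining circles of $e_{i+1}$, and the one on $v$'s side, $f$, lies in $R$. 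By general position (no four cocircular points) none of $v,p,f$ lies on the circle $R$, so all three lie in its interior; hence $R$ contains $v$, $p$ and $f$ as interior points.

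It then remains to check that $v,p,f$ are pairwise distinct, after which $R$ contains at least three points of $\mathcal{S}$ in its interior and $e_{i+1}$ is not a useful $2$-OD edge by Lemma~\ref{lemma:useful}, proving Claim~\ref{c4} (and, since every non-constant cyclic pattern of pair-types contains a Type-2-then-Type-1 transition, this forces the type of all consecutive pairs in $E$ to be constant, the all-Type-1 case being handled separately). That $v\neq p$ and $v\neq f$ is immediate: a separation edge properly intersects some connecting edge, all connecting edges have $v$ as an endpoint, and two segments sharing the point $v$ cannot properly intersect there — so neither $e_i$ nor $e_{i+2}$ is incident to $v$. The inequality $p\neq f$ — equivalently, that $e_i$ and $e_{i+2}$ do not emanate from a common vertex on $v$'s side of $e_{i+1}$ — is the step I expect to be the real obstacle. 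I plan to rule it out using the minimality of $E$ together with the cyclic/planar structure around $v$: if $e_i$ and $e_{i+2}$ shared such a vertex $p$, then the two segments from $p$ together with the subsegment of $e_{i+1}$ between its two crossing points bound a bounded region, and a case analysis on the position of $v$ relative to this region should show that every connecting edge crossed by $e_{i+1}$ is already crossed by $e_i$ or by $e_{i+2}$; then $E\setminus\{e_{i+1}\}$ still meets every connecting edge, contradicting the minimality of $E$. Making this planar case analysis airtight (in particular ruling out connecting edges that escape only through the sub-segment of $e_{i+1}$) is where I anticipate the bulk of the work.
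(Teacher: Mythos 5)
Your opening step is correct and is essentially the paper's own argument for the easy configurations: writing $e_{i+1}=\overline{xy}$, the Type-2 hypothesis on $(e_i,e_{i+1})$ (equivalently, Type-1 for $(e_{i+1},e_i)$) together with Corollary~\ref{cor:intersectingEdges} puts the endpoint of $e_i$ on $v$'s side into the right defining circle of $e_{i+1}$, the Type-1 hypothesis on $(e_{i+1},e_{i+2})$ does the same for the endpoint $f$ of $e_{i+2}$, and $v$ itself lies in that circle because $e_{i+1}$ is a separation edge; if $v,p,f$ are distinct, Lemma~\ref{lemma:useful} shows $e_{i+1}$ is not useful. This is exactly how the paper handles the cases in which $e_i$ and $e_{i+2}$ have no common endpoint or share an endpoint on the side of $e_{i+1}$ away from $v$.

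The gap is precisely the case you defer, $p=f$, and the route you sketch for it does not go through: the statement you would need --- that every connecting edge crossed by $e_{i+1}$ is also crossed by $e_i$ or $e_{i+2}$, so that $e_{i+1}$ is redundant --- is false in the critical sub-case where $v$ lies \emph{inside} the bounded region enclosed by the two segments from $p$ and the sub-segment of $e_{i+1}$ between its two crossings. A connecting edge of $v$ can leave this region through that middle sub-segment and terminate at a Delaunay neighbour of $v$ located just beyond $e_{i+1}$, between the two ``legs'' of $e_i$ and $e_{i+2}$ on the far side; such an edge is crossed by $e_{i+1}$ alone, so minimality of $E$ yields no contradiction, and the connecting edges ``escaping only through the sub-segment of $e_{i+1}$'' that you hoped to rule out genuinely exist. (In the complementary sub-case, $v$ outside the region, a redundancy argument does work, but the redundant edge is $e_i$, not $e_{i+1}$.) The paper's proof of Claim~\ref{c4} treats the $v$-inside configuration with a different idea: the connecting edge $\overline{vp}$ to the shared endpoint is crossed by none of $e_i,e_{i+1},e_{i+2}$, so a fourth separation edge $e_4=\overline{uw}$ must belong to $E$, and minimality forces the far endpoints of $e_i,e_{i+2}$ and the vertex $v$ to lie left of $e_4$. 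Then one argues on the circle $C$ through $p$ and the two endpoints of $e_{i+1}$: both defining circles of $e_{i+1}$ already contain two marked points each (from the Type hypotheses and $v$), so if an endpoint of $e_4$ lies inside $C$, then $e_{i+1}$ is not useful; otherwise $e_4$ crosses $C$ twice and its right defining circle contains the part of $C$ left of $e_4$, hence the three points $a$, $b$ and $v$, so $e_4$ is not useful. Some geometric argument of this kind (a new circle-containment analysis involving a fourth edge), rather than a sharpened planar redundancy analysis, is what is missing from your proposal.
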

\begin{restatable}{claim}{claimfunf}
	\label{c5}
	If the sequence $E$ only has successive edges $e_i,e_{i+1}$ that form Type-2 pairs $(e_i,e_{i+1})$, then one of the edges $e\in E$ cannot be useful. 
\end{restatable}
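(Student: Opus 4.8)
The plan is to assume for contradiction that every edge of the set $E$ is a useful $2$-OD edge and that every consecutive pair $(e_i,e_{i+1})$ in the cyclic order of $E$ (which exists by Claim~\ref{c3}) is a Type-2 pair, and to derive a configuration of defining circles around $v$ that cannot close up. First I would pin down the defining circles. Recall that each $e_i$ is oriented counter-clockwise around $v$, so $v$ lies in the interior of the right defining circle of $e_i$. Since $e_i$ and $e_{i+1}$ cross, their endpoints lie on opposite sides of each other; let $b_i$ be the endpoint of $e_i$ lying on the side of $e_{i+1}$ \emph{opposite} to $v$, and $a_i$ the endpoint on the same side as $v$. As $(e_i,e_{i+1})$ is a Type-2 pair, Corollary~\ref{cor:intersectingEdges} puts both endpoints of $e_i$ into the defining circles of $e_{i+1}$, and tracking sides gives $b_i$ in the right defining circle of $e_{i+1}$ and $a_i$ in the left one. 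Hence the right defining circle of $e_{i+1}$ contains the two \emph{distinct} interior points $v$ and $b_i$ (distinct because $b_i$ is an endpoint of a separation edge and therefore not $v$); since $e_{i+1}$ is useful of order $2$, Lemma~\ref{lemma:useful} forces this circle to contain \emph{exactly} $\{v,b_i\}$ in its interior. Re-indexing, the right defining circle $D_i$ of $e_i$ contains exactly $\{v,b_{i-1}\}$.

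Next I would propagate these equalities around the cycle to reach a contradiction. The key point is that $D_i$ is a chord-circle of $e_i$ (it passes through both endpoints of $e_i$, so $b_i$ lies on $D_i$), whereas $b_i$ lies \emph{strictly inside} $D_{i+1}$, and $v$ lies strictly inside every $D_j$. Walking $i\mapsto i+1$ around the cycle, this says that each right defining circle is ``tucked strictly inside'' the previous one in the neighbourhood of $v$. To turn ``tucked strictly inside'' into a monotone quantity I would compare powers of the point $v$ (or of the crossing point $e_i\cap e_{i+1}$) with respect to the successive circles $D_i$, using the Intersecting Chord Theorem (Lemma~\ref{lemma:chord}) on chords of $D_{i+1}$ --- the chord $e_{i+1}$ itself and the chord through $v$ and $b_i$ --- to obtain a strict inequality at each step. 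Since after $m$ steps we return to $D_1$, a strictly monotone quantity yields the contradiction; alternatively one can argue directly that closing the cyclic chain of circles that are strictly nested near $v$ forces a third interior point into some $D_i$, again contradicting usefulness.

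Finally, the all-Type-1 case needs no separate treatment: since a pair $(e,f)$ of intersecting edges is a Type-1 pair exactly when $(f,e)$ is a Type-2 pair, reversing the cyclic order of $E$ --- which leaves each edge's counter-clockwise orientation around $v$ unchanged --- converts an all-Type-1 cycle into an all-Type-2 cycle, to which the above applies. Combined with Claim~\ref{c3} (the separation edges of a minimal set $E$ form a cycle) and Claim~\ref{c4} (a Type-2 pair immediately followed by a Type-1 pair in the cycle is impossible under our standing assumption, which rules out every cycle whose pair-types are mixed), this eliminates the only remaining case and completes the proof of Theorem~\ref{theorem:fixedEdges}.

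I expect the technical heart, and main obstacle, to be the middle step: identifying the correct monotone invariant along the cycle and verifying that the chord-theorem inequalities are \emph{strict} at every step, so that the cyclic closure is genuinely violated. The side-and-orientation bookkeeping (consistently distinguishing $a_i$ from $b_i$ and the left from the right defining circle of each $e_i$ as one walks around $v$) is also delicate, and near-degenerate configurations --- such as $b_i=a_{i-1}$, or consecutive crossing points nearly coinciding --- must be handled so that none of the strict inequalities collapse.
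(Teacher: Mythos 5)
Your proposal does not contain a proof of the key step; it contains a plan whose central ingredient is missing. After correctly observing that for an all--Type-2 cycle each right defining circle of $e_{i+1}$ contains $v$ and one endpoint of $e_i$ (and hence, by usefulness and Lemma~\ref{lemma:useful}, exactly these two points), you assert that the circles are ``tucked strictly inside'' one another near $v$ and that some power-of-a-point quantity, controlled via Lemma~\ref{lemma:chord}, decreases strictly around the cycle. No such invariant is defined, and nothing you establish implies any nesting: the endpoint of $e_i$ that lies strictly inside the right defining circle of $e_{i+1}$ lies \emph{on} the right defining circle of $e_i$ (that circle passes through both endpoints of $e_i$), which gives no containment relation between the two circles in the neighbourhood of $v$ and no monotone comparison of the power of $v$ with respect to them. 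You yourself flag this as the ``main obstacle,'' i.e.\ the technical heart is left open. It is also doubtful that a single global monotone quantity exists: the paper's proof of Claim~\ref{c5} is a case analysis over three consecutive edges $e_1,e_2,e_3$ (whether $e_1$ and $e_3$ share an endpoint, on which side of $e_2$, which of the two right-hand endpoints is ``disk-closer'' to $e_2$, and whether $s$ lies in the right defining circle of $e_2$), and in most cases the contradiction is a \emph{third} point forced into a defining circle by growing a circle through two fixed points -- not a cyclic monotonicity argument. Crucially, in the one irreducible configuration (Case~2.c, where $E$ consists of exactly three pairwise Type-2 edges) no defining circle acquires a third point; instead the paper shows the configuration cannot exist at all, by constructing an auxiliary circle and using the Intersecting Chord Theorem to force three circles' pairwise radical chords through one point, contradicting the required crossing pattern. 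Your sketch has no mechanism that would handle this case.

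A secondary, fixable issue: your side bookkeeping is reversed. Since the right defining circle of a separation edge contains $v$ and, by Observation~\ref{obs:delaunayEdge}, the right defining circle collects points lying \emph{left} of the edge, the point $v$ lies left of each oriented separation edge; hence for a Type-2 pair $(e_i,e_{i+1})$ it is the endpoint of $e_i$ on the \emph{same} side as $v$ (your $a_i$) that lands in the right defining circle of $e_{i+1}$, while the endpoint opposite $v$ (your $b_i$) lands in the left one. This does not by itself break your plan, but it shows the orientation tracking on which any cyclic argument would rest has not been verified. The reduction of the all--Type-1 case to the all--Type-2 case by reversing the cyclic order is fine and matches how the paper invokes symmetry in the proof of Theorem~\ref{theorem:fixedEdges}.
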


Claim \ref{c5} implies that, if the consecutive edges of $E$ only form Type-1 pairs (or symmetrically only Type-2 pairs), then not all edges in $E$ can be separation edges. It follows that we must have at least one Type-1 and one Type-2 pair, if all edges in $E$ are separation edges. Since the set is cyclic, we must have a Type-2 pair followed by a Type-1 pair, but Claim \ref{c4} shows that in this case not all edges in $E$ can be separation edges.

All in all, Claim \ref{c4} and Claim \ref{c5} imply that at least one of the edges in $E$ cannot be a separation edge. This is a contradiction to the definition of $E$.
\end{proof}

\paragraph*{Proofs of the claims}

\claimeins*
\begin{proof}
Let $\overline{xy}$ be an edge with both endpoints outside of $N$ that intersects a connecting edge. Then $\overline{xy}$ must intersect at least three Delaunay edges that are connected to three individual vertices on one side; see Figure~\ref{fig:delNeighbourhood}. Hence, $\overline{xy}$ cannot be useful by Corollary~\ref{cor:delaunayEdge} and Lemma~\ref{lemma:useful}. Consequently, it cannot be a separation edge.
\end{proof}

\claimzwei*
\begin{proof}
We first assume that the edges do not share an endpoint. Let $\overline{xy}$ and $\overline{sr}$ be edges as described in the claim. Since both edges start outside of $N$, they both intersect at least two Delaunay edges. Thus,  both of them have at least order two by Corollary \ref{cor:delaunayEdge}. This implies for both edges that at least one of the defining circles contains two points. Hence, it is sufficient to show that one of these circles must contain an additional point.

For the moment we assume that one of the points $x$ or $s$ is a vertex of a Delaunay triangle that has $\overline{uw}$ as an edge. Without loss of generality let $x$ be the vertex of the Delaunay triangle. This implies that $s$ must be outside of the triangle and hence, intersect one of the additional Delaunay edges $\overline{xu}$ or $\overline{xw}$. Now we have two cases: Either $\overline{sr}$ intersects the Delaunay edge $\overline{xw}$ (Fig.~\ref{fig:claim1a} (a)) or the Delaunay edge $\overline{xu}$ (Fig.~\ref{fig:claim1a} (b)):
\begin{figure}
\begin{subfigure}[t]{0.33\textwidth}
\centering
\includegraphics[width =1\textwidth ]{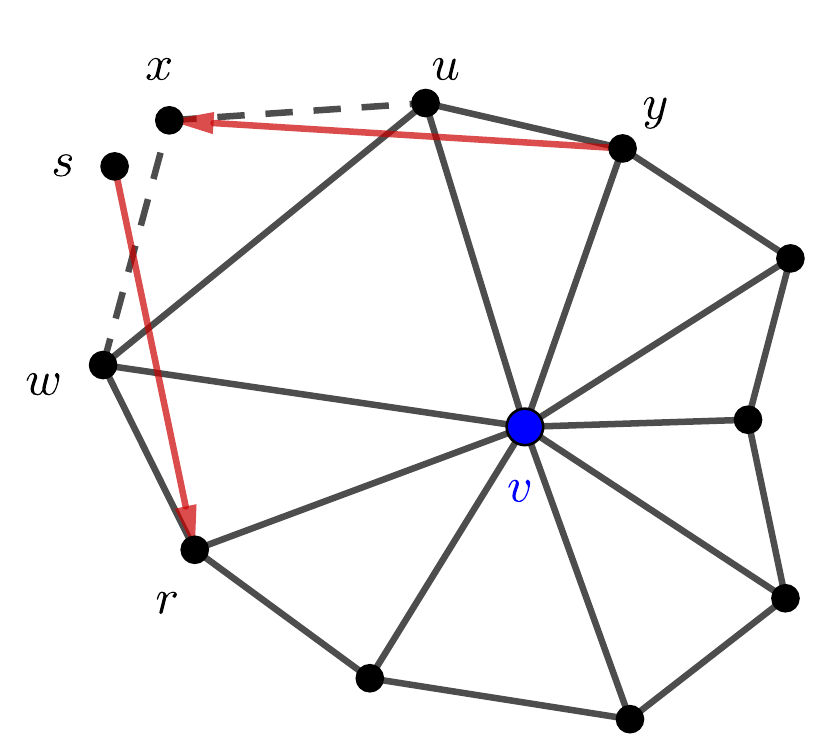}
\subcaption{$\overline{sr}$ and $\overline{xy}$ do not intersect}
\end{subfigure}\hfill
\begin{subfigure}[t]{0.33\textwidth}
\centering
\includegraphics[width =1\textwidth ]{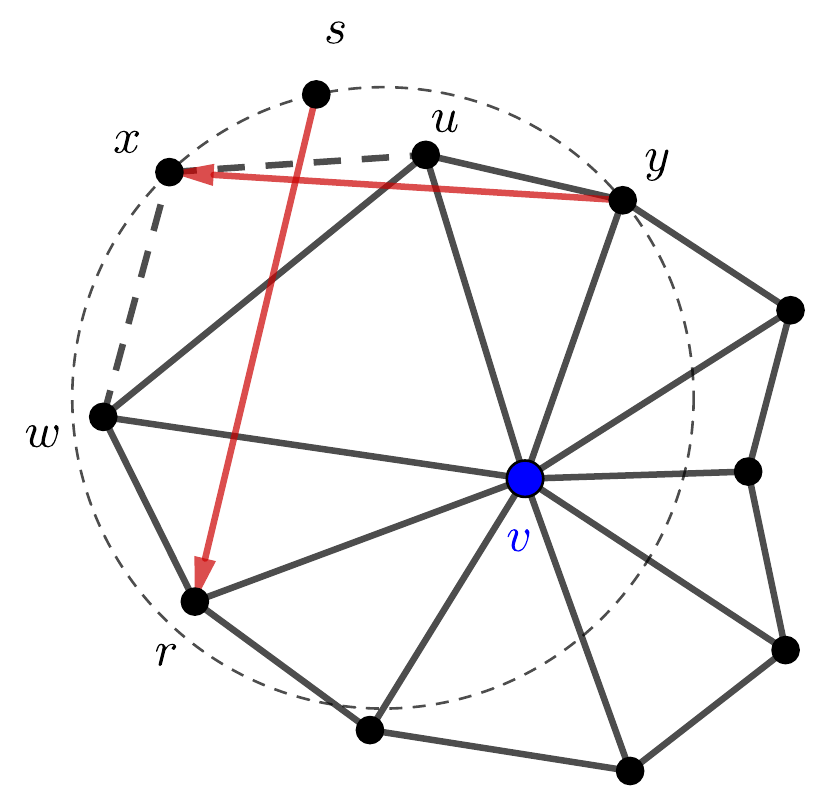}
\subcaption{$\overline{sr}$ and $\overline{xy}$ do intersect}
\end{subfigure}
\caption{The possible positions of the endpoints outside of the Delaunay neighbourhood}\label{fig:claim1a}
\end{figure}
\begin{description}
	\item[Case 1:] If $\overline{sr}$ intersects $\overline{xw}$, then three Delaunay edges are intersected such that they have three different endpoints on one side of  $ \overline{sr}$; the three points are $x,u$ and $v$, since they all are connected to $w$. Thus, the defining circle on the opposite side must contain these three points by Corollary \ref{obs:delaunayEdge} and therefore, the edge $\overline{sr}$ cannot be useful; see Figure~\ref{fig:claim1a}~(a).
	\item[Case 2:] If $\overline{sr}$ intersects $\overline{xu}$, then the edges $\overline{sr}$ and $\overline{xy}$ must intersect; see Figure \ref{fig:claim1a} (b). Without loss of generality Corollary \ref{cor:intersectingEdges} implies that the defining circles of $\overline{xy}$ each contain one of the points $s$, $r$. Thus, the right defining circle of $\overline{xy}$ must contain $v$, either $u$ or $w$ and either $s$ or $r$. None of these three points can be identical, which implies that $\overline{xy}$ cannot be useful by Lemma \ref{lemma:useful}.
\end{description}
Both cases imply that one of the  edges cannot be a separation edge. Consequently, the claim is true, if we assume that the edges do not share an endpoint.

Note that we assumed that either $x$ or $s$ is the endpoint of a Delaunay triangle that uses $\overline{uw}$. This may not be the case, i.e., there may be another vertex $z$ that is the endpoint of the triangle, but similar arguments as before can be used with respect to $T_{uwz}$ to show that in this case one separation edge cannot be useful.

\medskip

It remains to show that the claim also holds, if the edges share an endpoint outside of the Delaunay neighbourhood $N$. In Figure~\ref{fig:claim2b} (a) this situation is illustrated. Let the endpoints $x$ and $s$ be identical and outside of $N$. We assume that $\overline{xy}$ only intersects $\overline{uv}$ and $\overline{xr}$ only intersects $\overline{wv}$. It may happen that additional connecting edges, e.g., $\overline{yv}$ or $\overline{rv}$, may be intersected, but the following proof for one intersection on each side can easily be adapted for this case.

In Figure~\ref{fig:claim2b} (b) only the relevant vertices and edges are depicted. Note that all of the black edges are Delaunay edges and the black circle is the circle $C(u,w,v)$ of the Delaunay triangle $T_{uwv}$. Hence, the circle should be empty. We now argue that the circle $C(x,y,r)$ which is given in red must contain $u$ or $w$.
\begin{figure}
\begin{subfigure}[t]{0.3\textwidth}
\centering
\includegraphics[width =1\textwidth ]{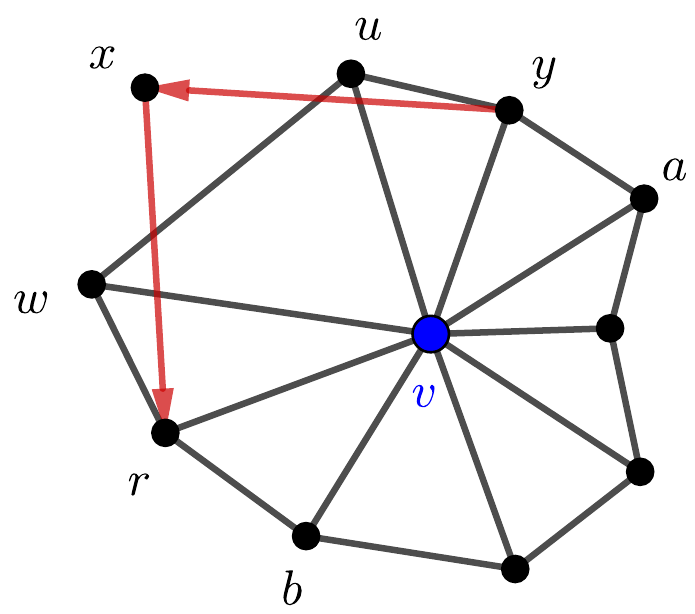}
\subcaption{the endpoint outside of $N$ is identical}
\end{subfigure}\hfill
\begin{subfigure}[t]{0.3\textwidth}
\centering
\includegraphics[width =1\textwidth ]{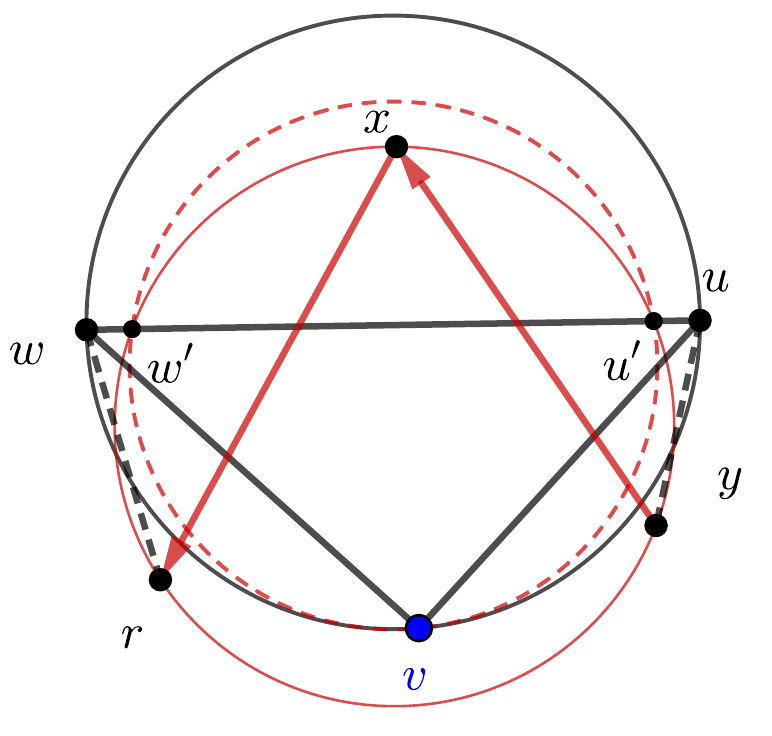}
\subcaption{$u$ and $w$ are outside of $C(x,y,r)$}
\end{subfigure}\hfill
\begin{subfigure}[t]{0.3\textwidth}
\centering
\includegraphics[width =1\textwidth ]{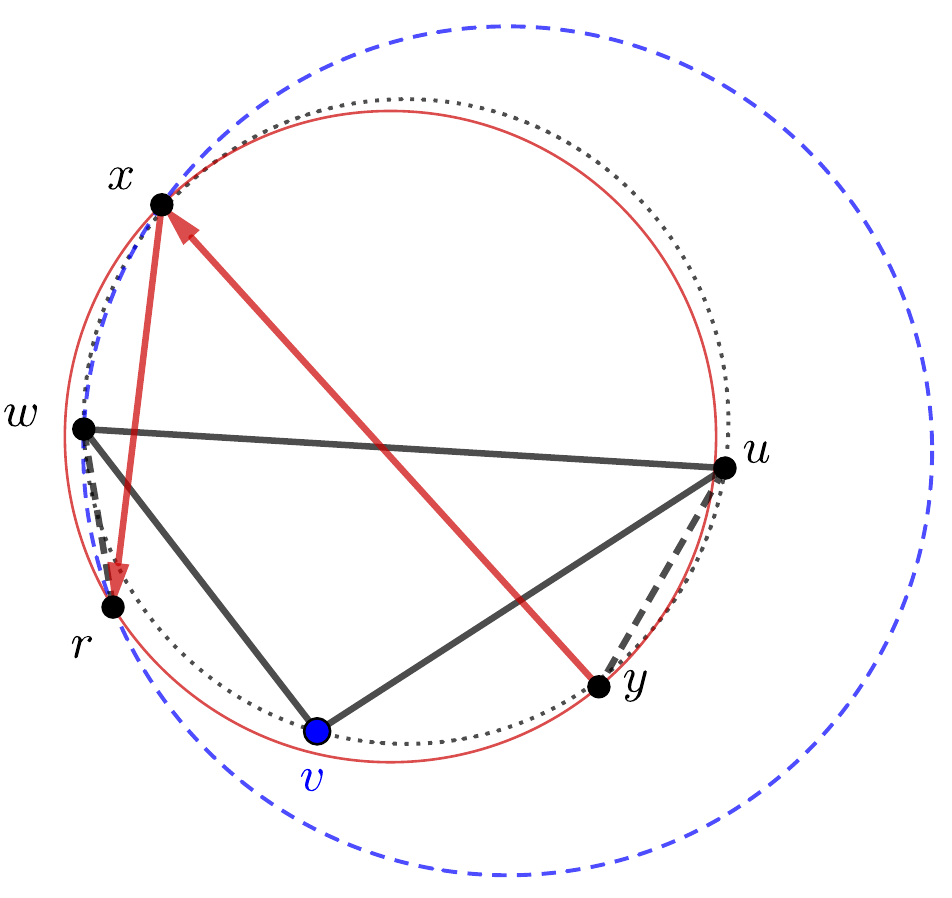}
\subcaption{$u$ is inside of $C(x,y,r)$}
\end{subfigure}
\caption{The possible position of the endpoint outside of the Delaunay neighbourhood}\label{fig:claim2b}
\end{figure}

We assume that neither $u$ nor $w$ is inside of $C(x,y,r)$. This implies that $\overline{uw}$ must intersect $C(x,y,r)$ twice in the points $u'$ and $w'$, since $\overline{xy}$ and $\overline{xr}$ intersect $\overline{uw}$; see Figure~\ref{fig:claim2b}(b). The vertices $y$ and $r$ must be outside of $C(u,w,v)$, since it is a circle of a Delaunay triangle. If we now move the red circle $C(x,y,r)$ until it touches $v$ while anchoring the circle on the points $u'$ and $w'$, we get the red dotted circle $C(v,u',w')$. Note that this circle must contain $x$, since it must contain all of $C(x,y,r)$ right of $\overline{uw}$. Since $C(u,w,v)$ and $C(u',w',v)$ share one defining point and $u'$ as well as $w'$ are inside of $C(u,w,v)$, we know that $C(u,w,v)$ must contain all of $C(v,u',w')$. In particular it must contain the point $x$, but we assumed that $C(u,w,v)$ is a Delaunay circle. This is a contradiction.

We now know that at least one of the points $u$ and $w$ must be inside the circle $C(x,y,r)$. Without loss of generality we can assume that $w$ is inside $C(x,y,r)$; see Figure~\ref{fig:claim2b} (c). We can now look at the defining circle of $\overline{xr}$ given by $C(w,x,r)$ depicted in blue. We know that $C(w,x,r)$ must contain $v$ and $u$ because of the Delaunay edges that are intersected by $\overline{xy}$ and Observation \ref{obs:delaunayEdge}. Moreover, we know that $\ca^{xr}_w$ contains all of $C(x,y,r)$ that is left of $\overrightarrow{xr}$, since $w$ is right of $\overrightarrow{xr}$ and in the interior of $C(x,y,r)$. In particular $\ca^{xr}_w$ contains $y$. Thus, $C(w,x,r)$ contains at least $v,u$ and $y$. Lemma \ref{lemma:useful} implies that $\overline{xy}$ cannot be useful. 

This proves that at least one of the edges cannot be a separation edge which completes the proof of the claim.
\end{proof}

\claimdrei*

\begin{proof}
We pick edges from $E$ in an iterative way to show that we get a cycle. We start with an arbitrary vertex $u_1$ on the boundary of the Delaunay neighbourhood $N$ of $v$. Then, we pick a separation edge $e_1\in E$ that intersects $\overline{u_1 v}$. Next, we consider the vertex $u_2$, which is the next vertex after $u_1$ on the boundary of $N$ in counter-clockwise order, such that the connecting edge $\overline{u_2v}$ is not already intersected. We have two possible cases how $e_1$ is positioned with respect to $u_2$. Either $e_1$ has $u_2$ as an endpoint or $e_1$ intersects the edge $\overline{u_2u_{pre}}$ where $u_{pre}$ is the predecessor of $u_2$ in the counter-clockwise order. Note that the vertex $u_1$ can also be $u_{pre}$.

Next, we pick a separation edge $e_2\in E$ that intersects the connecting edge $\overline{u_2v}$ and need to argue that in both cases $e_1$ is also intersected by $e_2$.
\begin{description}
	\item[Case 1:] Let $u_2$ be an endpoint of $e_1$. Every edge $e_2$ that intersects $\overline{u_2v}$ and has endpoints on the boundary of $N$ or outside of $N$ must also intersect the edge $e_1$; see Figure~\ref{fig:cycleSeparation}~(a). Otherwise $e_1$ would be redundant, since $e_2$ would intersect all connecting edges that $e_1$ intersects and additionally $\overline{vu_2}$.
\begin{figure}
\begin{subfigure}[t]{0.27\textwidth}
\centering
\includegraphics[width =.9\textwidth ]{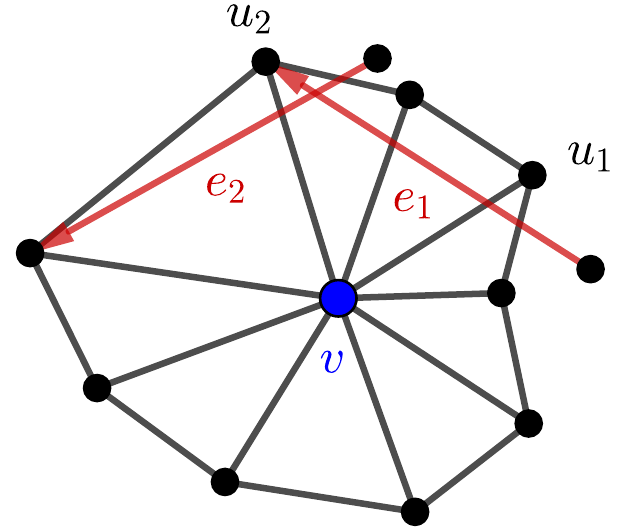}
\subcaption{$u_2$ is an endpoint of $e_1$}
\end{subfigure}\hfill
\begin{subfigure}[t]{0.27\textwidth}
\centering
\includegraphics[width =.9\textwidth ]{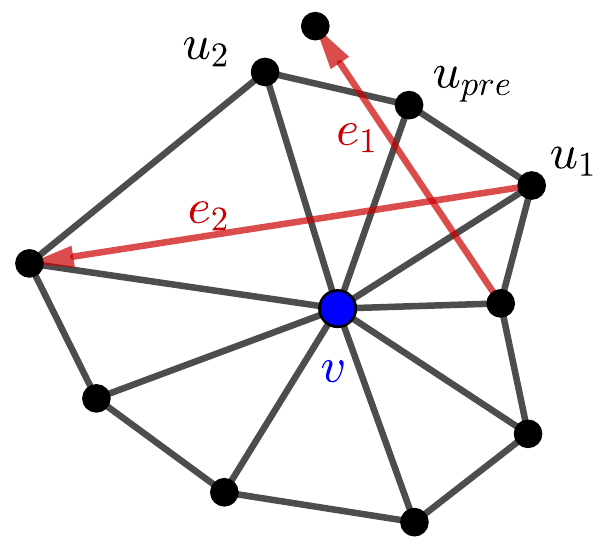}
\subcaption{$e_1$ intersects $\overline{u_2u_{pre}}$}
\end{subfigure}\hfill
\begin{subfigure}[t]{0.27\textwidth}
\centering
\includegraphics[width =.8\textwidth ]{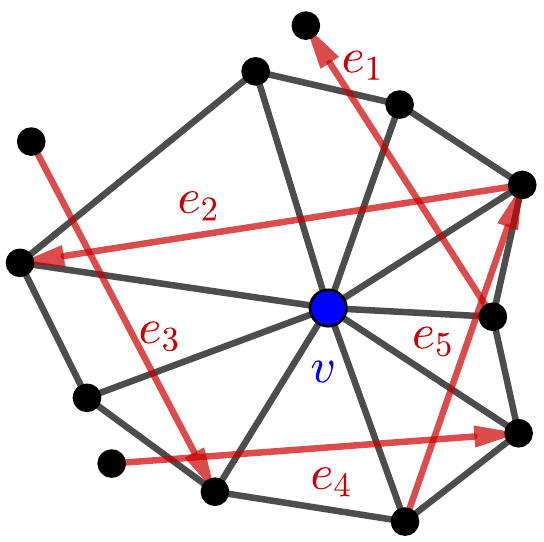}
\subcaption{a (minimal) cyclic set $E$ of separation edges}
\end{subfigure}
\caption{The separation edge cycle construction}\label{fig:cycleSeparation}
\end{figure}	
	\item[Case 2:] Let $e_1$ intersect $\overline{u_2u_{pre}}$. Claim \ref{c1} and Claim \ref{c2} imply that $e_2$ cannot intersect $\overline{u_2u_{pre}}$, too. Thus one endpoint must be at $u_{pre}$ or even earlier in the counter clockwise order (possibly it may also be outside of $N$, but then it must intersect a boundary edge that comes before $u_{pre}$ in the counter-clockwise order). Hence, $e_2$ must intersect $e_1$; see Figure~\ref{fig:cycleSeparation}~(b).	
\end{description}
In this way we can add all of the edges iteratively which always results in one of the two cases and the last edge must intersect the second to last separation edge, but also the first separation edge we picked. Consequently, we get a cycle.
\end{proof}

\claimvier*
\begin{proof}
Without loss of generality $i=1$, i.e., the edge pair $(e_1,e_2)$ is a Type-2 pair and $(e_2,e_3)$ is a Type-1 pair. We have three possible cases with respect to the endpoints of edges:
(1) all edges have distinct endpoints, (2) $e_1$ and $e_3$ share an endpoint that is right of $e_2$ and (3) $e_1$ and $e_3$ share an endpoint left of $e_2$. The cases are depicted in Figure~\ref{fig:type21}; Case 1 corresponds to (a), Case 2 corresponds to (b) and Case 3 corresponds to (c) and (d). Note that $e_2$ cannot share an endpoint with either $e_1$ or $e_3$. It remains to show that in all of the cases there exists a separation edge in $E$ that is not useful.
\begin{figure}[!tb]
\begin{subfigure}[t]{0.20\textwidth}
\centering
\includegraphics[width =1\textwidth ]{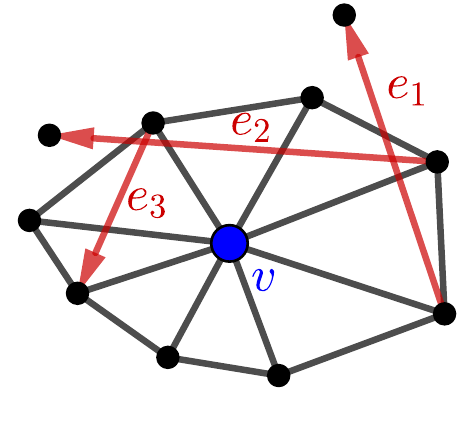}
\subcaption{$e_1$ and $e_3$ do not share an endpoint}
\end{subfigure}\hfill
\begin{subfigure}[t]{0.20\textwidth}
\centering
\includegraphics[width =1\textwidth ]{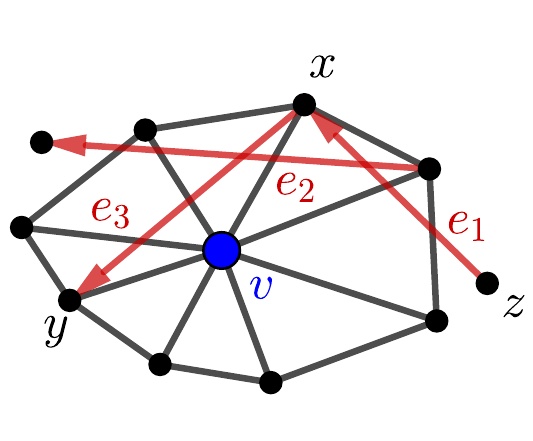}
\captionsetup{textformat=simple}
\subcaption{$e_1$ and $e_3$ share an endpoint right of $e_2$}
\end{subfigure}\hfill
\begin{subfigure}[t]{0.20\textwidth}
\centering
\includegraphics[width =1\textwidth ]{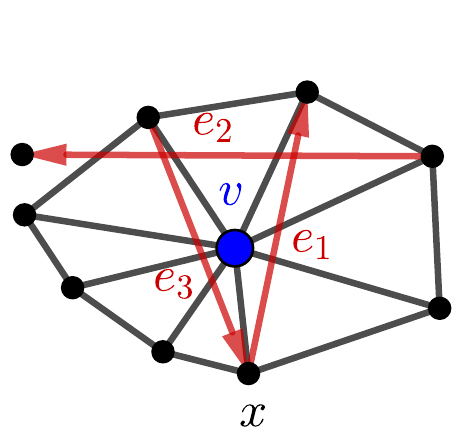}
\captionsetup{textformat=simple}
\subcaption{$e_1$ and $e_3$ share an endpoint left of $e_2$, such that $v$ is in the interior of the intersections}
\end{subfigure}\hfill
\begin{subfigure}[t]{0.23\textwidth}
\centering
\includegraphics[width =1\textwidth ]{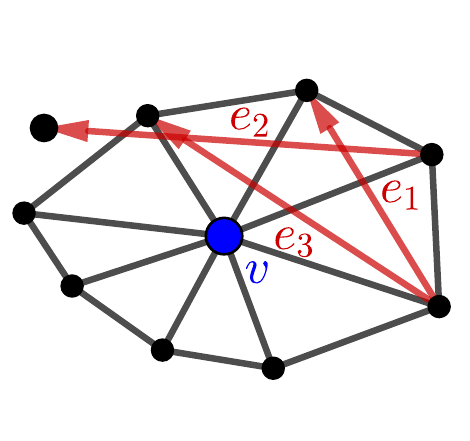}
\captionsetup{textformat=simple}
\subcaption{$e_1$ and $e_3$ share an endpoint left of $e_2$, such that $v$ is not in the interior of the intersections}
\end{subfigure}
\caption{The different cases of three consecutive edges $e_1,e_2,e_3\in E$. Note that case (d) cannot happen, since the redundant edge $e_1$ cannot be in $E$.}\label{fig:type21}
\end{figure}
\begin{description}
	\item[Case 1:] Let all edges not share any endpoints. Since $(e_1,e_2)$ is a Type-2 pair, the right defining circle of  $e_2$ must contain an endpoint of $e_1$ and since
 $(e_2,e_3)$ is a Type-1 pair the right defining circle of $e_2$, must contain an endpoint of $e_3$. Additionally, the right defining circle must contain the point $v$, since $e_2$ is a separation edge. Thus, the right defining circle contains three points and $e_2$ cannot be useful be Lemma \ref{lemma:useful}.
 
 	\item[Case 2:] Let the edges $e_1=\overline{zx}$ and $e_3=\overline{xy}$ share the endpoint $x$ right of $e_2$. Then the same argument as in Case 1 can be applied and the right defining circle of $e_2$ must contain $z,y$ and $v$ and cannot be useful.
 	
 	\item[Case 3:] Let $e_1$ and $e_3$ share the endpoint $x$ left of $e_2$. We have two sub-cases: The vertex $v$ can be outside of the intersection of the edges $e_1,e_2,e_3$ (Fig.~\ref{fig:type21} (d)) or it can be inside of the intersection (Fig.~\ref{fig:type21} (c)).
 	%\color{blue}(To define what inside the intersection means we can do the following: Let $x$ denote the point in which $e_1$ and $e_3$ coincide. We can orientate the edges as follows: $e_1$ is oriented from $x$ to the intersection point with $e_2$, $e_2$ is oriented from the intersection point with $e_1$ to the intersection point with $e_3$ and $e_3$ is oriented from the intersection point with $e_2$ to $x$. If this orientation coincides with the standard orientation of separation edges, i.e., every edge is oriented such that $v$ is left of the edge, then we call $v$ inside of the intersection; see Figure~\ref{fig:type21} (c) and (d) for the visualization of this)\color{black}. 
 	If $v$ is outside of the intersection, then one of the edges must be redundant (In Figure~\ref{fig:type21} (d) this would be $e_1$). Thus, this case cannot happen, if $E$ is a minimal set.

 	Consequently, we only further investigate the case where $v$ is in the interior of the intersection. Figure~\ref{fig:type21_special} depicts this case with only the involved edges and points. The edges of the two pairs are $e_1=\overline{xb}$, $e_2=\overline{zy}$ and $e_3=\overline{ax}$. Since the connecting edge $\overline{vx}$ cannot be intersected by any of the separation edges $e_1,e_2,e_3$, exactly one additional separation edge $e_4=\overline{uw}$ must exist in $E$.

 	Next, we need to discuss how $e_4$ may be positioned with respect to $e_3$ and $e_1$. Let $u_1,...,u_k$ be the endpoints of the connecting edges that $e_3$ intersects in counter-clockwise order. If we assume that $a$ is right of $e_4$, it follows that $e_4$ must also be a separation edge for $u_1,...,u_k$ and at least the additional connecting edge $\overline{xv}$. Thus, the edge $e_3$ would be redundant, if $a$ is right of $e_4$. In the same way it follows that $b$ cannot be right of $e_4$. Note that by the same arguments  $e_4$ cannot have $a$ or $b$ as an endpoint. Thus $a$, $b$ and also $v$ must be left of $e_4$. 
 \begin{figure}[!tb]
\centering
\includegraphics[width=.33\textwidth]{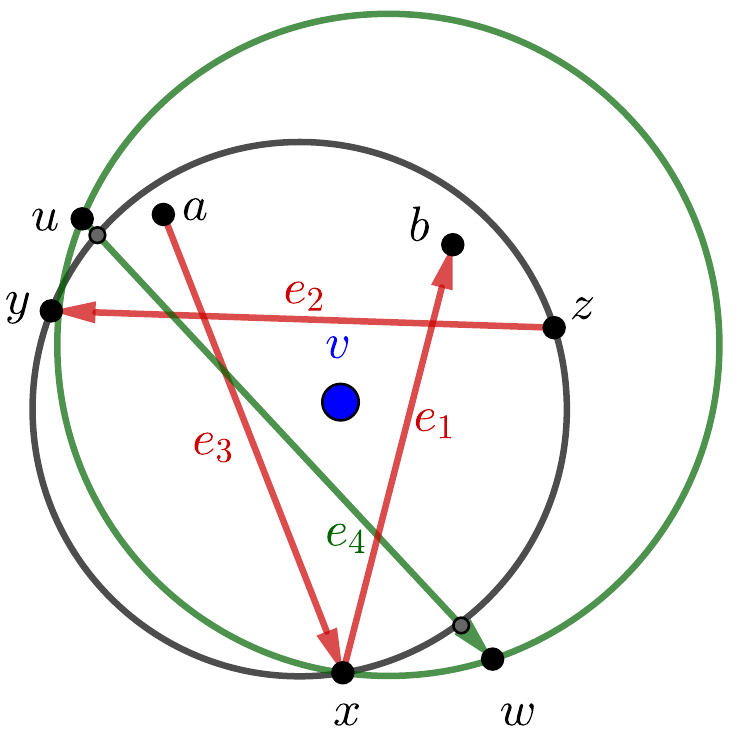}
\caption{A Type-2 pair followed by a Type-1 pair that share an endpoint left of $e_1$}\label{fig:type21_special}
\end{figure} 

We now show that either $e_2$ or $e_4$ cannot be useful. Let $s_l$ and $s_r$ denote the defining points of $e_2$. We know that $\ca^{zy}_{s_r}$ contains $x$ because of the Type-2 pair $(e_1,e_2)$. Additionally, we know that $v$ is in $\ca^{zy}_{s_r}$, since $\overline{zy}$ is a separation edge. Thus, $\ca^{zy}_{s_r}$ contains two points. Since $(e_1,e_2)$ is a Type-2 pair, we know that $\ca^{zy}_{s_l}$ contains $b$ and since $(e_2,e_3)$ is a Type-1 pair, we know that $\ca^{zy}_{s_l}$ contains $a$. Consequently, both defining circles of $e_2$ contain at least two points. This implies that, if either $u$ or $w$ is inside of $C(x,y,z)$, then $e_2$ cannot be useful, since $\ca^{zy}_{x}\subset \ca^{zy}_{s_l}$ and  $\ca^{zy}_{a}\subset\ca^{zy}_{s_r}$ (and the part of the circle $C(x,y,z)$ that is left of $e_2$ is contained in $\ca^{zy}_{a}$, since $a$ is inside the circle $C(x,y,z)$).

	Now we assume that $e_2$ is useful. This implies that $u$ as well as $w$ are outside of $C(x,y,z)$. Consequently, the circle $C(x,y,z)$ must be intersected by $e_4$ twice and $x$ must be right of $e_4$, since $e_4$ is a separation edge for $\overline{xv}$. Thus, the circle $C(x,u,w)$ must contain all of $C(x,y,z)$ that is left of $e_4$; see Figure~\ref{fig:type21_special}. We previously argued that $a,b$ and $v$ are left of $e_4$. Additionally, we know that $C(x,y,z)$ contains $a,b$ and $v$. Thus, the right defining circle of $e_4$ contains the three points and $e_4$ cannot be useful.
\end{description}
\end{proof}

\claimfunf*
\begin{proof}
Let $e_1=\overline{uw}, e_2=\overline{xy}$ and $e_3=\overline{zs}$ be three consecutive separation edges in $E$ such that $(e_1,e_2)$ as well as $(e_2,e_3)$ are Type-2 pairs. The separation edges $e_1$ and $e_3$ must have one endpoint left and one right of $e_2$, because of Claim \ref{c3}. Let $z$ and $w$ be the two endpoints that are right of $e_2$. We  now discuss all possible positions of edges and endpoints with respect to each other and show that in all cases one of the edges cannot be a useful separation edge. 

We have four major distinctions:
\begin{description}
	\item [Case 1] No edges share endpoints and the endpoint $s$ of $\overline{zs}$ is left of $\overline{uw}$, i.e., $e_1$ and $e_3$ do not intersect left of $e_2$ (Fig.~\ref{fig:C5Case1}).
	\item [Case 2] No edges share endpoints and the endpoint $s$ of $\overline{zs}$ is right of $\overline{uw}$, i.e.,  $e_1$ and $e_3$ intersect left of $e_2$ (Fig.~\ref{fig:C5Case2}).
	\item [Case 3] The edges $\overline{uw}$ and $\overline{zs}$ share one endpoint right of $e_2$ (Fig.~\ref{fig:C5Case3}).
	\item [Case 4] The edges $\overline{uw}$ and $\overline{zs}$ share one endpoint left of $e_2$ (Fig.~\ref{fig:C5Case4}).
	
\end{description}

We say $z$ is disk-closer to $e_2$ than $w$, if $\ca_{w}^{xy}\subset\ca_{z}^{xy}$. Without loss of generality we can assume that the right defining circle of $e_2$ is given by the disk-closer of the two endpoints that are right of $e_2$. If this was not the case, there would be an additional point $s_r$ that is disk-closer to $e_2$. The circle given by $C(x,y,s_r)$ must contain all of $\ca_{w}^{xy}$($\ca_{z}^{xy}$), because it is defining. Thus, it must contain all points that are left of $e_2$ and in $\ca_{w}^{xy}$($\ca_{z}^{xy}$). Consequently, if there are three points in the right defining circle while ignoring $s_r$, there must also be three or more points in the right defining circle, if we also consider $s_r$.

For all cases we must consider four sub-cases: (a) $w$ is disk-closer to $e_2$ than $z$ and $s$ is in the right defining circle of $e_2$, (b) $w$ is disk-closer to $e_2$ than $z$ and $s$ is not in the right defining circle of $e_2$, (c) $z$ is disk-closer to $e_2$ than $w$ and $s$ is not in the right defining circle of $e_2$ and (d) $z$ is disk-closer to $e_2$ than $w$ and $s$ is in the right defining circle of $e_2$.

The sub-case (d) is only mentioned for the sake of completeness. This case cannot happen, since the right defining circle is $C(x,y,z)$ and $(e_2,e_3)$ is a Type-2 pair which implies that $s$ cannot be in $C(x,y,z)$.

Next, we show that in all of the different cases one of the separation edges cannot be useful.
\begin{description}
	\item [Case 1] \textit{No edges share endpoints and the endpoint $s$ of $\overline{zs}$ is left of $\overline{uw}$.} 
	\begin{description}
		
\begin{figure}[!t]
\begin{subfigure}[t]{0.3\textwidth}
\centering
\includegraphics[width =1\textwidth ]{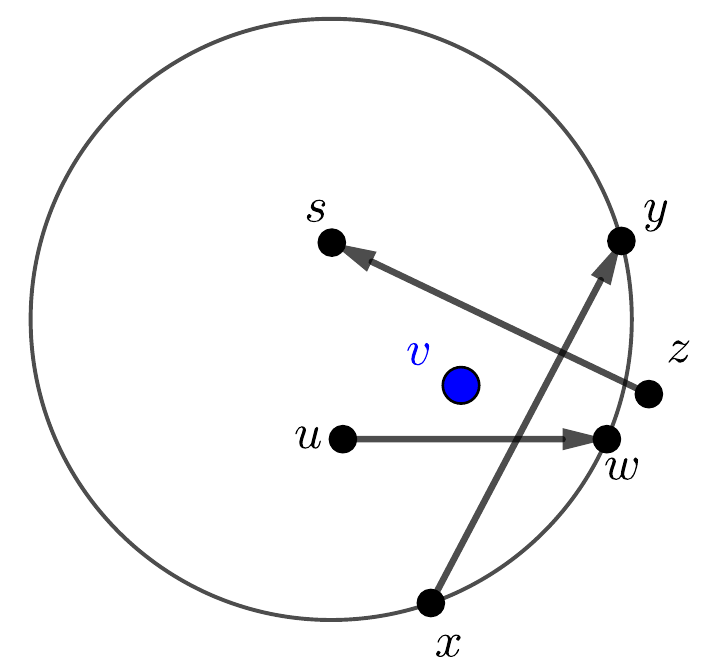}
\subcaption{$w$ is disk-closer to $e_2=\overline{xy}$ and $s$ is in the right defining circle of $e_2$}
\end{subfigure}\hfill
\begin{subfigure}[t]{0.3\textwidth}
\centering
\includegraphics[width =1\textwidth ]{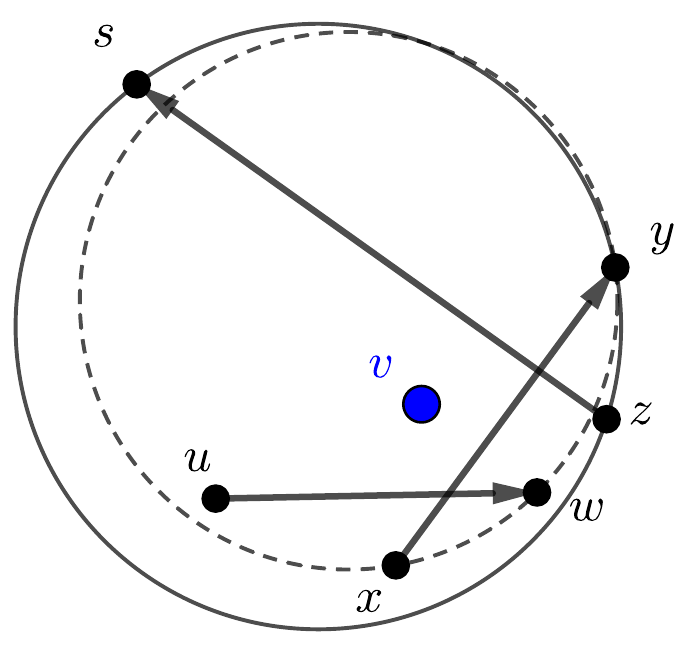}
\subcaption{$w$ is disk-closer to $e_2=\overline{xy}$ and $s$ is not in the right defining circle of $e_2$}
\end{subfigure}\hfill
\begin{subfigure}[t]{0.3\textwidth}
\centering
\includegraphics[width =1\textwidth ]{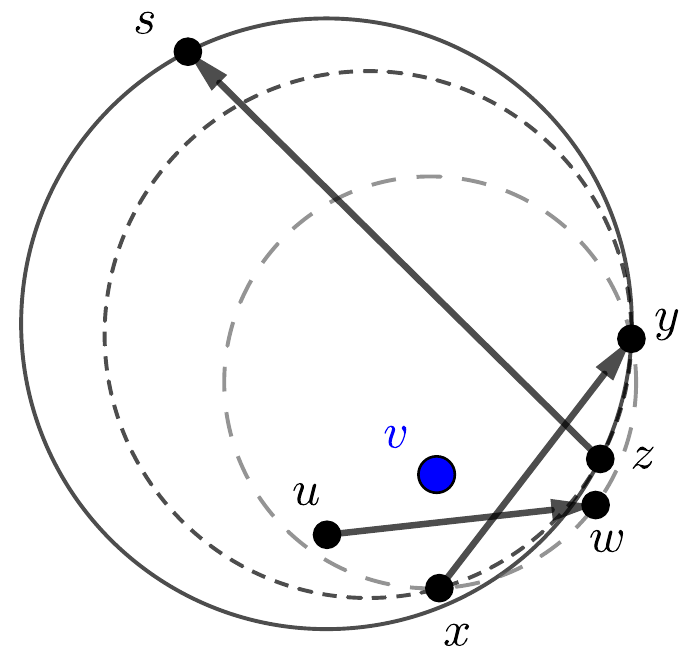}
\subcaption{$z$ is disk-closer to $e_2=\overline{xy}$ and $s$ is not in the right defining circle of $e_2$}
\end{subfigure}
\caption{Two consecutive Type-2 pairs. No edges share endpoints and the endpoint $s$ of $e_3=\overline{zs}$ is left of $e_1=\overline{uw}$.}\label{fig:C5Case1}
\end{figure}
	
		\item[Case 1.a]  \textit{$w$ is disk-closer to $e_2$ and $s$ is in the right defining circle of $e_2$ (Fig. \ref{fig:C5Case1}~(a)).}\\
		We know that $\ca_w^{xy}$ contains $u$, since $(e_2,e_1)$ is a Type-1 pair. We know that $\ca_w^{xy}$ contains $v$, since $e_2=\overline{xy}$ is a separation edge and $w$ is right 			of $e_2$. Lastly we know by assumption that $s$ is inside $\ca_w^{xy}$. Thus, the right defining circle contains three points and $e_2$ cannot be useful.
		
		\item[Case 1.b] \textit{$w$ is disk-closer to $e_2$ and $s$ is not in the right defining circle of $e_2$ (Fig. \ref{fig:C5Case1}~(b)).}\\
		We know that $\ca_w^{xy}$ contains $u$, since $(e_2,e_1)$ is a Type-1 pair. We know that $\ca_y^{zs}$ contains $x$, since $(e_3,e_2)$ is a Type-1 pair. Since $w$ is 			disk-closer to $e_2=\overline{xy}$ and $s$ is outside of the right defining circle, $e_3=\overline{zs}$ must intersect $C(x,y,w)$ twice. Thus, $C(s,z,y)$ must contain all of $C(x,w,y)$ that is left 			of $e_3$. By assumption $u$ is left of $e_3$ thus we overall get that the right defining circle of $e_3$ must contain $u$, $x$ and additionally $v$, since 				$e_3$ is a separation edge. Consequently, $e_3$ cannot	be useful.  
		
		\item[Case 1.c] \textit{$z$ is disk-closer to $e_2$ and $s$ is not in the right defining circle of $e_2$ (Fig. \ref{fig:C5Case1}~(c)).}\\
		We know that $\ca_w^{xy}$ contains $u$, since $(e_2,e_1)$ is a Type-1 pair. Since $z$ is disk-closer to $e_2=\overline{xy}$ than $w$ we know that $\ca_z^{xy}$ must also contain $u$.
		We again know that $\ca_y^{zs}$ contains $x$, since $(e_3,e_2)$ is a Type-2 pair. $C(s,y,z)$ shares two defining points with $C(x,y,z)$ and we know that the endpoint 			$s$ of $e_2$ is outside of $C(x,y,z)$. Hence, by the same arguments as in Case 1.b  we get that $x,u$ and $v$ must be in the right defining circle of $e_3=\overline{zs}$ and $e_3$ 			cannot be useful.
	\end{description}

\begin{figure}[!t]
\begin{subfigure}[t]{0.3\textwidth}
\centering
\includegraphics[width =1\textwidth ]{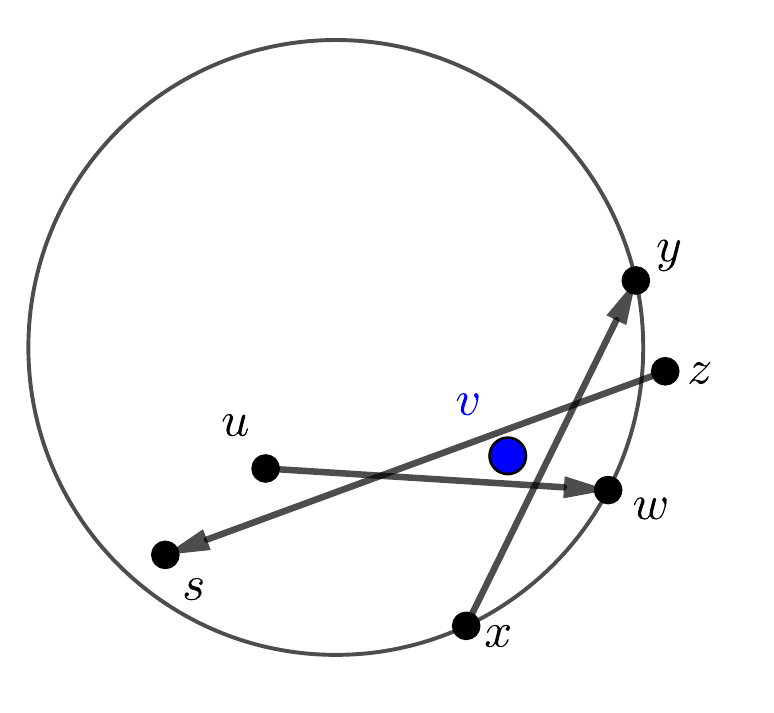}
\subcaption{$w$ is disk-closer to $e_2=\overline{xy}$ and $s$ is in the right defining circle of $e_2$}
\end{subfigure}\hfill
\begin{subfigure}[t]{0.3\textwidth}
\centering
\includegraphics[width =1\textwidth ]{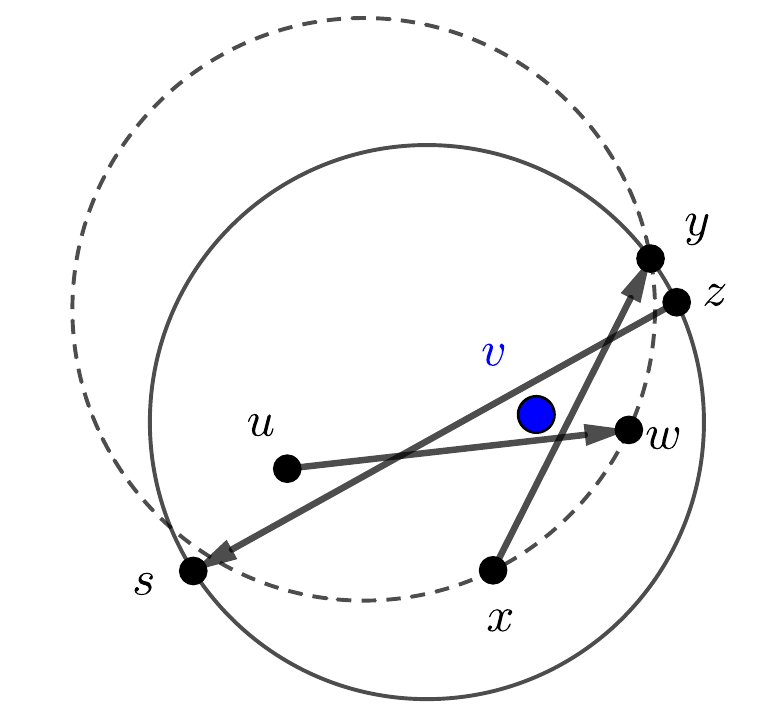}
\subcaption{$w$ is disk-closer to $e_2=\overline{xy}$ and $s$ is not in the right defining circle of $e_2$}
\end{subfigure}\hfill
\begin{subfigure}[t]{0.3\textwidth}
\centering
\includegraphics[width =1\textwidth ]{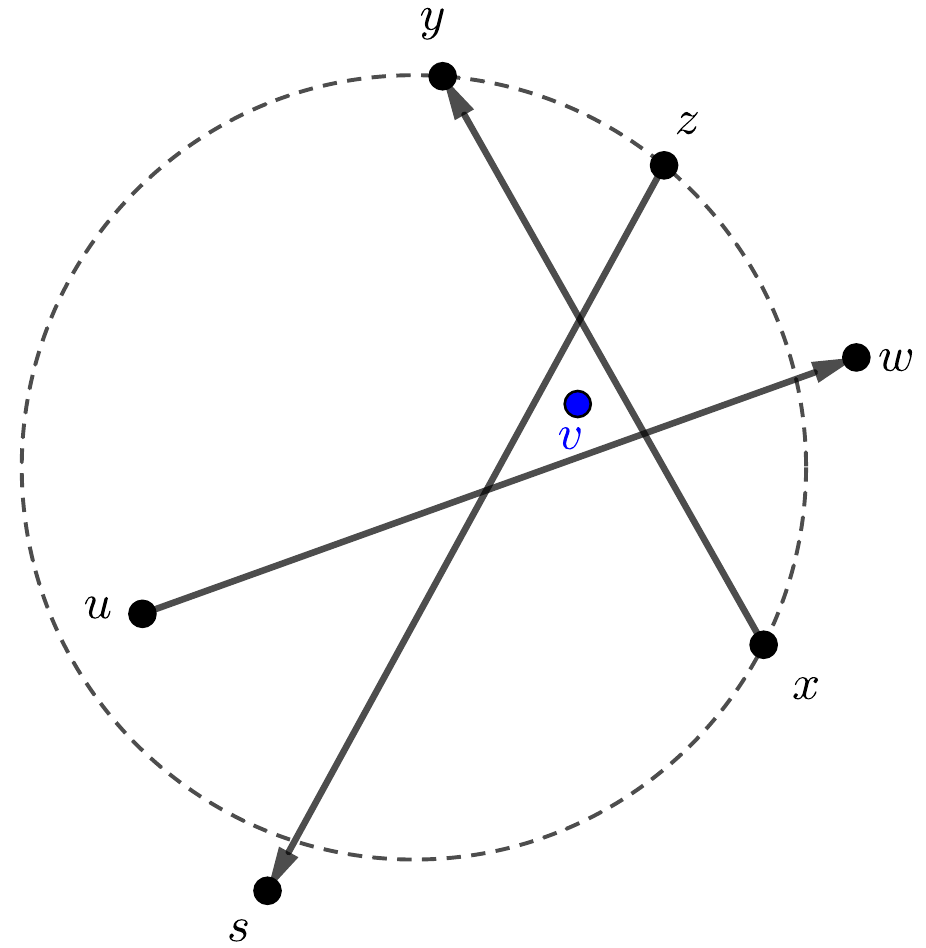}
\subcaption{$z$ is disk-closer to $e_2=\overline{xy}$ and $s$ is not in the right defining circle of $e_2$}
\end{subfigure}
\caption{Two consecutive Type-2 pairs. No edges share endpoints and the endpoint $s$ of $e_3=\overline{zs}$ is right of $e_1=\overline{uw}$.}\label{fig:C5Case2}
\end{figure}	

	\item [Case 2] \textit{No edges share endpoints and the endpoint $s$ of $\overline{zs}$ is right of $\overline{uw}$.} 
	\begin{description}
		\item[Case 2.a]  \textit{$w$ is disk-closer to $e_2$ and $s$ is in the right defining circle of $e_2$ (Fig. \ref{fig:C5Case2}~(a)).}\\
		This case can be handled exactly like Case 1.a. Consequently, $e_2=\overline{xy}$ cannot be useful.
		
		\item[Case 2.b] \textit{$w$ is disk-closer to $e_2$ and $s$ is not in the right defining circle of $e_2$ (Fig.~\ref{fig:C5Case2}~(b)).}\\
		The circle $C(x,w,y)$ is intersected by $\overline{zs}$ twice, since $z$ and $s$ are outside of it by assumption. Thus, extending it while fixing $y$ right of $e_3=\overline{zs}$ only makes the region left of $e_3$ bigger. Thus, $C(s,z,y)$ must contain both $x$ and  $w$ and both of them must be left of $e_3$ by assumption. Additionally, $v$ must 		be in the right defining circle of $e_3$, since $e_3$ is a separation edge. Thus, the right defining circle of $e_3$ must contain three points and $e_3$ cannot be 			useful.
		
		\item[Case 2.c] \textit{$z$ is disk-closer to $e_2$ and $s$ is not in the right defining circle of $e_2$ (Fig. \ref{fig:C5Case2}~(c)).}\\
		We show that this case cannot happen, i.e., there does not exist a set of edges such that the edge pairs have the correct 			types.
		
		We know that $E$ only consists of the three edges $e_1,e_2$ and $e_3$. This implies that $(e_3,e_1)$ must also be a Type-2 pair by assumption. We now show that this cannot be the case. We use a different argument to the previous cases. Let $e_2=\overline{xy}$, $e_3=\overline{zs}$ and also the point $w$ right of $e_2$ be fixed (note that $z$ must be disk-closer to $e_2$ than $w$). Let $m$ be the intersection point of $e_2$ and $e_3$. This is depicted in Figure \ref{fig:SpecialCase} (a). Since $v$ is left of $e_2$ and $e_3$, it must be in the cone given by $\overrightarrow{mx}$ and $\overrightarrow{ms}$. Thus, $e_1=\overline{uw}$ must intersect $e_2$ on the segment $\overline{mx}$ and $e_3$ on the segment $\overline{ms}$.\\
		We now show that this is not possible, if we assume that all of the separation edges are useful and all pairs of consecutive edges are of type 2. First, we argue where the second endpoint $u$ of $e_1$ may be positioned such that $(e_1,e_2)$ and $(e_3,e_1)$ can be Type-2 pairs.
		
	\noindent Since $(e_1,e_2)$ is a Type-2 pair, $(e_2,e_1)$ must be a Type-1 pair and $u$ must be inside of the circle $C(x,y,w)$. Additionally, $u$ must be outside the circle $C(s,z,w)$, since $(e_3,e_1)$ is a Type-2 pair. Moreover, the circles must intersect, since $z$ is inside of $C(x,y,w)$ and $s$ is outside of $C(x,y,w)$.  Lastly, $u$ must  be left of $e_2$, since $w$ is right of $e_2$. Thus, $u$ must be in the region left of $e_2$, inside of $C(x,y,w)$ and outside of $C(s,z,w)$. This region may be empty, but then we are done, since there does not even exist a candidate, that results in a valid edge $e_1$, i.e., an edge $e_1$ such that $(e_1,e_2)$ and $(e_3,e_1)$ are Type-2 pairs. The region is depicted in gray in Figure \ref{fig:SpecialCase} (a). \\
	\begin{figure}[!tb]
\begin{subfigure}[t]{0.45\textwidth}
\centering
\includegraphics[width =.72\textwidth ]{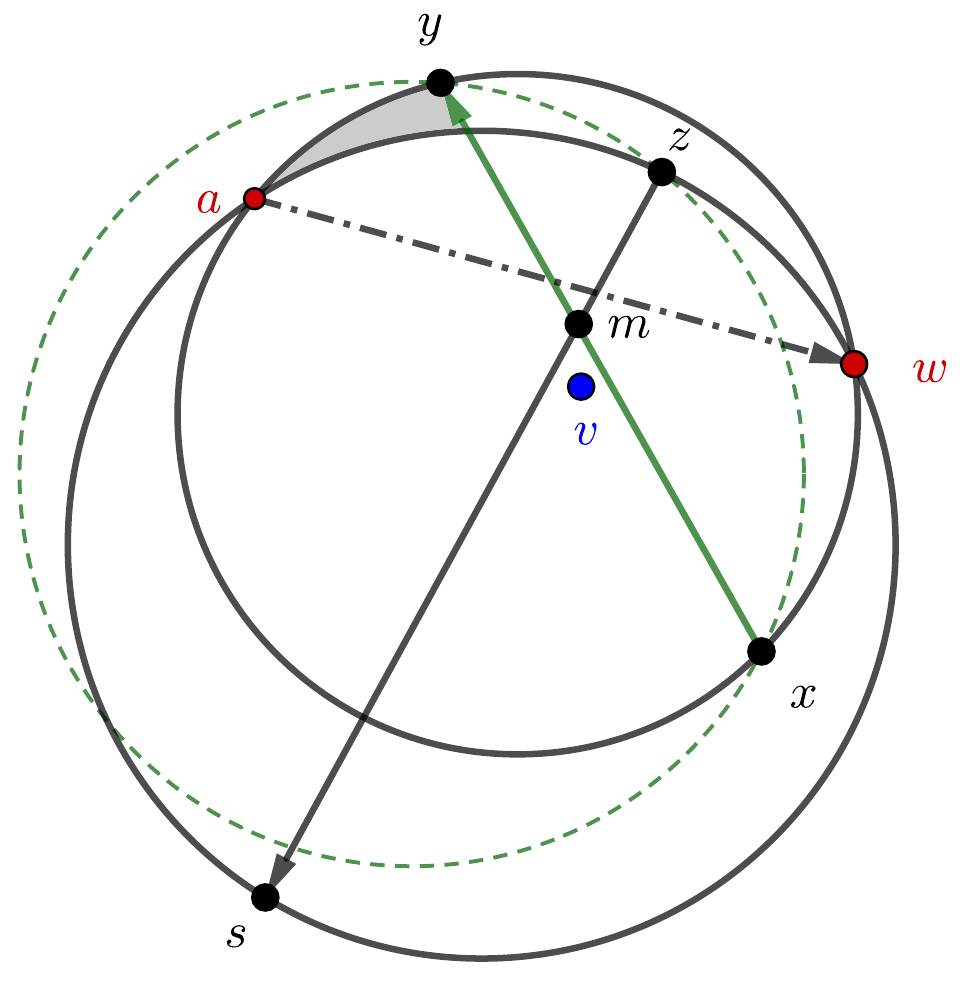}
\subcaption{in gray the valid region for the endpoint $u$ of $e_1=\overline{uw}$ and the best candidate $a$}
\end{subfigure}\hfill
\begin{subfigure}[t]{0.45\textwidth}
\centering
\includegraphics[width =.72\textwidth ]{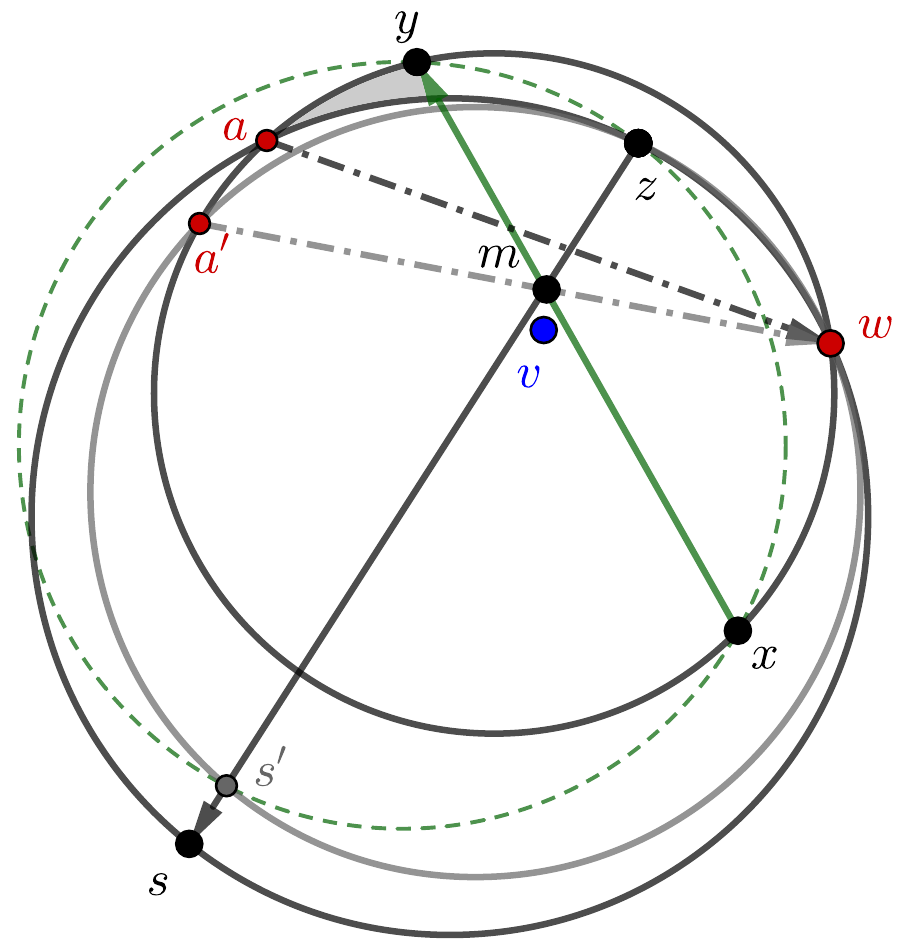}
\subcaption{the additional gray circle results in one intersection point $m$ for all edges}
\end{subfigure}
\caption{Construction of the region in which the endpoint $u$ of the edge $e_1$ may be positioned}\label{fig:SpecialCase}
\end{figure}
		The most promising candidate with respect to intersecting the correct segments of $e_2$ and $e_3$, is the point in the region with the furthest (angular) distance to $y$.  This point is exactly the intersection point $a$ of the two circles that define the region. Note that $a$ is actually an illegal endpoint, since it violates the general position assumption (also note that $\overline{aw}$ would not yield any Types of pairs with the other edges, since it is essentially the configuration where we would switch from Type-1 to Type-2 pairs and vice versa). Nevertheless, it is sufficient to show that $\overline{aw}$ cannot intersect $\overline{ms}$ and $\overline{mx}$, since all other points yield edges that are even worse with respect to  the intersection of the segments.\\
		It remains to show that $\overline{aw}$ cannot intersect $\overline{ms}$ and $\overline{mx}$. To prove this we consider an additional circle. Let $s'$ be the intersection point of $\overline{zs}$ and $C(x,y,z)$. This point exists, since $(e_2,e_3)$ is a Type-2 pair. Then we can define the circle $C(s',w,z)$. This circle must also intersect $C(x,y,w)$ in a point $a'$ by the same argument as before. Note that the circles $C(s',w,z)$, $C(x,y,w)$ and $C(x,y,z)$ all pairwise intersect. The intersection points of $C(s',w,z)$ with $C(x,y,w)$ correspond to $\overline{a'w}$, the intersection points of $C(s',w,z)$ with $C(x,y,z)$ correspond to $\overline{zs'}$ and the intersection points of $C(x,y,w)$ with $C(x,y,z)$ correspond to $\overline{xy}$. Lemma \ref{lemma:chord} implies that the three edges all must intersect in a single point $m$. Since $\overline{zs'}$ is contained in $\overline{zs}$, this intersection property also holds with respect to $\overline{zs}$. This is depicted in Figure \ref{fig:SpecialCase} (b).\\
		Obviously $m$ must be the same intersection point that was already given by the intersection of $e_2$ and $e_3$. Thus, the only thing that remains to be shown, is that $a$ is left of $\overline{a'w}$, since this implies that $a$ cannot intersect $\overline{ms}$ and $\overline{mx}$. We know that $C(s',w,z)$ and $C(s,w,z)$ share two endpoints and $C(s',w,z)$ can be transformed to $C(s,w,z)$ by loosening $s'$ and extending the circle until it hits $s$. Thus, the circle gets larger. It follows that $C(s,w,z)$ intersects $C(y,x,w)$ closer to $y$  than $C(s',w,z)$ which implies that $a$ must be left of $\overline{a'w}$. Thus all possible candidates for $u$ do not yield an edge $\overline{uw}$ that is correctly positioned, i.e., positioned such that it intersects $\overline{ms}$ and $\overline{mx}$. Hence, one of the three pairs $(e_1,e_2),(e_2,e_3)$ and $(e_3,e_1)$ cannot be of type 2. Thus, this case cannot happen.			
	\end{description}

\begin{figure}[t]
\begin{subfigure}[t]{0.45\textwidth}
\centering
\includegraphics[width =.6\textwidth ]{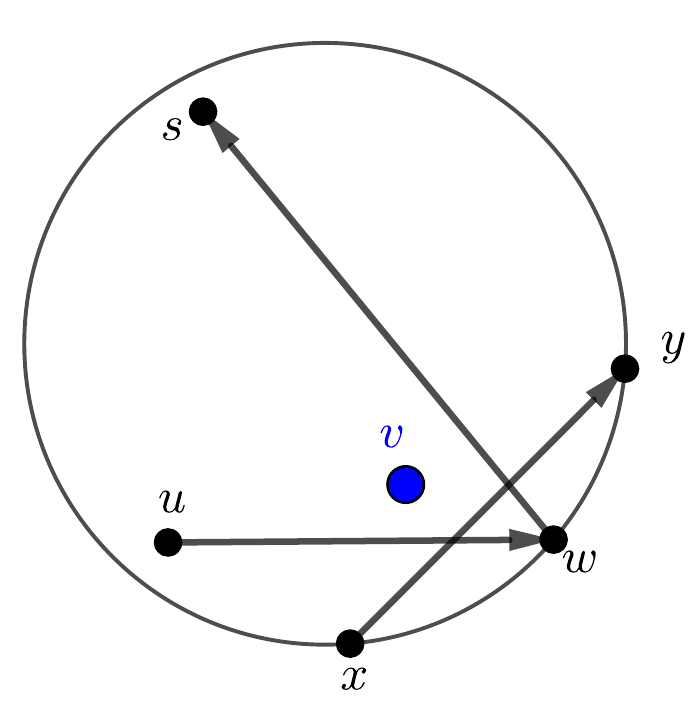}
\subcaption{$s$ is in the right defining circle of $e_2=\overline{xy}$; note that $(e_2,e_3)$ cannot be a Type-2 pair.}
\end{subfigure}\hfill
\begin{subfigure}[t]{0.45\textwidth}
\centering
\includegraphics[width =.6\textwidth ]{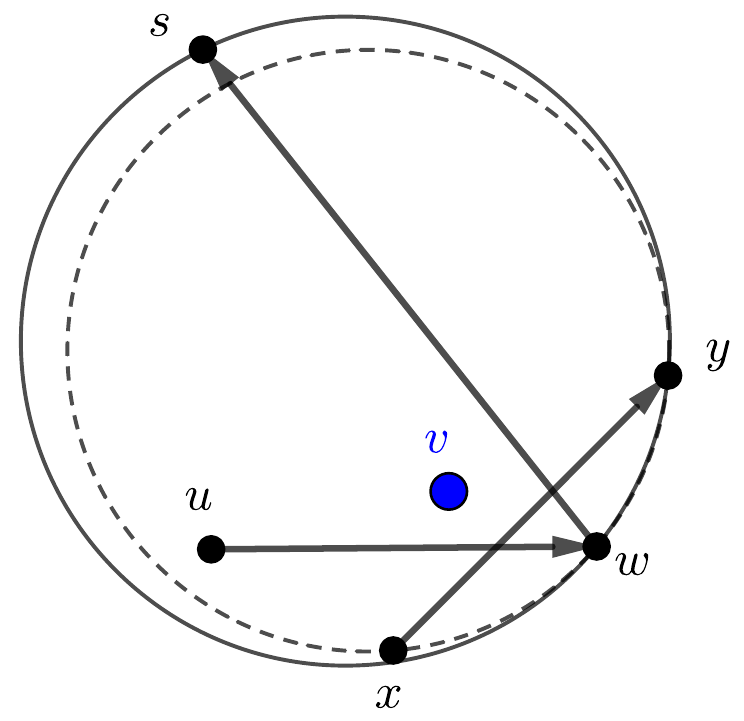}
\subcaption{$s$ is not in the right defining circle of $e_2=\overline{xy}$}
\end{subfigure}
\caption{Two consecutive Type-2 pairs. The separation edges $e_1=\overline{uw}$ and $e_3=\overline{ws}$ share one endpoint right of $e_2$}\label{fig:C5Case3}
\end{figure}		
	
	\item [Case 3] \textit{$\overline{uw}$ and $\overline{zs}$ share one endpoint right of $e_2$.} \\
	Note that in this case $w$ and $z$ ``have the same distance to $e_2$'', since they are identical.
	\begin{description}
		\item[Case 3.a]  \textit{$s$ is in the right defining circle of $e_2=\overline{xy}$ (Fig. \ref{fig:C5Case3}~(a)).}\\
		This case cannot happen, since the right defining circle is given by $C(w,x,y)$ and $(e_2,e_3)$ is a Type-2 pair which implies that $s$ is outside of $C(w,x,y)$.
		\item[Case 3.b] \textit{$s$ is not in the right defining circle of $e_2$ (Fig. \ref{fig:C5Case3}~(b)).}\\
		We can handle this case exactly like like Case 1.c. Thus, $e_3=\overline{ws}$ cannot be useful.	
	\end{description}

\begin{figure}[t]
\begin{subfigure}[t]{0.45\textwidth}
\centering
\includegraphics[width =.6\textwidth ]{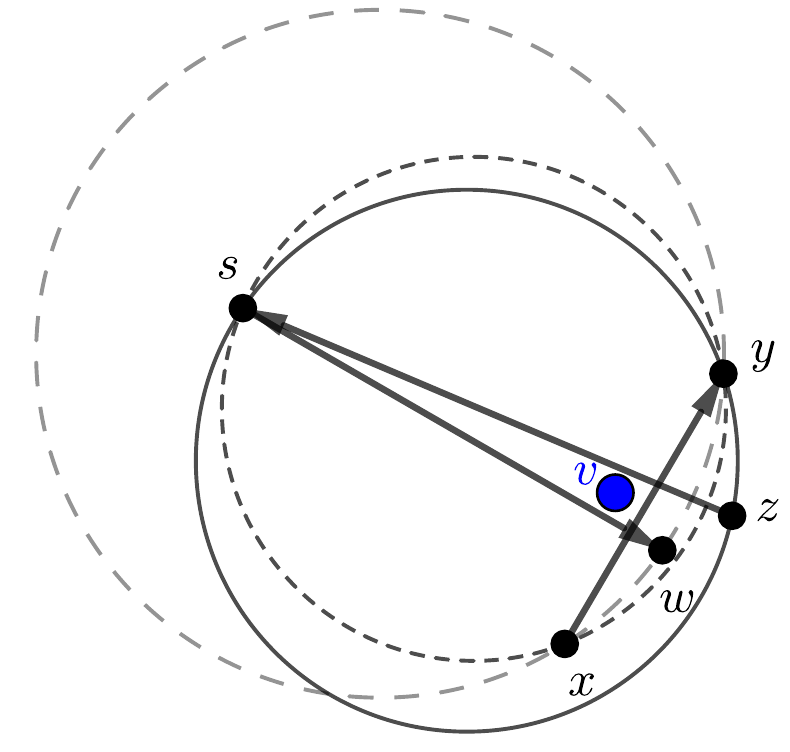}
\subcaption{$w$ is disk-closer to $e_2=\overline{xy}$ and $s$ is in the right defining circle of $e_2$}
\end{subfigure}\hfill
\begin{subfigure}[t]{0.45\textwidth}
\centering
\includegraphics[width =.6\textwidth ]{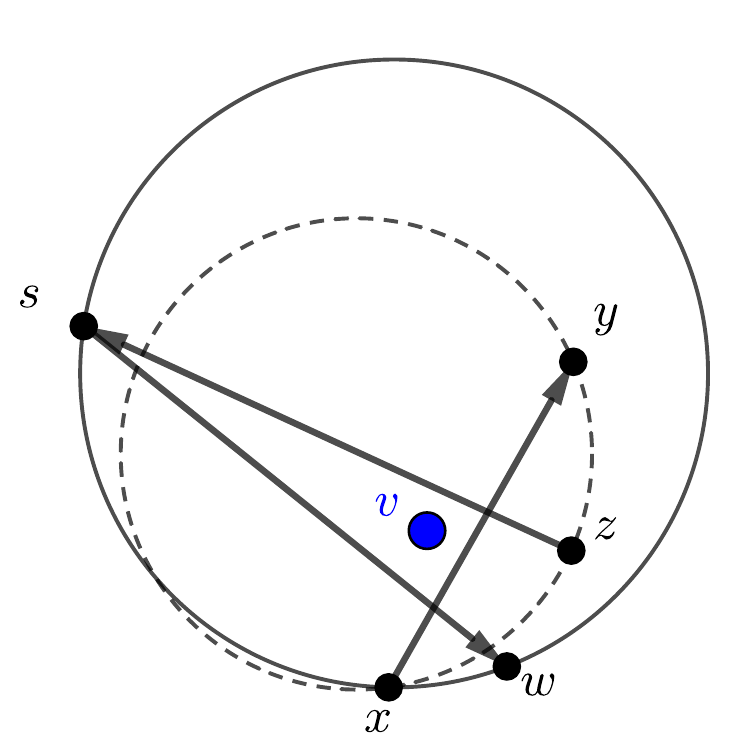}
\subcaption{$z$ is disk-closer to $e_2=\overline{xy}$ and $s$ is not in the right defining circle of $e_2$; note that $(e_1,e_2)$ cannot be a Type-2 pair.}
\end{subfigure}
\caption{Two consecutive Type-2 pairs.The separation edges $e_1=\overline{sw}$ and $e_3=\overline{zs}$ share one endpoint left of $e_2$, i.e., $u=s$}\label{fig:C5Case4}
\end{figure}	
	
	\item [Case 4] \textit{$\overline{uw}$ and $\overline{zs}$ share one endpoint left of $e_2$, i.e., $u=s$.} 
	\begin{description}
		\item[Case 4.a]  \textit{$w$ is disk-closer to $e_2$ and $s$ is in the right defining circle of $e_2$ (Fig. \ref{fig:C5Case4}~(a)).}\\
		We know that $C(s,x,y)$ contains $w$ because $(e_2,e_1)$ is a Type-1 pair. Additionally, $z$ is outside of $C(s,x,y)$ because $(e_2,e_3)$ is a Type-2 pair. Thus, we can extend the circle $C(s,x,y)$ while fixing $s$ and $y$ until it hits $z$ and becomes $C(s,x,z)$. This circle must contain all of $C(s,x,y)$ that is left of 			$e_3=\overline{zs}$ and also the point $x$. Consequently, it contains $w$ and $x$ and also $v$, since $e_3$ is a separation edge. Hence, $e_3$ cannot be useful.
		\item[Case 4.b] \textit{$w$ is disk-closer to $e_2$ and $s$ is not in the right defining circle of $e_2$ .}\\
		In this case $(e_2,e_1)$ cannot be a Type-1 pair, since $C(x,y,w)$ does not contain $s$. This implies that $(e_1,e_2)$ cannot be a Type-2 pair. Consequently, this case cannot happen.
				
		\item[Case 4.c] \textit{$z$ is disk-closer to $e_2$ and $s$ is not in the right defining circle of $e_2$ (Fig. \ref{fig:C5Case4}~(b)).}\\
		$C(x,y,z)$ contains neither $s$, since $(e_2,e_3)$ is Type-2 pair nor $w$ (by assumption). Thus, $e_1=\overline{sw}$ intersects $C(x,y,z)$ twice and can be extended to be $C(x,s,w)$. Consequently, $C(x,s,w)$ must contain $y$ which is left of $e_1$. Thus, $(e_1,e_2)$ could not have been a Type-2 pair. Hence, this case is also not possible.
	\end{description}	
\end{description}
All in all we have shown that for two consecutive Type-2 pairs not all separation edges can be useful.
\end{proof}
\end{document}